\newtheorem{theorem}{Theorem}[section]
\newtheorem{lemma}{Lemma}[section]
\newtheorem{note}{Note}[section]
\newtheorem{corollary}{Corollary}[theorem]
\def\hcorrection{\hspace{-0.3em}}
\def\Abstract#1{{\footnotesize\baselineskip=12pt\begin{quotation}\noindent\hcorrection{#1}\end{quotation}}}
\def\References#1{{\footnotesize\baselineskip=12pt}}
\def\psps{  \ \ \ \ \ \ \  }
\def\sps{, \ \ \  }
\def\spsd{. \ \ \  }
\begin{document}
\renewcommand{\proofname}{Proof.}
\renewcommand{\refname}{References}
\renewcommand{\figurename}{Fig.}
\renewcommand{\contentsname}{Contents/Содержание}
\renewcommand{\partname}{}
\renewcommand{\tablename}{Table}
\renewcommand{\listtablename}{Tables/Список таблиц}
\renewcommand{\listfigurename}{Figures/Список рисунков}
\doparttoc
\dopartlof
\dopartlot
\setcounter{parttocdepth}{4}
\addtocontents{toc}{\cftpagenumbersoff{part}}
\makeatletter
\@addtoreset{section}{part}
\@addtoreset{table}{part}
\@addtoreset{equation}{part}
\makeatother
\thispagestyle{empty}
\begin{center}
BASHKIR STATE UNIVERSITY\\
\rule{12cm}{.4pt} \\
Department of Mathematics
\end{center}
\vspace{5cm}
\begin{center}
АNDREEV Konstantin Vasil'evich\\
\vspace{1cm}

{\sf SPINOR FORMALISM AND THE GEOMETRY OF \\
     SIX-DIMENSIONAL RIEMANNIAN SPACES}\\

\vspace{1cm}
$ $\\
\vspace{1cm}
Thesis for the candidate\\
degree of physical and mathematical sciences\\
(Ph.D. Thesis in Physics and Mathematics)
\end{center}
\vspace{1cm}
\begin{flushleft}
\hspace*{6cm}
Scientific advisor: \\
\hspace*{6cm}
Ph.D in Physics and Mathematics,\\
\hspace*{6cm}
Associate Professor \`E. G. Ne\u\i fel'd .
\vspace{5cm}
\end{flushleft}
\begin{center}
UFA - 1997
\end{center}
\pagebreak

\newpage
\foreignlanguage{english}{
\part{English edition}
\parttoc
\partlot
\partlof
\newpage
\section{Introduction}$ $\\
     \indent The proposed thesis is a theoretical study of the 6-dimensional Riemannian space geometry devoted to some questions related to this geometry.\\
     \indent The study of some 6-dimensional Riemannian spaces is done using the corresponding 6-dimensional spinor formalism \cite{Cartan1}, \cite{Brauer1}, \cite{Penrose_1} and the Norden-Neifeld normalization theory \cite{Neifeld1}-\cite{Neifeld7},
     \cite{Norden1}-\cite{Norden6} that simplifies the important relations written down in the tensor form and that leads to the original results. \\
     \indent The subject choice is caused by the increased interest to such the spaces in these latter days. These spaces naturally appear in the Penrose spinor-twistor formalism \cite{Penrose1}-\cite{Penrose4}, \cite{Penrose_1}-\cite{Penrose_4}. Here the pseudo-Euclidean space $\mathbb R^6_{(2,4)}$, whose an isotropic cone allows to define the conformal pseudo-Euclidean Minkowski space, plays an important role. Moreover, twistors in the Penrose theory will be represented by spinors coordinated with the space $\mathbb R^6_{(2,4)}$. However, if in the monography  \cite{Penrose1}, twistors form the 4-dimensional complex vector bundle with the 4-dimensional real manifold as the base, in these thesis, the 6-dimensional complex analytic Riemannian space $\mathbb CV^6$ serves as the base. This leads to new results in the twistor theory. At the same time, a conformal pseudo-Euclidean space is associated with a complex analytic quadric $\mathbb CQ_6$ \cite{Hirtseburh1}, \cite{Hodzh1} that leads to the studying of properties of the group $SO(8,\mathbb C)$ \cite{Adams1}, \cite{Hughston1}, and hence the Cartan triality principle \cite{Cartan1}. In this case, the specified complex-Euclidean geometry appears as the intrinsic geometry of this normalized quadric. Writing out the derivational equations for this quadric \cite{Neifeld4}, it is possible to define the invariant equation under conformal transformations and hence the normalization replacement. This equation we call \emph{bitwistor equation} by analogy to the Penrose twistor equation. Solutions of this equation form pairs which can be interpreted as Rosenfeld null-pairs  \cite{Rosenfeld1} that leads to the 6-dimensional quadric and the Cartan triality principle. It is expedient to consider the following three manifolds, diffeomorphic to each other:
\begin{enumerate}
    \item the manifold of all points of the quadric $\mathbb CQ_6$;
    \item the manifold of I-family maximal planar generators $\mathbb C\mathbb P_3$ of the quadric
    $\mathbb CQ_6$;
    \item the manifold of II-family maximal planar generators $\mathbb C\mathbb P_3$  of the quadric
    $\mathbb CQ_6$.
\end{enumerate}
    A normalization of one of such the manifolds allows on this one to consider conformal (pseudo-)Euclidean connections which will be Weyl connections. This leads to a generalization of the triality principle to B-spaces in the Norden terminology.\\
    \indent Thus, the 6-dimensional spinor formalism is based on the Cartan's \cite{Cartan1} and  Brauer's \cite{Brauer1} works. The 4-dimensional spinor-twistor formalism and the twistor algebra are described in \cite {Penrose1}-\cite {Penrose4}, \cite{Penrose_1}-\cite{Penrose_4}. The Lie group's and Lie algebra's isomorphisms, associated with these formalisms, are considered in \cite{Besse1}, \cite{Shevale1}, \cite{Postnikov1}, \cite{Hochschild1}. In addition, a piece of the information on the Clifford algebras and the octaves is taken from \cite{Postnikov1} and \cite{Penrose1}. A piece of the information on quadrics and planar generators is given in \cite{Hirtseburh1} and \cite{Hodzh1}. A normalization of a maximal planar generator manifold is also described in \cite{Neifeld4}-\cite{Neifeld6}, \cite{Norden2}, \cite{Norden4}, \cite{Norden6}. Connections in bundles is induced according to \cite{Neifeld6}, \cite{Norden3}, \cite{Norden5}, \cite{Norden6}. A real space inclusion in a complex space is considered in \cite{Neifeld1}. The complex and real representations of (pseudo-)Riemannian spaces are carried out according to \cite{Koboyasi1}, \cite{Koboyasi2} and \cite{Lichnerowicz1}. Rosenfeld null-pairs were taken from \cite{Rosenfeld1}. The Klein correspondence is given in \cite{Klein1} and \cite{Klein2}. Physical applications of twistors can be found in \cite{Penrose1}, \cite{Dirac1} and \cite{Dirac2}.\\
    \indent The basic content, divided in four sections, will be considered below. For this purpose, some definitions is necessary to make preliminary.

\subsection{Basic definitions}$ $
    \indent These definitions are made according to \cite{Neifeld1}-\cite{Neifeld7}. Note, that all functions, involved in constructions, are assumed to be sufficiently smooth; all definitions, statements, constructions are a local. \emph{Complex analytic Riemannian space $\mathbb CV^n$} is a complex analytic manifold, in which each tangent space is equipped with an analytical quadratic metric. Such the metric is defined by means of a nondegenerate symmetric tensor $g_{\alpha\beta}$ whose the coordinates are analytic functions of the point coordinates. This tensor corresponds to the torsion-free complex Riemannian connection whose the coefficients are determined by the Christoffel symbols, and therefore these coefficients are analytic functions.\\
    \indent The tangent bundle $\tau^\mathbb C(\mathbb CV^n)$ to the manifold has fibers $\tau_x^\mathbb C\cong \mathbb C\mathbb R^n$, i.e., each fiber will be isomorphic to the complex n-dimensional Euclidean space whose the metric is determined by the value of the Euclidean metric tensor at the given point $x$. Let n=6. By $\Lambda^2\mathbb C^4$, we denote the bivector space of $\mathbb C^4$, and by $\Lambda$, we denote the corresponding bundle with the base $\mathbb CV^6$. Each fiber of this bundle is isomorphic to $\mathbb C\mathbb R^6$ ($\mathbb C\mathbb R^6\cong \Lambda^2\mathbb C^4$). It follows that in the six-dimensional case, the complex Riemannian space $\mathbb CV^6$ will be the base of the bundle $A^\mathbb C=\mathbb C^4(\mathbb CV^6)$. Then the canonical projection $p:\mathbb C^4_x\longmapsto x\in \mathbb CV^6$ maps the fiber $\mathbb C^4_x$ to the point $x$ of the base.\\
    \indent A real (pseudo-)Riemannian space $V^n_{(p,q)}$ is regarded as a real n-dimension surface in the space $\mathbb CV^n$. In a neighborhood U, this is an inclusion which is locally determined by the parametric equations
\begin{equation}
\label{e1v}
    z^\alpha=z^\alpha(u^i)\ (\alpha,\beta ,... ,i,j,g,h=\overline{1,n})\sps
\end{equation}
    where $z^\alpha$ are the complex coordinates of the base point $x$; the parameters $u^i$ are the local coordinates of the space $V^n_{(p,q)}$. By $\tau_x^\mathbb R(V^n_{(p,q)})\cong \mathbb R^n_{(p,q)}$, we denote the real tangent space to the surface (\ref{e1v}) at the point $x$. The partial derivatives $(\partial_i z^\alpha =:H_i{}^\alpha)$ define \emph{inclusion H} of the real tangent space $\tau_x^\mathbb R$ in the complex tangent space $\tau_x^\mathbb C$
\begin{equation}
\label{e2v}
    H:\tau_x^\mathbb R\longmapsto\tau_x^\mathbb C\sps
\end{equation}
\begin{equation}
\label{e3v}
\begin{array}{c}
    z^\alpha=z^\alpha(u^i(t))\sps V^\alpha:=\frac{dz^\alpha}{dt}=
    H_i{}^\alpha\frac{du^i}{dt}=:H_i{}^\alpha v^i\sps\\[2ex]
    \frac{du^i}{dt}\in\tau_x^\mathbb R\longmapsto \frac{dz^\alpha}{dt}\in\tau_x^\mathbb C\sps
\end{array}
\end{equation}
    where the differentiation is carried out along the real curve $\gamma (t)$ of the surface (\ref{e1v}). Since, the matrix $\parallel H_i{}^\alpha \parallel$ is a nonsingular Jacobian matrix then the operator $H^i{}_\alpha$ such that
\begin{equation}
\label{e4v}
    \left\{
\begin{array}{l}
    H^i{}_\alpha H_i{}^\beta=\delta_\alpha{}^\beta\sps\\
    H^i{}_\alpha H_j{}^\alpha=\delta_j{}^i
\end{array}
    \right.
\end{equation}
    exists. It follows that in the complex space, the operator $H_i{}^\alpha$ defines \emph{involution}
\begin{equation}
\label{e5v}
    S_\alpha {}^{\beta '}=H^i{}_\alpha\bar  H_i{}^{\beta '}\sps
\end{equation}
    where the coordinates $\bar H_i{}^{\beta '}$ are conjugated to the coordinates $H_i{}^\beta$ \cite{Neifeld1}. Therefor,
\begin{equation}
\label{e6v}
    v^i=H^i{}_\alpha V^\alpha=\overline{H^i{}_\alpha V^\alpha}
    \ \ \Rightarrow \ \ S_\alpha{}^{\beta '}V^\alpha=\bar V^{\beta '}\spsd
\end{equation}
    This is a necessary and sufficient condition for the vector $V^\alpha\in \tau^\mathbb C_x$ to be real. At the same time,
\begin{equation}
\label{e7v}
    S_\alpha{}^{\beta '}\bar S_{\beta '}{}^\gamma=\delta_\gamma{}^\alpha\spsd
\end{equation}
    The metric of $\tau_x^\mathbb R(V^n_{(p,q)})$ is defined with the help of the relation
\begin{equation}
\label{e8v}
    g_{\alpha\beta}V^\alpha V^\beta=\overline{g_{\alpha\beta}V^\alpha V^\beta}\sps
    \forall \bar V^{\beta '}=S_\alpha{}^{\beta '} V^\alpha\spsd
\end{equation}
    This is means that a real tensor of the space $\tau_x^\mathbb R(V^n_{(p,q)})$ is determined as a tensor conjugated under the action of the specified Hermitian involution
\begin{equation}
\label{e9v}
    g_{\alpha\beta}=S_\alpha{}^{\gamma '}S_\beta{}^{\delta '}
    \bar g_{\gamma '\delta '}\spsd
\end{equation}
    Therefor, the tensor
\begin{equation}
\label{e10v}
    g_{ij}:=H_i{}^\alpha H_j{}^\beta g_{\alpha\beta}=
    \overline{H_i{}^\alpha H_j{}^\beta g_{\alpha\beta}}
\end{equation}
    is the metric tensor of $\tau_x^\mathbb R(V^n_{(p,q)})\subset \tau_x^\mathbb C(\mathbb CV^n)$. A metric form significantly depends on a structure of the operator $H_i{}^\alpha$ and hence the involution $S_\alpha{}^{\beta '}$. The complex Riemannian connection on the space $\mathbb CV^n$ induces a connection
\begin{equation}
\label{e11v}
    \nabla_i:=H_i{}^\alpha\nabla_\alpha
\end{equation}
    on the real space such that
\begin{equation}
\label{e12v}
     \nabla_i H_j{}^\alpha=i\ b_{ij}{}^gH_g{}^\alpha\sps
     \nabla_i g_{jg}=2ib_{i(jg)}\sps b_{ijg}:=b_{ij}{}^hg_{hg}\spsd
\end{equation}
     Demand that the induced connection was the Riemannian one then
\begin{equation}
\label{e13v}
     b_{ijh}=b_{jih}\sps b_{ijh}=-b_{ihj}\ \ \Rightarrow\ \ \ b_{ijh}=0\sps
\end{equation}
     and hence
\begin{equation}
\label{e14v}
     \nabla_iS_\alpha{}^{\beta '}=0\ \ \Rightarrow\ \ \
     \nabla_\gamma S_\alpha{}^{\beta '}=0\spsd
\end{equation}\\
     For n = 6, the restriction of the base $\mathbb CV^6\longmapsto V^6_{(p,q)}$ allows to determine the bundle $A^\mathbb C(S)=\mathbb C^4(S)(V^6_{(p,q)})$ whose fibers should be supplied with an additional structure $s$. It will be shown that in the case of the even metric index (i.e., the number of minuses is even), this structure is determined by a Hermitian symmetric tensor, and  in the case of the odd metric index, this structure is determined by a Hermitian involution. For a pseudo-Riemannian space of the even index, equal to 4, the bundle $A^\mathbb C(S)$ is called \emph{twistor bundle} because its each fiber will be isomorphic to the vector space $\mathbb T$ \cite{Penrose1}.

\subsection{Second chapter}$ $
      \indent The main results of this chapter are based on the works \cite{Penrose1}-\cite{Penrose4}, where the 4-dimensional spinor formalism is developed. The connection operators are introduced in \cite{Norden1}. A piece of the information on Lie groups is taken from \cite{Besse1}, \cite{Shevale1}, \cite{Postnikov1}. The 6-dimensional spinor formalism, constructed in this chapter, are based on the three following isomorphisms:
\begin{enumerate}
     \item the isomorphism between the spaces $\mathbb C\mathbb R^6\cong\Lambda^2\mathbb C^4$;
     \item the isomorphism between the groups $SO(6,\mathbb C)\cong SL(4,\mathbb C)/\{\pm 1\}$;
     \item the isomorphism between the Lie algebras $so(6,\mathbb C)\cong sl(4,\mathbb C)$.
\end{enumerate}
     Explicitly, these isomorphisms are described as
\begin{enumerate}
     \item $r^\alpha=\frac{1}{2}\eta^\alpha{}_{ab}R^{ab}$,
     where $r^\alpha$ are the coordinates of a vector of $\mathbb C\mathbb R^6$,
     $R^{ab}$ are the coordinates of a bivector of $\Lambda^2\mathbb C^4$,
     $\eta^\alpha{}_{ab}$ are the coordinates of the connecting Norden operators;
     \item $K_\alpha{}^\beta =\frac{1}{4} \eta^\beta{}_{ab}
     \eta_\alpha{}^{cd}\cdot 2 S_c{}^aS_d{}^b$, where
     $K_\alpha{}^\beta$ is the coordinates of a transformation from the group
     $SO(6,\mathbb C)$, $S_a{}^b$ is the coordinates of a transformation from the group $SL(4,\mathbb C)$;
     \item $T^{\alpha\beta}=A^{\alpha\beta}{}_a{}^bT_b{}^a$,
     where $T^{\alpha\beta}$ are the coordinates of a bivector of $\Lambda^2 \mathbb C\mathbb R^6$,
     $T_a{}^b$ is the coordinates of a traceless operator in $\mathbb C^4$.
\end{enumerate}
     At the same time, \emph{connecting Norden operators} are defined with the help of the following relations
\begin{equation}
\label{e1r}
    g^{\alpha\beta}=1/4\cdot \eta^{\alpha}{}_{aa_1} \eta^{\beta}{}_{bb_1}
    \varepsilon^{aa_1bb_1}\sps
    \varepsilon^{aa_1bb_1}=
    \eta_{\alpha}{}^{aa_1} \eta_{\beta}{}^{bb_1}g^{\alpha\beta}\sps\\[2ex]
\end{equation}
    where $(\alpha,\beta,...=1,2,3,4,5,6;$ $a_1,b_1,a,b,e,f,k,l,m,n,...=1,2,3,4;$ $i,j,g,h=1,2,3,4,5,6)$.
    From this, it follows that the connecting Norden operators will satisfy the Clifford equation, and therefor they will define the full Clifford algebra which can be realized with the help of the matrix algebra of dimension $8 \times 8$.\\
    \indent Considering the inclusion $\mathbb R^6_{(p,q)}\subset \mathbb C\mathbb R^6$ in the case of the even metric index q, the decomposition
\begin{equation}
\label{e2r}
    s_{aa_1}{}^{b'b'{}_1}=s^{cb'}s^{c_1b'{}_1} \varepsilon_{cc_1aa_1}
    \sps \bar s_{a'b}=\pm s_{ba'}
\end{equation}
    and in the case of the odd metric index q, the decomposition
\begin{equation}
\label{e3r}
    s_{aa_1}{}^{b'b'{}_1}=2s_{\left[ a \right.}{}^{b'}s_{\left. a_1\right]}
    {}^{b'{}_1}\sps s_a{}_{b '}\bar s_{b '}{}^c=\pm \delta_a{}^c
\end{equation}
     can be obtained. The tensor $s_{aa_1}{}^{b'b'{}_1}$ is defined by means of the tensor Hermitian involution $S_\alpha{}^{\beta '}(=\frac{1}{4}\eta_\alpha{}^{aa_1}\bar\eta^{\beta '}{}_{b'b_1 '}s_{aa_1}{}^{b'b_1 '})$. As an indicative example, in the special basis, the inclusion $\mathbb R^6_{(2,4)}\subset \mathbb C\mathbb R^6$ is considered.\\
    \indent In the conclusion of the given chapter, \emph{generalized connecting Norden operators} are entered as analytical functions of the coordinates of a point $z^\gamma$ so that the equality
\begin{equation}
\label{e4r}
    g^{\alpha\beta}(z^\gamma)=1/4\cdot \eta^{\alpha}{}_{aa_1}(z^\gamma)
    \eta^{\beta}{}_{bb_1}(z^\gamma)\varepsilon^{aa_1bb_1}(z^\gamma)
\end{equation}
    is executed. This completes the construction of the required spinor formalism for the space $\mathbb CV^6$.\\
    \indent It should be noted that the spinor formalism is largely similar to the 4-dimensional Penrose spinor formalism in which the pseudo-Riemannian space $V^4_{(1,3)}$ is the base of the bundles $\mathbb C^2(V^4_{(1,3)})$ and $\mathbb C^4(V^4_{(1,3)})$. At Penrose, a vector in $\mathbb C^2 (V^4_{(1,3)})$ is called \emph{spinor}, and a vector in $\mathbb C^4(V^4_{(1,3)})$ is called \emph{twistor}. The main feature of twistors of this thesis is that \emph{twistor} is a vector of the bundle $\mathbb C^4(V^6_{(2,4)})$ that gives new results. This can lead to a new interpretation of the twistor physical exegesis which is described in the monography \cite{Penrose1}.

\subsection{Third chapter}$ $
     \indent The third chapter is devoted to the introduction of connections in the bundles with the complex base $\mathbb CV^6$. This procedure is carried out according to \cite{Neifeld6}, \cite{Norden3}, \cite{Penrose1}. The real and complex representations are taken from \cite{Koboyasi1}, \cite{Lichnerowicz1}. As the base, the complex analytic Riemannian space $\mathbb CV^6$ is considered. To carry out this procedure, we consider a complex analytic quadric $\mathbb CQ_6$ embedded in the projective space $\mathbb CP_7$
\begin{equation}
\label{e5r}
    G_{AB}X^AX^B=0\sps
\end{equation}
    where $(A,B,...=\overline{1,8})$. A manifold of maximal planar generators $\mathbb C \mathbb P_3$ of one of the two families is a complex six-dimensional manifold. Next, we consider \emph{harmonic normalization} which in local coordinates has the form
\begin{equation}
\label{e6r}
     X^a=X^a(u^\Lambda)\sps Y_b=Y_b(u^\Lambda)\sps
\end{equation}
     where $u^\Lambda$ are twelve real parameters $(\Lambda ,\Psi ,...=\overline{1,12})$. The first derivation equation of this normalized family has the form
\begin{equation}
\label{e7r}
     \nabla_\Lambda X_a=Y^bM_{\Lambda ab}\sps
     M_{\Lambda ab}=-M_{\Lambda ba}\spsd
\end{equation}
     For the transfer of binary indices, we use \emph{quadrivector $\varepsilon_{abcd}$} which is skew-symmetric in all its indices
\begin{equation}
\label{e8r}
     M_\Lambda{}^{ab}=\frac{1}{2}M_{\Lambda cd}\varepsilon^{abcd}\spsd
\end{equation}
    In addition, by means of the operators $M_\Lambda{}^{ab}$, a bivector of $\Lambda^2\mathbb C^4$ is associated with a vector of the tangent bandle
\begin{equation}
\label{e9r}
     V^{ab}=M_\Lambda{}^{ab}V^\Lambda\spsd
\end{equation}
     This defines the metric tensor
\begin{equation}
\label{e10r}
       G_{\Lambda\Psi}=\frac{1}{4}
       (M_\Lambda{}^{ab}M_\Psi{}^{cd}\varepsilon_{abcd}+
       \bar M_\Lambda{}^{a'b'}\bar M_\Psi{}^{c'd'}\varepsilon_{a'b'c'd'})
\end{equation}
        in the tangent bandle. Thus, the base will become the 12-dimensional pseudo-Riemannian space $V^{12}_{(6,6)}$ (the real representation of the manifold of the maximal planar generators) with the metric tensor $G_{\Lambda\Psi}$ and the complex structure $f_\Lambda{}^\Psi$ satisfying the following relation
\begin{equation}
\label{e11r}
       \bigtriangleup_\Lambda{}^\Psi=\frac{1}{2}
       (\delta_\Lambda{}^\Psi+if_\Lambda{}^\Psi)=\frac{1}{2}
       M_{\Lambda ab} M^{\Psi ab}
\end{equation}
       and defined on this manifold. As a fiber of $A^\mathbb C$, we consider the space $\mathbb C^4$ defined by the four basic points $X_a$ of a planar generator. The complex representation of the space $V^{12}_{(6,6)}$ is the space $\mathbb CV^6$ so that a mapping between the tangent spaces is done with the help of \emph{Neifeld operators $m_\alpha{}^\Lambda$}. In this case, the connection coefficients are determined  by the equation
\begin{equation}
\label{e12r}
     \nabla_\alpha m_\alpha{}^\Lambda=0\sps
     \bar\nabla_{\alpha '}\bar m_{\alpha '}{}^\Lambda=0\spsd
\end{equation}
     Then as the complex covariant derivative, we can take
\begin{equation}
\label{e13r}
     \nabla_\alpha=m_\alpha{}^\Lambda\nabla_\Lambda\sps
     \bar\nabla_{\alpha '}=\bar m_{\alpha '}{}^\Lambda\nabla_\Lambda\spsd
\end{equation}
     Then in this chapter, the properties of the torsion-free Riemannian connection, prolonged on fibers of $A^\mathbb C$, are established. It turns out, this prolongation is uniquely given by the requirement of the covariant constancy of the  quadrivector $\varepsilon_{abcd}$. Then using the inclusion operators, we can come to the real connection, but it is necessary to require the covariant constancy of the Hermitian involution.\\
     \indent This allows us to consider the conformally invariant bitwistor equation
\begin{equation}
\label{e14r}
      \nabla^{c\left(\right.d}X^{a\left.\right)}=0\spsd
\end{equation}
      Its solutions are associated with \emph{Rosenfeld null-pairs} which play an important role in further studies.

\subsection{Fourth chapter}$ $
      \indent The fourth chapter is devoted to the classification of \emph{Riemann curvature tensor} of the 6-dimensional (pseudo-)Riemannian spaces. In addition, the properties of bivectors of this spaces are investigated in this chapter.\\
      \indent We will show how to simplify the tensor record of the basic identities for the curvature tensor using the spinor formalism specified in the first chapter. In addition, it is shown that the classification of such a tensor can be reduced to the classification of \emph{curvature spin-tensor} of the space $\mathbb C^4$ such that
\begin{equation}
\label{e15r}
     R_{\alpha\beta\gamma\delta}=
     A_{\alpha\beta a}{}^b A_{\gamma\delta c}{}^d R_b{}^a{}_d{}^c\spsd
\end{equation}
     In this case, the curvature tensor satisfies the Bianchi identity
\begin{equation}
\label{e16r}
    R_{\alpha \beta \gamma \delta}+R_{\alpha \delta \beta \gamma}+
    R_{\alpha \gamma \delta \beta}=0
\end{equation}
    which have the spinor representation
\begin{equation}
\label{e17r}
     R_l{}^d{}_s{}^l=-\frac{1}{8}R\delta_s{}^d\spsd
\end{equation}
    As we can see, instead of the 105 equations from (\ref{e16r}), in (\ref{e17r}) the 16 ones only can be considered, 15 of which are significant. In the same way, we can construct the spinor analogue of the Weyl tensor
\begin{equation}
\label{e18r}
     C_{\alpha\beta\gamma\delta}=
     A_{\alpha\beta a}{}^b A_{\gamma\delta c}{}^d C_b{}^a{}_d{}^c\spsd
\end{equation}

    As a corollary from this theorem, the following fact is very interesting. An arbitrary simple isotropic bivector of the space $\Lambda^2\mathbb C^6$ defines a vector of the space $\mathbb C^4$ up to a factor. This will allow  to construct the geometric interpretation of \emph{isotropic twistor} in the space $\mathbb R^6_ {(2,4)}$. This interpretation in many respects similar to the exegesis of a spinor in the space $\mathbb R^4_{(1,3)}$: \emph{flag} consisted of \emph{flagpole} and  \emph{flag-plane}.\\
    \indent Finally, it is argued that in the space $\mathbb R^6_{(p, q)} $ of the even index q, any bivector can be reduced to \emph{canonical form} in some basis
\begin{equation}
\label{e19r}
    \frac{1}{2}R_{\alpha\beta}X^\alpha Y^\beta=
    R_{16}X^{\left[\right.1} Y^{6\left.\right]}+
    R_{23}X^{\left[\right.2} Y^{3\left.\right]}+
    R_{45}X^{\left[\right.4} Y^{5\left.\right]}\spsd
\end{equation}

\subsection{Fifth chapter}$ $
    \indent In the last chapter, the 8-dimensional complex space $\mathbb T^2$ constructs as the direct sum $\mathbb C^4\oplus \mathbb C^{*4}$ using a vector $X^a $ and a covector $Y_b$ from fibers of the bundles $A^\mathbb C$ and $A^{\mathbb C*}$ respectively
\begin{equation}
\label{e20r}
    X^A:=(X^a,Y_b)
\end{equation}
    and $X^A\in \mathbb T^2$. In this case, $X^a$ and $Y_b$ satisfy the following system
\begin{equation}
\label{e21r}
    \left\{
    \begin{array}{lcl}
    X^a & = & \dot X^a -ir^{ab}Y_b\sps \\
    Y_b & = & \dot Y_b\sps
    \end{array}
    \right.
\end{equation}
    where $r^{ab}$ are the coordinates of a bivector of $\Lambda^2\mathbb C\mathbb R^6$, and $\dot X^a$, $\dot Y_b$ are the values of $X^a$, $Y_b$ at the point O. In fact, the system (\ref{e21r}) can be regarded as \emph{bitwistor equation solutions}, and $\dot X^a$, $\dot Y_b$ will be its \emph{particular solutions}. Considering the locus of points, for which $X^a = 0$, we can come to Rosenfeld null-pairs and then we can formulate the following assertion.

\begin{theorem}(The triality principle for two B-cylinders).\\
In the projective space $\mathbb C \mathbb P_7$, there are two quadrics (two B-cylinders) with the following main properties:
\begin{enumerate}
\item The planar generator $\mathbb C \mathbb P_3$ of a one quadric will uniquely define the point R on the other quadric, and this mapping will be bijective.
\item The planar generator $\mathbb C \mathbb P_2$ of a one quadric will uniquely define the point R on the other quadric. But the point R of the second quadric can be associated to the manifold of planar generators $\mathbb C \mathbb P_2$ belonging to the same planar generator $\mathbb C \mathbb P_3$ of the first quadric.
\item The rectilinear generator $\mathbb C \mathbb P_1$ of a one quadric will uniquely define the rectilinear generator $\mathbb C \mathbb P_1$ of the other quadric, and this mapping will be bijective. And all the rectilinear generators belonging to the same planar generator $\mathbb C \mathbb P_3$ of the first quadric define the beam centered at R belonging to the second quadric.
     \end{enumerate}
\end{theorem}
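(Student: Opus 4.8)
The plan is to reduce all three assertions to the incidence relation (\ref{e21r}) together with the $SO(8,\mathbb C)$ triality already encoded in the isomorphisms $\mathbb C\mathbb R^6\cong\Lambda^2\mathbb C^4$ and $SO(6,\mathbb C)\cong SL(4,\mathbb C)/\{\pm 1\}$. First I would fix homogeneous coordinates on $\mathbb C\mathbb P_7=\mathbb P(\mathbb T^2)$ through the splitting $X^A=(X^a,Y_b)$ of (\ref{e20r}) and write down the two B-cylinders explicitly as quadrics adapted to this splitting, their forms built from the quadrivector $\varepsilon_{abcd}$ and the metric $G_{AB}$ of (\ref{e5r}); the duality $X^a\leftrightarrow Y_b$ exchanges the two cylinders together with their generator families, so it suffices to establish each correspondence in one direction.

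For part 1 I would represent a point $R$ of the second cylinder by a bivector $r^{ab}\in\Lambda^2\mathbb C^4$ and impose $X^a=0$ in (\ref{e21r}), which yields $\dot X^a=ir^{ab}\dot Y_b$. As $\dot Y_b$ ranges over $\mathbb C^4$ this graph sweeps a four-dimensional linear subspace of $\mathbb T^2$, i.e.\ a $\mathbb C\mathbb P_3$, and a direct check shows it is totally isotropic, hence a maximal planar generator of the first cylinder. Conversely a maximal generator recovers $r^{ab}$ up to the overall scale of $R$, and the bijectivity follows from the nondegeneracy of the pairing together with the corollary (stated in the fourth chapter) that a simple isotropic bivector fixes a vector of $\mathbb C^4$ up to a factor.

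Parts 2 and 3 I would obtain by restricting the same incidence relation to subgenerators: a $\mathbb C\mathbb P_2$ inside the generator is cut out by one extra linear condition on $\dot Y_b$, and a $\mathbb C\mathbb P_1$ by two such conditions. Propagating these through $\dot X^a=ir^{ab}\dot Y_b$ shows that a $\mathbb C\mathbb P_2$ still singles out one point $R$, whose fibre is exactly the one-parameter family of $\mathbb C\mathbb P_2$'s lying in the fixed $\mathbb C\mathbb P_3$, whereas an isotropic $\mathbb C\mathbb P_1$ maps bijectively to an isotropic $\mathbb C\mathbb P_1$ and the lines through the fixed generator fill the beam centred at $R$. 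The relevant dimension counts are dictated by the canonical form (\ref{e19r}) and by the flagpole/flag-plane description of an isotropic twistor.

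The step I expect to be the main obstacle is the bookkeeping in part 3: proving that the images of \emph{all} isotropic lines of one generator form a single beam centred at $R$, rather than some larger configuration. This reduces to controlling how the rank of $r^{ab}$ drops as one descends to subgenerators (simple versus non-simple bivectors), which I would handle by reducing $r^{ab}$ to the canonical form (\ref{e19r}) and reading off the incidences directly; matching the two generator families to the two cylinders consistently throughout is the remaining delicate point.
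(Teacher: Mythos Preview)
Your core idea is the paper's: both quadrics are built from the splitting $X^A=(X^a,Y_b)$, and all three correspondences are read off the incidence relation $\dot X^a=ir^{ab}\dot Y_b$ (the locus $X^a=0$ in (\ref{e21r})) by counting how many independent equations survive once the isotropy constraints $\dot X^a\dot Y_a=0$, $\dot X^a\dot T_a=-\dot Z^a\dot Y_a$, \ldots\ are imposed. The paper organises this as an explicit tally: for a $\mathbb C\mathbb P_3$ generator one has $16$ equations with $10$ ``communication conditions'' leaving $6$ for the $6$ unknowns $r^{ab}$; for a $\mathbb C\mathbb P_2$ one gets $12-6=6$; for a $\mathbb C\mathbb P_1$ one gets $8-3=5$, hence a line on the other side; and the reverse direction is handled by a mirror system in the $r_i{}^{ab}$.

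Two points where your plan diverges. First, the two cylinders are not swapped by the na\"ive duality $X^a\leftrightarrow Y_b$; the second quadric $\mathbb C\tilde Q_6$ is defined by a genuinely different form $G_{AB}$ coming from the cone $\mathbb CK_8\subset\mathbb C\mathbb R^8$, and the identification of the two is through an explicit involution $S_A{}^M$ (and a matrix $\tilde S_K{}^L$ with unit determinant), not a coordinate swap. You will need that construction, or an equivalent, to even state which quadric is which. Second, your anticipated obstacle in part~3 is handled in the paper much more cheaply than by the canonical form (\ref{e19r}): once the full system of four incidence equations determines a single point $R$, every $\mathbb C\mathbb P_1$ inside that $\mathbb C\mathbb P_3$ automatically passes through the line image of $R$'s data, so the beam structure is immediate from the inclusion of systems rather than from any rank analysis of $r^{ab}$. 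Your route would work but is heavier than necessary.
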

     This allows us to introduce \emph{connecting operators} $\eta^A{}_{KL}$ such that
\begin{equation}
\label{e22r}
      r^A=\frac{1}{4}\eta^A{}_{KL}R^{KL}\sps
\end{equation}
      where $r^A$ are the coordinates of a vector of $\mathbb C\mathbb R^8$, and $R^{KL}$ are the coordinates of a spin-tensor of $\mathbb C\mathbb R^8$. Therefor, the operators $\eta^A{}_{KL}$ define the full Clifford algebra since this operators will satisfy the Clifford equation
\begin{equation}
\label{e23r}
      G_{AB}\delta_K{}^L=\eta_{AK}{}^R\eta_B{}^L{}_R+\eta_{BK}{}^R\eta_A{}^L{}_R\spsd
\end{equation}
      In this case, we will have the two metric tensors
\begin{equation}
\label{e24r}
      \varepsilon_{KLMN}=\eta^A{}_{KL}\eta_{AMN}\sps
      G_{AB}=\frac{1}{4}\eta_A{}^{KL}\eta_B{}^{MN}\varepsilon_{KM}\varepsilon_{LN}\sps
      \varepsilon_{(KL)(MN)}=\frac{1}{2}\varepsilon_{KL}\varepsilon_{MN}\spsd
\end{equation}
      With the first tensor we can raise and lower a pair of indices, and with the second we can make the specified operation with a single index. This imposes some severe constraints on the connecting operators such as
\begin{equation}
\label{e25r}
      \eta^A{}_{(MN)}=\frac{1}{8}\eta^A{}_{KL}\varepsilon^{KL}\varepsilon_{MN}\spsd
\end{equation}
      Such the connecting operators will determine \emph{structural constants} of the octonion algebra. Later on, this will lead to the double covering $Spin(8,\mathbb C)/\{\pm 1\}\cong SO(8,\mathbb C)$. Therefore, the operators $\eta^A{}_{KL}$ will be very similar to the connecting Norden operators $\eta^\alpha{}_{kl}$ in their properties.\\

\subsection{Conclusion} $ $
      \indent It should be noted that at the end of the thesis, Appendix is available
       in which all the necessary algebraic calculations are presented.\\
      \indent The main results of the dissertation were published in the press:
\foreignlanguage{russian}{
\begin{enumerate}
      \item "О бивекторах 6-мерных римановых пространств"\ [O bivektorakh 6-mernykh rimanovykh prostranstv].
            УТИС [UTIS], Уфа [Ufa], 1996, с. 59-61 [pp. 59-61];
      \item "О структуре тензора кривизны 6-мерных римановых пространств"\ [O structure tenzora krivizny 6-mernyx rimanovykh prostranstv].
             Вестник БГУ [Vestnik BGU], Уфа [Ufa], N2(I), 1996, с. 44-47 [pp. 44-47];
      \item "О твисторных расслоениях с 6-мерной базой"\ [O tvistornykh rassloeniyakh s 6-merno\u\i\ baso\u\i].
            МГС [MGS], Казань [Kazan'], 1997, с. 13 [p. 13];
      \item "О геометрии битвисторов"\ [O geometrii bitvistorov].
            РКСА [RKSA], Уфа [Ufa], 1997, с. 85-87 [pp. 85-87],
\end{enumerate}
      and reported at conferences:
\begin{enumerate}
      \item "Ленинские горы - 95"\ [Leninskie gory - 95], г. Москва [Moskva];
      \item "Чебышевские чтения - 96"\ [Chebyshevskie chteniya - 96], г. Москва [Moskva];
      \item "Лобачевские чтения - 97"\ [Lobachevskie chteniya - 97], г. Казань [Kazan'];
      \item many conferences and seminars in Ufa and seminars, held in Kazan (KSU, Department of Geometry).
\end{enumerate}
}
      \indent The author is grateful for the help in the preparation of the thesis to his supervisor Assoc. Prof. \`E. G. Ne\u\i fel'd  and the chair of geometry at KSU (Head of Department Professor B. N. Shapukov).

\subsection{Introduction for the English edition} $ $

      \indent  The Ph.D. thesis defence has taken place at 14:00, on December, 25th, 1997 at Kazan State University at the session of Dissertation Council (K 053.29.05) at Kazan State University located to the address: 18 Kremlevskaya str., Kazan, 420008, Russia. The Russian edition (on the pp. \pageref*{originb}-\pageref*{origine}) contains the original variant of the Ph. D. thesis with the corrected typing errors and the original numbering of the pages (pp. 1-121). Besides, two figures which have been lost at the thesis printing are included in the English edition. English translation of the Ph. D. thesis was executed in 2012. Transliteration on Cyrillic is given according to the scheme MR(new).\\

      Note that the constructed formalism for n=8  is \emph{initial induction step} for the construction of \emph{alternative-elastic group algebras} for n mod 8 =0 \cite{Andreev_2}. The spinor formalism for n=8 gives the opportunity to construct \emph{Lie operator analogues} for the spinor (and \emph{pair-spin}) bundle with the base: the space-time manifold and to transfer the metric into pair-spin fibers with the same base \cite{Andreev_0}, \cite{Andreev_1}. In addition, the spinor formalism allows us to construct \emph{hypercomplex Cayley-Dickson algebra generator} in the explicit form \cite{Andreev_2}. Moreover, the spinor formalism for even n can be constructed with the help of the particular solutions of the reduced Clifford equation \cite{Andreev_0}, \cite{Andreev_1}. Physical applications of this theory for n=6 can be found in \cite{Scharnhorst_1}.

\foreignlanguage{russian}{
\References{
     \bibitem{Andreev_0}
      К.В. Андреев [K.V. Andreev]: О спинорном формализме при четной размерности базового пространства [O spinornom formalizme pri chetno\u\i\ razmernosti bazovogo prostranstva]. ВИНИТИ - 298-B-11 [VINITI-298-V-11], июнь 2011 [iun' 2011]. [in Russian: On the spinor formalism for the base space of even dimension]
      \bibitem{Andreev_1}
      K.V. Andreev. On the spinor formalism for even n. [\href{http://arxiv.org/abs/1202.0941}{arXiv:1202.0941v2}].
      \bibitem{Andreev_2}
      K.V. Andreev. On the metric hypercomplex group alternative-elastic algebras for n mod 8 = 0. (\href{http://arxiv.org/abs/1110.4737}{arXiv:1110.4737v1}).
      \bibitem{Scharnhorst_1}
      K. Scharnhorst  and J.-W. van Holten: Nonlinear Bogolyubov-Valatin transformations: 2 modes. Annals of Physics (New York), 326(2011)2868-2933 [\href{http://arxiv.org/abs/1002.2737}{arXiv:1002.2737v3}, NIKHEF preprint NIKHEF/2010-005] (\href{http://dx.doi.org/10.1016/j.aop.2011.05.001}{DOI: 10.1016/j.aop.2011.05.001}).
}
}

\newpage
\section{Basic identities and formulas}
\Abstract{
    \indent This chapter is devoted to the study of algebraic properties of the double covering
    $$
    SO(6,\mathbb C)\cong SL(4,\mathbb C)/{\{\pm 1\}}\spsd
    $$
    On the basis of this isomorphism the elementary algebraic framework necessary for further investigations is constructed. To do this, various vector bundles with the base $\mathbb CV^6(\mathbb C\mathbb R^6)$ discuss. The tangent bundle $\tau^\mathbb C(\mathbb CV^6)$ which contains fibers isomorphic to $\mathbb C\mathbb R^6$ is isomorphic to $\Lambda$ with fibers isomorphic to $\Lambda^2 \mathbb C^4$ that follows from the existence of the connecting Norden operators
    $$
    r^\alpha=\frac{1}{2}\eta^\alpha{}_{aa_1}R^{aa_1}\sps
    $$
    where $(\alpha,\beta,...=1,2,3,4,5,6;a_1,b_1,a,b,...=1,2,3,4)$. In addition, we consider the bundle $A^\mathbb C$ with fibers isomorphic to $\mathbb C^4$ and the base $\mathbb CV^6(\mathbb C\mathbb R^6)$. From this, the existence of the operators $A_{\alpha\beta a}{}^b$
    $$
    T_{\alpha \beta}= A_{\alpha \beta d}{}^cT_c{}^d \sps
    T_a{}^a=0 \sps T_{\alpha \beta}=-T_{\alpha \beta}
    $$
    will imply. As follows from the results of the penultimate subsection of this chapter considering infinitesimal transformations, the resulting operators are an algebraic realization of the isomorphism between the Lie algebras
    $$
    so(6,\mathbb C)\cong sl(4,\mathbb C)\spsd
    $$
    Then we will study real inclusions with the help of the inclusion operator $H_i{}^\alpha$ and the involution $S_\alpha{}^{\beta '}$. The conjugation operation induced in the bundle $A^\mathbb C$ is divided into the two classes. In the first case (the space $V^6_{(p,q)}(\mathbb R^6_{(p,q)})$
    has the metric of the even index q), the conjugation is carried out by means of \emph{Hermitian polarity tensor $s^{aa '}$}
    $$
    \bar X^{a'}:=s^{aa'}X_a\spsd
    $$
    In the second case (q is odd), the conjugation is carried out by means of \emph{Hermitian involution tensor $s_a{}^{a'}$}
    $$
    \bar X^{a'}:=s_a{}^{a'}X^a\spsd
    $$
    The second subsection is just devoted to the elucidation of this fact which is proved by using the theorems from the monography \cite{Penrose1}.
 }

\subsection{\texorpdfstring{Bivectors of the space $\Lambda^2\mathbb C^4(\Lambda^2\mathbb R^4)$}{On bivectors}}
\subsubsection{Norden operators}$ $
    \indent It is known that one can establish the isomorphism between the complex Euclidean space $\mathbb C\mathbb R^6$ $(\mathbb R^6_{(3,3)})$ and the bivector space  $\Lambda^2\mathbb C^4(\Lambda^2\mathbb R^4)$. This isomorphism is determined by \emph{connecting Norden operators} \cite{Norden1} satisfying the following conditions
\begin{equation}
\label{e0}
    \frac{1}{2}\eta^\alpha{}_{aa_1}\eta_\beta{}^{aa_1}=\delta_\alpha{}^\beta\sps
    \eta^\alpha{}_{aa_1}\eta_\alpha{}^{bb_1}=\delta_{aa_1}{}^{bb_1}:=
    2\delta_{\left[\right.a}{}^{\left[\right.b}
    \delta_{a_1\left.\right]}{}^{b_1\left.\right]}
\end{equation}
    so that the following equations
\begin{equation}
\label{e1}
    r^{\alpha}=1/2\cdot \eta^{\alpha}{}_{aa_1}R^{aa_1}\sps
    R^{aa_1}=\eta_{\alpha}{}^{aa_1}r^{\alpha}\sps
\end{equation}
    where $(\alpha,\beta,...=1,2,3,4,5,6;$ $a_1,b_1,a,b,e,f,k,l,m,n,...=1,2,3,4;$ $i,j,g,h=1,2,3,4,5,6)$ and
\begin{equation}
\label{e2}
    \begin{array}{c}
    g^{\alpha\beta}=1/4\cdot \eta^{\alpha}{}_{aa_1} \eta^{\beta}{}_{bb_1}
    \varepsilon^{aa_1bb_1}\sps
    \varepsilon^{aa_1bb_1}=
    \eta_{\alpha}{}^{aa_1} \eta_{\beta}{}^{bb_1}g^{\alpha\beta}\sps\\[2ex]
    g_{\alpha\beta}=1/4\cdot \eta_{\alpha}{}^{aa_1} \eta_{\beta}{}^{bb_1}
    \varepsilon_{aa_1bb_1}\sps
    \varepsilon_{aa_1bb_1}=
    \eta^{\alpha}{}_{aa_1} \eta^{\beta}{}_{bb_1}g_{\alpha\beta}
    \end{array}
\end{equation}
     are executed. In this case, $R^{aa_1}$ are the coordinates of a bivector of the space $\Lambda^2\mathbb C^4$, and $r^{\alpha}$ are the coordinates of its image from $\mathbb C\mathbb R^6$; $g^{\alpha\beta}$ is the metric tensor of $\mathbb C\mathbb R^6$, its image is the spin-tensor (\emph{quadrivector}) $\varepsilon_{aa_1bb_1}$ antisymmetric in all indices.

\begin{note}
\label{note10}
    Note that with the help of the metric tensor $g_{\alpha\beta} $, defined on the space $\mathbb C\mathbb R^6$, we can raise and lower single indexs. Using the metric quadrivector $\varepsilon_{aa_1bb_1}$, defined in the bundle $\Lambda$, we can raise and lower pair skew-symmetric indices only, and there is no a metric spin-tensor with which one could do a similar operation with single indexes.
\end{note}
    It follows that there are the operators $A_{{\alpha\beta}d}{}^c$ such that
\begin{equation}
\label{e4}
    T_{\alpha \beta}= A_{\alpha \beta d}{}^cT_c{}^d \sps
    T_k{}^k=0 \sps T_{\alpha \beta}=-T_{\alpha \beta}\spsd
\end{equation}
    We will give a proof of this fact.
\begin{proof} For the Levi-Civita symbol, we have
\begin{equation}
\label{e7d}
    \begin{array}{c}
    \varepsilon^{abcd}\varepsilon_{klmn}=
    24\delta_{\left[\right.k}{}^a\delta_l{}^b\delta_m{}^c\delta_{n\left.\right]}{}^d\sps\\
    \varepsilon^{abcd}\varepsilon_{klmd}=
    6\delta_{\left[\right.k}{}^a\delta_l{}^b\delta_{m\left.\right]}{}^c\sps\\
    \varepsilon^{abcd}\varepsilon_{klcd}=
    4\delta_{\left[\right.k}{}^a\delta_{l\left.\right]}{}^b\sps\\
    \varepsilon^{abcd}\varepsilon_{kbcd}=
    6\delta_k{}^a\sps\\
    \varepsilon^{abcd}\varepsilon_{abcd}=
    24\sps\\
    \delta_{kl}^{ab}:=2\delta_{\left[\right.k}{}^a\delta_{l\left.\right]}{}^b\spsd
    \end{array}
\end{equation}
    Moreover, since $\varepsilon^{abcd}$ is the metric spin-tensor, it follows \cite[v.2, p. 65, eq. (6.2.19)(eng)]{Penrose1} that
\begin{equation}
\label{e8d}
    R_{ab}=\frac{1}{2}\varepsilon_{abcd}R^{cd}\spsd
\end{equation}
    Then, taking into account the formulas (\ref{e1}) and (\ref{e2}), for the tensor $T_{\alpha\beta}$, we have
    $$
    T_{\left[\alpha \beta \right]}=
    1/4\cdot\eta_\alpha{}^{aa_1}\eta_\beta{}^{bb_1}\cdot
    3/2(T_{a\left[ \right.a_1bb_1\left. \right]}-T_{\left[ \right. bb_1a
    \left.  \right]a_1})=
    $$
    $$
    =1/4\eta_\alpha{}^{aa_1}\eta_\beta{}^{bb_1}\cdot
    1/2(T_{ak}{}^{kd}\varepsilon_{a_1dbb_1}-T^{kd}{}_{ka_1}\varepsilon_
    {adbb_1})=
    1/2\eta_\alpha{}^{ca}\eta_\beta{}^{bk_1}\varepsilon_{adbk_1}\cdot
    $$
\begin{equation}
\label{e20}
    1/4(T^{kd}{}_{kc}-T_{kc}{}^{kd})
    =-1/2\eta_\alpha{}^{bk_1}\eta_\beta{}^{ca}\varepsilon_{adbk_1}\cdot
    1/4(T^{kd}{}_{kc}-T_{kc}{}^{kd})\spsd
\end{equation}
    Therefore, we must put
    $$
    A_{\alpha \beta d}{}^c:=\frac{1}{2}\eta_{\left[\alpha \right.}{}^{ca}\eta_{\left.
    \beta \right]}{}^{bk_1}\varepsilon_{dabk_1}
    \sps T_c{}^d:=1/4(T_{kc}{}^{kd}-T^{kd}{}_{kc})\sps
    $$
    $$
    T_{aa_1bb_1}-T_{bb_1aa_1}=4 \varepsilon_{bb_1c\left[\right.a_1}
    T_{a\left.\right]}{}^c
    $$
    to obtain the formula (\ref{e4}).
\end{proof}

    The most important relations for the operators $A_{\alpha\beta b}{}^a$ have the form
\begin{equation}
\label{e21}
    \begin{array}{c}
    A_{\alpha\beta d}{}^cA^{\alpha\beta}{}_r{}^s=\frac{1}{2}\delta_r{}^s\delta_d{}^c-
    2\delta_ r{}^c\delta_d{}^s\sps
    A_{\alpha\beta d}{}^cA^{\lambda\mu}{}_c{}^d=
    2\delta_{\left[\alpha \right.}{}^\mu\delta_{\left.\beta\right]}{}^\lambda\sps
    \\
    A_{\alpha\beta d}{}^cA_\gamma{}^\beta{}_r{}^s=
    (\eta_\alpha{}^{cs}\eta_\gamma{}_{rd}+
    \eta_\alpha{}^{ck}\eta_\gamma{}_{kr}\delta_d{}^s)+
    1/2(\eta_\alpha{}^{sn}\eta_\gamma{}_{rn}\delta_d{}^c+
    \\
    +\eta_\alpha{}^{ck}\eta_\gamma{}_{dk}\delta_r{}^s)-
    1/4g_{\alpha\gamma}\delta_r{}^s\delta_d{}^c\sps
    \\
    T_m{}^n=\frac{1}{2}A^{\alpha\beta}{}_m{}^nT_{\beta\alpha}
    \sps T_{\beta\alpha}=-T_{\alpha\beta}\spsd
    \\
    \end{array}
\end{equation}
    The proof of these formulas is rather cumbersome, and therefore it is given in Appendix (\ref{e1p})-(\ref{e5p}).\\
    \indent It should be noted that the connecting Norden operators define the Clifford algebra.
\begin{proof} Consider the identity
\begin{equation}
\label{e15v}
    \eta_{\left(\right.\alpha}{}^{a\left[\right.b}
    \eta_{\beta\left.\right)}{}^{cd\left.\right]}=
    -\eta_{\left(\right.\alpha}{}^{\left[\right.cb}
    \eta_{\beta\left.\right)}{}^{|a|d\left.\right]}=
    \eta_{\left(\right.\alpha}{}^{\left[\right.ab}
    \eta_{\beta\left.\right)}{}^{cd\left.\right]}\sps
\end{equation}
    the contraction of which with $\varepsilon_{bcdn}$ gives the relations
\begin{equation}
\label{e16v}
    \begin{array}{c}
    \eta_{\left(\right.\alpha}{}^{ab}\eta_{\beta\left.\right)}{}^{cd}
    \varepsilon_{bcdn}=\frac{1}{24}
    \eta_{\left(\right.\alpha}{}^{kl}\eta_{\beta\left.\right)}{}^{mn}
    \varepsilon_{klmn}\varepsilon^{abcd}\varepsilon_{bcdn}\sps\\[2ex]
    2\eta_{\left(\right.\alpha}{}^{ab}\eta_{\beta\left.\right)}{}_{bn}=
    -\frac{1}{4}
    \eta_{\left(\right.\alpha}{}^{kl}\eta_{\beta\left.\right)}{}^{mn}
    \varepsilon_{klmn}\delta_n{}^a\sps\\[2ex]
    \eta_{\left(\right.\alpha}{}^{ab}\eta_{\beta\left.\right)}{}_{nb}=
    \frac{1}{2}g_{\alpha\beta}\delta_n{}^a\sps
    \end{array}
\end{equation}
    where $ g_{\alpha\beta}$ is the same as in the formulas (\ref{e2}). We define
\begin{equation}
\label{e17v}
    \eta_\alpha:=\parallel \eta_\alpha{}^{aa_1} \parallel \sps
    \sigma_\alpha:=\parallel -\eta_\alpha{}_{aa_1} \parallel\sps
    \gamma_\alpha:=\sqrt{2}\left(
    \begin{array}{cc}
    0           & \sigma_\alpha \\
    \eta_\alpha & 0
    \end{array}
    \right)\sps
\end{equation}
    where $\lambda ,\psi,...=\overline{1,6}$. The operators $\gamma_\lambda$ satisfy the identity
\begin{equation}
\label{e18v}
    \gamma_\lambda\gamma_\psi+\gamma_\psi\gamma_\lambda=2g_{\lambda\psi}I
\end{equation}
    which follows from the last equation of (\ref{e16v}). The equality (\ref{e18v}) is the Clifford equation \cite[v.2, p. 441, eq. (B.1)(eng)]{Penrose1} such that $\gamma_1, \gamma_2, \gamma_3, \gamma_4, \gamma_5, \gamma_6$ can be represented by means of complex matrixes of dimension $8\times 8,\ g_{\alpha\beta}$ is the metric tensor (\ref{e2}), and I is the identity operator.\\
    \indent The opposite is true. Suppose we have the equation (\ref{e18v}). Then we can construct the element $\gamma_7$
\begin{equation}
\label{e18va}
    \gamma_7:=\gamma_1\gamma_2\gamma_3\gamma_4\gamma_5\gamma_6\sps
    (\gamma_7){}^2=I\spsd
\end{equation}
    In this case, since n = 6 (even), $\gamma_7$ anticommutes with every element $\gamma_\alpha\ (\alpha =\overline{1,6})$. This means that for $\gamma_\alpha $, the representation  (\ref{e17v})  is possible, and therefore the identity (\ref{e15v}) is executed.\\
    \indent It follows that the connecting Norden operators determine the full Clifford algebra which is formed by the finite sums
\begin{equation}
\label{e18vab}
     AI+B^\lambda\gamma_\lambda+C^{\lambda\mu}\gamma_\lambda\gamma_\mu+...
\end{equation}
    Dimension of this algebra is equal to $2^6=64$. Such the algebra can be represented by the full matrix algebra, elements of that have the dimension $8\times 8$ \cite[v. 2, p. 440-464(eng)]{Penrose1}.
\end{proof}

\subsubsection[Conjugation in the bundle A]{Conjugation in the bundle $A^\mathbb C(S)$}
\Abstract{
    \indent In this subsection the statement concerning an inclusion of real spaces in the complex one is formulated, the proof is given in the next subsection.
}

    Consider the 6-dimensional (pseudo-)Euclidean space $\mathbb R^6_{(p, q)}$ embedded in $\mathbb C\mathbb R^6$, the tangent space $\tau^\mathbb R(\mathbb R^6_{(p, q)})$ of which we will consider as a real subspace in $\mathbb C^4$. This will lead to the vector bundle $A^\mathbb C(S)$ with fibers isomorphic to $\mathbb C^4$. Besides, the bundle $A^\mathbb C(S)$ will be equipped with \emph{structure s}. We need to clarify the nature of this structure. To do this, we will consider a simple bivector of the $\tau^\mathbb C(\tau^\mathbb R)$. A necessary and sufficient condition of the simplicity for this bivector is expressed by the formula
\begin{equation}
\label{e5}
    R^{ab}=X^aY^b-X^bY^a\sps X^a\, Y^a\in \mathbb C^4\sps
\end{equation}
    where $i,j=\overline{1,6};a,b,c,d,k,l,m,n,...=\overline{1,4}$.\\

\begin{note}
\label{note11}
    Based on the formula (\ref{e1}), a simple bivector of the space $\Lambda^2 \mathbb C^4$ is uniquely associated with an isotropic vector of the space $\mathbb C\mathbb R^6$, and this mapping will be bijective. This follows from the relation
\begin{equation}
\label{e19v}
    \begin{array}{c}
    0=\varepsilon_{abcd}X^aY^bX^cY^d=
    \frac{1}{4}\varepsilon_{abcd}R^{ab}R^{cd}
    =\frac{1}{2}\eta_\alpha{}^{km} r^\alpha \eta^\beta{}_{km}r_\beta=
    r^\alpha r_\alpha=0\spsd
    \end{array}
\end{equation}
\end{note}
    Further, any bivector should be \emph{self-conjugated} with respect to the spin-tensor $s_{aa_1b'b'{}_1}$ (the notation introduced according to \cite{Penrose1})
\begin{equation}
\label{e6}
    s_{aa_1b'b'{}_1}=g_{ij}\eta^i{}_{b'b'{}_1}
    \eta^j{}_{aa_1}\sps
    g_{ij}=1/4\cdot\eta_i{}^{b'b'{}_1}\eta_j{}^{aa_1}
    s_{aa_1b'b'{}_1}\sps
\end{equation}
    $$
    \bar s_{b'b'{}_1aa_1}=s_{aa_1b'b'{}_1}\sps
    $$
    where $g_{ij}$ is the metric tensor (\ref{e10v}). The last equation expresses the Hermitian symmetry of the spin-tensor s. Such the tensor was introduced in \cite{Neifeld1}. In the case of the metric of the even index, the spin-tensor $s_{aa_1}{}^{b'b'{}_1}$ (the raising and lowering of double indices are carried out with the help of the metric quadrivector $\varepsilon_{aa_1bb_1}$) has the form
\begin{equation}
\label{e7}
    s_{aa_1}{}^{b'b'{}_1}=s^{cb'}s^{c_1b'{}_1} \varepsilon_{cc_1aa_1}
    \sps \bar s_{a'b}=\pm s_{ba'}\sps
\end{equation}
    and in the case of the odd index, we obtain
\begin{equation}
\label{e8}
    s_{aa_1}{}^{b'b'{}_1}=2s_{\left[ a \right.}{}^{b'}s_{\left. a_1\right]}
    {}^{b'{}_1}\sps s_a{}^{b '}\bar s_{b '}{}^c=\pm\delta_a{}^c\spsd
\end{equation}
     If $R^{ab}$ is simple and belongs to the tangent space $\tau^\mathbb R$ then for the vectors, defining the bivector, in the case of the even index metric, the identities \cite[v. 2, p. 63, eq. (6.2.13)(eng)]{Penrose1}
\begin{equation}
\label{e9}
    X_aX_{a'}s^{aa'}=0 \sps Y_aY_{a'}s^{aa'}=0 \sps X_aY_{a'}s^{aa'}=0
\end{equation}
    and in the case of the odd index metric, the identities
\begin{equation}
\label{e10}
    X_{a'}X^as_a{}^{a'}=0 \sps
    X_{a'}Y^as_a{}^{a'}=0 \sps
    Y_{a'}Y^as_a{}^{a'}=0
\end{equation}
     are executed. Thus, the structure s of the bundle $A^\mathbb C(S)$ is determined. In the case of the even index metric, the spin-tensor $s_{kk'}$  fulfills a role of the metric spin-tensor with which the help we can raise and lower single indexs, and in the case of the odd index metric, with the help of the spin-tensor $s_k{}^{k'}$, the identification between the primed (\emph{complex conjugate}) and unprimed spaces is carried out. The proof of this assertion is given in the next section.

\subsection{Spinor representation of special form tensors. The covering corresponding to this decomposition}
\subsubsection{\texorpdfstring{Theorem on the double covering $SL(4,\mathbb C)/\{\pm 1\}\cong SO(6, \mathbb C)}{Theorem on the double covering}$}
\Abstract {
     \indent Before the proof, we need to more thoroughly understand with the double covering $SL(4,\mathbb C)/\{\pm 1\}\cong SO(6,\mathbb C)$. Below, the explicit representation of this covering by means of the connecting Norden operators $\eta_\alpha{}^{ab}$ will be obtained. Using this representation, it is easier to understand how the real inclusion $\mathbb R^6_{(2,4)}\subset \mathbb C\mathbb R^6$ is occurs and hence how to construct the explicit representation of the involution operator in the spinor form. In addition, the results of this section will be useful in the study of the bivector structure of the space $\mathbb C\mathbb R^6$.
}

     By $K_\alpha{}^\beta$, we denote a transformation form the group  $SO(6,\mathbb C)$, and let $S_a{}^b$ be a transformation from the group $SL(4,\mathbb C)$. Then the following theorem will be true.
\begin{theorem}
\label{theorem5} Each transformation $K_\alpha{}^\beta$ corresponds to two and only two transformations $\pm S_a{}^b$($\pm S_{ab}$) such that
    $det\parallel S_c{}^d \parallel=1\ (det\parallel S_{cd} \parallel=1)$. And on the contrary, each transformation $\pm S_a{}^b$($\pm S_{ab}$) corresponds to one and only one transformation $K_\alpha{}^\beta$.
\end{theorem}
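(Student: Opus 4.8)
The plan is to realize the stated correspondence explicitly as a group homomorphism $\phi\colon SL(4,\mathbb{C})\to SO(6,\mathbb{C})$ built from the connecting Norden operators, and then to compute its kernel and image. Given $S_a{}^b\in SL(4,\mathbb{C})$, I would let it act on bivectors of $\Lambda^2\mathbb{C}^4$ by $R^{cd}\mapsto S_c{}^aS_d{}^b R^{cd}$ and transport this action to $\mathbb{C}\mathbb{R}^6$ through the identification (\ref{e1}), obtaining
$$K_\alpha{}^\beta=\tfrac{1}{2}\,\eta^\beta{}_{ab}\,\eta_\alpha{}^{cd}\,S_c{}^aS_d{}^b.$$
First I would check that $K$ really lies in $O(6,\mathbb{C})$. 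Because $\det\|S_c{}^d\|=1$, the matrix $S$ preserves the quadrivector, $\varepsilon_{abcd}\,S_e{}^aS_f{}^bS_g{}^cS_h{}^d=\det(S)\,\varepsilon_{efgh}=\varepsilon_{efgh}$, so the induced map preserves the quadratic form $\varepsilon_{abcd}R^{ab}R^{cd}$, which by (\ref{e1}) and (\ref{e2}) equals $4\,g_{\alpha\beta}r^\alpha r^\beta$; hence $K$ is orthogonal. Functoriality of the bivector construction ($\Lambda^2(ST)=\Lambda^2 S\cdot\Lambda^2 T$) makes $\phi$ a homomorphism, and the identity $(-S)_c{}^a(-S)_d{}^b=S_c{}^aS_d{}^b$ shows at once that $\phi(S)=\phi(-S)$. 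This already yields the ``one and only one $K$'' half of the statement and shows that any such $K$ admits at least the two preimages $\pm S$.

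Next I would pin down that the image lands in the \emph{special} orthogonal group and that $\pm S$ are the \emph{only} preimages. For the first point I would use that $SL(4,\mathbb{C})$ is connected, so its continuous image under $\phi$ lies in the identity component of $O(6,\mathbb{C})$, namely $SO(6,\mathbb{C})$; since $\phi(I)=I$ this fixes the correct component. For the kernel I would argue directly: if $K=\mathrm{id}$, then $S$ acts trivially on $\Lambda^2\mathbb{C}^4$, i.e.\ $SX\wedge SY=X\wedge Y$ for all $X,Y\in\mathbb{C}^4$. Fixing $X\neq0$ and letting $Y$ range over a basis forces $SX\in\bigcap_Y\mathrm{span}(X,Y)=\mathrm{span}(X)$, so every vector is an eigenvector of $S$ and therefore $S=\lambda I$; then $\lambda^2=1$ and $\det S=\lambda^4=1$ give $\lambda=\pm1$, so $\ker\phi=\{\pm I\}$. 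Combined with surjectivity, this delivers exactly two preimages $\pm S$ for each $K$.

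The main obstacle is surjectivity of $\phi$, which I would establish infinitesimally rather than by brute force. The differential $d\phi$ at the identity is precisely the Lie-algebra isomorphism $sl(4,\mathbb{C})\cong so(6,\mathbb{C})$ recorded in the introduction, both algebras being $15$-dimensional; hence $\phi$ is a local diffeomorphism near $I$, and its image is an open subgroup of $SO(6,\mathbb{C})$. Since an open subgroup of a connected topological group is the whole group, and $SO(6,\mathbb{C})$ is connected, $\phi$ is onto. Together with the kernel computation this establishes that each $K_\alpha{}^\beta$ corresponds to exactly the two transformations $\pm S_a{}^b$ with unit determinant, and conversely.

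Finally, the parenthetical variant with lowered spinor indices $\pm S_{ab}$, $\det\|S_{cd}\|=1$, is handled by the same argument applied to the dual (conjugate) half-spinor representation of $SL(4,\mathbb{C})$: one replaces the action on $\Lambda^2\mathbb{C}^4$ by that on $\Lambda^2\mathbb{C}^{*4}$ and uses $\varepsilon^{abcd}$ in place of $\varepsilon_{abcd}$, after which every step above transfers verbatim. This simply reflects the two equivalent half-spinor realizations that each cover $SO(6,\mathbb{C})$ two-to-one.
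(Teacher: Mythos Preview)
Your argument is correct and clean, but it is genuinely different from the paper's. You prove the covering abstractly: build the homomorphism $\phi$, compute $\ker\phi=\{\pm I\}$ by the eigenvector argument, and get surjectivity from the Lie-algebra isomorphism plus connectedness of $SO(6,\mathbb{C})$. The paper instead constructs the preimage $S$ of a given $K$ \emph{explicitly} by tensor algebra: it passes to the bivector form $K_{aa_1}{}^{bb_1}$, acts on pairs of mutually orthogonal isotropic vectors (Lemma~\ref{lemma1}), extracts an equation of the type $X^aT_b=-2M^dP_c\,\tilde K_d{}^c{}_b{}^a$, and then uses two factorization lemmas from Penrose--Rindler (Lemmas~\ref{lemma2} and~\ref{lemma3}) to force $\tilde K$ into the product form $S_d{}^a(S^{-1})_b{}^c$ (or $A^{ac}B_{db}$), finally solving for the constants to reach $K_{aa_1}{}^{bb_1}=2\beta S_{[a_1}{}^{[b_1}S_{a]}{}^{b]}$ with $\beta=\pm1$. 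Your route is shorter and more conceptual; the paper's route is constructive and stays entirely inside the spinor formalism, which is the point of the thesis. One substantive divergence: your reading of the parenthetical $S_{ab}$ as ``the same argument on the dual half-spinor'' is not how the paper treats it. In the paper the $S_a{}^b$ case (formula~(\ref{e42vv})) is shown, via the spinor expression for the $6$-vector $e_{\alpha\beta\gamma\delta\lambda\mu}$, to yield \emph{proper} orthogonal $K$, while the $S_{ab}$ case (formula~(\ref{e42vva})) yields the \emph{improper} component; so the two options in the statement correspond to the two components of $O(6,\mathbb{C})$, not to two equivalent covers of $SO(6,\mathbb{C})$.
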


\begin{proof} Suppose, that there are the two transformations $\pm S_a{}^b (\pm S_ {ab})$ such that
     $$
     \varepsilon:=\varepsilon_{1234}\sps
     $$
\begin{equation}
\label{e1b}
    S_a{}^b S_{a_1}{}^{b_1} S_c{}^d S_{c_1}{}^{d_1}\varepsilon_{bb_1dd_1}=
    \varepsilon_{aa_1cc_1}\
    (S_{ab} S_{a_1b_1} S_{cd} S_{c_1d_1}\varepsilon^{bb_1dd_1}=
    \varepsilon^{-2}\varepsilon_{aa_1cc_1})\spsd
\end{equation}
     The last equation means that $det\parallel S_c{}^d \parallel=1\ (det\parallel S_{cd} \parallel=1)$. Define
\begin{equation}
\label{e2b}
    \begin{array}{c}
    K_\alpha{}^\beta:=\frac{1}{4}\eta_\alpha{}^{ab}\eta^\beta{}_{cd}
    \cdot 2\beta S_{\left[ a \right.}{}^c S_{\left. b \right]}{}^d\
    (K_\alpha{}^\beta:=\frac{1}{4}\eta_\alpha{}^{ab}\eta^\beta{}_{cd}
    \cdot\varepsilon  \beta S_{ak} S_{bl}\varepsilon^{klcd})\sps\\[2ex]
    \beta:=\pm 1\spsd
    \end{array}
\end{equation}
    Then on the basis of (\ref{e1b}) and (\ref{e2b}), we will obtain
\begin{equation}
\label{e3b}
    K_\alpha{}^\beta K_\gamma{}^\delta g_{\beta\delta}=g_{\alpha\gamma}\spsd
\end{equation}
    Thus, from (\ref{e1b}), the equation (\ref{e3b}) will follow.\\
    \indent If now, on the contrary, $K_\alpha{}^\beta$ of the form (\ref{e3b}) is set. Define
\begin{equation}
\label{e4b}
    K_{aa_1}{}^{bb_1}:=\eta^\alpha{}_{aa_1}\eta_\beta{}^{bb_1}K_\alpha{}^\beta\spsd
\end{equation}
    Thus, (\ref{e3b}) can be copied as
\begin{equation}
\label{e46ab}
    \frac{1}{4}K_{aa_1}{}^{bb_1}K_{cc_1}{}^{dd_1}
    \varepsilon_{bb_1dd_1}=\varepsilon_{aa_1cc_1}\spsd
\end{equation}
    The formula (\ref{e46ab}) means that the transformation $K_{aa_1}{}^{bb_1}$ should be \emph{regular}, i.e., $\forall r^{aa_1}\ne 0 \ \Rightarrow\ K_{aa_1}{}^{bb_1}r^{aa_1}\ne 0,\ K_{aa_1}{}^{bb_1}r_{bb_1}\ne 0$.\\
\begin{proof}
    Indeed, we will assume the contrary: $\exists r^{aa_1}\ne 0$ and $K_{aa_1}{}^{bb_1}r_{bb_1}=0$. It will mean that the transformation $K_{aa_1}{}^{bb_1}$ is \emph{singular}
\begin{equation}
\label{e46abc}
     0=r_{bb_1}\cdot\frac{1}{4} K_{aa_1}{}^{bb_1} K_{cc_1}{}^{dd_1}
     \varepsilon_{bb_1dd_1}=\varepsilon_{aa_1cc_1}r^{aa_1}\spsd
\end{equation}
   From this, the equality $r^{aa_1}=0$ will follows. Contradiction.
\end{proof}

   For further calculations we need the following lemma.

\begin{lemma}
\label{lemma1}
   Choose two non-zero vectors $r_1{}^\alpha ,r_2{}^\beta \in \tau^\mathbb C_x$, where $x$ is an arbitrary point at the base: the complex Euclidean space $\mathbb C\mathbb R^6$ equipped with the metric tensor $g_{\alpha\beta}$. Then the three following conditions are equivalent:
\begin{enumerate}
     \item  $r_1{}^\alpha(r_1)_\alpha=0$ ,
            $r_2{}^\alpha(r_2)_\alpha=0$ ,
            $r_1{}^\alpha(r_2)_\alpha=0$ ;
     \item  $r_1{}^\alpha=\frac{1}{2}\eta^\alpha{}_{aa_1} X^aY^{a_1}$,
            $r_2{}^\alpha=\frac{1}{2}\eta^\alpha{}_{aa_1} X^aZ^{a_1}$;
     \item  $(r_1)_\alpha=\frac{1}{2}\eta_\alpha{}^{aa_1} \tilde X_a\tilde Y_{a_1}$,
            $(r_2)_\alpha=\frac{1}{2}\eta_\alpha{}^{aa_1} \tilde X_a\tilde Z_{a_1}$;
    \end{enumerate}
    where the vectors $X^a,Y^a,Z^a$ belong to the fiber $\mathbb C^4_x$ of the bundle $A^\mathbb C$, and
    $\tilde X_a,\tilde Y_a,\tilde Z_a$ are covectors of the dual fiber.
\end{lemma}

\begin{proof} $1).\Rightarrow 2).$\\
    Consider the first equation of Condition 1). and define
\begin{equation}
\label{e1vv}
    r_1{}^{aa_1}:=\eta_\alpha{}^{aa_1}r_1{}^\alpha\sps
\end{equation}
    and then on the basis of (\ref{e2}), we obtain
\begin{equation}
\label{e2vv}
    \begin{array}{c}
    r_1{}^\alpha (r_1)_\alpha=g_{\alpha\beta}r_1{}^\alpha r_1{}^\beta=
    \frac{1}{4}\eta_\alpha{}^{aa_1}\eta_\beta{}^{cc_1}
    \varepsilon_{aa_1cc_1}\frac{1}{2}\eta^\alpha{}_{dd_1}r_1{}^{dd_1}
    \frac{1}{2}\eta^\beta{}_{kk_1}r_1{}^{kk_1}=\\[2ex]
    =\frac{1}{4}r_1{}^{dd_1}
    \delta_{\left[\right.d}{}^a\delta_{d_1\left.\right]}{}^{a_1}
    r_1{}^{kk_1}
    \delta_{\left[\right.k}{}^c\delta_{k_1\left.\right]}{}^{c_1}
    \varepsilon_{aa_1cc_1}=
    \frac{1}{2}
    r_1{}^{aa_1}(r_1)_{aa_1}=0\spsd
    \end{array}
\end{equation}
    Define
\begin{equation}
\label{e3vv}
    pf(r):=r^\alpha r_\alpha=\frac{1}{2}r^{aa_1}r_{aa_1}\spsd
\end{equation}
    Then from (\ref{e16v}), the equalities
\begin{equation}
\label{e4vv}
    r^{ab}r_{bc}=-pf(r)\delta_c{}^d\ \ \Leftrightarrow\ \ \
    3r^{a\left[\right.b}r^{cd\left.\right]}=pf(r)\varepsilon^{abcd}
\end{equation}
    will follow, and for the vector $r_1{}^\alpha$, we obtain
\begin{equation}
\label{e5vv}
    r_1{}^{ab}r_1{}^{cd}=r_1{}^{ac}r_1{}^{bd}-r_1{}^{bc}r_1{}^{ad}\spsd
\end{equation}
    Since, $r_1{}^{cd}$ is a non-zero bivector then some covectors $A_c,B_d$, that $r_1{}^{cd}A_cB_d\ne 0$, $r_1{}^{cd}A_cB_d\in \mathbb R$, exist. Put
\begin{equation}
\label{e6vv}
    P^a:=\sqrt{2}r_1{}^{ak}A_k/\sqrt{(r_1{}^{cd}A_cB_d)}\sps
    Q^a:=\sqrt{2}r_1{}^{bk}B_k/\sqrt{(r_1{}^{cd}A_cB_d)}\spsd
\end{equation}
    Then from (\ref{e5vv}), the equality
\begin{equation}
\label{e7vv}
     r_1{}^{ab}=P^{\left[\right.a}Q^{b\left.\right]}
\end{equation}
     will follow. In this case, $P^a, Q^a$ are linearly independent. It is also possible to obtain the expansion for $r_2{}^{ab}$
\begin{equation}
\label{e8vv}
     r_2{}^{ab}=R^{\left[\right.a}S^{b\left.\right]}
\end{equation}
     from the second equation of Condition 1). such that the vectors $R^a,S^a$ will be also linearly independent. From the third equation of Condition 3). implies the following relation
\begin{equation}
\label{e9vv}
    \begin{array}{c}
    0=r_1{}^\alpha(r_2)_\alpha=\frac{1}{4}
    \varepsilon_{abcd}r_1{}^{ab}r_2{}^{cd}=
    \frac{1}{4}\varepsilon_{abcd}P^aQ^bR^cS^d=0\spsd
    \end{array}
\end{equation}
    This means that the vectors $P^a,Q^b,R^c,S^d$ are linearly dependent
\begin{equation}
\label{e10vv}
    \alpha P^a+\beta Q^a+\gamma R^a+\delta S^a=0\sps
    |\alpha|+|\beta|+|\gamma|+|\delta|\ne 0\spsd
\end{equation}
    In this case, either $\alpha \ne 0 $ or $\beta \ne 0$.
    Otherwise, $(\alpha = \beta = 0)$, and the vectors  $R^c,S^c$ would be linearly dependent. For definiteness, let $\alpha \ne 0$. Then again, either $\gamma \ne 0 $ or $\delta \ne 0$. Put
\begin{equation}
\label{e11vv}
    \begin{array}{c}
    X^a:=P^a+(\beta/\alpha) Q^a=-((\gamma/\alpha) R^a+(\delta/\alpha) S^a)\sps
    Y^a:=Q^a\sps\\[2ex]
    Z^a:=\left\{
    \begin{array}{l}
    \ (\alpha/\delta) R^a,\ \delta\ne 0,\gamma =0\sps \\
    -(\alpha/\gamma) S^a,\ \gamma \ne 0\spsd
    \end{array}
    \right.
    \end{array}
\end{equation}
    Thus, from (\ref{e11vv}), Condition 2). of the lemma implies. \\
    $2).\Rightarrow 1).$\\
    It is verified directly, for example,
\begin{equation}
\label{e12vv}
    r_1{}^\alpha(r_2)_\alpha=\frac{1}{4}\varepsilon_{aa_1bb_1}
    X^aY^{a_1}X^bZ^{b_1}=0\spsd
\end{equation}
    \indent In the same way, the equivalences $1).\Rightarrow 3).$ and $3).\Rightarrow 1).$ can be proved. These implications are possible because of the metric tensor presence in the tangent bundle and the metric quadrivector presence in the bundle $A^\mathbb C$.
\end{proof}
    Take two non-zero isotropic vectors
\begin{equation}
\label{e13vv}
    \begin{array}{c}
    r_1{}^\alpha=\frac{1}{2}\eta^\alpha{}_{aa_1} M^aN^{a_1}\sps
    r_2{}^\alpha=\frac{1}{2}\eta^\alpha{}_{aa_1} M^aL^{a_1}
    \end{array}
\end{equation}
    and two non-zero isotropic covectors
\begin{equation}
\label{e14vv}
    (\tilde r_1)_\alpha=\frac{1}{2}\eta_\alpha{}^{aa_1} \tilde M_a\tilde N_{a_1}\sps
    (\tilde r_2)_\alpha=\frac{1}{2}\eta_\alpha{}^{aa_1} \tilde M_a\tilde L_{a_1}
\end{equation}
    satisfying Condition 2). and Condition 3). of Lemma \ref{lemma1} respectively. We act on (\ref{e13vv}), (\ref{e14vv}) with the orthogonal transformation $K_\alpha{}^\beta$ and obtain
\begin{equation}
\label{e15vv}
    \begin{array}{c}
    r_3{}^\alpha:=K_\beta{}^\alpha r_1{}^\beta\sps
    r_4{}^\alpha:=K_\beta{}^\alpha r_2{}^\beta\sps\\[2ex]
    (\tilde r_3)_\alpha:=K_\alpha{}^\beta (\tilde r_1)_\beta\sps
    (\tilde r_4)_\alpha:=K_\alpha{}^\beta (\tilde r_2)_\beta\spsd
    \end{array}
\end{equation}
    Then from Condition 1). of Lemma \ref{lemma1}  with the account (\ref{e3b}) and (\ref{e46ab}), the equations
\begin{equation}
\label{e16vv}
    \begin{array}{c}
    r_3{}^\alpha (r_3)_\alpha=K_\alpha{}^\beta K_\gamma{}^\delta
    g_{\beta\delta} r_1{}^\alpha r_1{}^\gamma =
    r_1{}^\alpha (r_1)_\alpha=0\sps\\[2ex]
    r_4{}^\alpha (r_3)_\alpha=r_2{}^\alpha (r_1)_\alpha=0\sps
    r_4{}^\alpha (r_4)_\alpha=r_2{}^\alpha (r_2)_\alpha=0\sps\\[2ex]
    \tilde r_4{}^\alpha (\tilde r_3)_\alpha=\tilde r_2{}^\alpha (\tilde r_1)_\alpha=0\sps
    \tilde r_4{}^\alpha (\tilde r_4)_\alpha=\tilde r_2{}^\alpha (\tilde r_2)_\alpha=0\sps\\[2ex]
    \tilde r_3{}^\alpha (\tilde r_3)_\alpha=\tilde r_1{}^\alpha \tilde r_1{}_\alpha=0
    \end{array}
\end{equation}
    will follow. Since, the transformation $K_{aa_1}{}^{bb_1}$ is regular then the vectors and covectors (\ref{e15vv}) are non-zero elements, and hence from Condition 2). and Condition 3). of Lemma \ref{lemma1}, we obtain
\begin{equation}
\label{e17vv}
    \begin{array}{c}
    r_3{}^\alpha=\frac{1}{2}\eta^\alpha{}_{aa_1} X^a Y^{a_1}\sps
    r_4{}^\alpha=\frac{1}{2}\eta^\alpha{}_{aa_1} X^a Z^{a_1}\\[2ex]
    (\tilde r_3)_\alpha=\frac{1}{2}\eta_\alpha{}^{aa_1}\tilde X_a\tilde  Y_{a_1}\sps
    (\tilde r_4)_\alpha=\frac{1}{2}\eta_\alpha{}^{aa_1}\tilde X_a\tilde  Z_{a_1}\spsd
    \end{array}
\end{equation}
    Consider the identity
\begin{equation}
\label{e18vv}
    r_3{}^{\left[\right.\alpha}r_4{}^{\beta \left.\right]}=
    K_{\left[\right.\gamma}{}^{\left[\right.\alpha}
    K_{\delta \left.\right]}{}^{\beta \left.\right]}
    r_1{}^{\left[\right.\gamma}r_2{}^{\delta \left.\right]}\spsd
\end{equation}
    We rewrite it using the formulas (\ref{e4}) and (\ref{e21})
\begin{equation}
\label{e19vv}
    \begin{array}{c}
    A^{\alpha\beta}{}_a{}^b\cdot\frac{1}{4}X^aY^{b_1}X^cZ^{c_1}
    \varepsilon_{cc_1bb_1}=\\
    =\frac{1}{4}A^{\gamma\delta}{}_r{}^s M^rN^{k_1}M^l L^{l_1}\varepsilon_{ll_1sk_1}
    A_{\gamma\delta c}{}^d A^{\alpha\beta}{}_a{}^b\cdot
    \frac{1}{8}(K_{dm}{}^{ak}K^{cm}{}_{bk}-K^{cmak}K_{dmbk})\sps\\[2ex]
    X^aY^{b_1}X^cZ^{c_1}\varepsilon_{cc_1bb_1}=\\
    =2 \delta_r{}^d
    \delta_c{}^s M^rN^{k_1}M^lL^{l_1}\varepsilon_{k_1ll_1s}
    \cdot\frac{1}{8}(K_{dm}{}^{ak}K^{cm}{}_{bk}-K^{cmak}K_{dmbk})\sps\\[2ex]
    X^a(Y^{b_1}X^cZ^{c_1}\varepsilon_{cc_1bb_1})=\\
    =M^d(N^{k_1}M^lL^{l_1}\varepsilon_{k_1ll_1c})
    \cdot\frac{1}{4}(K_{dm}{}^{ak}K^{cm}{}_{bk}-K^{cmak}K_{dmbk})\spsd
    \end{array}
\end{equation}
    Define
\begin{equation}
\label{e20vv}
    \begin{array}{c}
    T_b:=Y^{b_1}X^cZ^{c_1}\varepsilon_{cc_1bb_1}\sps
    P_c:=N^{k_1}M^lL^{l_1}\varepsilon_{k_1ll_1c}\sps\\[2ex]
    \tilde K_d{}^c{}_b{}^a:=
    \frac{1}{8}(K^{cmak}K_{dmbk}-K_{dm}{}^{ak}K^{cm}{}_{bk})
    \end{array}
\end{equation}
   such that the equations
\begin{equation}
\label{e21vv}
    X^cT_c=0\sps
    M^cP_c=0\sps
    \tilde K_c{}^c{}_b{}^a=0\sps\tilde K_d{}^c{}_b{}^b=0\sps
\end{equation}
\begin{equation}
\label{e22vv}
    K_{aa_1}{}^{bb_1}K_{cc_1}{}^{dd_1}-
    K_{aa_1}{}^{dd_1}K_{cc_1}{}^{bb_1}=
    8\varepsilon_{cc_1k\left[\right.a_1}
    \tilde K_{a\left.\right]}{}^k{}_r{}^{\left[\right.b}
    \varepsilon^{b_1\left.\right]rdd_1}
\end{equation}
    are executed. Whence,
\begin{equation}
\label{e23vv}
    X^aT_b=-2 M^dP_c\tilde K_d{}^c{}_b{}^a\spsd
\end{equation}
    In the same way, from the identity
\begin{equation}
\label{e24vv}
    (\tilde r_3)_{\left[\right.\gamma}
    (\tilde r_4)_{\delta\left.\right]}=
    K_{\left[\right.\gamma}{}^{\left[\right.\alpha}
    K_{\delta\left.\right]}{}^{\beta\left.\right]}
    (\tilde r_1)_{\left[\right.\alpha}
    (\tilde r_2)_{\beta\left.\right]}\sps
\end{equation}
    determining
\begin{equation}
\label{e25vv}
    \tilde T^b:=\tilde Y_{b_1}\tilde X_c\tilde Z_{c_1}\varepsilon^{cc_1bb_1}\sps
    \tilde P^b:=\tilde N_{k_1}\tilde M_l\tilde Z_{l_1}\varepsilon^{k_1ll_1b}\sps
\end{equation}
    we can obtain
\begin{equation}
\label{e26vv}
    \tilde X_d\tilde T^c=-2 \tilde M_a\tilde P^b\tilde K_d{}^c{}_b{}^a\spsd
\end{equation}
     We now find \emph{homogeneous solution} satisfying the equations (\ref{e23vv}) and (\ref{e26vv})
\begin{equation}
\label{e27vv}
    \left\{
    \begin{array}{c}
    (\tilde K_{\mbox{homogeneous}})_d{}^c{}_b{}^aM^dP_c=0\sps \\
    (\tilde K_{\mbox{homogeneous}})_d{}^c{}_b{}^a\tilde M_a\tilde P^b=0\sps
    \end{array}
    \right.
    \ \ \Leftrightarrow \ \ \
    \left\{
    \begin{array}{c}
    M^dP_d=0\sps  \\
    \tilde M_a\tilde P^a=0\spsd
    \end{array}
    \right.
\end{equation}
    These two systems should coincide identically since the left system is valid for each $M^a,\tilde M_a,P_a,\tilde P^a$ satisfying the right system. This is possible only when
\begin{equation}
\label{e28vv}
    (\tilde K_{\mbox{homogeneous}})_d{}^c{}_b{}^a=\alpha \delta_d{}^c
    \delta_b{}^a \sps \alpha \in \mathbb C\spsd
\end{equation}
    Next, we consider \emph{particular solution} of the equation (\ref{e23vv}), for example. This solution should be regular that means that we can not satisfy the condition
\begin{equation}
\label{e29vv}
    \exists\ M^d\ne 0 , P_c\ne 0\sps{\mbox{   that   }}\
    (\tilde K_{\mbox{particular}})_d{}^c{}_b{}^aM^dP_c=0
\end{equation}
    (The condition (\ref{e29vv}) is equivalent to the singularity of the transformation $K_{aa_1}{}^{bb_1}$ (see (\ref{e22vv}))). To solve (\ref{e23vv}), we need the following lemma.

\begin{lemma}
\label{lemma2}
    Let {\bf {A,B,C,...}} be collective indices. Then the three following condition on $\lambda_{\mbox{\scriptsize\bf AB}}{}^{\mbox{\scriptsize\bf Q}}$ are equivalent:
\begin{enumerate}
    \item $\lambda_{\mbox{\scriptsize\bf AB}}{}^{\mbox{\scriptsize\bf Q}}
          \xi_{\mbox{\scriptsize\bf Q}}$ can be represented as $\rho_{\mbox{\scriptsize\bf A}}
          \xi_{\mbox{\scriptsize\bf B}}$ for each $\xi_{\mbox{\scriptsize\bf Q}}$;
    \item $\lambda_{{\mbox{\scriptsize\bf A}_1}\left[\right. {\mbox{\scriptsize\bf B}_1}}{}^
                {\left(\right.{\mbox{\scriptsize\bf  Q}_1}}
           \lambda_{|{\mbox{\scriptsize\bf A}_2}|{\mbox{\scriptsize\bf B}_2}\left.\right]}{}^
                {{\mbox{\scriptsize\bf  Q}_1}\left.\right)}$=0;
    \item $\lambda_{\mbox{\scriptsize\bf AB}}{}^{\mbox{\scriptsize\bf Q}}$
          can be represented either as $\alpha_{\mbox{\scriptsize\bf A}}
          \varphi_{\mbox{\scriptsize\bf B}}{}^{\mbox{\scriptsize\bf Q}}$,
          or as $\theta_{\mbox{\scriptsize\bf A}}{}^
          {\mbox{\scriptsize\bf Q}}\beta{\mbox{\scriptsize\bf B}}$.\\
\end{enumerate}
\end{lemma}

\begin{proof}
    It is given on the page 160 of \cite[v. 1, Pr. (3.5.8)(eng)]{Penrose1}. As in its proof, the metric tensor did not the participate then this lemma is
    true for any arrangement of indices: or top, or bottom.\\
\end{proof}

    We apply Lemma \ref{lemma2} to the equation (\ref{e23vv}) and obtain the 2 variants:
\begin{equation}
\label{e30vv}
    \begin{array}{ccccc}
    a). & (\tilde K_{\mbox{particular}})_d{}^c{}_b{}^aP_c=A^aB_{bd}\sps &
    b). & (\tilde K_{\mbox{particular}})_d{}^c{}_b{}^aP_c=A_d{}^aB_b
    \end{array}
\end{equation}
    \indent First. Suppose that Item a). and Item b). are performed simultaneously. We use one more lemma.

\begin{lemma}
\label{lemma3}
     From $\psi_{\mbox{\scriptsize\bf AB}}\varphi_{\mbox{\scriptsize\bf C}}=
     \chi_{\mbox{\scriptsize\bf A}}\theta_{\mbox{\scriptsize\bf BC}}$,
     the execution of the identities $\psi_{\mbox{\scriptsize\bf AB}}=
     \chi_{\mbox{\scriptsize\bf A}}\xi_{\mbox{\scriptsize\bf B}}$,
     $\theta_{\mbox{\scriptsize\bf BC}}=\xi_{\mbox{\scriptsize\bf B}}
     \varphi_{\mbox{\scriptsize\bf C}}$ follows for some
     $\xi_{\mbox{\scriptsize\bf B}}$.
\end{lemma}

\begin{proof}
     It is given on page 160 of the monography \cite[v. 1, (3.5.6)(eng)]{Penrose1}. And just as in the previous lemma, the location of the index is not significant.
\end{proof}
     We apply this lemma to the equation (\ref{e30vv}) that will give
\begin{equation}
\label{e31vv}
    (\tilde K_{\mbox{particular}})_d{}^c{}_b{}^aP_c=A_dB^aC_b\spsd
\end{equation}
    But there is the vector $M^d \ne 0$, that $M^dP_d = 0$ and $M^dA_d = 0$, then from (\ref{e31vv}), the statement (\ref{e29vv}) follows that is impossible. From this, we conclude that at the same time, a). and b). from (\ref{e30vv}) can not be executed.\\
    \indent Second. Now, we apply Lemma \ref{lemma2} to the equation (\ref{e30vv}). This will give the four variants:
\begin{equation}
\label{e32vv}
    \begin{array}{ccccc}
    I). & a). & (\tilde K_{\mbox{particular}})_d{}^c{}_b{}^a=A^{ac}B_{db}\sps &
          b). & (\tilde K_{\mbox{particular}})_d{}^c{}_b{}^a=C^aD^c{}_{db}\sps\\[2ex]
    II).& a). & (\tilde K_{\mbox{particular}})_d{}^c{}_b{}^a=S_d{}^aE_b{}^c\sps &
          b). & (\tilde K_{\mbox{particular}})_d{}^c{}_b{}^a=U_d{}^{ac}V_b\spsd
    \end{array}
\end{equation}
     Items b). in the both cases disappear as they lead to a singular transformation (see the explanation after the formula (\ref{e31vv})). \\
     \indent For definiteness, we will consider Item II).a). We contract \emph{common solution}
\begin{equation}
\label{e33vv}
    \tilde K_d{}^c{}_b{}^a=S_d{}^aE_b{}^c+\alpha\delta_d{}^c
    \delta_b{}^a
\end{equation}
     of the equation (\ref{e23vv}) with $\delta_c{}^d$ and using (\ref{e21vv}), we obtain
\begin{equation}
\label{e34vv}
    0=S_k{}^aE_b{}^k+4\alpha\delta_b{}^a\ \ \Rightarrow\ \ \
    E_b{}^k=-4\alpha(S^{-1})_b{}^k
\end{equation}
     (the transformation $S_k{}^a$ is nondegenerate since otherwise the transformation $\tilde K_d{}^c{}_b{}^a$ will be singular
     that would entail the singular transformation $K_{aa_1}{}^{bb_1}$). Therefore,
\begin{equation}
\label{e35vv}
    \tilde K_d{}^c{}_b{}^a=(-\alpha)(4S_d{}^a(S^{-1})_b{}^c-
    \delta_d{}^c\delta_b{}^a)\spsd
\end{equation}
    Contract (\ref{e22vv}) with $\varepsilon_{dd_1pp_1}K_{ss_1}{}^{pp_1}$ that will give with the account (\ref{e46ab})
\begin{equation}
\label{e36vv}
    \begin{array}{c}
    K_{aa_1}{}^{bb_1}\varepsilon_{ss_1cc_1}-
    K_{cc_1}{}^{bb_1}\varepsilon_{ss_1aa_1}=
    8\varepsilon_{cc_1k\left[\right.a_1}
    \tilde K_{a\left.\right]}{}^k{}_r{}^{\left[\right.b}
    K_{ss_1}{}^{b_1\left.\right]r}\spsd
    \end{array}
\end{equation}
    Contract (\ref{e36vv}) with $\varepsilon^{ss_1cc_1}$ using the formulas (\ref{e7d})
\begin{equation}
\label{e37vv}
    5K_{aa_1}{}^{bb_1}=8\tilde
    K_{\left[\right.a}{}^k{}_{\left|\right.r}{}^{\left[\right.b}
    K_{k\left.\right|a_1\left.\right]}{}^{b_1\left.\right]r}
\end{equation}
    and substitute (\ref{e35vv}) in (\ref{e37vv})
\begin{equation}
\label{e38vv}
    \begin{array}{c}
    5K_{aa_1}{}^{bb_1}=(-8\alpha)
    K_{k\left[\right.a_1}{}^{\left[\right.b_1|r|}
    (4S_{a\left.\right]}{}^{b\left.\right]}(S^{-1})_r{}^k-
    \delta_{a\left.\right]}{}^{|k|}\delta_r{}^{b\left.\right]})\sps\\[2ex]
    K_{aa_1}{}^{bb_1}=\frac{32\alpha}{5+8\alpha}
    K_{k\left[\right.a_1}{}^{r\left[\right.b_1}
    S_{a\left.\right]}{}^{b\left.\right]}(S^{-1})_r{}^k
    \end{array}
\end{equation}
     ($\alpha\ne 0,\alpha\ne \pm 5/8$; otherwise, the transformation $\tilde K_a{}^k{}_r{}^b$ is singular). Put
\begin{equation}
\label{e39vv}
     K_{aa_1}{}^{bb_1}:=
     2M_{\left[\right.a_1}{}^{\left[\right.b_1}
     S_{a\left.\right]}{}^{b\left.\right]}
\end{equation}
    and obtain
\begin{equation}
\label{e40vv}
    \begin{array}{c}
     K_{aa_1}{}^{bb_1}:=
    \frac{32\alpha}{5+8\alpha}
    S_{\left[\right.a}{}^{\left[\right.b}
    (M_{a_1\left.\right]}{}^{b_1\left.\right]}+
    \frac{1}{2}
    S_{a_1\left.\right]}{}^{b_1\left.\right]}M_k{}^r(S^{-1})_r{}^k)=
    2M_{\left[\right.a_1}{}^{\left[\right.b_1}
    S_{a\left.\right]}{}^{b\left.\right]}\spsd
    \end{array}
\end{equation}
    Define
\begin{equation}
\label{e41vv}
     M_k{}^r:=\beta S_k{}^r\Rightarrow
     \beta=\frac{8\alpha}{5-8\alpha}M_k{}^r(S^{-1})_r{}^k\Rightarrow \alpha=\frac{1}{8}\spsd
\end{equation}
    Then from (\ref{e40vv}), the definition
\begin{equation}
\label{e42vv}
    \begin{array}{c}
     K_{aa_1}{}^{bb_1}:=
     2\beta S_{\left[\right.a_1}{}^{\left[\right.b_1}
     S_{a\left.\right]}{}^{b\left.\right]}
    \end{array}
\end{equation}
    follows. Substituting (\ref{e42vv}) in (\ref{e46ab}), we find out that $\beta =\pm 1$. \\
    \indent Similarly, Item I).a). shall be considered. In this case, the transformation $K_{aa_1}{}^{bb_1}$
     has the form
\begin{equation}
\label{e42vva}
    K_{aa_1}{}^{bb_1}:=
    \beta S_{ac}S_{a_1c_1}\varepsilon\varepsilon^{cc_1bb_1}\spsd
\end{equation}
    Note that the factor $\varepsilon$ can be included in the definition of $S_{ac}$. \\
    \indent In this way, from (\ref{e3b}), we can really come to (\ref{e1b}) that completes the proof of the inverse path of the theorem. Therefore, the transformation $S_a{}^b$($S_a{}_b$) will match to one and only one transformation $K_\alpha{}^\beta$, and conversely, each transformation $K_\alpha{}^\beta$ will correspond to two and only two transformations $\pm S_a{}^b$ ($\pm S_a{}_b$), that $det\parallel S_c{}^d \parallel=1\ (det\parallel S_{cd} \parallel=1)$.\\

    Find out what a transformation corresponds to the special transformation $K_\alpha{}^\beta$. For this purpose, let's consider the following identity
    $$
    K_\alpha{}^\beta K_\gamma{}^\delta K_\lambda{}^\mu
    K_\nu{}^\chi K_\pi{}^\omega K_\sigma{}^\xi
    e_{\beta\delta\mu\chi\omega\xi}=\pm
    e_{\alpha\gamma\lambda\nu\pi\sigma}\sps
    $$
\begin{equation}
\label{e5b}
    e_{\beta\delta\mu\chi\omega\xi}=
    e_{[\beta\delta\mu\chi\omega\xi]}\sps
    e:=e_{123456}\spsd
\end{equation}
     At the same time, $e_{\beta\delta\mu\chi\omega\xi}$ is the 6-vector skew-symmetric in all indices. Consequently, we can get the record
\begin{equation}
\label{e20v}
    \begin{array}{c}
    K_1{}^\beta K_2{}^\delta K_3{}^\mu
    K_4{}^\chi K_5{}^\omega K_6{}^\xi
    e_{\beta\delta\mu\chi\omega\xi}=\pm e_{123456} \ \ \Leftrightarrow \ \ \
    det\parallel K_\alpha{}^\beta \parallel=\pm 1
    \end{array}
\end{equation}
     equivalent to (\ref{e5b}). If $K_\alpha{}^\beta$ is the special transformation then in (\ref{e5b}), the sign ''+'' is chosen that
     means that $det\parallel K_\alpha{}^\beta \parallel=1$. In otherwise case (the non-special transformation), the sign ''-'' is chosen. Since for a 4-vector, there are the identities
    $$
    e_{\alpha\beta\gamma\delta}=
    e_{[\alpha\beta\gamma\delta]}\sps
    $$
\begin{equation}
\label{e6b}
    e_{\alpha\beta\gamma\delta}=A_{\alpha\beta b}{}^{a}
    A_{\gamma\delta d}{}^{c} e_a{}^b{}_c{}^d\sps
\end{equation}
    $$
    e_a{}^b:=\frac{1}{3}e_a{}^k{}_k{}^b\sps
    $$
    following from (\ref{e4}) then, using its symmetries, we can obtain the expansion
     $$
    B_{\alpha\beta\gamma\delta r}{}^k:=
    A_{\alpha\beta r}{}^d A_{\gamma\delta d}{}^k+
    A_{\alpha\beta c}{}^k A_{\gamma\delta r}{}^c\sps
    $$
\begin{equation}
\label{e7b}
    e_{\alpha\beta\gamma\delta}:=B_{\alpha\beta\gamma\delta r}{}^k e_k{}^r\sps
\end{equation}
    $$
    e_k{}^k=0
    $$
    (the proof is given in Appendix (\ref{e56p}) - (\ref{e58p})). In turn, using these formulas, we can obtain the expansion for the 6-vector
    $$
    e_{\alpha\beta\gamma\delta\lambda\mu}=
    A_{\alpha\beta b}{}^aA_{\gamma\delta d}{}^c
    A_{\lambda\mu l}{}^k \ e_a{}^b{}_c{}^d{}_k{}^l\sps
    $$
    $$
    e_a{}^b{}_c{}^d{}_k{}^l=\frac{i}{8}
    (2((4\delta_k{}^b\delta_c{}^l-\delta_k{}^l\delta_c{}^b)\delta_a{}^d
        +(4\delta_k{}^d\delta_a{}^l-\delta_k{}^l\delta_a{}^d)\delta_c{}^b)-
    $$
\begin{equation}
\label{e8b}
        -(4\delta_k{}^b\delta_a{}^l-\delta_k{}^l\delta_a{}^b)\delta_c{}^d-
        (4\delta_k{}^d\delta_c{}^l-\delta_k{}^l\delta_c{}^d)\delta_a{}^b)
\end{equation}
    (the proof is given in Appendix (\ref{e59p}) - (\ref{e68p})). From (\ref{e8b}), the identity
\begin{equation}
\label{e78p}
       \begin{array}{c}
       e_{\alpha\gamma\lambda\nu\pi\sigma}=2
       \eta_{\left[\right.\alpha}{}^{bb_1}\eta_{\gamma |dd_1|}
       \eta_{\lambda}{}^{mm_1}\eta_{\nu |xx_1|}
       \eta_{\pi}{}^{rr_1}\eta_{\sigma\left.\right]ss_1}\cdot
       i \delta_r{}^d\delta_m{}^s\delta_b{}^x
       \delta_{b_1}{}^{d_1}\delta_{m_1}{}^{x_1}\delta_{r_1}{}^{s_1}=\\
       =\frac{1}{4}\eta_{\left[\right.\alpha}{}^{bb_1}\eta_{\gamma}{}^{dd_1}
       \eta_{\lambda}{}^{mm_1}\eta_{\nu}{}^{xx_1}
       \eta_{\pi}{}^{rr_1}\eta_{\sigma\left.\right]}{}^{ss_1}\cdot
       i \ \varepsilon_{rb_1dd_1}\varepsilon_{mr_1ss_1}\varepsilon_{bm_1xx_1}=\\
       =i(A_{\alpha\gamma b}{}^aA_{\lambda\nu a}{}^cA_{\pi\sigma c}{}^b+A_{\alpha\gamma b}{}^aA_{\lambda\nu c}{}^bA_{\pi\sigma a}{}^c)
       \end{array}
\end{equation}
    will imply. From (\ref{e8b}) and (\ref{e78p}), applying the definition (\ref{e5b}) to the special (non-special) transformations, we obtain
\begin{equation}
\label{e79p}
       \begin{array}{c}
       S_a{}^b S_{a_1}{}^{b_1} S_c{}^d S_{c_1}{}^{d_1}\varepsilon_{bb_1dd_1}=
       \varepsilon_{aa_1cc_1}\ (
       S_{ab} S_{a_1b_1} S_{cd} S_{c_1d_1}\varepsilon^{bb_1dd_1}=
       \varepsilon^{-2}\varepsilon_{aa_1cc_1})
       \end{array}
\end{equation}
    that gives the identity (\ref{e1b}). It follows that (\ref{e42vv}) corresponds to the special transformation and (\ref{e42vva}) corresponds to the non-special transformation $K_\alpha{}^\beta$. \\
    \indent Finally, the transformations $S_a{}^b$ and $iS_a{}^b$ belong to the same group $SL(4,\mathbb C)$. This means that in the formula (\ref{e42vv}), we can consider only the case when $\beta = +1$. Therefore, the group $SL(4,\mathbb C)$ is the double covering of the connected identity component of the group $SO(6,\mathbb C)$ (we denote it as $SO^e(6,\mathbb C)$).
\end{proof}

\subsubsection{\texorpdfstring{Real representation of the double covering $SL(4,\mathbb C)/\{\pm 1\}\cong SO(6, \mathbb C)$ in the presence of the involution $S_\alpha{}^{\beta '}$}{Real representation of the double covering}}

\begin{theorem}
\label{theorem6}
     Suppose that in the six-dimensional complex Euclidean space $\mathbb C\mathbb R^6$,  the involution
\begin{equation}
\label{e28b}
     S_\alpha{}^{\beta '}\bar S_{\beta '}{}^\gamma =
     \delta_\alpha{}^\gamma\sps
     S_\alpha{}^{\beta '}S_\gamma {}^{\delta '}\bar g_{\beta '\delta '}=
     g_{\alpha\gamma}
\end{equation}
     is given. Define
\begin{equation}
\label{e29b}
     \begin{array}{ccccc}
     & s_{aba'b'} & = & \bar\eta_{\beta 'a'b'}\eta^{\alpha}{}_{ab}
     S_\alpha{}^{\beta '} &
     \end{array}
\end{equation}
     then the relations
\begin{equation}
\label{e29ba}
     s_{aba'b'}=\bar s_{a'b'ab}\sps
     s_{ab}{}^{a'b'}\bar s_{a'b'}{}^{cd}=2\delta_{ab}^{cd}
\end{equation}
     are executed, and there are two and only two decompositions
\begin{equation}
\label{e30b}
     \begin{array}{ccccccc}
     I). &  s_{ab}{}^{a'b'}   =
     2s_{\left[\right.a}{}^{a'}  s_{b\left.\right]}{}^{b'}\sps
       & s_a{}^{b'}\bar s_{b'}{}^c=\pm\delta_a{}^c\sps\\[2ex]
     II). &    s_{ab}{}_{a'b'}    =
     2s_{\left[\right.a}{}_{|a'|}s_{b\left.\right]}{}_{b'}\sps
      &  s_{ab'}=\pm \bar s_{b'a}\spsd &
     \end{array}
\end{equation}
     In addition, for real inclusions, the identities
\begin{equation}
\label{e22v}
     \begin{array}{cccc}
     I).  &\bar \eta_i{}^{a'b'}=\eta_j{}^{cd}s_c{}^{a'}s_d{}^{b'} &
     II).&\bar \eta_i{}^{a'b'}=\eta_{jcd}s^{ca'}s^{db'}\sps \\
     & \bar A_{ij a'}{}^{b'}=A_{ijc}{}^d\bar s_{a'}{}^cs_d{}^{b'} &
     & \bar A_{ij a'}{}^{b'}=-A_{ijc}{}^d s_{da'}s^{cb'}\\
     \end{array}
\end{equation}
     will be true.
\end{theorem}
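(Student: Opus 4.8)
The plan is to read $s_{aba'b'}$ of (\ref{e29b}) as the spinor lift of the conjugate-linear isometry $S_\alpha{}^{\beta'}$ and then to rerun, almost verbatim, the factorization argument from the inverse path of Theorem~\ref{theorem5}. The only genuinely new input is that the two index groups of $s$ now sit on opposite sides of the primed/unprimed split, so the regularity and decomposition lemmas get applied to a Hermitian object rather than to a holomorphic $K$.

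First I would settle the algebraic relations (\ref{e29ba}). For the Hermitian symmetry $s_{aba'b'}=\bar s_{a'b'ab}$, conjugate the definition (\ref{e29b}); complex conjugation interchanges primed and unprimed indices, so $S_\alpha{}^{\beta'}$ turns into $\bar S_{\alpha'}{}^\beta$. The involution relation $S_\alpha{}^{\beta'}\bar S_{\beta'}{}^\gamma=\delta_\alpha{}^\gamma$ says $\bar S$ is the inverse of $S$, while the metric-preservation relation in (\ref{e28b}) says $S$ is orthogonal; together they give $\bar S_{\alpha'}{}^\gamma=g^{\gamma\mu}\bar g_{\alpha'\delta'}S_\mu{}^{\delta'}$, i.e. inverse equals metric-transpose. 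Feeding this through the Norden identities (\ref{e0}), (\ref{e2}) that tie $g$ to $\varepsilon$ returns $s$ with its two index groups interchanged, which is exactly the asserted symmetry. The normalization $s_{ab}{}^{a'b'}\bar s_{a'b'}{}^{cd}=2\delta_{ab}^{cd}$ I would get by contracting the two factors over the primed pair $a'b'$: the completeness relation in (\ref{e0}) for the conjugate operators collapses the $\bar\eta$'s, the first relation of (\ref{e28b}) eats $S\bar S=\delta$, and the residual $\tfrac12$ from (\ref{e1}) produces the factor $2$.

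The core is the decomposition (\ref{e30b}), and here I would copy the machinery developed after (\ref{e46ab}). Since $s_{ab}{}^{a'b'}$ is Hermitian and obeys $s\bar s=2\delta$, it is \emph{regular} in the sense defined after (\ref{e46ab}), so the antisymmetrized product identity that fed Lemma~\ref{lemma2} in the derivation of (\ref{e23vv}) applies with the collective indices now straddling the conjugation. Lemma~\ref{lemma2} forces $s$ to factor, and Lemma~\ref{lemma3} discards the mixed factorization exactly as in the argument following (\ref{e31vv}); what remains are precisely the two patterns of (\ref{e30b}). The constraints $s\bar s=2\delta$ and Hermitian symmetry then pin down the inner factors: in variant~I the small tensor $s_a{}^{a'}$ is a conjugate-linear map with $s_a{}^{b'}\bar s_{b'}{}^c=\pm\delta_a{}^c$, and in variant~II it is the Hermitian polarity $s_{ab'}=\pm\bar s_{b'a}$, matching the odd- and even-index forms already recorded in (\ref{e8}) and (\ref{e7}) respectively. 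That exactly one of the two alternatives is realized, and the determination of the sign $\pm$, is tied to the parity of the metric index; this signature bookkeeping---the place where the ``two and only two'' count and the sign have to be argued rather than merely computed---is the step I expect to be the main obstacle.

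Finally, the real-inclusion identities (\ref{e22v}) follow mechanically. Writing the real-frame operators as $\eta_i{}^{ab}=H_i{}^\alpha\eta_\alpha{}^{ab}$ and reading $\bar H_i{}^{\beta'}=H_i{}^\alpha S_\alpha{}^{\beta'}$ off the involution definition (\ref{e5v}) gives $\bar\eta_i{}^{a'b'}=H_i{}^\alpha S_\alpha{}^{\beta'}\bar\eta_{\beta'}{}^{a'b'}$; substituting the factorized $s$ of (\ref{e30b}) and contracting with the Norden completeness relation yields $\bar\eta_i{}^{a'b'}=\eta_j{}^{cd}s_c{}^{a'}s_d{}^{b'}$ in case~I and $\bar\eta_i{}^{a'b'}=\eta_{jcd}s^{ca'}s^{db'}$ in case~II. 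Inserting these into the definition of $A_{\alpha\beta c}{}^d$ as the skew product of two $\eta$'s (from the proof following (\ref{e20})) and simplifying delivers the companion laws for $\bar A_{ija'}{}^{b'}$.
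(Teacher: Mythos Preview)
Your plan is correct and matches the paper's proof: it too reruns the Theorem~\ref{theorem5} factorization argument with $S_\alpha{}^{\beta'}$ in place of $K_\alpha{}^\beta$, derives (\ref{e29ba}) from (\ref{e28b}) via the Norden identities, and obtains (\ref{e22v}) by pushing $\bar H_i{}^{\beta'}=H_i{}^\alpha S_\alpha{}^{\beta'}$ through the definitions. For the step you flagged as the obstacle, the paper sets $N_a{}^c:=s_a{}^{b'}\bar s_{b'}{}^c$ (respectively $\bar N_{b'}{}^{n'}:=s^{dn'}\bar s_{b'd}$), reads $N_{[a}{}^cN_{a_1]}{}^{c_1}=\delta_{[a}{}^c\delta_{a_1]}{}^{c_1}$ off the second relation in (\ref{e29ba}), and invokes the Appendix computation (\ref{e69p})--(\ref{e74p}) to conclude $N=\pm\delta$; the sign itself is not resolved further but left as the stated $\pm$.
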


\begin{proof}
     The proof of the expansion (\ref{e30b}) is carried out as in the previous theorem. All changes are confined to the replacement of the transformation $K_\alpha{}^\beta$ on the transformation $S_\alpha{}^{\beta '}$ so that
\begin{equation}
\label{e23v}
       S_\alpha{}^{\beta '}S_\gamma{}^{\delta '} \bar g_{\beta '\delta '}=
       g_{\alpha\gamma}
\end{equation}
     is the analog of (\ref{e3b}) that will give the equation
\begin{equation}
\label{e2vvv}
    \begin{array}{c}
    s_a{}^{b'} s_{a_1}{}^{{b '}_1} s_c{}^{d'} s_{c_1}{}^{{d'}_1}
    \bar \varepsilon_{b'{b'}_1d'{d'}_1}=
    \varepsilon_{aa_1cc_1}\sps\\[2ex]
    (s_{ab'} s_{a_1{b'}_1} s_{cd'} s_{c_1{d'}_1}
    \bar\varepsilon^{b'{b'}_1d'{d'}_1}=
    \varepsilon_{aa_1cc_1})
    \end{array}
\end{equation}
     similar to (\ref{e1b}) (the relevant factor is included in the definition of the spin-tensor s). From (\ref{e28b}) and (\ref{e23v}), it is possible to obtain
\begin{equation}
\label{e3vvv}
     S_\alpha{}^{\beta '}=\bar S^{\beta '}{}_\alpha\spsd
\end{equation}
     From this, the equation
\begin{equation}
\label{e4vvv}
     \bar s_{a'b'ab}=\bar \eta{}^{\alpha '}{}_{a'b'}
     \eta_{\beta ab} \bar S_{\alpha '}{}^\beta=
     \bar \eta{}_{\beta 'a'b'}
     \eta^\alpha{}_{ab} \bar S^{\beta '}{}_\alpha=
     s_{aba'b'}
\end{equation}
     will follow.\\
     \indent Note that in the tangent bundle $\tau^\mathbb C (\tau^\mathbb R)$ there is the metric tensor $g_{\alpha\beta} (g_{ij})$ with help of which single indices can be raised and lowered. In the bundle $A^\mathbb C(S)$, the similar role is carried out by means of the quadrivector $\varepsilon_{abcd}$. The tensor $\bar g_{\alpha '\beta '}$($\bar\varepsilon_{a'b'c'd'}$) is one, the coordinates of which are conjugated to the coordinates of the tensor $g_{\alpha\beta}(\varepsilon_{abcd})$.\\
     \indent Consider the identities following from (\ref{e28b})
\begin{equation}
\label{e5vvv}
     \begin{array}{c}
     S_\alpha{}^{\beta '}\bar S_{\beta '}{}^\gamma =
     \delta_\alpha{}^\gamma\sps\\[2ex]
     \frac{1}{4}
     \eta_\alpha{}^{aa_1}\bar\eta^{\beta '}{}_{b'{b'}_1}
     s_{aa_1}{}^{b'{b'}_1}
     \frac{1}{4}
     \bar\eta_{\beta '}{}^{d'{d'}_1}\eta^\gamma{}_{cc_1}
     \bar s_{d'{d'}_1}{}^{cc_1}=
     \frac{1}{4}
     \eta_\alpha{}^{aa_1}\eta^\gamma{}_{cc_1}
     2\delta_{\left[\right.a}{}^c\delta_{a_1\left.\right]}{}^{c_1}\sps\\[2ex]
     \frac{1}{2}s_{aa_1}{}^{b'{b'}_1}\bar s_{b'{b'}_1}{}^{cc_1}=
     \delta_{aa_1}^{cc_1}\spsd
\end{array}
\end{equation}
     We now investigate Item II). From the last identity of (\ref{e5vvv}), we obtain
\begin{equation}
\label{e6vvv}
     \begin{array}{c}
     s_{ac'}s_{a_1{c'}_1}\bar \varepsilon^{c'{c'}_1b'{b'}_1}
     \cdot
     \bar s_{b'd}\bar s_{{b'}_1d_1} \varepsilon^{dd_1ff_1}=
     2\delta_{ff_1}^{cc_1}\sps\\[2ex]
     s_{ac'}s_{a_1{c'}_1}
     \bar s_{b'd}\bar s_{{b'}_1d_1} \bar\varepsilon^{c'{c'}_1b'{b'}_1}=
     \varepsilon_{aa_1dd_1}\spsd
\end{array}
\end{equation}
     We define $s^{kl '}$ as follows
\begin{equation}
\label{e7vvv}
      s^{kl'}s_{km'}=\delta_{m'}{}^{l'}
\end{equation}
     such that
\begin{equation}
\label{e8vvv}
     s^{kl'}s^{k_1{l'}_1}s^{mn'}s^{m_1{n'}_1}
     \varepsilon_{kk_1mm_1}=
     \bar \varepsilon^{l'{l'}_1n'{n'}_1}\spsd
\end{equation}
    Multiply (\ref{e6vvv}) by $s^{ak'} s^{a_1{k'}_1}s^{dn'}s^{d_1{n'}_1}$ and obtain
\begin{equation}
\label{e9vvv}
     s^{dn'}\bar s_{b'd}s^{d_1{n'}_1}\bar s_{{b'}_1d_1}
     \bar \varepsilon^{k'{k'}_1b'{b'}_1}=
     \bar \varepsilon^{k'{k'}_1n'{n'}_1}
\end{equation}
    taking into account (\ref{e8vvv}). Define
\begin{equation}
\label{e10vvv}
     \bar N_{b'}{}^{n'}:=s^{dn'}\bar s_{b'd}
\end{equation}
    then (\ref{e9vvv}) can be rewritten as
\begin{equation}
\label{e11vvv}
     \bar N_{\left[\right.b'}{}^{n'}
     \bar N_{{b'}_1\left.\right]}{}^{{n'}_1}=
     \delta_{\left[\right.b'}{}^{n'}
     \delta_{{b'}_1\left.\right]}{}^{{n'}_1}\spsd
\end{equation}
     From this, the identity
\begin{equation}
\label{e12vvv}
     \bar N_{b'}{}^{n'}=s^{dn'}\bar s_{b'd}=n\delta_{b'}{}^{n'}=
     ns^{dn'}s_{db'}\sps n^2=1
\end{equation}
     will follow (the proof is given in Appendix (\ref{e69p}) - (\ref{e74p})). Therefore, from (\ref{e7vvv}), the relation
\begin{equation}
\label{e13vvv}
     \bar s_{b'd}=\pm s_{db'}
\end{equation}
     follows. Similarly, we analyze Item I). From the identity (\ref{e5vvv}), the equation
\begin{equation}
\label{e14vvv}
     s_{\left[\right.a}{}^{b'}
     \delta_{a_1\left.\right]}{}^{{b'}_1}
     \bar s_{b'}{}^c\bar s_{{b_1}'}{}^{c_1}=
     \delta_{\left[\right.a}{}^c
     \delta_{a_1\left.\right]}{}^{c_1}
\end{equation}
     follows. Define
\begin{equation}
\label{e15vvv}
     N_a{}^c:=s_a{}^{b'}\bar s_{b'}{}^c
\end{equation}
     and obtain
\begin{equation}
\label{e16vvv}
     \bar N_{\left[\right.a}{}^c
     \bar N_{a_1\left.\right]}{}^{c_1}=
     \delta_{\left[\right.a}{}^c
     \delta_{a_1\left.\right]}{}^{c_1}\spsd
\end{equation}
    From here, the relation
\begin{equation}
\label{e17vvv}
     N_a{}^c=n\delta_a{}^c=s_a{}^{b'}\bar s_{b'}{}^c \sps n^2=1
\end{equation}
     will follow, defining the following equation
\begin{equation}
\label{e17vvva}
     s_a{}^{b'}\bar s_{b'}{}^c=\pm\delta_a{}^c\spsd
\end{equation}
     And we will need to prove (\ref{e22v}) only. We use the inclusion operator $H_i{}^\alpha$ and the involution $S_\alpha{}^{\beta '}$ defined by the formula (\ref{e5v}). For Item II)., we have
\begin{equation}
\label{e24v}
      \begin{array}{c}
      \bar \eta_i{}^{a'b'}=\bar H_i{}^{\alpha '}
      \bar \eta_{\alpha '}{}^{a'b'}=
      \bar H_i{}^{\alpha '}\bar S_{\alpha '}{}^\beta
      \eta_{\beta cd} s^{ca'}s^{db'}=\\[2ex]
      = H_i{}^\beta \eta_{\beta cd} s^{ca'}s^{db'}=
      \eta_{i cd} s^{ca'}s^{db'}\sps
      \end{array}
\end{equation}
\begin{equation}
\label{e25v}
      \begin{array}{c}
      \bar A_{ij a'}{}^{b'}=\bar \eta_{\left[\right. i}{}^{b'k'}
      \bar \eta_{j\left.\right] a'k'}=
      H_i{}^{\gamma}H_j{}^\delta \eta_{\left[\right. \gamma |bk|}
      \eta_{\delta\left.\right]}{}^{ak} s_{aa'}s^{bb'}=
      -A_{ij b}{}^a s_{aa'}s^{bb'}\spsd
      \end{array}
\end{equation}
      For Item I)., the proof is such
\begin{equation}
\label{e24va}
      \begin{array}{c}
      \bar \eta_i{}^{a'b'}=\bar H_i{}^{\alpha '}
      \bar \eta_{\alpha '}{}^{a'b'}=
      \bar H_i{}^{\alpha '}\bar S_{\alpha '}{}^\beta
      \eta_\beta{}^{cd} s_c{}^{a'}s_d{}^{b'}=\\[2ex]
      = H_i{}^\beta \eta_\beta{}^{cd} s_c{}^{a'}s_d{}^{b'}=
      \eta_i{}^{cd} s_c{}^{a'}s_d{}^{b'}\sps
      \end{array}
\end{equation}
\begin{equation}
\label{e25va}
      \begin{array}{c}
      \bar A_{ij a'}{}^{b'}=\bar \eta_{\left[\right. i}{}^{b'k'}
      \bar \eta_{j\left.\right] a'k'}=
      H_i{}^{\gamma}H_j{}^\delta \eta_{\left[\right. \gamma}{}^{ck}
      \eta_{\delta\left.\right] dk} s_c{}^{b'}\bar s_{a'}{}^d=
      \eta_{\left[\right. i}{}^{ck} \eta_{j\left.\right]}{}_{dk}
      s_c{}^{b'}\bar s_{a'}{}^d=
      A_{ij d}{}^c s_c{}^{b'}\bar s_{a'}{}^d\spsd
      \end{array}
\end{equation}
\end{proof}

\subsubsection{\texorpdfstring{Inclusion $\mathbb R^6_{(2,4)}\subset \mathbb C\mathbb R^6$ in the special basis}{Inclusion example}}$ $

     \indent Let us now consider an inclusion of the real space $\mathbb R^6_{(2,4)}$ in the complex space $\mathbb C\mathbb R^6$  as example. In this case, we have the opportunity to carry out the identification of upper primed indexes with lower unprimed indices using the spin-tensor $s_{aa '}$. Consider the identities
\begin{equation}
\label{e31b}
    \begin{array}{c}
    K_i{}^j=\bar K_i{}^j\sps
    K_i{}^j:=H_i{}^\alpha H^j{}_\beta K_\alpha{}^\beta\sps\\[2ex]
    \eta_j{}^{ab}K_i{}^j\eta^i{}_{cd}=
    \eta_j{}^{ab}\bar K_i{}^j\eta^i{}_{cd}\sps\\[2ex]
    2S_{\left[\right.c}{}^a S_{d\left.\right]}{}^b=\frac{1}{4}
    \eta_j{}^{ab}\bar\eta^{jm'n'}2\bar S_{m'}{}^{k'}\bar S_{n'}{}^{l'}
    \bar\eta_{ik'l'}\eta^{i}{}_{cd}\sps
    \end{array}
\end{equation}
\begin{equation}
\label{e31bb}
    \begin{array}{c}
    S_{\left[\right.c}{}^aS_{d\left.\right]}{}^b=\frac{1}{4}
    s^{abm'n'}\bar S_{\left[\right.m'}{}^{k'}\bar S_{n'\left.\right]}{}^{l'}
    s_{cdk'l'}\sps\\[2ex]
    S_{\left[\right.c}{}^aS_{d\left.\right]}{}^b=
    s^{am'}s^{bn'}\bar S_{\left[\right.m'}{}^{k'}\bar S_{n'\left.\right]}{}^{l'}
    s_{ck'}s_{dl'}\sps\\[2ex]
    s^{lk'}\bar S_{k'}{}^{m'}s_{am'}
    S_{\left[\right.c}{}^aS_{d\left.\right]}{}^b
    s_{bn'}\bar S_{r'}{}^{n'}s^{sr'}=
    \delta_{\left[\right.c}{}^a\delta_{d\left.\right]}{}^b\spsd
    \end{array}
\end{equation}
    Define
\begin{equation}
\label{e31bc}
    N_c{}^l:=s^{lk'}\bar S_{k'}{}^{m'}s_{am'}S_c{}^a
\end{equation}
    and obtain
\begin{equation}
\label{e31bd}
    N_{\left[\right.c}{}^aN_{d\left.\right]}{}^b=
    \delta_{\left[\right.c}{}^a\delta_{d\left.\right]}{}^b\spsd
\end{equation}
    From this, the equation
\begin{equation}
\label{e32b}
    N_c{}^l=s^{lk'}\bar S_{k'}{}^{m'}s_{am'}S_c{}^a=n\delta_c{}^l\sps
    n=\pm 1
\end{equation}
    will follow (the proof is given in Appendix (\ref{e69p}) - (\ref{e74p})). Choosing the sign ''+'' in (\ref{e32b}), we obtain the transformation of the group isomorphic to the group $SU(2,2)$ which will, as is evident from the above, the double covering of the connected identity component of the group $SO^e(2,4)$. This component is determined by the following conditions
\begin{equation}
\label{e33b}
    1).\ \ det\| K_\alpha{}^\beta\|=1\ \ \alpha ,\beta=\overline{1,6}\sps
    2).\ \ det\| K_\alpha{}^\beta\|>0\ \ \alpha ,\beta=\overline{1,2}\spsd
\end{equation}
    If in the (\ref{e32b}), ''-'' is chosen then in 2). from (\ref{e33b}), the sign  will change to the opposite. Next, to better understand how this works in the practice, we use a representation of the obtained results in the special basis. For this, we define the basis of $\mathbb C\mathbb R^6$ as follows
\begin{equation}
\label{e26v}
    \begin{array}{lccc}
    t^\alpha & =(1,0,0,0,0,0)\sps &
    v^\alpha & =(0,1,0,0,0,0)\sps \\
    w^\alpha & =(0,0,i,0,0,0)\sps &
    x^\alpha & =(0,0,0,i,0,0)\sps \\
    y^\alpha & =(0,0,0,0,i,0)\sps &
    z^\alpha & =(0,0,0,0,0,i)\spsd
    \end{array}
\end{equation}
    Let in this basis, the matrix of the metric tensor $g_{\alpha\beta}$ has the form
\begin{equation}
\label{e27v}
     \parallel g_{\alpha\beta} \parallel=
     \left(
       \begin{array}{cccccc}
       1 & 0 & 0 & 0 & 0 & 0\\
       0 & 1 & 0 & 0 & 0 & 0\\
       0 & 0 & 1 & 0 & 0 & 0\\
       0 & 0 & 0 & 1 & 0 & 0\\
       0 & 0 & 0 & 0 & 1 & 0\\
       0 & 0 & 0 & 0 & 0 & 1\\
       \end{array}
    \right)\spsd
\end{equation}
    We define the real representation of the inclusion $\mathbb R^6_{(2,4)} \subset \mathbb C \mathbb R^6$ with the help of the operator $H_i{}^\alpha$
\begin{equation}
\label{e28v}
     \parallel H_i{}^\alpha \parallel=
     \left(
       \begin{array}{cccccc}
       1 & 0 & 0 & 0 & 0 & 0\\
       0 & 1 & 0 & 0 & 0 & 0\\
       0 & 0 & i & 0 & 0 & 0\\
       0 & 0 & 0 & i & 0 & 0\\
       0 & 0 & 0 & 0 & i & 0\\
       0 & 0 & 0 & 0 & 0 & i\\
       \end{array}
    \right)\sps
     \parallel H^i{}_\alpha \parallel=
     \left(
       \begin{array}{cccccc}
       1 & 0 &  0 &  0 &  0 &  0\\
       0 & 1 &  0 &  0 &  0 &  0\\
       0 & 0 & -i &  0 &  0 &  0\\
       0 & 0 &  0 & -i &  0 &  0\\
       0 & 0 &  0 &  0 & -i &  0\\
       0 & 0 &  0 &  0 &  0 & -i\\
       \end{array}
    \right)\spsd
\end{equation}
    Then (\ref{e26v}) is a self-conjugate basis with respect to the involution $S_\alpha{}^{\beta '}$
\begin{equation}
\label{e29v}
     \parallel S_\alpha{}^{\beta '} \parallel=
     \left(
       \begin{array}{cccccc}
       1 & 0 &  0 &  0 &  0 &  0\\
       0 & 1 &  0 &  0 &  0 &  0\\
       0 & 0 & -1 &  0 &  0 &  0\\
       0 & 0 &  0 & -1 &  0 &  0\\
       0 & 0 &  0 &  0 & -1 &  0\\
       0 & 0 &  0 &  0 &  0 & -1\\
       \end{array}
    \right)\spsd
\end{equation}
    Therefore, in the space $\mathbb R^6_{(2,4)}$, the induced metric tensor $g_{ij}$ would have the matrix
\begin{equation}
\label{e30v}
     \parallel g_{ij} \parallel=
     \left(
       \begin{array}{cccccc}
       1 & 0 &  0 &  0 &  0 &  0\\
       0 & 1 &  0 &  0 &  0 &  0\\
       0 & 0 & -1 &  0 &  0 &  0\\
       0 & 0 &  0 & -1 &  0 &  0\\
       0 & 0 &  0 &  0 & -1 &  0\\
       0 & 0 &  0 &  0 &  0 & -1\\
       \end{array}
    \right)=\parallel H_i{}^\alpha H_j{}^\beta g_{\alpha\beta} \parallel
\end{equation}
    in the basis
\begin{equation}
\label{e31v}
    \begin{array}{lcccccc}
    t^i  &= H^i{}_\alpha  t^\alpha  & =(1,0,0,0,0,0) & \sps &
    v^i  &= H^i{}_\alpha  v^\alpha  & =(0,1,0,0,0,0) \sps\\[2ex]
    w^i  &= H^i{}_\alpha  w^\alpha  & =(0,0,1,0,0,0) & \sps &
    x^i  &= H^i{}_\alpha  x^\alpha  & =(0,0,0,1,0,0) \sps\\[2ex]
    y^i  &= H^i{}_\alpha  y^\alpha  & =(0,0,0,0,1,0) & \sps &
    z^i  &= H^i{}_\alpha  z^\alpha  & =(0,0,0,0,0,1)\spsd
    \end{array}
\end{equation}
    We will define the vector basis in the bundle $A^\mathbb C(S)$ as
    $$
    X^a=(1,0,0,0) \sps
    Y^a=(0,1,0,0) \sps
    $$
\begin{equation}
\label{e51}
    Z^a=(0,0,1,0) \sps
    T^a=(0,0,0,1)\sps
\end{equation}
    $$
    \varepsilon_{abcd}X^aY^bZ^cT^d=1\sps \varepsilon=1\spsd
    $$
    Then in the bases (\ref{e30v}) and (\ref{e51}), the decomposition
    $$
    R^{ab}=2(R^{12}X^{\left[ \right.a}Y^{b\left. \right]}+
             R^{13}X^{\left[ \right.a}Z^{b\left. \right]}+
             R^{14}X^{\left[ \right.a}T^{b\left. \right]}+
    $$
\begin{equation}
\label{e52}
             +R^{23}Y^{\left[ \right.a}Z^{b\left. \right]}+
             R^{24}Y^{\left[ \right.a}T^{b\left. \right]}+
             R^{34}Z^{\left[ \right.a}T^{b\left. \right]})=
\end{equation}
    $$
    \begin{array}{c}
    =\frac{1}{\sqrt{2}}
    (R^{12}+R^{34})\cdot {\sqrt{2}}(X^{\left[ \right.a}Y^{b\left. \right]}+
                    Z^{\left[ \right.a}T^{b\left. \right]})+\\ \\
    +\frac{1}{\sqrt{2}}
    (R^{12}-R^{34})\cdot {\sqrt{2}}(X^{\left[ \right.a}Y^{b\left. \right]}-
                    Z^{\left[ \right.a}T^{b\left. \right]})+\\ \\
    +\frac{1}{\sqrt{2}}
    (R^{13}+R^{24})\cdot {\sqrt{2}}(X^{\left[ \right.a}Z^{b\left. \right]}+
                    Y^{\left[ \right.a}T^{b\left. \right]})+\\ \\
    +\frac{i}{\sqrt{2}}
    (R^{13}-R^{24})\cdot (-i{\sqrt{2}})(X^{\left[ \right.a}Z^{b\left. \right]}-
                    Y^{\left[ \right.a}T^{b\left. \right]})+\\ \\
    +\frac{-i}{\sqrt{2}}
    (R^{14}+R^{23})\cdot i{\sqrt{2}}(X^{\left[ \right.a}T^{b\left. \right]}+
                    Y^{\left[ \right.a}Z^{b\left. \right]})+\\ \\
    +\frac{-i}{\sqrt{2}}
    (R^{14}-R^{23})\cdot i{\sqrt{2}}(X^{\left[ \right.a}T^{b\left. \right]}-
                    Y^{\left[ \right.a}Z^{b\left. \right]})=\\ \\
    =(Tt^i+Vv^i+Ww^i+Xx^i+Yy^i+Zz^i)\cdot
    \eta_i{}^{ab}=r^i\eta_i{}^{ab}
    \end{array}
    $$
\begin{table}
\caption{Matrix form of the spin-tensor s for the real inclusions.}
\label{table1}
\begin{center}
\begin{tabular}{|c|c|c|c|c|}
      \hline
       & Space & s &
      s in the special basis & Isomprphism\\
      \hline
      1  &
      $
      \begin{array}{c}
      R^6   \\
      \ll ++++++ \gg
      \end{array}
      $  &
      $s_{kk'}$ &
      $\left(
      \begin{array}{cccc}
      1 & 0 & 0 & 0\\
      0 & 1 & 0 & 0\\
      0 & 0 & 1 & 0\\
      0 & 0 & 0 & 1
      \end{array}
      \right)
      $&
      $\small{SU(4)/\{\pm 1\} \cong SO^e(6)}$\\
      \hline
      2  &
      $
      \begin{array}{c}
      R^6_{(1,5)}   \\
      \ll +----- \gg
      \end{array}
      $  &
      $s_k{}^{k'}$ &
      $\left(
      \begin{array}{cccc}
      0 & 1 & 0 & 0\\
     -1 & 0 & 0 & 0\\
      0 & 0 & 0 & 1\\
      0 & 0 &-1 & 0
      \end{array}
      \right)
      $&
      $\small{SL(2,H)/\{\pm 1\} \cong SO^e(1,5)}$\\
      \hline
      3  &
      $
      \begin{array}{c}
      R^6_{(2,4)}   \\
      \ll ++---- \gg
      \end{array}
       $ &
       $s_{kk'}$ &
       $\left(
       \begin{array}{cccc}
       0 & 0 & 1 & 0\\
       0 & 0 & 0 & 1\\
       1 & 0 & 0 & 0\\
       0 & 1 & 0 & 0
       \end{array}
       \right)
       $&
      $\small{SU(2,2)/\{\pm 1\} \cong SO^e(2,4)}$\\
      \hline
       4  &
      $
      \begin{array}{c}
      R^6_{(3,3)}   \\
      \ll +++--- \gg
      \end{array}
       $ &
       $s_k{}^{k'}$ &
       $\left(
       \begin{array}{cccc}
       1 & 0 & 0 & 0\\
       0 & 1 & 0 & 0\\
       0 & 0 & 1 & 0\\
       0 & 0 & 0 & 1
       \end{array}
       \right)
       $&
      $\small{SL(4,R)/\{\pm 1\} \cong SO^e(3,3)}$\\
      \hline
\end{tabular}
\end{center}
\end{table}
    takes place. Therefore, we can put
\begin{equation}
\label{e32v}
    \begin{array}{cccccc}
    v^i\eta_i{}^{ab} & := & {\sqrt{2}}(X^{\left[ \right.a}Y^{b\left. \right]}+
                    Z^{\left[ \right.a}T^{b\left. \right]})\sps &
    w^i\eta_i{}^{ab} & := & {\sqrt{2}}(X^{\left[ \right.a}Y^{b\left. \right]}-
                    Z^{\left[ \right.a}T^{b\left. \right]})\sps \\[2ex]
    y^i\eta_i{}^{ab} & := & {\sqrt{2}}(X^{\left[ \right.a}Z^{b\left. \right]}+
                    Y^{\left[ \right.a}T^{b\left. \right]})\sps &
    x^i\eta_i{}^{ab} & := & -{\sqrt{2}i}(X^{\left[ \right.a}Z^{b\left. \right]}-
                    Y^{\left[ \right.a}T^{b\left. \right]})\sps \\[2ex]
    z^i\eta_i{}^{ab} & := & {\sqrt{2}i}(X^{\left[ \right.a}T^{b\left. \right]}+
                    Y^{\left[ \right.a}Z^{b\left. \right]})\sps &
    t^i\eta_i{}^{ab} & := & {\sqrt{2}i}(X^{\left[ \right.a}T^{b\left. \right]}-
                    Y^{\left[ \right.a}Z^{b\left. \right]})
    \end{array}
\end{equation}
    that will define the connecting Norden operators $\eta_i{}^{aa_1}$ in these bases as
\begin{equation}
\label{e56}
    \begin{array}{crcrcrcr}
    \eta_2{}^{12}= & \frac{1}{\sqrt{2}}\sps &
    \eta_2{}^{34}= & \frac{1}{\sqrt{2}}\sps &
    \eta_3{}^{12}= & \frac{1}{\sqrt{2}}\sps &
    \eta_3{}^{34}= & -\frac{1}{\sqrt{2}}\sps \\
    \eta_1{}^{14}= & \frac{i}{\sqrt{2}}\sps &
    \eta_1{}^{23}= & -\frac{i}{\sqrt{2}}\sps  &
    \eta_6{}^{14}= & \frac{i}{\sqrt{2}}\sps &
    \eta_6{}^{23}= & \frac{i}{\sqrt{2}}\sps \\
    \eta_5{}^{13}= & \frac{1}{\sqrt{2}}\sps  &
    \eta_5{}^{24}= & \frac{1}{\sqrt{2}}\sps  &
    \eta_4{}^{13}= & -\frac{i}{\sqrt{2}}\sps  &
    \eta_4{}^{24}= & \frac{i}{\sqrt{2}}\spsd
    \end{array}
\end{equation}
    From (\ref{e32v}),  the following conditions
\begin{equation}
\label{e33v}
    \begin{array}{cc}
    T=\frac{i}{\sqrt{2}}(R^{23}-R^{14})\sps &
    V=\frac{1}{\sqrt{2}}(R^{12}+R^{34})\sps \\[2ex]
    W=\frac{1}{\sqrt{2}}(R^{12}-R^{34})\sps &
    X=\frac{i}{\sqrt{2}}(R^{13}-R^{24})\sps \\[2ex]
    Y=\frac{1}{\sqrt{2}}(R^{13}+R^{24})\sps &
    Z=\frac{-i}{\sqrt{2}}(R^{14}+R^{23})\sps\\[4ex]
    R^{12}=\frac{1}{\sqrt{2}}(V+W)\sps  &
    R^{13}=\frac{1}{\sqrt{2}}(Y-iX)\sps \\[2ex]
    R^{14}=\frac{i}{\sqrt{2}}(T+Z)\sps  &
    R^{23}=\frac{i}{\sqrt{2}}(Z-T)\sps  \\[2ex]
    R^{24}=\frac{1}{\sqrt{2}}(Y+iX)\sps &
    R^{34}=\frac{1}{\sqrt{2}}(V-W)
    \end{array}
\end{equation}
    imply so that the inverse values $\eta^i{}_{aa_1}$ have the form
\begin{equation}
\label{e57}
    \begin{array}{crcrcrcr}
    \eta^2{}_{12}= & \frac{1}{\sqrt{2}} \sps &
    \eta^2{}_{34}= & \frac{1}{\sqrt{2}} \sps &
    \eta^3{}_{12}= & \frac{1}{\sqrt{2}} \sps &
    \eta^3{}_{34}= & -\frac{1}{\sqrt{2}}\sps \\
    \eta^1{}_{14}= & -\frac{i}{\sqrt{2}}\sps &
    \eta^1{}_{23}= & \frac{i}{\sqrt{2}} \sps &
    \eta^6{}_{14}= & -\frac{i}{\sqrt{2}}\sps &
    \eta^6{}_{23}= & -\frac{i}{\sqrt{2}}\sps \\
    \eta^5{}_{13}= & \frac{1}{\sqrt{2}} \sps &
    \eta^5{}_{24}= & \frac{1}{\sqrt{2}} \sps &
    \eta^4{}_{13}= & \frac{i}{\sqrt{2}} \sps &
    \eta^4{}_{24}= & -\frac{i}{\sqrt{2}}\spsd
    \end{array}
\end{equation}
    And moreover, the equalities
\begin{equation}
\label{e35v}
    \begin{array}{ccccccccccc}
    \overline{R^{23}} & = & -R^{23} & = &  R_{41} & \sps &
    \overline{R^{34}} & = &  R^{34} & = &  R_{12}   \sps\\[2ex]
    \overline{R^{12}} & = &  R^{12} & = &  R_{34} & \sps &
    \overline{R^{13}} & = &  R^{24} & = &  R_{31}
    \end{array}
\end{equation}
    will be true. At the performance of (\ref {e35v}), choose the covector basis coordinated with the basis (\ref{e51})  as follows
    $$
    X_a=s_{aa'}\bar X^{a'}=(0,0,1,0) \sps
    Y_a=s_{aa'}\bar Y^{a'}=(0,0,0,1) \sps
    $$
\begin{equation}
\label{e34v}
    Z_a=s_{aa'}\bar Z^{a'}=(1,0,0,0) \sps
    T_a=s_{aa'}\bar T^{a'}=(0,1,0,0)\spsd
\end{equation}
    This determines the Hermitian polarity
\begin{equation}
\label{e36v}
       \parallel s_{aa'} \parallel=
       \left(
       \begin{array}{cccc}
       0 & 0 & 1 & 0\\
       0 & 0 & 0 & 1\\
       1 & 0 & 0 & 0\\
       0 & 1 & 0 & 0
       \end{array}
       \right)
\end{equation}
    by which the bundle $A^\mathbb C(S)$ (its base is $\mathbb R^6_{(2,4)}$) is endowed. It follows that the pffafian of the bivector $R^{ab}$ has the form
\begin{equation}
\label{e53}
    \begin{array}{c}
    pf(R):=\frac{1}{2}R^{ab}R_{ab}=\\
    =2(R^{12}R^{34}-R^{13}R^{24}+R^{14}R^{23})=\\
    =T^2+V^2-W^2-X^2-Y^2-Z^2\spsd
    \end{array}
\end{equation}
    In the special basis for the remaining inclusion cases, the matrix form of the spin-tensor s is given in the table \ref{table1}. \\

\subsubsection{Infinitesimal transformation}$ $

    \indent Suppose we have $K_\alpha{}^\beta (\lambda) $: a one-parameter family satisfying the condition
\begin{equation}
\label{e34b}
    g_{\alpha\delta}=K_\alpha{}^\beta(\lambda)
    K_\delta{}^\gamma(\lambda)g_{\beta\gamma}\sps
    K_\alpha{}^\beta(0)=\delta_\alpha{}^\beta\spsd
\end{equation}
    The infinitesimal transformation, corresponding to it, is defined as
\begin{equation}
\label{e35b}
     T_\delta{}^\gamma=\left.\left[\frac{d}{d\lambda}
     K_\delta{}^\gamma(\lambda)\right]\right|_{\lambda =0}\spsd
\end{equation}
     Then from (\ref{e34b}), the equation
\begin{equation}
\label{e36b}
     T_{\alpha\beta}=-T_{\beta\alpha}
\end{equation}
     follows. According to \cite[v. 1, p. 176(eng)]{Penrose1}, from (\ref{e34b}), the equation (\ref{e36b}) follows, and from (\ref{e36b}) with the help of the exponent
\begin{equation}
\label{e37b}
     K_\delta{}^\gamma(\lambda ):=exp(\lambda T_\delta{}^\gamma)\sps
\end{equation}
     we can obtain (\ref{e34b}).\\
     \indent Suppose also that an one-parameter family $S_a{}^b(\lambda)$
\begin{equation}
\label{e38b}
      S_a{}^b(\lambda )S_c{}^d(\lambda )S_{a_1}{}^{b_1}(\lambda )
      S_{c_1}{}^{d_1}(\lambda )\varepsilon_{bb_1dd_1}=
      \varepsilon_{aa_1cc_1}\sps
      S_a{}^b(0)=\delta_a{}^b
\end{equation}
     is given. We will differentiate it assuming
\begin{equation}
\label{e39b}
     T_a{}^b:=\left.\left[\frac{d}{d\lambda}
     S_a{}^b(\lambda)\right]\right|_{\lambda =0}\sps
\end{equation}
     and we will obtain
\begin{equation}
\label{e40b}
     \varepsilon_{b\left[\right.a_1cc_1}T_{a\left.\right]}{}^b=0\ \ \
     \Leftrightarrow\ \ T_a{}^a=0\spsd
\end{equation}
    The opposite is true. Let
\begin{equation}
\label{e41b}
     S_a{}^b(\lambda ):=exp(\lambda T_a{}^b)\spsd
\end{equation}
     Then the following identity
\begin{equation}
\label{e42b}
      \begin{array}{c}
      S_a{}^b(\lambda )S_c{}^d(\lambda )S_{a_1}{}^{b_1}(\lambda )
      S_{c_1}{}^{d_1}(\lambda )\varepsilon_{bb_1dd_1}=\\ \\
      =exp(\lambda T_a{}^b)exp(\lambda T_c{}^d)exp(\lambda T_{a_1}{}^{b_1})
      exp(\lambda T_{c_1}{}^{d_1})\varepsilon_{bb_1dd_1}=\\ \\
      =det(exp(\lambda T_a{}^b))\varepsilon_{aa_1cc_1}=
      exp(\lambda tr(T_a{}^b))\varepsilon_{aa_1cc_1}=\varepsilon_{aa_1cc_1}
      \end{array}
\end{equation}
     is satisfied. Since
\begin{equation}
\label{e43b}
       K_\alpha{}^\beta(\lambda)=\frac{1}{4}\eta_\alpha{}^{aa_1}
       \eta^\beta{}_{bb_1}2S_{\left[\right.a}{}^b(\lambda)
       S_{a_1\left.\right]}{}^{b_1}(\lambda)
\end{equation}
       then, differentiating with respect to $\lambda$, setting $\lambda = 0$, and lowering the superscript with the help of the metric tensor $g_{\alpha\beta}$, we obtain
\begin{equation}
\label{e44b}
      T_{\alpha\beta}=\frac{1}{2}\eta_\alpha{}^{aa_1}
      \eta_{\beta bb_1}(T_a{}^b\delta_{a_1}{}^{b_1}+
      T_{a_1}{}^{b_1}\delta_a{}^b)=A_{\alpha\beta b}{}^aT_a{}^b\spsd
\end{equation}
      Now, the purpose of this subsection is visible. In fact, (\ref{e44b}) is an algebraic interpretation of the isomorphism between the Lie algebras
\begin{equation}
\label{e45b}
      so(6,\mathbb C)\cong sl(4,\mathbb C)\sps
\end{equation}
      and the definition (\ref{e4}), given at the beginning of this chapter, is quite justified.

\subsection{Generalized Norden operators}$ $
      \indent If the complex analytic Riemannian space $\mathbb CV^6$, which will be the base of the tangent bundle $\tau^\mathbb C$ and bundle $\Lambda$, is set then there is the tensor $g_{\alpha\beta} (z^\gamma)$ which is the metric tensor. This tensor is analytic on $z^\gamma$, where $z^\gamma$ are the coordinates of a base point. The tensor value at the point $O (z_o^\gamma)$ is denoted as $\tilde{\dot g}_ {\alpha\beta}$
\begin{equation}
\label{e42v}
      \tilde{\dot g}_{\alpha\beta}:=
      g_{\alpha\beta}(z_o^\gamma)\spsd
\end{equation}
      Since, the tensor $\tilde{\dot g}_{\alpha\beta}$ has a symmetric matrix, it can be reduced to the diagonal form by means of a nonsingular transformation $\dot P_\alpha{}^\gamma$
\begin{equation}
\label{e43v}
      \dot g_{\alpha\beta}=
      \dot P_\alpha{}^\gamma\dot P_\beta{}^\delta
      \tilde{\dot g}_{\gamma\delta}\sps
      \dot P_\alpha{}^\gamma:=
      P_\alpha{}^\gamma(z_o^\delta)\sps
\end{equation}
      where $P_\alpha{}^\gamma (z^\delta)$ are analytic functions of the point coordinates. But for the tensor $\dot g_{\alpha\beta}$, the following relations
\begin{equation}
\label{e44v}
    \dot g^{\alpha\beta}=1/4\cdot \dot\eta^{\alpha}{}_{aa_1} \dot\eta^{\beta}{}_{bb_1}
    \dot\varepsilon^{aa_1bb_1}\sps
    \dot\varepsilon^{aa_1bb_1}=
    \dot\eta_{\alpha}{}^{aa_1} \dot\eta_{\beta}{}^{bb_1}\dot g^{\alpha\beta}
\end{equation}
    are executed, where $\dot\eta^{\alpha}{}_{aa_1}$ are the connecting Norden operators satisfying the relation (\ref{e0}). Then from (\ref{e43v}),  the equation
\begin{equation}
\label{e45v}
      g_{\alpha\beta}(z_o^\gamma):=
      \dot g_{\alpha\beta}
      (\dot P^{-1}){}_\gamma{}^\alpha
      (\dot P^{-1}){}_\delta{}^\beta=
      (\dot P^{-1}){}_\gamma{}^\alpha
      (\dot P^{-1}){}_\delta{}^\beta
      \dot\eta_{\alpha}{}^{aa_1} \dot\eta_{\beta}{}^{bb_1}
      \dot\varepsilon_{aa_1bb_1}
\end{equation}
      follows. We define \emph{generalized connecting Norden operators} as
\begin{equation}
\label{e46v}
      \eta_{\alpha}{}^{aa_1}(z_o^\delta):=
      (\dot P^{-1}){}_\gamma{}^\alpha\dot\eta_{\alpha}{}^{aa_1}\sqrt{\varepsilon^{-1}(z_o^\delta)}\sps
      \varepsilon_{abcd}(z_o^\delta)=\varepsilon(z_o^\delta)\dot\varepsilon_{abcd}\sps \varepsilon_{1234}(z_o^\delta)=\varepsilon(z_o^\delta)\spsd
\end{equation}
      As the root we can take any of the two options. Generally speaking, $\ll\dot{} \gg$ can be omitted since all calculations are valid for an arbitrary point O, and at the same time, the functions $P_\alpha{}^\gamma (z^\delta),\ \varepsilon(z_o^\delta)$ are analytic. Then from (\ref{e44v}), the identities
\begin{equation}
\label{e47v}
    \begin{array}{c}
    g^{\alpha\beta}(z^\delta)=1/4\cdot \eta^{\alpha}{}_{aa_1}(z^\delta)
    \eta^{\beta}{}_{bb_1}(z^\delta)
    \varepsilon^{aa_1bb_1}(z^\delta)\sps\\[2ex]
    \varepsilon^{aa_1bb_1}(z^\delta)
    =\eta_{\alpha}{}^{aa_1}(z^\delta)
    \eta_{\beta}{}^{bb_1}(z^\delta)g^{\alpha\beta}(z^\delta)
    \end{array}
\end{equation}
    will follow. Below, we shall use the generalized connecting Norden operators.

\section{\texorpdfstring{Connections in the bundle  $A^\mathbb C$ with the base $\mathbb CV^6$}{Connections in bundles A}}
\Abstract{
    \indent This chapter is devoted to the depiction of the two approaches to the introduction of a connection in the bundle $A^\mathbb C$. The first is described in the monography \cite{Penrose1}, and the second is determined with the help of the Norden-Neifeld normalization theory. In the first subsection, the two definitions of a connection in the bundles according to these theories are considered.\\
    \indent In the second section, we consider a normalization of a maximal planar generator manifold for a quadric $\mathbb CQ_6$ embedded in the projective space $\mathbb C\mathbb P_7$. This manifold is diffeomorphic to the one of all points of this quadric. Considering the derivation equations of the normalized family of maximal planar generators, we arrive to the Norden connecting operator definition in terms of Neifeld operators. If in the bundle $A^\mathbb C$ we consider the quadrivector $\varepsilon_{abcd}$ as the metric tensor antisymmetric in all indices then the metric tensor $G_{\Lambda\Psi}$ is induced on the base, and therefor the maximal planar generator manifold transforms to the real pseudo-Riemannian space $V^{12}_{(6,6)}$ with the complex structure $f_\Lambda{}^\Psi$. We can move to the complex representation of our manifold taking the space $\mathbb CV^6$ as the base. A 4-dimensional cone generator of the 8-dimensional space $\mathbb C\mathbb R^8$ (i.e., in the projective geometry, this will just be a 3-dimensional generator of the quadric $\mathbb CQ_6\subset \mathbb C \mathbb P_7$) corresponds to a fiber of the bundle $A^\mathbb C$ with the base $\mathbb CV^6$. Then we obtain that the torsion-free Riemannian connection, introduced by the formulas
    $$
    \nabla_\alpha g_{\beta\gamma}=0\sps
    \bar\nabla_{\alpha '} \bar g_{\beta '\gamma '}=0\sps
    $$
    can be uniquely prolonged to the equiaffine connection in the bundle $A^\mathbb C(\mathbb CV^6)$
    $$
    \nabla_\alpha\varepsilon_{abcd}=0\sps
    \bar\nabla_{\alpha '}\bar \varepsilon_{a'b'c'd'}=0\sps
    $$
    where $\alpha,\beta,...=1,2,3,4,5,6$. The existence and the uniqueness of such the connections are proved in this chapter.\\
    \indent Next, the real torsion-free  Riemannian connection, induced by the inclusion $V^6_{(p,q)}\subset\mathbb CV^6$, is described. Such the connection must be coordinated with the involution, i.e., the following relation
    $$
    \nabla_\alpha S_\beta{}^{\gamma '}=0\sps
    \bar\nabla_{\alpha '} S_\beta{}^{\gamma '}=0
    $$
    must be satisfied. Then using the results of the first chapter, we introduce either a Hermitian polarity or a Hermitian involution in the bundle $A^\mathbb C$. The specified structure must be a covariant constant. The bitwistor equation
    $$
    \nabla^{a\left(\right.b}X^{c\left.\right)}=0
    $$
    is obtained from the results of these subsections. This equation is a conformal invariant and an invariant under a normalization transformation. Solutions of this equation will be discussed in the next chapter.
}

\subsection{Connection in a bundle}$ $
    \indent Let a bundle R with the base $V^{2n}_{(n,n)}$ and fibers isomorphic to $\mathbb C^k$ be given. We define \emph{covariant derivative operator} acting on the bundle R along a vector field X as a mapping between two smooth sections of the fiber $\mathbb C^k_x$
\begin{equation}
\label{e1w}
     \begin{array}{c}
     \nabla_X s:x\longmapsto \nabla_X s(x)\sps
     \end{array}
\end{equation}
     where $s(x)$ is section. If $X=\frac{\partial}{\partial x^i}$ then this will give the decomposition
\begin{equation}
\label{e2w}
     \nabla_\frac{\partial}{\partial x^i} s=\nabla_i s\sps
\end{equation}
     where $i,j,k,...=\overline{1,2n}$. The operator $\nabla_i$ must satisfy the following relations (which incidentally can be put in its definition)
\begin{equation}
\label{e3w}
     \begin{array}{c}
     \nabla_i(X^a+Y^a)=\nabla_iX^a+\nabla_iY^a\sps\\
     \nabla_i(fX^a)=f\nabla_iX^a+X^a\nabla_if\sps\\
     \nabla_i(X_aY^a)=Y^a\nabla_iX_a+X_a\nabla_iY^a\sps\\
     \nabla_i \bar X^{a'}=\overline{\nabla_i  X^a}\sps
     \nabla_i \bar X_{a'}=\overline{\nabla_i  X_a}\sps\\
     \nabla_ik=0\sps\\
     \nabla_i(g+h)=\nabla_ig+\nabla_ih\sps\\
     \nabla_i(gh)+g\nabla_ih+h\nabla_ig\sps
     \end{array}
\end{equation}
     where $a,b,c,...,f=\overline{1,n}$. In this case, k, g, h are analytical functions, $k = const$; $X^a,\ Y^a$ are vectors of the fiber $\mathbb C^k_x$, and $X_a,\ Y_a$ are covectors of the dual space ${\mathbb C^*}^k_x$. In the basis $s_a(x)$ of the fiber $\mathbb C^k_x$, the section $s(x)$ can be decomposed as
\begin{equation}
\label{e3wa}
     s=s^as_a
\end{equation}
     so that the connection coefficients are determined from the following equation
\begin{equation}
\label{e3wb}
     \nabla_is_a=\Gamma_{ia}{}^cs_c\spsd
\end{equation}
     Then the differentiation can be accomplished as follows
\begin{equation}
\label{e3wc}
     \nabla_i X^a=\partial_i X^a+\Gamma_{ic}{}^aX^c\spsd
\end{equation}
     The repeated covariant derivative is written down as
\begin{equation}
\label{e3wd}
     \nabla_i\nabla_jX^a=\partial_i\nabla_jX{}^a-
     \Gamma_{ij}{}^k\nabla_kX^a+\Gamma_{ic}{}^aX_j{}^c\spsd
\end{equation}
      By $\Gamma$, we denote the connection defined by means of $\Gamma_{ij}{}^k$ in the tangent bundle. The tensor $T_{ij}{}^k$ defining with the help of the relation
\begin{equation}
\label{e4w}
     2\nabla_{\left[\right.i}\nabla_{j\left.\right]}f=
     T_{ij}{}^k\nabla_kf
\end{equation}
     is called \emph{torsion tensor of the connection $\Gamma$}. The tensor $R_{ijk}{}^l$ defining with the help of the following condition
\begin{equation}
\label{e4wa}
     (2\nabla_{\left[\right.i}\nabla_{j\left.\right]}-T_{ij}{}^k\nabla_k)
     X^i=R_{ijk}{}^lX^k
\end{equation}
      is called \emph{curvature tensor of the connection $\Gamma$}. If the torsion is equal to zero then the operator $\nabla_i$ is called \emph{symmetric covariant derivative operator}.\\
      \indent Let $\nabla_i$ be a symmetric covariant derivative operator, and $\tilde\nabla_i$ is an arbitrary covariant derivative operator. Then
\begin{equation}
\label{e5w}
     (\tilde\nabla_i-\nabla_i)f=0\sps
\end{equation}
     and we can define the tensor $Q_{ib}{}^a$ called \emph{strain tensor}
\begin{equation}
\label{e6w}
     \begin{array}{cc}
     (\tilde\nabla_i-\nabla_i)X^a=Q_{ib}{}^a X^b\sps &
     (\tilde\nabla_i-\nabla_i)X_a=-Q_{ia}{}^b X_b\sps\\
     (\tilde\nabla_i-\nabla_i)X^{a'}=Q_{ib'}{}^{a'} X^{b'}\sps &
     (\tilde\nabla_i-\nabla_i)X_{a'}=-Q_{ia}{}^{b'} X_{b'}\spsd\\
     \end{array}
\end{equation}
     If $R=\tau^\mathbb R(V^{2n}_{(n,n)})$ is the tangent bundle then the torsion of the operator $\tilde\nabla_i$ has the form
\begin{equation}
\label{e7w}
     \tilde T_{ij}{}^k=2Q_{[ij]}{}^k\sps
\end{equation}
     where $Q_{ij}{}^k$ is \emph{strain tensor} in the tangent bundle.

\subsubsection{\texorpdfstring{Normalization (\emph{spinor normalization}) of the quadric $\mathbb CQ_6$ in $\mathbb CP_7$}{Spinor normalization}}$ $

     \indent Consider a nonsingular quadric $\mathbb CQ_6$ embedded in the projective space $\mathbb C\mathbb P_7$. It can be described by means of the equation
\begin{equation}
\label{e1aa}
     G_{AB}X^AX^B=0\ \ \ \Leftrightarrow\ \ \  (X,X)=0\
     (A,B,...=\overline{1,8})\spsd
\end{equation}
     Based on the Cartan triality principle \cite[p. 119(eng)]{Cartan1}, the manifold of quadric points is diffeomorphic to a manifold of 3-dimensional planar generators representing one and the same family (so we have the three manifolds are isomorphic to each other). Basic points of these generators
\begin{equation}
\label{e2aa}
     X_a=(X_a{}^A)\psps (a,b,...,i,j,...,p,q,...=\overline{1,4})
\end{equation}
     determine the equation
\begin{equation}
\label{e3aa}
      (X_a,X_b)=0\spsd
\end{equation}
      We define a planar generator with the help of its \emph{matrix coordinate}  $Z=(Z_a^p)$ \cite{Rosenfeld3}
\begin{equation}
\label{e4aa}
     X_a:=A_a+B_pZ^p_a\sps (A_a,B_p):=d_{ap}\sps B^a:=d^{ap}B_p
\end{equation}
     then from (\ref{e3aa}), the equation
\begin{equation}
\label{e5aa}
     Z_{ab}=-Z_{ba}\sps Z_{ab}:=d_{ap}Z_a^p
\end{equation}
     follows. This means that $X_a$ depend on the 6 complex parameters. As is well known \cite{Neifeld2}, \cite{Neifeld4}, \emph{spinor normalization} of a maximal planar generator manifold is defined by means of the giving of such a real differential correspondence between maximal planar generators of the quadric
\begin{equation}
\label{e6aa}
     f:\ \mathbb C\mathbb P_3(X_a)\rightarrow \mathbb C\mathbb P_3(Y_p)
\end{equation}
      that the generator $\mathbb C\mathbb P_3(X_a)$ corresponds to the plane $\mathbb C\mathbb P_3(Y_p)$, which does not intersect the first. For the six-dimensional quadric, these planar generators must belong to one of the two family. We will require that the normalization was \emph{harmonic} \cite[p. 209]{Norden6}. In the local coordinates, the normalization is determined by the parametric equations
\begin{equation}
\label{e7aa}
      X_a=X_a(u^\Lambda )\sps Y_a=Y_a(u^\Lambda )\ \
      (\Lambda ,\Psi ,...=\overline{1,12})\spsd
\end{equation}
      In this case, the relations
\begin{equation}
\label{e8aa}
     (X_a,X_b)=0\sps (Y_p,Y_q)=0\sps (X_a,Y_p)=c_{ap}
\end{equation}
     are executed. Due to the nondegeneracy of $c_{ap}$, we can define
\begin{equation}
\label{e9aa}
     Y^a:=c^{ap}Y_p\sps c^{ap}c_{pb}=\delta_b^a\sps (X_a,Y^b)=\delta_b^a\spsd
\end{equation}

\subsubsection{Neifeld operators}$ $
     \indent  \emph{Derivation equations} of the normalized family of maximal planar generators have the form \cite{Neifeld2}, \cite{Neifeld4}
\begin{equation}
\label{e10aa}
     \left\{
     \begin{array}{lcl}
     \nabla_\Lambda X_a & = & Y^b M_{\Lambda ab}\sps\\
     \tilde \nabla_\Lambda Y^b & = & X_a N_\Lambda{}^{ab}\spsd
     \end{array}
     \right.
\end{equation}
     Then from (\ref{e8aa}), the equalities
\begin{equation}
\label{e11aa}
     M_{\Lambda (ab)}=0\sps N_\Lambda{}^{(ab)}=0\sps
     \Gamma_{\Lambda a}{}^c=\tilde \Gamma_{\Lambda a}{}^c
     \end{equation}
     will follow, where $\Gamma_{\Lambda a}{}^c$ are the coefficients of the conformal torsion-free pseudo-Euclidean connection in the complex vector bundle whose the base is the maximal planar generator manifold. Note that the complex vector bundle is the metrizable in the sense that in it we can set a field of the metric quadrivector $\varepsilon_{abcd}$. Since the normalization is harmonic then the connection, defined above, is \emph{equiaffine}: the quadrivector $\varepsilon_{abcd}$ is a covariant constant. It allows to use $\varepsilon_{abcd}$ for the transfer of indexes. The operators $M_\Lambda{}^{ab}$ are \emph{connecting operators} so that each bivector of the fiber associates with the real vector of the tangent bundle
\begin{equation}
\label{e12aa}
     V^{ab}:=M_\Lambda{}^{ab}V^\Lambda\spsd
\end{equation}
     This correspondence is bijective. It follows that we can determine
\begin{equation}
\label{e13aa}
       \left\{
       \begin{array}{ccc}
       M^{\Lambda ab}M_{\Lambda cd} & = & \delta_{cd}^{ab}\sps\\
       \bar M^{\Lambda a'b'}M_{\Lambda cd} & = & 0\sps
       \end{array}
       \right.
       det
       \left\|
       \begin{array}{c}
       M_{\Lambda ab} \\
       \bar M_{\Lambda a'b'}
       \end{array}
       \right\|
       \ne 0\spsd
\end{equation}
     Then the operator
\begin{equation}
\label{e14aa}
       \bigtriangleup_\Lambda{}^\Psi=\frac{1}{2}
       (\delta_\Lambda{}^\Psi+if_\Lambda{}^\Psi)=\frac{1}{2}
       M_{\Lambda ab} M^{\Psi ab}
\end{equation}
       is \emph{Norden affinor} \cite{Norden1} such that
\begin{equation}
\label{e15aa}
       f_\Lambda{}^\Psi M^{\Lambda cd}=-iM^{\Psi cd}\sps
\end{equation}
       where $f_\Lambda{}^\Psi$ is the operator of the complex structure
\begin{equation}
\label{e16aa}
      f^2=-E\spsd
\end{equation}
      Let's define \emph{Neifeld operators} $m_\alpha{}^\Lambda$ as
\begin{equation}
\label{e17aa}
       \left\{
       \begin{array}{ccc}
       m_\alpha{}^\Lambda m^\beta{}_\Lambda & = & \delta_\alpha{}^\beta\sps\\
       m_\alpha{}^\Lambda \bar m^{\beta '}{}_\Lambda & = & 0\sps
       \end{array}
       \right.
       det
       \left\|
       \begin{array}{c}
       m^\alpha{}_\Lambda \\
       \bar m^{\alpha '}{}_\Lambda
       \end{array}
       \right\|
       \ne 0
\end{equation}
      according to \cite{Neifeld1}, and then
\begin{equation}
\label{e18aa}
       \bigtriangleup_\Lambda{}^\Psi=\frac{1}{2}
       (\delta_\Lambda{}^\Psi+if_\Lambda{}^\Psi)=
       m^\alpha{}_\Lambda m_\alpha{}^\Psi\ \
       (\alpha ,\beta ,...=\overline{1,6})
\end{equation}
      is the same Norden affinor. At the same time,
\begin{equation}
\label{e19aa}
       f_\Lambda{}^\Psi m_\alpha{}^\Lambda=-im_\alpha{}^\Psi\spsd
\end{equation}
      This means that we have the following decomposition
\begin{equation}
\label{e20aa}
       m_\alpha{}^\Lambda=\frac{1}{2}\eta_\alpha{}^{ab}
       M^\Lambda{}_{ab}
\end{equation}
      which defines the connecting Norden operators $\eta_\alpha{}^{ab}=-\eta_\alpha{}^{ba}$. For an arbitrary tensor $A_{\Lambda\Psi}$, we will have the following decomposition
\begin{equation}
\label{e21aa}
       \left\{
       \begin{array}{ccc}
       a_{\alpha\beta} & = & m_\alpha{}^\Lambda m_\beta{}^\Psi
       A_{\Lambda\Psi}\sps\\
       a_{\alpha '\beta} & = & \bar m_{\alpha '}{}^\Lambda m_\beta{}^\Psi
       A_{\Lambda\Psi}\sps
       \end{array}
       \right.
       \left\{
       \begin{array}{ccc}
       a_{abcd} & = & M^\Lambda{}_{ab} M^\Psi{}_{cd}
       A_{\Lambda\Psi}\sps\\
       a_{a'b'cd} & = & \bar M^\Lambda{}_{a'b'} M^\Psi{}_{cd}
       A_{\Lambda\Psi}\spsd\\
       \end{array}
       \right.
\end{equation}
      In this case, the metric quadrivector will correspond to the metric tensor $G_{\Lambda\Psi}$ so that
\begin{equation}
\label{e22aa}
       \left\{
       \begin{array}{ccl}
       g_{\alpha \beta} & = & m_\alpha{}^\Lambda m_\beta{}^\Psi
       G_{\Lambda\Psi}\sps\\
       g_{\alpha '\beta} & =  & 0\sps
       \end{array}
       \right.
       \left\{
       \begin{array}{ccl}
       \varepsilon_{abcd} & = & M^\Lambda{}_{ab} M^\Psi{}_{cd}
       G_{\Lambda\Psi}\sps\\
       \varepsilon_{a'b'cd} & =  & 0\spsd
       \end{array}
       \right.
\end{equation}
      The inverse relationships have the form
       $$
       G_{\Lambda\Psi}=\frac{1}{4}
       (M_\Lambda{}^{ab}M_\Psi{}^{cd}\varepsilon_{abcd}+
       \bar M_\Lambda{}^{a'b'}\bar M_\Psi{}^{c'd'}\bar\varepsilon_{a'b'c'd'})\sps
       $$
\begin{equation}
\label{e23aa}
       \eta^\alpha{}_{ab}=m^\alpha{}_\Lambda M^\Lambda{}_{ab}
       \sps
       \bar \eta^{\alpha '}{}_{a'b'}=
       \bar m^{\alpha '}{}_\Lambda \bar M^\Lambda{}_{a'b'}\sps
\end{equation}
       $$
       \eta^{\alpha '}{}_{ab}=\bar m^{\alpha '}{}_\Lambda M^\Lambda{}_{ab}\equiv 0
       \sps
       \bar \eta^{\alpha }{}_{a'b'}=
       m^\alpha {}_\Lambda \bar M^\Lambda{}_{a'b'}\equiv 0\spsd
       $$
      The last pair of the equations appears due to the analyticity of $M^\Lambda{}_{ab}$. From this, taking into account (\ref{e13aa}), (\ref{e17aa}), (\ref{e20aa}), the equations
\begin{equation}
\label{e24aa}
       \begin{array}{c}
       \eta^\alpha{}_{ab}\eta_\alpha{}^{cd}=
       M^\Lambda{}_{ab}M_\Lambda{}^{cd}=\delta_{ab}^{cd}\sps
       \frac{1}{4}\eta_\alpha{}^{ab}\eta^\beta{}_{cd}\delta_{ab}^{cd}=
       \delta_\alpha{}^\beta
       \end{array}
\end{equation}
      will follow. Thus, the maximal planar generator manifold is equipped with the metric tensor $G_{\Lambda\Psi}$, and therefore this manifold is diffeomorphic to the pseudo-Riemannian real space $V^{12}_{(6,6)}$ with the complex structure $f_\Lambda{}^\Psi$.

\subsubsection{Real and complex representations of the connection} $ $

      \indent Let us construct a more general connection. We say that two connections are equivalent if they define the same parallel transport along any curve of the base. The complex and real representations are carried out according to \cite[p. 169-178(rus)]{Lichnerowicz1}.
\begin{theorem}
\label{theorem111}
      Let $V^{2n}_{(n,n)}$ be a real pseudo-Riemannian space with the complex structure and $\mathbb CV^n$ be a complex analytic Riemannian space. Let $\mathbb CV^n$ be the complex representation of $V^{2n}_{(n,n)}$. Then the two following definitions are equivalent (specify the same connection)
       \begin{enumerate}
       \item In the tangent bundle $\tau^\mathbb R(V^{2n}_{(n,n)})$, there is the torsion-free Riemannian connection such that the tensor $m_\alpha{}^\Lambda$ is a covariant constant
\begin{equation}
\label{e20w}
        \nabla_\Lambda G_{\Theta\Psi}=0\sps
\end{equation}
\begin{equation}
\label{e21w}
       \nabla_\Lambda m_\alpha{}^\Psi=0\sps
       \nabla_\Lambda \bar m_{\alpha '}{}^\Psi=0\spsd
\end{equation}
       \item In the tangent bundle $\tau^\mathbb C(\mathbb CV^n)$, there is the torsion-free Riemannian connection such that the tensor $m_\alpha{}^\Lambda$ is a covariant constant
\begin{equation}
\label{e22w}
       \left\{
       \begin{array}{l}
       \nabla_\alpha g_{\beta\gamma}=0\sps\\
       \bar\nabla_{\alpha '} g_{\beta\gamma}=0\sps
       \end{array}
       \right.
       \left\{
       \begin{array}{l}
       \nabla_\alpha \bar g_{\beta '\gamma '}=0\sps\\
       \bar\nabla_{\alpha '} \bar g_{\beta '\gamma '}=0\sps
       \end{array}
       \right.
\end{equation}
\begin{equation}
\label{e23w}
       \left\{
       \begin{array}{l}
       \nabla_\beta m_\alpha{}^\Psi=0\sps\\
       \bar\nabla_{\beta '} m_\alpha{}^\Psi=0\sps
       \end{array}
       \right.
\end{equation}
       \end{enumerate}
      and the definition
\begin{equation}
\label{e24w}
       \nabla_\alpha:=m_\alpha{}^\Lambda{}\nabla_\Lambda\sps
       \nabla_{\alpha '}:=\bar m_{\alpha '}{}^\Lambda\nabla_\Lambda
\end{equation}
      is made.
\end{theorem}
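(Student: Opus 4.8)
The plan is to treat the Neifeld operators $m_\alpha{}^\Lambda$ together with their conjugates $\bar m_{\alpha'}{}^\Lambda$ as a frame change between the complexified real tangent bundle of $V^{2n}_{(n,n)}$ and its holomorphic/antiholomorphic summands, and then to translate each of the three defining requirements — torsion-freeness, metric compatibility, and covariant constancy of $m$ — back and forth through this frame. First I would record the completeness relation dual to (\ref{e18aa}): since $\Delta_\Lambda{}^\Psi=\frac{1}{2}(\delta_\Lambda{}^\Psi+if_\Lambda{}^\Psi)=m^\alpha{}_\Lambda m_\alpha{}^\Psi$ is the projector onto the holomorphic subspace and its conjugate projects onto the antiholomorphic one, their sum is the identity, so $m^\alpha{}_\Lambda m_\alpha{}^\Psi+\bar m^{\alpha'}{}_\Lambda\bar m_{\alpha'}{}^\Psi=\delta_\Lambda{}^\Psi$. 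This, together with the orthonormality relations (\ref{e17aa}), is what lets me invert the defining relations (\ref{e24w}) into $\nabla_\Lambda=m^\alpha{}_\Lambda\nabla_\alpha+\bar m^{\alpha'}{}_\Lambda\bar\nabla_{\alpha'}$, so that Definition 1 and Definition 2 refer to one and the same connection on the real bundle, and the content of the theorem becomes the equivalence of the two lists of defining conditions.

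For the implication $(1)\Rightarrow(2)$ I would simply contract. Covariant constancy $\nabla_\beta m_\alpha{}^\Psi=m_\beta{}^\Lambda\nabla_\Lambda m_\alpha{}^\Psi=0$ and $\bar\nabla_{\beta'}m_\alpha{}^\Psi=\bar m_{\beta'}{}^\Lambda\nabla_\Lambda m_\alpha{}^\Psi=0$ follow at once from (\ref{e21w}) and (\ref{e24w}). The metric conditions (\ref{e22w}) follow by differentiating the decompositions (\ref{e22aa}) of $G_{\Lambda\Psi}$ into $g_{\alpha\beta}$ and $\bar g_{\alpha'\beta'}$ (with vanishing mixed part $g_{\alpha'\beta}=0$) and using $\nabla_\Lambda G=0$ together with $\nabla_\Lambda m=0$. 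The converse $(2)\Rightarrow(1)$ runs the same computation backwards: I reconstruct $\nabla_\Lambda$ by the inverted formula above, recover $\nabla_\Lambda m_\alpha{}^\Psi=0$ from (\ref{e23w}) and completeness, and recover $\nabla_\Lambda G_{\Theta\Psi}=0$ from the inverse relations (\ref{e23aa}) expressing $G$ through $g$, $\bar g$ and the operators $m$.

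The delicate point — and the step I expect to carry the real weight — is the torsion. Metric compatibility plus covariant constancy of $m$ forces the connection to preserve the complex structure $f_\Lambda{}^\Psi$, and a torsion-free connection preserving both the metric and an integrable $f$ is pinned down uniquely as the Levi-Civita connection; so the crux is to show that torsion-freeness in the real picture is equivalent to torsion-freeness of the pair $(\nabla_\alpha,\bar\nabla_{\alpha'})$ in the complex picture. Passing the defining relation of the torsion through the frame $m$ produces purely holomorphic, purely antiholomorphic, and mixed torsion components; the purely holomorphic one is the complex torsion that must vanish, while the mixed components must be shown to vanish identically.

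This last vanishing is exactly where the hypothesis that $\mathbb CV^n$ is the complex representation of $V^{2n}_{(n,n)}$ enters: it is equivalent to integrability of $f$ (vanishing of the Nijenhuis tensor), encoded here by the analyticity relations $\eta^{\alpha'}{}_{ab}=\bar m^{\alpha'}{}_\Lambda M^\Lambda{}_{ab}\equiv 0$ of (\ref{e23aa}). I would therefore isolate the Nijenhuis obstruction, argue that it is the only thing that could spoil symmetry of the reconstructed $\nabla_\Lambda$, and invoke integrability to annihilate it, appealing to the real/complex correspondence of Lichnerowicz for the standard parts of the argument.
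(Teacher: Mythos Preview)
Your proposal is correct and tracks the paper's argument closely for the algebraic core. The paper also builds on the completeness relation $\delta_\Lambda{}^\Psi=m^\alpha{}_\Lambda m_\alpha{}^\Psi+\bar m^{\alpha'}{}_\Lambda\bar m_{\alpha'}{}^\Psi$, inverts (\ref{e24w}) to $\nabla_\Lambda=m^\alpha{}_\Lambda\nabla_\alpha+\bar m^{\alpha'}{}_\Lambda\bar\nabla_{\alpha'}$ (its (\ref{e27w})), and derives the equivalences $(\ref{e21w})\Leftrightarrow(\ref{e23w})$ and $(\ref{e20w})\Leftrightarrow(\ref{e22w})$ by exactly the contractions and the decomposition (\ref{e22aa}) you describe.

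The one genuine divergence is in the handling of torsion. You propose to split the real torsion into holomorphic, antiholomorphic and mixed components and annihilate the mixed pieces via integrability of $f$ (Nijenhuis vanishing), invoking Lichnerowicz for the standard machinery. The paper proceeds more summarily: it notes that covariant constancy of $m$ forces $\nabla f=0$ via (\ref{e19aa}), cites Kobayashi--Nomizu (vol.~2, pp.~135--139) for the induced affine connection on $\tau^{\mathbb C}(\mathbb CV^n)$, writes down explicit Christoffel transfer formulas (\ref{e29w})--(\ref{e31w}) exhibiting the bijection between the two descriptions, and closes by appealing to uniqueness of the Levi--Civita connection on the real side. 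Your route is more explicit about where integrability enters; the paper's is shorter but delegates that step to the reference.
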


\begin{proof}$ $\\
      \indent First. Let Connection 1). exists then we multiply (\ref{e21w}) by $m_\beta{}^\Lambda$ and obtain
\begin{equation}
\label{e25w}
        \begin{array}{c}
        \nabla_\Lambda m_\alpha{}^\Psi=0\sps\\
        0=m_\beta{}^\Lambda\nabla_\Lambda m_\alpha{}^\Psi=\nabla_\beta m_\alpha{}^\Psi\sps\\
        0=\bar m_{\beta '}{}^\Lambda\nabla_\Lambda m_\alpha{}^\Psi=\nabla_{\beta '} m_\alpha{}^\Psi\\
        \end{array}
\end{equation}
      taking into account the definition (\ref{e23w}). Inverse. Assume that (\ref{e23w}) is executed then obtain
\begin{equation}
\label{e26w}
        \begin{array}{c}
        \nabla_\alpha:=m_\alpha{}^\Lambda\nabla_\Lambda\sps\\
        m^\alpha{}_\Psi\nabla_\alpha=
        \bigtriangleup_\Psi{}^\Lambda\nabla_\Lambda
        \ \ \Leftrightarrow\ \ \
        \bar m^{\alpha '}{}_\Psi\bar\nabla_{\alpha '}
        =\bar\bigtriangleup_\Psi{}^\Lambda\nabla_\Lambda
        \end{array}
\end{equation}
      taking into account the definitions (\ref{e17aa}) and (\ref{e18aa}). We will combine these two equations and obtain
\begin{equation}
\label{e27w}
        \begin{array}{c}
        \nabla_\Lambda=
        (\bigtriangleup_\Psi{}^\Lambda+
        \bar\bigtriangleup_\Psi{}^\Lambda)\nabla_\Lambda=
        m^\alpha{}_\Lambda\nabla_\alpha+
        \bar m^{\alpha '}{}_\Lambda\bar \nabla_{\alpha '}\spsd
        \end{array}
\end{equation}
      Then from the condition (\ref{e23w}), the identity
\begin{equation}
\label{e28w}
        \nabla_\Lambda m^\beta{}_\Psi{}=
        m^\alpha{}_\Lambda{}\nabla_\alpha m^\beta{}_\Psi+
        \bar m^{\alpha '}{}_\Lambda\bar \nabla_{\alpha '} m^\beta{}_\Psi{}=0
\end{equation}
      follows.\\
      \indent Second. Because, from (\ref{e21w}) or (\ref{e23w}), the covariant constancy of the complex structure operator will follow due to the performance of (\ref{e19aa}) then the existence of the affine connection in the tangent bundle $\tau^\mathbb C(\mathbb CV_n)$ will imply according to \cite[v. 2, pp. 135-139(rus)]{Koboyasi1}. Considering the torsion-free  Riemannian connection, we find that if we know Connection 1)., we can define Connection 2). with the help of the condition (\ref{e21w}) rewriting as
\begin{equation}
\label{e29w}
        \begin{array}{c}
        \Gamma_{\Lambda\alpha}{}^\beta:=
        \Gamma_{\Lambda\Theta}{}^\Psi m_\alpha{}^\Theta m^\beta{}_\Psi+
        m^\beta{}_\Psi\partial_\Lambda m_\alpha{}^\Psi\sps
        \bar \Gamma_{\Lambda\alpha '}{}^{\beta '}:=
        \Gamma_{\Lambda\Theta}{}^\Psi \bar m_{\alpha '}{}^\Theta \bar m^{\beta '}{}_\Psi+
        \bar m^{\beta '}{}_\Psi\partial_\Lambda \bar m_{\alpha '}{}^\Psi\spsd
        \end{array}
\end{equation}
      And if we know Connection 2). then we can define Connection 1). rewriting the condition (\ref {e23w}) as
\begin{equation}
\label{e30w}
\begin{array}{l}
        \Gamma_{\beta\phantom{'}\Theta}{}^\Psi:=
        \Gamma_{\beta\phantom{'}\alpha}{}^\gamma m^\alpha{}_\Theta m_\gamma{}^\Psi+
        \bar \Gamma_{\beta\phantom{'} \alpha '}{}^{\gamma '}
        \bar m^{\alpha '}{}_\Theta \bar m_{\gamma '}{}^\Psi-
        m^\alpha{}_\Theta\partial_{\beta\phantom{'}} m_\alpha{}^\Psi-
        \bar m^{\alpha '}{}_\Theta\partial_{\beta\phantom{'}} \bar m_{\alpha '}{}^\Psi\sps\\
        \Gamma_{\beta'\Theta}{}^\Psi:=
        \Gamma_{\beta'\alpha}{}^\gamma m^\alpha{}_\Theta m_\gamma{}^\Psi+
        \bar \Gamma_{\beta' \alpha '}{}^{\gamma '}
        \bar m^{\alpha '}{}_\Theta \bar m_{\gamma '}{}^\Psi-
        m^\alpha{}_\Theta\partial_{\beta '} m_\alpha{}^\Psi-
        \bar m^{\alpha '}{}_\Theta\partial_{\beta '} \bar m_{\alpha '}{}^\Psi\spsd
\end{array}
\end{equation}
      At the same time, the equations
\begin{equation}
\label{e31w}
        \begin{array}{c}
        \Gamma_{\Lambda\Theta}{}^\Psi:=
        \Gamma_{\beta\Theta}{}^\Psi m^\beta{}_\Lambda+\Gamma_{\beta '\Theta}{}^\Psi \bar m^{\beta'}{}_\Lambda\sps
        \Gamma_{\beta\Theta}{}^\Psi=\Gamma_{\Lambda\Theta}{}^\Psi m_\beta{}^\Lambda\sps
        \Gamma_{\beta '\Theta}{}^\Psi=\Gamma_{\Lambda\Theta}{}^\Psi \bar m_{\beta '}{}^\Lambda\sps\\
        \partial_\beta = m_\beta{}^\Psi\partial_\Psi\sps
        \bar \partial_{\beta '}= \bar m_{\beta '}{}^\Psi\partial_\Psi\sps
        \partial_\Lambda = m^\beta{}_\Lambda\partial_\beta +\bar m^{\beta '}{}_\Lambda\bar \partial_{\beta '}
        \end{array}
\end{equation}
      are executed.\\
      \indent Third. From (\ref{e17aa}) and (\ref{e18aa}), the equations
\begin{equation}
\label{e32w}
        \begin{array}{c}
        g_{\alpha\beta}=G_{\Psi\Lambda} m_\alpha{}^\Psi m_\beta{}^\Lambda\sps\\
        G_{\Theta\Upsilon}\bigtriangleup_\Lambda{}^\Theta
        \bigtriangleup_\Psi{}^\Upsilon=m^\alpha{}_\Lambda
        m^\beta{}_\Psi g_{\alpha\beta}\sps\\
        \frac{1}{2}(G_{\Lambda\Psi}+i
        G_{\Theta\left(\right.\Lambda} f_{\Psi\left.\right)}{}^\Theta)=
        m^\alpha{}_\Lambda m^\beta{}_\Psi g_{\alpha\beta}\sps\\
        \frac{1}{2}(G_{\Lambda\Psi}-i
        G_{\Theta\left(\right.\Lambda} f_{\Psi\left.\right)}{}^\Theta)=
        \bar m^{\alpha '}{}_\Lambda \bar m^{\beta '}{}_\Psi{}
        \bar g_{\alpha '\beta '}\sps\\
        G_{\Lambda\Psi}=m^\alpha{}_\Lambda m^\beta{}_\Psi g_{\alpha\beta}+
        \bar m^{\alpha '}{}_\Lambda \bar m^{\beta '}{}_\Psi{}
        g_{\alpha '\beta '}
        \end{array}
\end{equation}
      will follow. Therefore, from the conditions (\ref{e22w}), the equation (\ref{e20w}) will follow too. Conversely, if  (\ref{e20w}) is executed then we have
\begin{equation}
\label{e33w}
        \begin{array}{c}
        m_\alpha{}^\Lambda m_\beta{}^\Psi m_\gamma{}^\Theta{}
        \nabla_\Lambda G_{\Psi\Theta}=0\ \ \Leftrightarrow\ \ \
        \nabla_\alpha g_{\beta\gamma}=0\sps\\
        \bar m_{\alpha '}{}^\Lambda m^\Psi{}_\beta m_\gamma{}^\Theta
        \nabla_\Lambda G_{\Psi\Theta}=0\ \ \Leftrightarrow\ \ \
        \nabla_{\alpha '} g_{\beta\gamma}=0\spsd
        \end{array}
\end{equation}
      \indent Fourth. Since,  Connection 1). is the unique, then Connection 2). is the unique too.\\
\end{proof}
      Note that for an analytical connection, the analytical conditions have the form $\Gamma_{\alpha '\beta}{}^\gamma\equiv 0$, $\bar \partial_{\alpha '} m_\beta{}^\Lambda \equiv 0$ $\Rightarrow$ $\bar\nabla_{\alpha '}m_\beta{}^\Lambda \equiv 0$.
\begin{theorem}
\label{theorem112}
      Let the real pseudo-Riemannian space $V^{12}_{(6,6)}$ be given as the base of the bundles. Then the two torsion-free connections, given in the bundles $\tau^\mathbb R (V^{12}_{(6,6)})$ and $A^\mathbb C$, are equivalent:
      \begin{enumerate}
      \item the Riemannian connection, defined in the bundle $\tau^\mathbb R (V^{12}_{(6,6)})$ by means of the condition
\begin{equation}
\label{e34w}
            \nabla_\Lambda G_{\Psi\Theta}=0\spsd
\end{equation}
      \item the Riemannian connection, defined in the bundle $A^\mathbb C$ by means of the conditions
\begin{equation}
\label{e35w}
            \nabla_\Lambda \varepsilon_{abcd}=0\sps
            \nabla_\Lambda \bar\varepsilon_{a'b'c'd'}=0\spsd
\end{equation}
      Thus, Connection 2). is uniquely determined from the conditions
\begin{equation}
\label{e36w}
            \nabla_\Lambda M_\Psi{}^{ab}=0\sps
            \nabla_\Lambda \bar M_\Psi{}^{a'b'}=0\spsd
\end{equation}
      \end{enumerate}
\end{theorem}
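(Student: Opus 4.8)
The plan is to run the same machinery that proved Theorem~\ref{theorem111}, now with the connecting operators $M_\Lambda{}^{ab}$ playing the role that $m_\alpha{}^\Lambda$ played there. By (\ref{e13aa}) the operator $M_\Lambda{}^{ab}$ (together with $\bar M_\Lambda{}^{a'b'}$) is a fiberwise isomorphism between the complexified tangent bundle of $V^{12}_{(6,6)}$ and the bivector bundle $\Lambda^2 A^\mathbb{C}$, and by (\ref{e22aa})--(\ref{e23aa}) it carries the metric $G_{\Lambda\Psi}$ onto the pair $(\varepsilon_{abcd},\bar\varepsilon_{a'b'c'd'})$. The whole statement therefore reduces to transporting a connection across this isomorphism and checking that ``torsion-free Riemannian'' on the tangent side matches ``torsion-free equiaffine'' on the $A^\mathbb{C}$ side.

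First I would take Connection~1, the unique torsion-free Riemannian connection $\Gamma_{\Lambda\Psi}{}^\Theta$ fixed by (\ref{e34w}), and define a connection $\Gamma_{\Lambda a}{}^b$ in $A^\mathbb{C}$ by imposing (\ref{e36w}), i.e.
\[
\partial_\Lambda M_\Psi{}^{ab}-\Gamma_{\Lambda\Psi}{}^\Theta M_\Theta{}^{ab}+\Gamma_{\Lambda c}{}^a M_\Psi{}^{cb}+\Gamma_{\Lambda c}{}^b M_\Psi{}^{ac}=0
\]
together with its conjugate. I would solve this for $\Gamma_{\Lambda c}{}^a$ by contracting with the inverse operators of (\ref{e13aa}) and projecting with $A_{\alpha\beta b}{}^a$, using the contraction identities (\ref{e21}) and (\ref{e24aa}). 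The point that makes the lift succeed and be unique is exactly the infinitesimal form of the double covering recorded in (\ref{e44b}): since Connection~1 is metric, the endomorphism $\Psi\mapsto\Gamma_{\Lambda\Psi}{}^\Theta$ is $so(6,\mathbb{C})$-valued in each chart, and (\ref{e44b}) lifts such an element to a unique traceless endomorphism $\Gamma_{\Lambda c}{}^a$ of the fiber $\mathbb{C}^4$. Uniqueness is automatic because $M_\Lambda{}^{ab}$ is an isomorphism, so $\nabla_\Lambda M=0$ determines $\Gamma_{\Lambda c}{}^a$ completely once $\Gamma_{\Lambda\Psi}{}^\Theta$ is known.

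Next I would match the two metric conditions. Because $\varepsilon_{abcd}$ is the fixed metric quadrivector, total antisymmetry in four dimensions gives the identity $\Gamma_{\Lambda a}{}^{e}\varepsilon_{ebcd}+\Gamma_{\Lambda b}{}^{e}\varepsilon_{aecd}+\Gamma_{\Lambda c}{}^{e}\varepsilon_{abed}+\Gamma_{\Lambda d}{}^{e}\varepsilon_{abce}=\Gamma_{\Lambda e}{}^{e}\varepsilon_{abcd}$, hence $\nabla_\Lambda\varepsilon_{abcd}=-\Gamma_{\Lambda e}{}^e\varepsilon_{abcd}$; so (\ref{e35w}) holds iff the connection is traceless, $\Gamma_{\Lambda e}{}^e=0$. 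Substituting $G_{\Lambda\Psi}=\tfrac14\bigl(M_\Lambda{}^{ab}M_\Psi{}^{cd}\varepsilon_{abcd}+\bar M_\Lambda{}^{a'b'}\bar M_\Psi{}^{c'd'}\bar\varepsilon_{a'b'c'd'}\bigr)$ and using $\nabla_\Lambda M=\nabla_\Lambda\bar M=0$, the condition $\nabla_\Lambda G_{\Psi\Theta}=0$ collapses to
\[
\nabla_\Lambda G_{\Psi\Theta}=-\Gamma_{\Lambda e}{}^e\,\bigtriangleup_\Psi{}^\Xi\bigtriangleup_\Theta{}^\Omega G_{\Xi\Omega}-\bar\Gamma_{\Lambda e'}{}^{e'}\,\bar\bigtriangleup_\Psi{}^\Xi\bar\bigtriangleup_\Theta{}^\Omega G_{\Xi\Omega}=0,
\]
and since the holomorphic and antiholomorphic parts singled out by the Norden affinor $\bigtriangleup_\Lambda{}^\Psi$ of (\ref{e14aa}) are independent, this forces $\Gamma_{\Lambda e}{}^e=0$. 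Thus the connection defined by (\ref{e36w}) is precisely the equiaffine connection of (\ref{e35w}); conversely, reading the same relation backwards, any torsion-free connection satisfying (\ref{e35w}) reproduces (\ref{e34w}). Finally, covariant constancy of $M_\Lambda{}^{ab}$ means parallel transport in $\tau^\mathbb{R}$ intertwines with parallel transport in $\Lambda^2 A^\mathbb{C}$, so the two connections induce the same transport along every curve and are equivalent in the declared sense; uniqueness of Connection~2 follows from uniqueness of the Riemannian Connection~1 and the isomorphism property of $M$.

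The step I expect to be the main obstacle is the existence-and-uniqueness lift in the second paragraph: one must verify that the bivector-level connection induced from Connection~1 genuinely lies in the image of $sl(4,\mathbb{C})$ under (\ref{e44b}), and that the $\mathfrak{gl}(4,\mathbb{C})$ trace degree of freedom is pinned down (to zero) by the metric condition rather than left free. Once this is settled, everything else is bookkeeping with the operators $\eta_\alpha{}^{ab}$, $M_\Lambda{}^{ab}$ and the algebraic identities (\ref{e21}) and (\ref{e24aa}) already established.
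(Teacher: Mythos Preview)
Your approach is correct and parallels the paper's, though the paper is more concrete at two points. For existence, the paper writes out your lift explicitly: contracting (\ref{e36w}) with $M^\Psi{}_{ac_1}$ gives the closed formula (\ref{e38w}) for $\Gamma_{\Lambda c_1}{}^{a_1}$, and the trace is then computed in (\ref{e39w}) as $\Gamma_{\Lambda k}{}^k=\tfrac{1}{24}\varepsilon^{abcd}\partial_\Lambda\varepsilon_{abcd}$. So your claim ``(\ref{e35w}) holds iff the connection is traceless'' is only literally true in a frame where $\varepsilon$ is constant; in general (\ref{e35w}) pins the trace to this specific value, which is all you actually need. For uniqueness, the paper does not invoke Levi--Civita uniqueness as a black box but spells out the underlying strain-tensor argument directly: a second torsion-free $\tilde\nabla$ with $\tilde\nabla\varepsilon=0$ has spinor strain satisfying $Q_{\Lambda k}{}^k=0$; pushing this through (\ref{e42w}) yields $Q_{\Lambda\Theta\Psi}=-Q_{\Lambda\Psi\Theta}$ on the tangent side, and combining with the torsion-free symmetry $Q_{\Lambda\Theta\Psi}=Q_{\Theta\Lambda\Psi}$ forces $Q=0$. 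Your route---show the induced tangent connection satisfies (\ref{e34w}) and then quote uniqueness of the Riemannian connection---is the same mechanism packaged one level higher, and the injectivity of $gl(4)\to gl(\Lambda^2\mathbb C^4)$ you worry about is exactly what the paper uses implicitly to pass from $Q_{\Lambda\Theta\Psi}=0$ back to $Q_{\Lambda a}{}^b=0$.
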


\begin{proof}
      In the tangent bundle, the torsion-free Riemannian connection, defined with the help of the condition (\ref{e34w}), always exists and it is the unique. We will rewrite the first condition (\ref{e36w}) as
\begin{equation}
\label{e37w}
      \nabla_\Lambda M_\Psi{}^{aa_1}=
      \partial_\Lambda M_\Psi{}^{aa_1} -
      \Gamma_{\Lambda\Psi}{}^{\Theta}M_\Theta{}^{aa_1}+
      \Gamma_{\Lambda c}{}^aM_\Psi{}^{ca_1}+
      \Gamma_{\Lambda c}{}^{a_1}M_\Psi{}^{ac}=0\spsd
\end{equation}
      Multiply this equation by $M^\Psi{}_{ac_1}$
\begin{equation}
\label{e38w}
      \Gamma_{\Lambda c_1}{}^{a_1}=-\frac{1}{2}
      (M^\Psi{}_{ac_1}\partial_\Lambda M_\Psi{}^{aa_1}
      -\Gamma_{\Lambda\Psi}{}^\Theta M_\Theta{}^{aa_1} M^\Psi{}_{ac_1}+
      \Gamma_{\Lambda a}{}^a\delta_{c_1}{}^{a_1})\spsd
\end{equation}
      Besides, from the conditions (\ref{e35w}) and (\ref{e13aa}), the equation
\begin{equation}
\label{e39w}
      \begin{array}{c}
      \frac{1}{24} M_\Psi{}^{ab} M^\Psi{}^{cd}
      \partial_\Lambda(M^\Theta{}_{ab} M_\Theta{}_{cd})=
      \frac{1}{24}\varepsilon^{abcd}\partial_\Lambda\varepsilon_{abcd}=\\[2ex]
      =\frac{1}{24}\varepsilon^{abcd}(\nabla_\Lambda\varepsilon_{abcd}+
      4\Gamma_{\Lambda\left[\right. a}{}^k\varepsilon_{|k|bcd\left.\right]})=
      \frac{1}{6}\varepsilon^{abcd}\Gamma_{\Lambda a}{}^k
      \varepsilon_{kbcd}=\Gamma_{\Lambda k}{}^k
      \end{array}
\end{equation}
      follows. On this basis, the equation (\ref{e38w}) can be put in the definition of Connection 2).\\
      \indent Suppose that in the bundle $A^\mathbb C$, there is another operator of the symmetric covariant derivative $\tilde\nabla_\Lambda$ such that
\begin{equation}
\label{e40w}
       \tilde\nabla_\Lambda\varepsilon_{abcd}=0\ \ \Rightarrow\ \ \
       (\tilde\nabla_\Lambda-\nabla_\Lambda)\varepsilon_{abcd}=0
       \ \ \Leftrightarrow\ \ \ Q_{\Lambda k}{}^k=0\sps
\end{equation}
      where the tensor $Q_{\Lambda a}{}^b$ is the strain tensor defined in the bundle $A^\mathbb C$. Let the tensor  $Q_{\Lambda\Psi}{}^\Theta$ be the strain tensor in the tangent bundle $\tau^\mathbb R(V^{12}_{(6,6)})$. Consider the action of these operators on a bivector $R^{ab}=M_\Psi{}^{ab}r^\Psi$
\begin{equation}
\label{e41w}
      \begin{array}{c}
      (\tilde\nabla_\Lambda-\nabla_\Lambda)R^{ab}=
      (Q_{\Lambda k}{}^a\delta_t{}^b-
       Q_{\Lambda k}{}^b\delta_t{}^a)R^{kt}
      =M_\Psi{}^{ab}(\tilde\nabla_\Lambda-\nabla_\Lambda)r^\Psi=
       M_\Psi{}^{ab}Q_{\Lambda \Theta}{}^\Psi r^\Theta\sps\\
      (\tilde\nabla_\Lambda-\nabla_\Lambda)R_{ab}=-M^\Theta{}_{ab}Q_{\Lambda \Theta}{}^\Psi r_\Psi\spsd
      \end{array}
\end{equation}
      From this, the identities
\begin{equation}
\label{e42w}
      \begin{array}{c}
      M_\Psi{}^{ab}Q_{\Lambda \Theta}{}^\Psi r^\Theta=
      2Q_{\Lambda\left[\right. k}{}^a\delta_{t\left.\right]}{}^b R^{kt}\sps\\
      M_\Psi{}^{ab}Q_{\Lambda \Theta}{}^\Psi r^\Theta=
      2Q_{\Lambda\left[\right. k}{}^a\delta_{t\left.\right]}{}^b M_\Theta{}^{kt} r^\Theta\sps\\
      M_\Psi{}^{ab}Q_{\Lambda \Theta}{}^\Psi =M_\Theta{}^{kt}
      2Q_{\Lambda\left[\right. k}{}^a\delta_{t\left.\right]}{}^b=
      2M_\Theta{}^{k\left[\right.b}Q_{\Lambda k}{}^{a\left.\right]}\sps\\
      Q_{\Lambda \Theta\Psi} =M_\Theta{}^{kb}M_\Psi{}_{ab}
      Q_{\Lambda k}{}^a+\bar M_\Theta{}^{k'b'}\bar M_\Psi{}_{a'b'}
      \bar Q_{\Lambda k'}{}^{a'}=\\
      =-M_\Psi{}^{kb}M_\Theta{}_{ab}
      Q_{\Lambda k}{}^a-\bar M_\Psi{}^{k'b'}\bar M_\Theta{}_{a'b'}
      \bar Q_{\Lambda k'}{}^{a'}
      \end{array}
\end{equation}
      will follow. Whence, we obtain
\begin{equation}
\label{e43w}
      Q_{\Lambda\Theta\Psi}=-Q_{\Lambda\Psi\Theta}\spsd
\end{equation}
      In the absence of the torsion, we have
\begin{equation}
\label{e44w}
      Q_{\Lambda\Theta\Psi}=Q_{\Theta\Lambda\Psi}
\end{equation}
      then
\begin{equation}
\label{e45w}
      Q_{\Lambda\Theta\Psi}=0\sps
\end{equation}
      and this means the uniqueness of Connection 2).
\end{proof}

\begin{corollary}
\label{corollary113}
      Let as the base of the bundle, the complex analytic Riemannian space $\mathbb CV^6$ be given. Then the two torsion-free connections, given in the bundles $\tau^\mathbb C(\mathbb CV^6)$ and $A^\mathbb C$, are equivalent:
      \begin{enumerate}
      \item the Riemannian analytic connection, defined in the bundle $\tau^\mathbb C(\mathbb CV^6)$ by means of the conditions
\begin{equation}
\label{e46w}
            \nabla_\alpha g_{\beta\gamma}=0\sps
            \bar\nabla_{\alpha '} g_{\beta '\gamma '}=0\spsd
\end{equation}
      \item the Riemannian analytic connection, defined in the bundle $A^\mathbb C$ by means of the conditions
\begin{equation}
\label{e47w}
            \nabla_\alpha\varepsilon_{abcd}=0\sps
            \bar\nabla_{\alpha '}\bar \varepsilon_{a'b'c'd'}=0\spsd
\end{equation}
      Thus, Connection 2). is uniquely determined by means of the conditions
\begin{equation}
\label{e49w}
            \nabla_\alpha \eta_\beta{}^{ab}=0\sps
            \bar\nabla_{\alpha '} \bar\eta_{\beta '}{}^{a'b'}=0\spsd
\end{equation}
      \end{enumerate}
\end{corollary}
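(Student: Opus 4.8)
The plan is to derive the corollary purely by composing Theorem~\ref{theorem111} and Theorem~\ref{theorem112} with $n=6$, regarding $\mathbb CV^6$ as the complex representation of the real space $V^{12}_{(6,6)}$ (the real representation of the maximal planar generator manifold equipped with $G_{\Lambda\Psi}$), and then transcribing the defining conditions from the real base to the complex base by means of the Neifeld operators. No genuinely new construction is needed: the existence and uniqueness are already supplied by the two preceding theorems, and what remains is to check that their conditions translate into (\ref{e46w}), (\ref{e47w}), (\ref{e49w}).

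First I would set $n=6$ in Theorem~\ref{theorem111}, so that $V^{2n}_{(n,n)}=V^{12}_{(6,6)}$ and $\mathbb CV^n=\mathbb CV^6$. This gives a bijective correspondence between the torsion-free Riemannian connection on $\tau^\mathbb C(\mathbb CV^6)$ satisfying (\ref{e46w}) and the torsion-free Riemannian connection on $\tau^\mathbb R(V^{12}_{(6,6)})$ satisfying $\nabla_\Lambda G_{\Theta\Psi}=0$, both carrying the covariantly constant Neifeld operators, $\nabla_\alpha m_\beta{}^\Psi=0$ and $\bar\nabla_{\alpha'}m_\beta{}^\Psi=0$. Theorem~\ref{theorem112} then relates this real connection to the connection on $A^\mathbb C$ satisfying (\ref{e35w}), uniquely pinned down by $\nabla_\Lambda M_\Psi{}^{ab}=0$. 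Composing the two equivalences yields the claimed equivalence between Connection~1). and Connection~2). of the corollary, together with existence and uniqueness.

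Next I would transcribe the conditions into their complex form using $\nabla_\alpha=m_\alpha{}^\Lambda\nabla_\Lambda$ and $\bar\nabla_{\alpha'}=\bar m_{\alpha'}{}^\Lambda\nabla_\Lambda$ from (\ref{e24w}). Contracting $\nabla_\Lambda\varepsilon_{abcd}=0$ with $m_\alpha{}^\Lambda$ gives $\nabla_\alpha\varepsilon_{abcd}=0$, while contracting with $\bar m_{\alpha'}{}^\Lambda$ gives $\bar\nabla_{\alpha'}\varepsilon_{abcd}=0$, which holds automatically by the analyticity remark following Theorem~\ref{theorem111} (namely $\Gamma_{\alpha'\beta}{}^\gamma\equiv0$, $\bar\partial_{\alpha'}m_\beta{}^\Lambda\equiv0\Rightarrow\bar\nabla_{\alpha'}m_\beta{}^\Lambda\equiv0$); complex conjugation then supplies $\bar\nabla_{\alpha'}\bar\varepsilon_{a'b'c'd'}=0$, so (\ref{e47w}) is precisely the complex form of (\ref{e35w}). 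In the same way, (\ref{e33w}) already shows $\nabla_\Lambda G_{\Theta\Psi}=0$ to be equivalent to $\nabla_\alpha g_{\beta\gamma}=0$ and $\bar\nabla_{\alpha'}g_{\beta\gamma}=0$, and analyticity together with conjugation yields the full list (\ref{e46w}). For the unique determination (\ref{e49w}) I would differentiate $\eta^\alpha{}_{ab}=m^\alpha{}_\Lambda M^\Lambda{}_{ab}$ from (\ref{e23aa}): since $\nabla_\gamma m^\alpha{}_\Lambda=0$ by the first step and $\nabla_\gamma M^\Lambda{}_{ab}=0$ (contract $\nabla_\Lambda M_\Psi{}^{ab}=0$ with $m_\gamma{}^\Lambda$, then raise and lower indices by the covariantly constant $G_{\Lambda\Psi}$ and $\varepsilon_{abcd}$), the Leibniz rule forces $\nabla_\gamma\eta^\alpha{}_{ab}=0$; lowering $\alpha$ with $g$ and raising the pair $ab$ with $\varepsilon$ gives $\nabla_\alpha\eta_\beta{}^{ab}=0$, and conjugation gives $\bar\nabla_{\alpha'}\bar\eta_{\beta'}{}^{a'b'}=0$.

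The main obstacle I expect is not any single computation but the careful bookkeeping of the holomorphic/anti-holomorphic (unprimed/primed) splitting: one must verify that the mixed conditions such as $\bar\nabla_{\alpha'}\varepsilon_{abcd}=0$ are automatic from analyticity, and that the Neifeld decompositions (\ref{e20aa}) and (\ref{e23aa}) intertwine the covariant constancy of $m$ and of $M$ into that of $\eta$ consistently with those analyticity constraints. Once this splitting is handled, the corollary reduces to a direct composition of the two preceding theorems.
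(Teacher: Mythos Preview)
Your proposal is correct and follows exactly the paper's approach: the paper's proof is a one-line pointer to Theorem~\ref{theorem111}, Theorem~\ref{theorem112}, the analyticity relations (\ref{e23aa}), and the identity (\ref{e19aa}), noting that analyticity of $\eta_\beta{}^{ab}$ means $\partial_{\alpha'}\eta_\beta{}^{ab}\equiv 0$ and (\ref{e19aa}) forces $\Gamma_{\alpha'\beta}{}^\gamma\equiv 0$. Your write-up simply unpacks these pointers into the explicit composition and the holomorphic/anti-holomorphic bookkeeping, which is precisely what the paper leaves implicit.
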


\begin{proof}
      The proof follows from Theorem \ref{theorem111}, Theorem \ref{theorem112}, the analyticity (\ref{e23aa}), and the equation (\ref{re19aa}). In particular, the analyticity of $\eta_\beta{}^{ab}$ means $\partial_{\alpha '}\eta_\beta{}^{ab}\equiv 0$, and from (\ref{e19aa}), the equality $\Gamma_{\alpha '\beta}{}^\gamma\equiv 0$ implies.
\end{proof}

\subsubsection{\texorpdfstring{Involution in $\mathbb C\mathbb P_7$}{Involution in the projective space}}$ $

      \indent Suppose now that in $\mathbb C\mathbb P_7$  an involution is given in the sense of \cite{Neifeld1}
\begin{equation}
\label{e25aa}
       \bar S_{A'}{}^B S_B{}^{D'}=\delta_{A'}{}^{D'}
\end{equation}
      then the condition of the reality of the point $X^A$ takes the form
\begin{equation}
\label{e26aa}
       S_A{}^{B'}\bar X^A=X^{B'}\spsd
\end{equation}
      We require that this involution defines an embedding of a real quadric in the complex one. This is equivalent to that the tensor, defining the quadric, is self-conjugated with respect to this involution. Then maximum planar generators of the real quadric must satisfy the conditions
\begin{equation}
\label{e27aa}
       1).\ \ \bar S_{A'}{}^B\bar X_{a'}^{A'}s_a{}^{a'}=X_a^B\sps
       2).\ \ \bar S_{A'}{}^B\bar X_{a'}^{A'}s^{aa'}=X^{aB}\spsd
\end{equation}
      Here the spin-tensors $s_a{}^{a'}$ and $s^{aa'}$ define \emph{Hermitian involution} and \emph{Hermitian polarity} in the complex bundle respectively. These two cases arise from the fact that in the bundle with fibers isomorphic to $\mathbb C^4$, we do not have a tensor with which the help single indices can be raised and lowered. The first case means that the generator itself and the conjugate one belong one and the same family, in the second case, these generators belong to the two different families. Therefor,  all possible cases of real inclusions are exhausted that follows from the results of the second chapter. From (\ref{e30b}), (\ref{e25aa})-(\ref{e27aa}), the identities
\begin{equation}
\label{e28aa}
       1).\ \ s_a{}^{a'}\bar s_{a'}{}^b=\pm\delta_a{}^b\sps
       2).\ \ s_{aa'}\bar s^{a'b}=\pm\delta_a{}^b
\end{equation}
      will follow. Next, we consider Item 2). only as the most interesting from the standpoint of physics \cite[v.2, p.68(eng)]{Penrose1}. Item 1). is treated similarly and we omit it. Then
\begin{equation}
\label{e29aa}
       \bar X^{b'}=\bar s^{b'a}X_a\spsd
\end{equation}
      Therefore, we can write down the expression
\begin{equation}
\label{e30aa}
     (X_a,\bar X^{b'})=0\sps (Y_p,\bar Y^{q'})=0\sps
     (X_a,\bar Y_{b'})=s_{ab'}
\end{equation}
      equivalent to (\ref {e8aa}), (\ref{e9aa}). Put
\begin{equation}
\label{e31aa}
     \left\{
     \begin{array}{lcl}
     \nabla_\Lambda \bar X^{a'} & = & \bar Y_{b'} \tilde M_\Lambda{}^{a'b'}\sps\\
     \nabla_\Lambda \bar Y_{b'} & = & \bar X^{a'} \tilde N_{\Lambda a'b'}
     \end{array}
     \right.
\end{equation}
      then from (\ref{e30aa}), the identities
\begin{equation}
\label{e32aa}
     \tilde M_{\Lambda a'b'}=-\frac{1}{2}s_{ca'}s_{db'}
     \varepsilon^{cdab}M_{\Lambda ab}\sps
     \nabla_\Lambda s_{ab'}=0
\end{equation}
      will follow. Therefore, by means of the equation
\begin{equation}
\label{e33aa}
     S_\Lambda{}^\Theta=\frac{1}{2}
     (M_{\Lambda ab} \bar M^\Theta{}_{c'd'}
     \bar s^{c'a}\bar s^{d'b}+
     \bar M_{\Lambda a' b'} M^\Theta{}_{cd}
     s^{ca'}s^{db'})\sps
\end{equation}
    we define the real involution
\begin{equation}
\label{e34aa}
     S_\Lambda{}^\Theta S_\Theta{}^\Psi=\delta_\Lambda{}^\Psi\sps
     \bar M_{\Lambda a'b'}=-S_\Lambda{}^\Psi\tilde M_{\Psi a'b'}\sps
     S_\Lambda{}^\Theta f_\Theta{}^\Lambda=
     -f_\Lambda{}^\Theta S_\Theta{}^\Lambda\spsd
\end{equation}
      In addition, we can define the complex representation of the involution according to \cite{Neifeld1}
\begin{equation}
\label{e38aa}
       \left\{
       \begin{array}{ccl}
       S_\alpha{}^\beta & =  & 0 \sps\\
       S_\alpha{}^{\beta '} & = & m_\alpha{}^\Lambda \bar m^{\beta '}{}_\Psi
       S_\Lambda{}^\Psi\sps
       \end{array}
       \right.
       S_\alpha{}^{\beta '}\bar S_{\beta '}{}^\gamma=\delta_\alpha{}^\gamma\spsd
\end{equation}

\subsubsection{Riemannian connection compatible with the involution}
\begin{corollary}
\label{corollary114}
      Let the complex analytic Riemannian space $\mathbb CV^6$  be given as the base of the bundles. Then the two torsion-free connections, given in the bundles $\tau^\mathbb C(\mathbb CV^6)$ and $A^\mathbb C(S)$ are equivalent:
      \begin{enumerate}
      \item the Riemannian real connection, defined in the bundle $\tau^\mathbb C(\mathbb CV^6)$ by means of the conditions
\begin{equation}
\label{e50w}
            \nabla_\alpha g_{\beta\gamma}=0\sps
            \nabla_\alpha S_\gamma{}^{\beta'}=0
\end{equation}
      (such the connection we call \texttt{connection compatible with the involution}).
      \item the Riemannian real connection, defined in the bundle $A^\mathbb C(S)$ by means of the conditions
\begin{equation}
\label{e51w}
            \nabla_\alpha\varepsilon_{abcd}=0\sps
            \nabla_\alpha s_{aba'b'}=0\spsd
\end{equation}
      Thus, Connection 2). is uniquely determined by means of the condition
\begin{equation}
\label{e52w}
            \nabla_\alpha \eta_\beta{}^{ab}=0\spsd
\end{equation}
      \end{enumerate}
\end{corollary}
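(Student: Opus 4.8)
The plan is to reduce everything to Corollary \ref{corollary113} and Theorem \ref{theorem6}, treating the involution as an extra layer built on top of the already-constructed $\varepsilon$-compatible connection. First I would recall that the real Riemannian connection compatible with the involution, characterized in the tangent bundle by $\nabla_\alpha g_{\beta\gamma}=0$ together with $\nabla_\alpha S_\gamma{}^{\beta'}=0$, exists and is unique: it is the Levi-Civita connection of the induced real metric, and the condition $\nabla_\alpha S=0$ (cf. (\ref{e14v})) fixes its action on primed indices once its action on unprimed indices is known. Then, exactly as in Theorem \ref{theorem112} and Corollary \ref{corollary113}, the equiaffine requirement $\nabla_\alpha\varepsilon_{abcd}=0$ determines a unique lift of the unprimed part of this connection to $A^\mathbb C(S)$, and this lift is equivalent to imposing $\nabla_\alpha\eta_\beta{}^{ab}=0$; the uniqueness argument is verbatim the strain-tensor computation (\ref{e40w})--(\ref{e45w}) giving $Q_{\Lambda k}{}^k=0$ and hence $Q=0$.

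The real work is to show that the extra condition $\nabla_\alpha s_{aba'b'}=0$ is not an independent requirement but is forced by $\nabla_\alpha\eta_\beta{}^{ab}=0$ together with the involution-compatibility $\nabla_\alpha S_\gamma{}^{\beta'}=0$ of the base connection. For this I would differentiate the defining relation (\ref{e29b}), $s_{aba'b'}=\bar\eta_{\beta'a'b'}\,\eta^\alpha{}_{ab}\,S_\alpha{}^{\beta'}$, obtaining
\[
\nabla_\gamma s_{aba'b'}
=(\nabla_\gamma\bar\eta_{\beta'a'b'})\,\eta^\alpha{}_{ab}\,S_\alpha{}^{\beta'}
+\bar\eta_{\beta'a'b'}\,(\nabla_\gamma\eta^\alpha{}_{ab})\,S_\alpha{}^{\beta'}
+\bar\eta_{\beta'a'b'}\,\eta^\alpha{}_{ab}\,(\nabla_\gamma S_\alpha{}^{\beta'}).
\]
The second term vanishes by $\nabla_\alpha\eta=0$ and the third by $\nabla_\alpha S=0$, so everything reduces to the first term and hence hinges on establishing $\nabla_\gamma\bar\eta_{\beta'a'b'}=0$. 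Merely conjugating $\nabla_\alpha\eta=0$ yields only $\bar\nabla_{\alpha'}\bar\eta_{\beta'}{}^{a'b'}=0$, so I would instead invoke the identities (\ref{e22v}) of Theorem \ref{theorem6} (Item II), which express $\bar\eta$ through $\eta$ and the spin-tensor $s$: the primed connection coefficients on $A^\mathbb C(S)$ are defined precisely so that the involution acts by covariant-constant operators, and feeding $\nabla_\alpha\eta=0$ and $\nabla_\alpha S=0$ into (\ref{e22v}) gives $\nabla_\gamma\bar\eta=0$. With this, $\nabla_\gamma s_{aba'b'}=0$ follows; conversely, contracting $\nabla_\alpha s=0$ with $\eta,\bar\eta$ and using the inverse relation $S_\alpha{}^{\beta'}=\tfrac14\eta_\alpha{}^{aa_1}\bar\eta^{\beta'}{}_{b'b'_1}s_{aa_1}{}^{b'b'_1}$ recovers $\nabla_\alpha S=0$, closing the equivalence between Connection 1 and Connection 2.

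For the converse direction and the uniqueness of Connection 2, I would argue as in Corollary \ref{corollary113}: given $\nabla_\alpha\varepsilon_{abcd}=0$ and $\nabla_\alpha s_{aba'b'}=0$, the equiaffine part forces $\nabla_\alpha\eta=0$ up to the strain tensor, while the $s$-condition pins down the primed action; combining with the tangent-bundle equivalence of Theorem \ref{theorem111} then transports the connection back to $\tau^\mathbb C(\mathbb CV^6)$ and identifies it with the unique involution-compatible Riemannian connection there.

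The main obstacle I expect is exactly the verification $\nabla_\gamma\bar\eta=0$ in the \emph{real} (non-analytic) setting. In the analytic case of Corollary \ref{corollary113} this was bypassed by $\Gamma_{\alpha'\beta}{}^\gamma\equiv0$, but here the primed connection coefficients are nonzero and must be shown to be exactly those dictated by the involution. Making precise the assertion that ``the primed connection is defined by $\nabla_\alpha s=0$'' and checking that this does not overdetermine the already-fixed unprimed connection is the delicate point; it rests on the Hermitian symmetry $s_{aba'b'}=\bar s_{a'b'ab}$ of (\ref{e29ba}) together with the reality of the base connection.
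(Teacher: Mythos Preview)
Your overall strategy matches the paper's: reduce to Corollary~\ref{corollary113}, then differentiate the defining relation $s_{aba'b'}=\eta^\alpha{}_{ab}\,\bar\eta_{\beta'a'b'}\,S_\alpha{}^{\beta'}$ and kill the three terms one by one. You also correctly locate the only nontrivial point, namely the vanishing of $\nabla_\gamma\bar\eta_{\beta'a'b'}$ in the real (non-analytic) setting.

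However, your proposed resolution of that point via the identities (\ref{e22v}) is circular. Those identities express $\bar\eta$ through $\eta$ and the one-index spin-tensor $s$, so differentiating them produces $\nabla_\gamma\bar\eta$ in terms of $\nabla_\gamma\eta$ and $\nabla_\gamma s$; you would need $\nabla_\gamma s=0$ as input, which is exactly what you are trying to prove. Saying that ``the primed connection coefficients are defined so that the involution acts by covariant-constant operators'' does not escape this, because that definition \emph{is} $\nabla s=0$.

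The paper avoids the circularity by a direct one-line observation. From the reality of the connection and $\nabla_\alpha S_\gamma{}^{\beta'}=0$ one has the operator identity
\[
\nabla_\gamma \;=\; S_\gamma{}^{\beta'}\,\bar\nabla_{\beta'}
\]
(this is (\ref{e55w}), preceded by the scalar version $S_\beta{}^{\gamma'}\bar\partial_{\gamma'}=\partial_\beta$ in (\ref{e54w})). Now Corollary~\ref{corollary113} already gives $\bar\nabla_{\beta'}\bar\eta_{\delta'}{}^{a'b'}=0$, so applying $S_\gamma{}^{\beta'}$ yields $\nabla_\gamma\bar\eta_{\delta'}{}^{a'b'}=0$ immediately. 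With all three factors of $s_{aba'b'}=\eta^\alpha{}_{ab}\,\bar\eta_{\beta'a'b'}\,S_\alpha{}^{\beta'}$ covariantly constant, $\nabla_\alpha s_{aba'b'}=0$ follows, and the rest of your argument (uniqueness via the strain-tensor computation, the converse direction) goes through as you outlined.
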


\begin{proof}
      Under the conditions of Corollary \ref{corollary113}, we consider Connection 1). specified by means of the conditions (\ref{e49w}) then from the reality of the conditions, the equation
\begin{equation}
\label{e54w}
      S_\beta{}^{\gamma '}\bar\partial_{\gamma '}=\partial_\beta
\end{equation}
     will follow. Therefor, from the covariant constancy of the involution tensor,  we give
\begin{equation}
\label{e55w}
      \nabla_\gamma=S_\gamma{}^{\beta '}\nabla_{\beta '}
\end{equation}
      that will define the real connection. If we put
\begin{equation}
\label{e56w}
      s_{aba'b'}:=\eta^\alpha{}_{ab}\bar \eta_{\beta 'a'b'}S_\alpha{}^{\beta '}\sps
\end{equation}
      then from (\ref{e52w}) and (\ref{e50w}), the equation
\begin{equation}
\label{e57w}
       \nabla_\alpha s_{aba'b'}=0
\end{equation}
      will follow.
\end{proof}

\begin{corollary}
\label{corollary115}
      Let the real Riemannian space $V^6_{(2,4)}$ be given as the base of the bundles. Then the two torsion-free connections, given in the bundles $\tau^\mathbb R(V^6_{(2,4)})$ and $A^\mathbb C(S)$, are equivalent:
      \begin{enumerate}
      \item the Riemannian real connection, defined in the bundle $\tau^\mathbb R(\mathbb V^6_{(2,4)})$ by means of the conditions
\begin{equation}
\label{e58w}
            \nabla_i g_{jk}=0\spsd
\end{equation}
      \item the Riemannian real connection, defined in the bundle $A^\mathbb C(S)$ by means of the conditions
\begin{equation}
\label{e59w}
            \nabla_i\varepsilon_{abcd}=0\sps
            \nabla_is_{ab'}=0\spsd
\end{equation}
      Thus, Connection 2). is uniquely determined by means of the condition
\begin{equation}
\label{e60w}
            \nabla_i \eta_j{}^{ab}=0\spsd
\end{equation}
      \end{enumerate}
\end{corollary}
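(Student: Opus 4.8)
The plan is to deduce this corollary from Corollary~\ref{corollary114} by restricting the complex analytic picture over $\mathbb{CV}^6$ to the real surface $V^6_{(2,4)}\subset\mathbb{CV}^6$ along the inclusion operator $H_i{}^\alpha$. Since the metric index $q=4$ is even, the relevant structure in $A^\mathbb{C}(S)$ is the Hermitian polarity $s_{ab'}$ (Item~II of~(\ref{e30b}) in Theorem~\ref{theorem6}), and the only genuinely new point beyond Corollary~\ref{corollary114} is the passage from the pair-index tensor $s_{aba'b'}$ to its single-index ``square root'' $s_{ab'}$.

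First I would recall that on $V^6_{(2,4)}$ the Levi--Civita connection determined by $\nabla_i g_{jk}=0$, with $g_{ij}=H_i{}^\alpha H_j{}^\beta g_{\alpha\beta}$, exists and is unique. By the analysis of the introductory subsection, demanding that the induced operator $\nabla_i:=H_i{}^\alpha\nabla_\alpha$ be Riemannian forces $b_{ijh}=0$ in~(\ref{e12v})--(\ref{e13v}), whence $\nabla_i H_j{}^\alpha=0$ and, by~(\ref{e14v}), $\nabla_\gamma S_\alpha{}^{\beta'}=0$. Thus the induced real connection is exactly the restriction to the surface of the complex connection compatible with the involution furnished by Corollary~\ref{corollary114}, and the covariant constancy of both $H_i{}^\alpha$ and $S_\alpha{}^{\beta'}$ is the bridge that lets the complex identities be pulled back with real indices.

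Next I would transfer the constancy of the Norden operators. Writing $\eta_i{}^{ab}=H_i{}^\alpha\eta_\alpha{}^{ab}$ as in~(\ref{e24v}) and using $\nabla_i H_j{}^\alpha=0$ together with $\nabla_\alpha\eta_\beta{}^{ab}=0$ from~(\ref{e49w}), I obtain
\begin{equation*}
\nabla_i\eta_j{}^{ab}=H_i{}^\gamma H_j{}^\alpha\nabla_\gamma\eta_\alpha{}^{ab}=0,
\end{equation*}
which is precisely~(\ref{e60w}). From the defining relations~(\ref{e2}) between $\varepsilon_{abcd}$, $g_{\alpha\beta}$ and the operators $\eta$, the equiaffine condition $\nabla_i\varepsilon_{abcd}=0$ follows at once. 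For the polarity I would set $s_{aba'b'}:=\eta^\alpha{}_{ab}\bar\eta_{\beta'a'b'}S_\alpha{}^{\beta'}$ as in~(\ref{e56w}); covariant constancy of the $\eta$ and of $S_\alpha{}^{\beta'}$ then gives $\nabla_i s_{aba'b'}=0$, exactly as in Corollary~\ref{corollary114}.

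The main obstacle is the final descent $\nabla_i s_{aba'b'}=0\Rightarrow\nabla_i s_{ab'}=0$. Here I would use the factorization~(\ref{e7}), namely $s_{aa_1}{}^{b'b'_1}=s^{cb'}s^{c_1b'_1}\varepsilon_{cc_1aa_1}$, established in Theorem~\ref{theorem6}: since $\varepsilon_{abcd}$ is covariantly constant and fixed, differentiating this identity shows that the family of admissible ``square roots'' $s^{cb'}$ of the covariantly constant tensor $s_{aa_1}{}^{b'b'_1}$ is pinned down up to the single discrete sign $\bar s_{b'd}=\pm s_{db'}$ of~(\ref{e13vvv}). That sign is locally constant, hence annihilated by $\nabla_i$, so parallel transport cannot move $s^{cb'}$ and $\nabla_i s_{ab'}=0$; this is the same regularity-and-uniqueness argument that produced the decomposition itself. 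Finally, the uniqueness of Connection~2).\ is inherited from the uniqueness already proved in Corollary~\ref{corollary114} and Theorem~\ref{theorem112}, transported to real indices by the covariantly constant inclusion operator $H_i{}^\alpha$.
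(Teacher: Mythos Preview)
Your overall architecture matches the paper's: reduce to Corollary~\ref{corollary114} via the covariant constancy of $H_i{}^\alpha$, pull back $\nabla_\alpha\eta_\beta{}^{ab}=0$ to get~(\ref{e60w}) and hence $\nabla_i\varepsilon_{abcd}=0$, and then isolate the one new step, namely the passage from $\nabla_i s_{aba'b'}=0$ to $\nabla_i s_{ab'}=0$.

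Where you diverge is precisely in that last step. You argue geometrically: the ``square root'' $s_{ab'}$ of the covariantly constant $s_{aba'b'}$ is unique up to a global sign (by the $N_{[a}{}^bN_{c]}{}^d=\delta_{[a}{}^b\delta_{c]}{}^d\Rightarrow N=\pm\delta$ lemma), and a discrete ambiguity cannot be moved by continuous parallel transport. The paper instead does a short algebraic trace computation: it applies Leibniz to $\nabla_\alpha\bigl(s_{[a|c'|}s_{b]d'}\bigr)=0$, contracts with $s^{ac'}$ to obtain $\nabla_\alpha s_{bd'}=-\tfrac{1}{2}s_{bd'}\,s^{ac'}\nabla_\alpha s_{ac'}$, and then contracts once more with $s^{bd'}$ to force $s^{ac'}\nabla_\alpha s_{ac'}=0$, whence $\nabla_\alpha s_{ab'}=0$. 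Your route is more conceptual and reuses the uniqueness machinery already in the text; the paper's route is self-contained and avoids the continuity step. One small inaccuracy: the sign you cite from~(\ref{e13vvv}) is the Hermitian symmetry sign $\bar s_{b'd}=\pm s_{db'}$, not the $\pm 1$ ambiguity in the factorization; the latter is what your argument actually needs, and it follows from the same appendix lemma~(\ref{e69p})--(\ref{e74p}) but is a distinct statement.
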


\begin{proof}
      It follows from preceding Corollary \ref{corollary114} under the condition of the covariant constancy of the inclusion operator $H_i{}^\alpha$ which will determine the appropriate connection. We will only prove the covariant constancy of the Hermitian polarity tensor. Since,
\begin{equation}
\label{e61w}
    \nabla_\alpha s_{abc'd'}=
    \nabla_\alpha s_{\left[ \right. a
    \left| c' \right|} s_{b \left. \right] d'}=0\spsd
\end{equation}
      Deploying this equation by the Leibniz rule and contracting with $s^{ac'}$, we get
\begin{equation}
\label{e62w}
    \nabla_\alpha s_{bd'}=-1/2s_{bd'}s^{ac'}\nabla_\alpha s_{ac'}\spsd
\end{equation}
      After the contraction with $s^{bd'}$ of this equation, we finally obtain
\begin{equation}
\label{e63w}
    s^{ac'}\nabla_\alpha s_{ac'}=0\sps \nabla_\alpha s_{ac'}=0\spsd
\end{equation}
\end{proof}

\subsubsection{Bitwistor equation}
      \indent From (\ref{e10aa}), (\ref{e13aa}), (\ref{e20aa}), assuming
\begin{equation}
\label{e51aa}
       \nabla_{ab}:=\eta^\alpha{}_{ab}\nabla_\alpha\sps
\end{equation}
       we obtain
\begin{equation}
\label{e52aa}
       \nabla_\alpha X_a=Y^b\eta_{\alpha ab}\ \ \Leftrightarrow\ \
       \nabla_{cd} X_a=Y^b\varepsilon_{cdab}
\end{equation}
      such that the equations
\begin{equation}
\label{e53aa}
       \nabla_{c\left(\right.d}X_{a\left.\right)}=0\sps
       \nabla^{c\left(\right.d}X^{a\left.\right)}=0\sps
\end{equation}
      the last of which we called \emph{bitwistor equation}, are executed. Using this equation, we can investigate the conformal structure of the spaces $\mathbb C\mathbb R^6$. It should be noted that the bitwistor equation does not change under conformal transformations of the metric and this equation is invariant under normalization transformations in the sense of \cite{Neifeld2}, \cite{Neifeld4}.
\begin{proof}
      Indeed, suppose that a conformal transformation of the metric has the form
\begin{equation}
\label{e144}
        g_{\alpha\beta}\longmapsto\hat g_{\alpha\beta}=
        \Omega^2g_{\alpha\beta}\spsd
\end{equation}
      Then from the equation
\begin{equation}
\label{e145}
        \hat\nabla_\alpha\hat\varepsilon_{abcd}=
        \nabla_\alpha\varepsilon_{abcd}=0\sps
\end{equation}
      the condition
\begin{equation}
\label{e146}
        0=\hat\nabla_\alpha(\Omega^2\varepsilon_{abcd})=
        \varepsilon_{abcd}(2\Omega\nabla_\alpha\Omega-
        \Omega^2\Theta_{\alpha k}{}^k)=0
\end{equation}
        will follow. Let's put
\begin{equation}
\label{e147}
        B_\alpha:=\frac{1}{2}\Theta_{\alpha k}{}^k\spsd
\end{equation}
        Because $\nabla_\alpha$ and $\hat\nabla_\alpha$ are symmetrical operators then in the tangent bundle $\tau^\mathbb C(\mathbb CV^6)$, the equation
\begin{equation}
\label{e148}
        Q_{\alpha\beta\gamma}=Q_{\beta\alpha\gamma}
\end{equation}
        is executed. Then in the bundle $A^\mathbb C$, the relations
\begin{equation}
\label{e149}
        \Theta_{cc_1a}{}^b=B_{ca}\delta_{c_1}{}^b-B_{c_1a}\delta_c{}^b
        \sps
        B_{ab}=-B_{ba}
\end{equation}
      are executed too. Then
\begin{equation}
\label{e150}
        B_\alpha=\frac{1}{2}\eta_\alpha{}^{ab}B_{ab}=
        \Omega^{-1}\nabla_\alpha\Omega\spsd
\end{equation}
      Put
\begin{equation}
\label{e151}
        \hat X^c=X^c\spsd
\end{equation}
      Then, according to
\begin{equation}
\label{e152}
        \hat\nabla_{ab}X^c=\nabla_{ab}X^c+
        2B_{\left[\right. a|k|}\delta_{\left.b\right]}{}^cX^k\sps
\end{equation}
      we obtain
\begin{equation}
\label{e153}
        \hat\nabla^{a\left(b\right.}\hat X^{c\left.\right)}=
        \Omega^{-2}\nabla^{a\left(\right.b}X^{c\left.\right)}\spsd
\end{equation}
      This means that the bitwistor equation is a conformal invariant.
\end{proof}

\section{\texorpdfstring{Theorems on the curvature tensor. The canonical form of bivectors of 6-dimensional
$\mbox{(pseudo-)}$ Euclidean spaces $\mathbb R^6_{(p,q)}$ with the even index q}{Theorems on the curvature tensor. The canonical form of bivectors}}$ $
\Abstract{
    \indent Since the connection, introduced in the tangent space of $\mathbb CV_6$, satisfies the equations
    $$
    \nabla_\alpha g_{\gamma\delta}=0\sps
    \bar\nabla_{\alpha '} \bar g_{\gamma '\delta '}=0
    $$
    and the connection in the bundle $A^\mathbb C$ is determined from the relations
    $$
    \nabla_\alpha\eta_\beta{}^{ab}=0\sps
    \bar \nabla_{\alpha '}\bar\eta_{\beta '}{}^{a'b'}=0
    $$
    then we can choose a nonholonomic special basis such that the metric $g_{\gamma\delta}$ will have the diagonal form with $\ll$+1$\gg$ on the main diagonal in this basis, and the coordinates of the generalized connecting Norden operators will be some constants (similar to the formulas (\ref{e56})-(\ref{e57})). From this, it follows that the coordinates of the operators $A_{\alpha\beta a}{}^b$ will be some constants in the basis. Then the curvature tensor with the help of the operators $A_{\alpha\beta a}{}^b$ can be represented as
    $$
    R_{\alpha\beta\gamma\delta}=
    A_{\alpha\beta a}{}^bA_{\gamma\delta c}{}^d R_b{}^a{}_d{}^c\spsd
    $$
    In this case, knowing the structure of the spin-tensor $R_b{}^a{}_d{}^c$, we can restore the structure of the curvature tensor. But the study of the structure of the spin-tensor $R_b{}^a{}_d{}^c$ is facilitated by the fact that it does not almost contain non-significant components. In the 4-dimensional case, similar tensor, called  \emph{curvature spinors} \cite{Penrose1}, greatly simplify the classification of the curvature tensor of the 4-dimensional space. This classification was undertaken for the first time by Petrov with the help of direct tensor methods. Therefore, it is expected that it will be easier to classify the spin-tensor $R_b{}^a{}_d{}^c$ rather than to deal with the classification of the tensor $R_{\alpha\beta\gamma\delta}$. The first part of this chapter is devoted to the interaction between the tensor and the spin-tensor.\\
    \indent The canonical form of a skew-symmetric bilinear form for the even index q of the metric of the space $\mathbb R^6_{(p,q)}$ is discussed in the third subsection. This form in the special basis has the representation
    $$
    \frac{1}{2}R_{\alpha\beta}X^\alpha Y^\beta=
    R_{16}X^{\left[\right.1} Y^{6\left.\right]}+
    R_{23}X^{\left[\right.2} Y^{3\left.\right]}+
    R_{45}X^{\left[\right.4} Y^{5\left.\right]}\spsd
    $$
    In addition, such the fact as a correspondence between vectors of $\mathbb C^4$ and isotropic simple bivectors belonging to the isotropic cone $K_6$ of $\mathbb C\mathbb R^6$ is established.  This correspondence will determine a vector up to the factor $re^{i\Theta}\in\mathbb C$. Based on this correspondence, we can talk about the geometric interpretation of \emph{isotropic twistor} of $\mathbb C^4$  (in the sense of $s_{aa'}X^aX^{a'}=0$) into the space $R^6_{(2,4)}$. To implement it, we need to learn how to compare isotropic vectors belonging to the cone $K_6$. Therefore, using the stereographic projection, we have an invariant (\emph{coordinate-independent}) way to define a vector, tangent to $K_6$ and applied at the point P. Its norm, taken with the sign ''-'', is associated to the isotropic vector K with the beginning at the apex of the cone and the ending at the point P. This norm is called \emph{extension} of the vector K. Then we can choose the unit vector k and compare all isotropic vectors with this vector. In this case, the ambiguity of the correspondence can be removed as follows:
\begin{enumerate}
    \item \emph{flagpole}: r is the extension of any isotropic vector determined by the specified isotropic simple bivector (the isotropic 2-plane) belonging to the cone $K_6$;
    \item \emph{flag-plane}: the plane \foreignlanguage{russian}{П} is spanned on the flagpole and the vector orthogonal to the flagpole. $\Theta$ is an angle of the rotation of the 3-half-plane \foreignlanguage{russian}{П} around the flagpole.
\end{enumerate}
    The resulting interpretation is similar to the correspondence between spinors and isotropic vectors of the Minkowski space discussed in the monography \cite{Penrose1}.
}

\subsection{Theorem on bitensors of the 6-dimensional space}$ $

    \indent Before to pass to the investigation of curvature tensor properties in the space $\mathbb CV^6$, we consider the following theorem.
\begin{theorem}
\label{theorem1}
     The classification of a bitensor, possessing  the properties
\begin{equation}
\label{e11}
    R_{\alpha \beta \gamma \delta}=R_{\left[ \alpha \beta \right]
    \left[\gamma \delta] \right.} \sps
    R_{\alpha \beta \gamma \delta}=R_{\gamma \delta \alpha \beta}\sps
    R_{\alpha \beta \gamma \delta}+R_{\alpha \delta \beta \gamma}+
    R_{\alpha \gamma \delta \beta}=0
\end{equation}
    and belonging to the tangent bundle $\tau^\mathbb C(\mathbb CV^6)$ over the analytic Riemannian space $\mathbb CV^6$, can be reduced to the classification of a spin-tensor $R_a{}^b{}_c{}^d$ of the 4-dimensional complex spinor space such that
\begin{equation}
\label{e12}
    R_{\alpha \beta \gamma \delta}= A_{\alpha \beta d}{}^c
    A_{\gamma \delta r}{}^sR_c{}^d{}_s{}^r\spsd
\end{equation}
    Besides, the equations
\begin{equation}
\label{e13}
    R_k{}^k{}_s{}^r=R_s{}^r{}_k{}^k=0 \sps  R_c{}^d{}_s{}^r=
    R_s{}^r{}_c{}^d
\end{equation}
    are executed. The decomposition
\begin{equation}
\label{e14}
    R_c{}^d{}_s{}^r=C_c{}^d{}_s{}^r-P_{cs}{}^{dr}-\frac{1}{40}\cdot
    R(3\delta_s{}^d \delta_c{}^r-2\delta_s{}^r \delta_c{}^d)
\end{equation}
    corresponds to the decomposition of the tensor $R_{\alpha\beta}{}^{\gamma\delta}$
\begin{equation}
\label{e14a}
    R_{\alpha\beta}{}^{\gamma\delta}=C_{\alpha\beta}{}^{\gamma\delta}+
    R_{\left[\alpha \right.}{}^{\left[ \gamma \right.}
    g_{\left.\beta \right]}{}^{\left. \delta \right]}-
    1/10Rg_{\left[\alpha \right.}{}^{\left[ \gamma \right.}
    g_{\left.\beta \right]}{}^{\left. \delta \right]}
\end{equation}
    on the irreducible components not resulted by orthogonal transformations. These components will satisfy the following relations
\begin{equation}
\label{e15}
    P_{cs}{}^{rd}=-4(R_{\left[c \right.}{}^{\left[r \right.}
    {}_{\left. s \right]}{}^{\left. d\right]}+
    R_k{}^{\left[r \right.}{}_{\left[c \right.}{}^{\left|k\right|}
    \delta_{\left. s\right]}{}^{\left.d\right]})\sps
\end{equation}
\begin{equation}
\label{e16}
    C_c{}^d{}_s{}^r=R_{\left(c \right.}{}^{\left(d \right.}{}_
    {\left. s\right)}{}^{\left. r\right)}+
    \frac{1}{40}\cdot R\delta_{\left(s \right.}{}^d\delta_{\left. c\right)}{}^r\sps
    C_c{}^d{}_s{}^r=C_{\left(c \right.}{}^{\left(d \right.}{}_
    {\left. s\right)}{}^{\left.r\right)}\sps
\end{equation}
\begin{equation}
\label{e17}
    R=R_\beta{}^\beta=-2\cdot R_k{}^r{}_r{}^k\sps
    P_{kc}{}^{kd}=1/2\cdot R\delta_c{}^d\sps
\end{equation}
\begin{equation}
\label{e18}
    R_l{}^d{}_s{}^l=-\frac{1}{8}\cdot R\delta_s{}^d\sps
\end{equation}
    the last of which is equivalent to the Bianchi identity (\ref{e11}).
\end{theorem}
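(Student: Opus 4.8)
The plan is to exploit the isomorphism $so(6,\mathbb C)\cong sl(4,\mathbb C)$ realized by the operators $A_{\alpha\beta d}{}^c$ of (\ref{e4}), converting each skew-symmetric index pair of the bitensor into a pair of unprimed spinor indices, and then to read off the algebraic content of each of the three symmetries in (\ref{e11}) as a separate spinor condition.

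First I would reduce to the spin-tensor. For fixed $\gamma,\delta$ the object $R_{\alpha\beta\gamma\delta}$ is skew in $\alpha\beta$, so by (\ref{e4}) there is a unique traceless $R_c{}^d{}_{\gamma\delta}$ with $R_{\alpha\beta\gamma\delta}=A_{\alpha\beta d}{}^cR_c{}^d{}_{\gamma\delta}$; applying the same to the second pair yields the representation (\ref{e12}) with $R_k{}^k{}_s{}^r=R_s{}^r{}_k{}^k=0$, the tracelessness coming from $A_{\alpha\beta a}{}^a=0$. The pair-exchange symmetry $R_{\alpha\beta\gamma\delta}=R_{\gamma\delta\alpha\beta}$ transfers, by the invertibility of the correspondence (the second relation of (\ref{e21})), into $R_c{}^d{}_s{}^r=R_s{}^r{}_c{}^d$, giving (\ref{e13}). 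Contracting (\ref{e12}) with $g^{\alpha\gamma}g^{\beta\delta}$ and using the first relation of (\ref{e21}) produces $R=\tfrac12R_c{}^c{}_s{}^s-2R_c{}^d{}_d{}^c=-2R_k{}^r{}_r{}^k$, which is the scalar relation of (\ref{e17}).

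Next I would decompose the swap-symmetric, doubly-traceless spin-tensor $R_c{}^d{}_s{}^r$, which lives in $\mathrm{Sym}^2 sl(4,\mathbb C)$, into its $SL(4,\mathbb C)$-irreducible components: a totally symmetric traceless part (the Weyl spin-tensor $C_c{}^d{}_s{}^r$ of (\ref{e16})), a part $P_{cs}{}^{rd}$ assembled from the single contraction $R_l{}^d{}_s{}^l$ as in (\ref{e15}), and the scalar $R$. The defining formulas (\ref{e14})--(\ref{e17}) are then verified by direct contraction, and the correspondence with the tensor decomposition (\ref{e14a}) follows term by term from the linearity of the $A$-correspondence, the Weyl spin-tensor mapping to $C_{\alpha\beta\gamma\delta}$ and $P$ to the Ricci terms. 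A dimension check supports the scheme: $\dim\mathrm{Sym}^2 sl(4,\mathbb C)=120$ splits as $1+15+20+84$, matching scalar, contraction, traceless Ricci and Weyl.

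\textbf{Main obstacle.} The crux is the first Bianchi identity, i.e.\ the third relation of (\ref{e11}), which for a tensor with the first two symmetries is equivalent to the total antisymmetrization $R_{[\alpha\beta\gamma\delta]}=0$. In six dimensions this totally skew part is Hodge-dual, through the $6$-vector $e_{\alpha\beta\gamma\delta\lambda\mu}$, to a $2$-form and hence, again via (\ref{e4}), to a single traceless element of $sl(4,\mathbb C)$; I would show this element is proportional to the traceless part $R_l{}^d{}_s{}^l+\tfrac18R\delta_s{}^d$ of the contraction. Concretely, I would antisymmetrize the product $A_{[\alpha\beta|d}{}^cA_{|\gamma\delta]r}{}^s$ and evaluate it by means of the antisymmetrized $A$-product identities already established for the covering theorem, namely (\ref{e8b}) and (\ref{e78p}), reducing $R_{[\alpha\beta\gamma\delta]}$ to a multiple of $R_l{}^d{}_s{}^l+\tfrac18R\delta_s{}^d$. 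Since the Weyl, traceless-Ricci and scalar pieces each have vanishing total antisymmetrization, only the adjoint-valued $\mathbf{15}$ component survives, so $R_{[\alpha\beta\gamma\delta]}=0$ holds precisely when $R_l{}^d{}_s{}^l=-\tfrac18R\delta_s{}^d$, which is (\ref{e18}); its trace reproduces $R_l{}^d{}_d{}^l=-\tfrac12R$, consistent with (\ref{e17}). The bookkeeping of the fourfold antisymmetrization against the explicit form (\ref{e8b}) is the delicate part, but no new identity beyond those of the covering theorem is needed.
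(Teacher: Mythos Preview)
Your setup for (\ref{e12})--(\ref{e13}) and the scalar relation in (\ref{e17}) matches the paper's: both invert the $A$-correspondence pairwise and use (\ref{e21}) to read off tracelessness and pair-exchange symmetry. The Ricci/Weyl decomposition is likewise treated the same way in substance; the paper carries it out by explicitly computing $R_{\beta\delta}=R_{\alpha\beta}{}^{\alpha}{}_\delta$ with the mixed product formula in (\ref{e21}) to obtain (\ref{e15}), (\ref{e17}), and then (\ref{e16}) (details deferred to the Appendix, (\ref{e12p})--(\ref{e20p})).

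Where you genuinely diverge is the Bianchi identity. The paper does \emph{not} pass through $R_{[\alpha\beta\gamma\delta]}=0$ or Hodge duality: it keeps the three-term form of (\ref{e11}), writes it as (\ref{e26}), and contracts with $A^{\alpha\beta}{}_t{}^lA^{\gamma\delta}{}_m{}^n$ using the long product identity in (\ref{e21}), arriving at the scalar relation (\ref{e27}) among the contractions $R_k{}^l{}_m{}^k$, $R_k{}^r{}_r{}^k$; a further trace with $\delta_n{}^t$ then gives (\ref{e18}). Your route---dualizing the $4$-form via the $6$-vector formula (\ref{e78p}) and observing that only the adjoint $\mathbf{15}\subset\mathrm{Sym}^2 sl(4,\mathbb C)$ survives---is correct and in fact shorter: contracting (\ref{e78p}) against (\ref{e12}) with the first relation of (\ref{e21}) yields $(*R)_{\mu\nu}\propto A_{\mu\nu b}{}^a\,R_l{}^b{}_a{}^l$, so the vanishing of $R_{[\alpha\beta\gamma\delta]}$ is precisely the statement that $R_l{}^d{}_s{}^l$ is pure trace, i.e.\ (\ref{e18}). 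What this buys you is a transparent representation-theoretic reason why (\ref{e18}) has exactly $15$ independent components; what the paper's direct contraction buys is that it never leaves the $A$-calculus already set up for (\ref{e21}) and avoids invoking the separate $6$-vector identity. Either path closes the argument; yours would let you drop the appendix computation (\ref{e16p})--(\ref{e18p}) in favour of a two-line contraction.
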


\begin{proof}
    On the basis of (\ref{e1}), we have the following equality
\begin{equation}
\label{e22}
    R_{\alpha\beta\gamma\delta}=1/16\cdot\eta_{\alpha}{}^{aa_1}
    \eta_{\beta}{}^{bb_1}\eta_{\gamma}{}^{cc_1}
    \eta_{\delta}{}^{dd_1}R_{aa_1bb_1cc_1dd_1}\spsd
\end{equation}
    Put
\begin{equation}
\label{e23}
    R_c{}^d{}_s{}^r:=\frac{1}{4}R_{ck}{}^{dk}{}_{st}{}^{rt}
    \sps  R_{\beta\gamma}=\frac{1}{4}\cdot\eta_{\beta}{}^{cs}\eta_{\gamma}{}_{rd}
    \cdot P_{cs}{}^{rd}\spsd
\end{equation}
    From this, taking into account (\ref{e4}),  the formula (\ref{e12}) follows
\begin{equation}
\label{e12a}
    R_{\alpha \beta \gamma \delta}= A_{\alpha \beta d}{}^c
    A_{\gamma \delta r}{}^sR_c{}^d{}_s{}^r\spsd
\end{equation}
    Then the equation
\begin{equation}
\label{e7p}
       \begin{array}{c}
       R_{\beta\delta}=R_{\alpha\beta}{}^{\alpha}{}_\delta=
       A_{\alpha\beta d}{}^cA^\alpha{}_{\delta r}{}^s
       R_c{}^d{}_s{}^r=
       (\eta_\beta{}^{cs}\eta_{\delta rd}+
       \eta_\beta{}^{ck}\eta_{\delta kr}\delta_d{}^s)
       R_c{}^d{}_s{}^r=\\ \\
       =\frac{1}{4}
       \eta_\beta{}^{cs}\eta_{\delta rd}\cdot 4
       (R_{\left[\right.c}{}^{\left[\right.d}
       {}_{s\left.\right]}{}^{r\left.\right]}-
       R_{\left[\right.c}{}^k
       {}_{|k|}{}^{\left[\right.r}
       \delta_{s\left.\right]}{}^{d\left.\right]})
       \end{array}
\end{equation}
    is executed. Put
\begin{equation}
\label{e8p}
       \begin{array}{c}
       P_{cs}{}^{rd}:=
       -4(R_{\left[\right.c}{}^{\left[\right.r}
       {}_{s\left.\right]}{}^{d\left.\right]}-
       R_{\left[\right.c}{}^k
       {}_{|k|}{}^{\left[\right.r}
       \delta_{s\left.\right]}{}^{d\left.\right]})
       \end{array}
\end{equation}
    then
\begin{equation}
\label{e9p}
       R_{\beta\delta}=\frac{1}{4}
       \eta_\beta{}^{cs}\eta_{\delta rd}P_{cs}{}^{rd}\sps
\end{equation}
    and this proves the formula (\ref{e15}). Since, the scalar curvature is given by
 \begin{equation}
\label{e10p}
       \begin{array}{c}
       R=R_\beta{}^\beta=\frac{1}{4}
       \eta_\beta{}^{cc_1}\eta^\beta{}_{aa_1}P_{cc_1}{}^{aa_1}=
       \frac{1}{4}\varepsilon_{aa_1}{}^{cc_1}P_{cc_1}{}^{aa_1}=
       \frac{1}{2}P_{aa_1}{}^{aa_1}=-2R_k{}^r{}_r{}^k\sps
       \end{array}
\end{equation}
    and, moreover, the equation
\begin{equation}
\label{e11p}
       \begin{array}{c}
       P_{ks}{}^{kd}=
       -4(R_{\left[\right.k}{}^{\left[\right.k}
       {}_{s\left.\right]}{}^{r\left.\right]}+
       R_{\left[\right.r}{}^k
       {}_{|k|}{}^{\left[\right.r}
       \delta_{s\left.\right]}{}^{d\left.\right]})=\\ \\
       =-4(-\frac{1}{2}R_k{}^d{}_s{}^k+
       \frac{1}{4}(R_k{}^r{}_r{}^k\delta_s{}^d+
       4R_k{}^d{}_s{}^k-2R_k{}^d{}_s{}^k))=
       -R_k{}^r{}_r{}^k\delta_s{}^d=\frac{1}{2}R\delta_s{}^d
       \end{array}
\end{equation}
    is executed then the formulas (\ref{e17}) will actually be true.\\
    \indent The Bianchi identity (\ref{e11}) can be rewritten as
\begin{equation}
\label{e26}
    (A_{\alpha\beta d}{}^cA_{\gamma\delta}{}_r{}^s+
    A_{\alpha\gamma d}{}^cA_{\delta\beta}{}_r{}^s+
    A_{\alpha\delta d}{}^cA_{\beta\gamma}{}_r{}^s)\cdot
    R_c{}^d{}_s{}^r=0\spsd
\end{equation}
    Contracting this equation with $A^{\alpha\beta}{}_t{}^lA^{\gamma\delta}{}_m{}^n$ and taking into account (\ref{e21}), we will obtain
    $$
    4R_k{}^l{}_m{}^k\delta_t{}^n+
    4R_r{}^n{}_t{}^r\delta_m{}^l-
    2R_k{}^l{}_t{}^k\delta_m{}^n-
    2R_k{}^n{}_m{}^k\delta_t{}^l-
    $$
\begin{equation}
\label{e27}
    -2R_k{}^r{}_r{}^k\delta_t{}^n\delta_m{}^l+
    R_r{}^k{}_k{}^r\delta_m{}^n\delta_t{}^l=0\spsd
\end{equation}
     Contracting this equation with $\delta_n{}^t$, we will obtain the formula (\ref{e18}). In this case, all 15 significant equations are stored (all calculations are given in Appendix (\ref{e16p})-(\ref{e18p})).\\
     \indent Put
    $$
    C_{\alpha\beta}{}^{\gamma\delta}:=
    A_{\alpha\beta d}{}^cA^{\gamma\delta}{}_r{}^sC_c{}^d{}_s{}^r\sps
    $$
\begin{equation}
\label{e24}
    C_{\alpha\beta}{}^{\gamma\delta}:=
    R_{\alpha\beta}{}^{\gamma\delta}-
    R_{\left[\alpha \right.}{}^{\left[ \gamma \right.}
    g_{\left.\beta \right]}{}^{\left. \delta \right]}+
    1/10Rg_{\left[\alpha \right.}{}^{\left[ \gamma \right.}
    g_{\left.\beta \right]}{}^{\left. \delta \right]}\spsd
\end{equation}
    From (\ref{e4})~, (\ref{e15})~, (\ref{e17}), the  equations
    $$
    R_{\left[\alpha \right.}{}^{\left[ \gamma \right.}
    g_{\left.\beta \right]}{}^{\left. \delta \right]}=
    A_{\alpha\beta d}{}^cA^{\gamma\delta}{}_r{}^s
    \cdot\frac{1}{4}(P_{sc}{}^{dr}-1/2R\delta_s{}^d\delta_c{}^r+
    \frac{1}{4}R\delta_s{}^r\delta_c{}^d)\sps
    $$
\begin{equation}
\label{e25}
    g_{\left[\alpha \right.}{}^{\left[ \gamma \right.}
    g_{\left.\beta \right]}{}^{\left. \delta \right]}=
    A_{\alpha\beta d}{}^cA^{\gamma\delta}{}_r{}^s
    \cdot\frac{1}{4}(1/2\delta_s{}^r\delta_c{}^d-2\delta_s{}^d\delta_c{}^r)
\end{equation}
    will follow, and therefor the decompositions (\ref{e14}), (\ref{e16}) are executed (all calculations are given in Appendix (\ref{e12p}) - (\ref{e20p})).\\
\end{proof}

\subsubsection{Corollaries from the theorem}
\begin{corollary}
\label{corollary1}
\begin{enumerate}
    \item
    The simplicity conditions of a bivector of the 6-dimensional space $\mathbb C\mathbb R^6$ can be written down as
\begin{equation}
\label{e28a}
    p^{\left[\alpha\beta\right.}p^{\gamma\delta\left.\right]}=0\spsd
\end{equation}
    The coordinates of the bivector can be associated to the traceless complex matrix $4 \times 4$ such that
\begin{equation}
\label{e28}
    p_l{}^dp_s{}^l-1/4(p_l{}^kp_k{}^l)\delta_s{}^d=0\spsd
\end{equation}
    \item
    A simple bivector of the space $\mathbb C\mathbb R^6$, constructed on isotropic vectors ($p^{\alpha \beta} p_{\alpha \beta}=0$), can be associated to \texttt{degenerate Rosenfeld null-pair}: a covector and a vector of the space $\mathbb C^4$, the contraction of which is zero. In this case, the given vector and covector are determined to within a complex factor.
\end{enumerate}
\end{corollary}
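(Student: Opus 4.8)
The plan is to push everything through the $so(6,\mathbb C)\cong sl(4,\mathbb C)$ dictionary already established in (\ref{e4}) and (\ref{e21}): a bivector $p_{\alpha\beta}=A_{\alpha\beta d}{}^c p_c{}^d$ of $\mathbb C\mathbb R^6$ is in bijection with a traceless $4\times4$ matrix $p_c{}^d$, the inverse being $p_c{}^d=\frac12 A^{\alpha\beta}{}_c{}^d p_{\beta\alpha}$ (last line of (\ref{e21})). By the infinitesimal computation (\ref{e44b}) this matrix is exactly the $sl(4,\mathbb C)$-generator whose derivation action on $\mathbb C^6=\Lambda^2\mathbb C^4$ reproduces $p^\alpha{}_\beta$, so the whole problem reduces to translating the two tensorial conditions $p^{[\alpha\beta}p^{\gamma\delta]}=0$ and $p^{\alpha\beta}p_{\alpha\beta}=0$ into the matrix $p_c{}^d$.

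For the first part the norm is immediate: substituting $p^{\alpha\beta}=A^{\alpha\beta}{}_d{}^c p_c{}^d$ and contracting with the first identity of (\ref{e21}), $A_{\alpha\beta d}{}^c A^{\alpha\beta}{}_r{}^s=\frac12\delta_r{}^s\delta_d{}^c-2\delta_r{}^c\delta_d{}^s$, together with the tracelessness $p_c{}^c=0$, gives $p^{\alpha\beta}p_{\alpha\beta}=-2\,p_c{}^d p_d{}^c=-2\,\mathrm{tr}(P^2)$. The simplicity condition I would handle by Hodge duality: in six dimensions $p^{[\alpha\beta}p^{\gamma\delta]}$ is a $4$-form, dual through the volume $6$-vector $e_{\alpha\gamma\lambda\nu\pi\sigma}$ of (\ref{e5b}) to a bivector $q_{\lambda\mu}$, hence to a traceless matrix $q_c{}^d$. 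Since $p\mapsto p\wedge p$ is a quadratic map equivariant under $SO(6,\mathbb C)\cong SL(4,\mathbb C)$, and the only $SL(4,\mathbb C)$-equivariant quadratic map from the adjoint $sl(4,\mathbb C)$ to itself is, up to a scalar, the traceless part of the square, $q_c{}^d$ is forced to be proportional to $p_l{}^d p_s{}^l-\frac14(p_l{}^kp_k{}^l)\delta_s{}^d$. To make this explicit and to confirm the constant is nonzero, I would feed the $6$-vector identity (\ref{e78p}) — which already writes $e_{\alpha\gamma\lambda\nu\pi\sigma}$ through three $A$-operators — into $q_{\lambda\mu}\sim e_{\lambda\mu\alpha\beta\gamma\delta}p^{\alpha\beta}p^{\gamma\delta}$ and collapse with (\ref{e21}). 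This delivers $p^{[\alpha\beta}p^{\gamma\delta]}=0\iff p_l{}^dp_s{}^l-\frac14(p_l{}^kp_k{}^l)\delta_s{}^d=0$, i.e. $P^2=\frac14\mathrm{tr}(P^2)\,I$.

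The second part is then read off from the matrix reformulation. Imposing also $p^{\alpha\beta}p_{\alpha\beta}=0$ forces $\mathrm{tr}(P^2)=0$ by the contraction above, and combined with $P^2=\frac14\mathrm{tr}(P^2)\,I$ this yields $P^2=0$: the spin-matrix is nilpotent. For the maximally degenerate (rank-one) solutions this means $p_a{}^b=V_a W^b$, where the tracelessness $p_a{}^a=V_a W^a=0$ is precisely the vanishing contraction of the covector $V_a$ with the vector $W^b$, and then $P^2=0$ holds automatically. This pair $(V_a,W^b)$ is the \emph{degenerate Rosenfeld null-pair}, and because the rank-one decomposition of $p_a{}^b$ is unique only up to $(V_a,W^b)\mapsto(\lambda V_a,\lambda^{-1}W^b)$, the vector and covector are determined only to within a complex factor, as claimed.

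The main obstacle is the explicit identification of the proportionality constant in Part 1: equivariance guarantees that $p\wedge p$ dualizes to a multiple of the traceless part of $P^2$, but certifying that this multiple does not vanish requires the genuinely cumbersome reduction of the antisymmetrized product of two $A$-operators (equivalently the contraction of (\ref{e78p}) with $p^{\alpha\beta}p^{\gamma\delta}$) against the identities of (\ref{e21}) — the same Appendix-style bookkeeping that proves (\ref{e21}) itself. A cheaper route that bypasses the constant entirely is to test both implications on one explicit non-simple and one explicit simple bivector in the special basis (\ref{e56}), after which equivariance fixes the scalar and hence the equivalence.
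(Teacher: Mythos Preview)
For Part~1 the paper's route is considerably shorter than yours: since $R_{\alpha\beta\gamma\delta}:=p_{\alpha\beta}p_{\gamma\delta}$ carries the Riemann symmetries and the first Bianchi identity for it is literally $p_{\alpha[\beta}p_{\gamma\delta]}=0$, one simply feeds $R=p\otimes p$ into Theorem~\ref{theorem1}. The equivalence Bianchi $\Longleftrightarrow$ (\ref{e18}) established there, specialised to $R_c{}^d{}_s{}^r=p_c{}^dp_s{}^r$ and $R=-2\,p_k{}^rp_r{}^k$, is exactly (\ref{e28}); no Hodge dual, no equivariance bookkeeping, no constant to chase. Your equivariance argument is also correct --- and in fact your worry about the constant is unnecessary, since both $p\mapsto{*}(p\wedge p)$ on $\Lambda^2\mathbb C^6$ and $P\mapsto P^2-\tfrac14\mathrm{tr}(P^2)I$ on $sl(4,\mathbb C)$ are nonzero and are the \emph{unique} $SL(4)$-equivariant quadratic self-maps of the adjoint representation, so they must agree up to a nonzero scalar --- but it rebuilds from scratch what Theorem~\ref{theorem1} already packages.

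Part~2 has a genuine gap. From $P^2=0$ and $\mathrm{tr}\,P=0$ you cannot infer rank one: the Jordan type $[2,2]$ matrix $P=E_{12}+E_{34}$ is traceless, squares to zero, and by your own Part~1 corresponds to a simple bivector with $p^{\alpha\beta}p_{\alpha\beta}=-2\,\mathrm{tr}(P^2)=0$, yet it is not of the form $V_aW^b$. Your phrase ``for the maximally degenerate (rank-one) solutions'' concedes the case split without resolving it. Geometrically, the rank-two nilpotents correspond to simple bivectors whose $2$-plane carries a degenerate restricted metric of rank~$1$ (a single isotropic direction), while rank-one nilpotents correspond to \emph{totally isotropic} $2$-planes. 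The paper does not argue through $P^2=0$ at all; it writes $p=r_1\wedge r_2$ with both $r_i$ isotropic and $r_1\cdot r_2=0$ and invokes Lemma~\ref{lemma1}, which supplies a \emph{common} spinor $X^a$ so that $r_1{}^\alpha=\tfrac12\eta^\alpha{}_{ab}X^aY^b$ and $r_2{}^\alpha=\tfrac12\eta^\alpha{}_{ab}X^aZ^b$; the rank-one form of the spin-matrix is then immediate. To repair your argument you must either import this common-spinor step from Lemma~\ref{lemma1}, or explicitly add and exploit the hypothesis that the $2$-plane is totally isotropic --- the bare condition $P^2=0$ is too weak to force rank~$\le 1$.
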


\begin{proof}
   \indent 1). A bivector is simple if and only if there is the decomposition
\begin{equation}
\label{e1www}
    p^{\alpha\beta}=X^\alpha Y^\beta-Y^\alpha X^\beta\spsd
\end{equation}
   Therefore, if (\ref{e1www}) is satisfied then the formula (\ref{e28a}) will be true. \\
   \indent Conversely, if the condition (\ref{e28a}) is executed then it can be written down as
\begin{equation}
\label{e2www}
    p^{\alpha\beta}p^{\gamma\delta}-
    p^{\alpha\gamma}p^{\beta\delta}+
    p^{\beta\gamma}p^{\alpha\delta}=0\spsd
\end{equation}
    We contract this equation with nonzero covectors $T_\delta$ and $Z_\gamma$, that $p^{\gamma\delta}Z_\gamma T_\delta \ne 0$,
\begin{equation}
\label{e3www}
    p^{\alpha\beta}=\frac{1}{p^{\lambda\mu}Z_\lambda T_\mu}
    (p^{\alpha\gamma}Z_\gamma p^{\beta\delta}T_\delta-
     p^{\beta\gamma}Z_\gamma p^{\alpha\delta}T_\delta)\spsd
\end{equation}
    Put
\begin{equation}
\label{e4www}
    X^\alpha:=\frac{1}{p^{\lambda\mu}Z_\lambda T_\mu}p^{\alpha\gamma}Z_\gamma\sps
    Y^\beta:=\frac{1}{p^{\lambda\mu}Z_\lambda T_\mu}p^{\beta\delta}T_\delta
\end{equation}
    whence the condition (\ref{e1www}) will follow. Since the tensor $R_{\alpha\beta\gamma\delta}=p_{\alpha\beta}
    p_{\gamma\delta}$ satisfies the conditions of Theorem \ref{theorem1} then the formula (\ref{e28}) is a direct consequence of the Bianchi identity (\ref{e18}). \\
    \indent 2). Under the condition of Item 1)., we add the isotropy condition
\begin{equation}
\label{e5www}
    p^{\alpha\beta}p_{\alpha\beta}=0
\end{equation}
    which in view of the formulas (\ref{e21}), takes the form
 \begin{equation}
\label{e6www}
    \begin{array}{c}
    A^{\alpha\beta}{}_a{}^b A_{\alpha\beta c}{}^d
    p_b{}^a p_d{}^c=0\sps\\
    p_b{}^a p_a{}^b=0\spsd
    \end{array}
\end{equation}
   It follows that there exist nonzero $X^a$ and $Y_b$, that
\begin{equation}
\label{e7www}
    p_a{}^b=X^aY_b\sps X^aY_a=0\spsd
\end{equation}
    This formula can be viewed as a consequence of Lemma \ref{lemma1} of the second chapter (it is enough to consider the bivector $p^{\alpha\beta}=r_1{}^{\left[\right.\alpha}r_2{}^{\beta\left.\right]}$, where $r_1{}^\alpha$ and $r_2{}^\alpha$ are the same as in the lemma). In this case, $X^a$ and $Y_b$ are defined up to the transformation
\begin{equation}
\label{e8www}
    X^a\longmapsto e^\phi X^a\sps
    Y_b\longmapsto e^{-\phi} Y_b\spsd
\end{equation}
\end{proof}

    Note that the pair $(X^a,Y_b)$ is \emph{Rosenfeld null-pair}. In the space $\mathbb C\mathbb P^4='\mathbb C^4/'\mathbb C$ (where $'\mathbb C^s=\mathbb C^s/{0}$), $X^a$ will determine the point, and $Y_b$ will determine the plane with the incidence condition
\begin{equation}
\label{e9www}
     X^aY_a=0\spsd
\end{equation}
    Therefore, we can define the space $\mathbb C$\foreignlanguage{russian}{П}${}^4='\mathbb C^*{}^4/'\mathbb C$ and the dual space $\mathbb C\mathbb P^4$. Then the space $\mathbb C\mathbb P^4\times \mathbb C$\foreignlanguage{russian}{П}${}^4$ is \emph{Rosenfeld null-pair space}. It should be noted that such the spaces have been studied for the first time in \cite{Sintsov1} and \cite{Kotelnikov1}.\\

\begin{corollary}
\label{corollary2}
    In the case of the reality of the bitensor from Theorem \ref{theorem1}, for the even index metric, the condition
\begin{equation}
\label{e29}
    R_{ab'cd'}=\bar R_{b'ad'c}
\end{equation}
    is imposed, and for the odd index metric, the condition
\begin{equation}
\label{e30}
    R_a{}^{b'}{}_c{}^{d'}=
    \bar R_a{}^{b'}{}_c{}^{d'}
\end{equation}
     is imposed.
\end{corollary}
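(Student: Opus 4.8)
The plan is to read \emph{reality} of the bitensor as self-conjugacy under the Hermitian involution $S_\alpha{}^{\beta'}$ of the first chapter, i.e. $R_{\alpha\beta\gamma\delta}=S_\alpha{}^{\alpha'}S_\beta{}^{\beta'}S_\gamma{}^{\gamma'}S_\delta{}^{\delta'}\bar R_{\alpha'\beta'\gamma'\delta'}$, and then to push this equation through the spinor decomposition (\ref{e12}). Conjugating (\ref{e12}) gives $\bar R_{\alpha'\beta'\gamma'\delta'}=\bar A_{\alpha'\beta' d'}{}^{c'}\bar A_{\gamma'\delta' r'}{}^{s'}\bar R_{c'}{}^{d'}{}_{s'}{}^{r'}$, so the whole problem reduces to knowing how $S_\alpha{}^{\alpha'}S_\beta{}^{\beta'}\bar A_{\alpha'\beta' a'}{}^{b'}$ behaves. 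First I would record the complex-index form of the conjugation laws (\ref{e22v}): exactly as in the derivation of (\ref{e24v}), multiplying by $\bar H$ and stripping the inclusion operator by means of $S_\alpha{}^{\beta'}\bar S_{\beta'}{}^\gamma=\delta_\alpha{}^\gamma$ (see (\ref{e28b})) converts them into $S_\alpha{}^{\alpha'}S_\beta{}^{\beta'}\bar A_{\alpha'\beta' a'}{}^{b'}=-A_{\alpha\beta b}{}^a s_{aa'}s^{bb'}$ in the even (polarity) case, and $S_\alpha{}^{\alpha'}S_\beta{}^{\beta'}\bar A_{\alpha'\beta' a'}{}^{b'}=A_{\alpha\beta c}{}^d\bar s_{a'}{}^c s_d{}^{b'}$ in the odd (involution) case.

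For the even index, substituting this law twice into the self-conjugacy relation turns the right-hand side into $A_{\alpha\beta d}{}^c A_{\gamma\delta r}{}^s\,[\,s_{cd'}s^{dc'}s_{sr'}s^{rs'}\bar R_{c'}{}^{d'}{}_{s'}{}^{r'}]$. Comparing with (\ref{e12}) and invoking that the spin-tensor is recovered from the bitensor by the inversion formula $T_m{}^n=\tfrac12 A^{\alpha\beta}{}_m{}^n T_{\beta\alpha}$ of (\ref{e21}) — so that the decomposition is injective on the relevant traceless, symmetric component — I would read off $R_c{}^d{}_s{}^r=s_{cd'}s^{dc'}s_{sr'}s^{rs'}\bar R_{c'}{}^{d'}{}_{s'}{}^{r'}$. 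Finally, lowering the two upper unprimed indices with the polarity, i.e. contracting with $s_{dd''}s_{rr''}$ and collapsing $s^{dc'}s_{dd''}=\delta_{d''}{}^{c'}$ by the inverse relation (\ref{e7vvv}), yields exactly $R_{ab'cd'}=\bar R_{b'ad'c}$ after relabelling.

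The odd case runs identically with the involution law in place of the polarity law: the same substitution and the same injectivity argument give $R_c{}^d{}_s{}^r=\bar s_{d'}{}^d s_c{}^{c'}\bar s_{r'}{}^r s_s{}^{s'}\bar R_{c'}{}^{d'}{}_{s'}{}^{r'}$. Converting the upper unprimed indices to upper primed ones by the involution (contracting with $s_d{}^{d'}s_r{}^{r'}$) and using $s_a{}^{b'}\bar s_{b'}{}^c=\pm\delta_a{}^c$ from (\ref{e30b}) — whose sign squares to $+1$ — collapses the four factors of $s$ down to $s_c{}^{c'}s_s{}^{s'}$, which merely turn the two lower primed indices of $\bar R$ into lower unprimed ones, giving $R_a{}^{b'}{}_c{}^{d'}=\bar R_a{}^{b'}{}_c{}^{d'}$.

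The genuinely mathematical content is confined to the two conjugation laws of Theorem \ref{theorem6} and to the injectivity of (\ref{e12}); everything after that is index bookkeeping. Accordingly, I expect the main obstacle to be purely clerical: keeping the primed/unprimed and upper/lower placements consistent through the fourfold contractions, and tracking the sign $\pm$ in the involution case, which (as noted) disappears because it occurs squared. The one step deserving genuine care is the transfer of (\ref{e22v}), stated there for the real inclusion index $i$, to the complex index $\alpha$: this is legitimate precisely because the inclusion operator can be removed through $S_\alpha{}^{\beta'}\bar S_{\beta'}{}^\gamma=\delta_\alpha{}^\gamma$, and it is the only place where one must confirm that nothing beyond Theorem \ref{theorem6} is being used.
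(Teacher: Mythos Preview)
Your proposal is correct and follows exactly the approach the paper intends: the paper's own proof is the single sentence ``It is based on the properties of the inclusion tensor $s_{\ldots}{}^{\ldots}$,'' and what you have written is precisely the unpacking of that sentence via the conjugation laws (\ref{e22v}) of Theorem~\ref{theorem6} together with the injectivity of (\ref{e12}) coming from (\ref{e21}). There is no alternative route being taken; you have simply supplied the bookkeeping the paper omits.
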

\begin{proof}
    It is based on the properties of the inclusion tensor $s_{...}{}^{...}$.
\end{proof}

\subsection{Basic properties and identities of the curvature tensor}$ $

    \indent As an example, let's consider basic properties of the curvature tensor of the Riemannian space $\mathbb CV^6$. Because in a nonholonomic basis, the operators $A_{\alpha\beta a}{}^b$ are constant then we can get all properties of the curvature tensor considering the spin-tensor $R_a{}^b{} _c{}^ d$. The curvature tensor of the space $\mathbb CV_6$ in a neighborhood U satisfies Theorem \ref{theorem1}. We set
\begin{equation}
\label{e21p}
       \begin{array}{c}
       \Box_a{}^d:=\frac{1}{2}
       (\nabla_{ak}\nabla^{dk}-
       \nabla^{dk}\nabla_{ak})\sps\\ \\
       \Box_{\alpha\beta}:=
       2\nabla_{\left[\right.\alpha}\nabla_{\beta\left.\right]}\spsd
       \end{array}
\end{equation}
    Due to the covariant constancy of the  generalized connecting Norden operators, we have
\begin{equation}
\label{e22p}
       \begin{array}{c}
       \nabla_{\left[\right.\alpha}\nabla_{\beta\left.\right]}=
       \frac{1}{4}
       \eta_{\left[\right.\alpha}{}^{aa_1}\eta_{\beta\left.\right]}{}^{bb_1}
       \nabla_{aa_1}\nabla_{bb_1}=\\ \\
       =\frac{1}{4}
       \eta_\alpha{}^{aa_1}\eta_\beta{}^{bb_1}\cdot\frac{3}{2}
       (\nabla_{a\left[\right.a_1}\nabla_{bb_1\left.\right]}-
       \nabla_{\left[\right.bb_1}\nabla_{a\left.\right]a_1})=\\ \\
       =\frac{1}{4}
       \eta_\alpha{}^{aa_1}\eta_\beta{}^{bb_1}\cdot\frac{3}{2}
       (\delta_{\left[\right.a_1}{}^k\delta_b{}^n
       \delta_{b_1\left.\right]}{}^{n_1}\nabla_{ak}\nabla_{nn_1}-
       \delta_{\left[\right.b}{}^n\delta_{b_1}{}^{n_1}
       \delta_{a\left.\right]}{}^k\nabla_{nn_1}\nabla_{ka_1})=
       \end{array}
\end{equation}
       $$
       \begin{array}{c}
       =\frac{1}{4}
       \eta_\alpha{}^{aa_1}\eta_\beta{}^{bb_1}\cdot\frac{1}{4}
       (\varepsilon_{a_1bb_1d}\varepsilon^{knn_1d}\nabla_{ak}\nabla_{nn_1}-
       \varepsilon_{bb_1ad}\varepsilon^{nn_1kd}\nabla_{nn_1}\nabla_{ka_1})=\\ \\
       =\frac{1}{4}
       \eta_\alpha{}^{aa_1}\eta_\beta{}^{bb_1}\cdot\frac{1}{4}
       (\varepsilon_{a_1bb_1d}\varepsilon^{kd}{}_{nn_1}\nabla_{ak}\nabla^{nn_1}+
       2\varepsilon_{bb_1a_1d}\nabla^{kd}\nabla_{ka})=\\ \\
       =\frac{1}{4}
       \eta_\alpha{}^{aa_1}\eta_\beta{}^{bb_1}\cdot\frac{1}{4}
       \varepsilon_{a_1bb_1d}(2\delta_{\left[\right.n}{}^k
       \delta_{n_1\left.\right]}{}^d\nabla_{ak}\nabla^{nn_1}+
       2\nabla^{kd}\nabla_{ka})=
       \end{array}
       $$
       $$
       \begin{array}{c}
       =\frac{1}{4}
       \eta_\alpha{}^{aa_1}\eta_{\beta a_1d}
       (\nabla_{an}\nabla^{nd}+
       \nabla^{kd}\nabla_{ka})=
       A_{\alpha\beta d}{}^a\cdot\frac{1}{4}
       (\nabla_{ak}\nabla^{dk}-
       \nabla^{dk}\nabla_{ak})\spsd
       \end{array}
       $$
     Therefor,
\begin{equation}
\label{e23p}
       \begin{array}{c}
       \Box_{\alpha\beta}=A_{\alpha\beta d}{}^a\Box_a{}^d\spsd
       \end{array}
\end{equation}
    We formulate the few basic statements concerning the operator $\Box_a{}^d$:
\begin{enumerate}
    \item
    From the Ricci identity
\begin{equation}
\label{e46s}
     \Box_{\alpha\beta}k^{\gamma\delta}=
     R_{\alpha\beta\lambda}{}^\gamma k^{\lambda\delta}+
     R_{\alpha\beta\lambda}{}^\delta k^{\gamma\lambda}\sps
     \Box_{\alpha\beta}r^\gamma=
     R_{\alpha\beta\lambda}{}^\gamma r^{\lambda}
\end{equation}
     will follow the identities ($k^{\alpha\beta}=-k^{\beta\alpha}$)
\begin{equation}
\label{e47s}
     \Box_a{}^bk_d{}^c=R_a{}^b{}_m{}^ck_d{}^m-R_a{}^b{}_d{}^nk_n{}^c\sps
     \Box_a{}^br^{cc_1}=R_a{}^b{}_m{}^cr^{mc_1}+R_a{}^b{}_m{}^{c_1}r^{cm}\spsd
\end{equation}
    And from this, we obtain finally
\begin{equation}
\label{e47}
    \Box_a{}^bX^c=R_a{}^b{}_m{}^cX^m\sps
    \Box_a{}^bX_c=-R_a{}^b{}_c{}^mX_m
\end{equation}
    (the proof is given in Appendix (\ref{e25p})-(\ref{e42p})).
    \item
    The differential Bianchi identity
\begin{equation}
\label{e48s}
    \nabla_{\left[\right.\alpha}R_{\beta\gamma\left.\right]\delta\lambda}=0
\end{equation}
    has the form
\begin{equation}
\label{e48}
    \nabla_{\left[cm\right.}R_{t\left.\right]}{}^k{}_r{}^s=
    \delta_{\left[m\right.}{}^k\nabla_{c\left|n\right|}R_{t\left.
    \right]}{}^n{}_r{}^s
\end{equation}
    (the proof is given in Appendix (\ref{e43p})-(\ref{e50p})).
    \item
    Contracting (\ref{e48}) with $\delta_k{}^c$, we obtain
\begin{equation}
\label{e49}
    \nabla_{c\left(\right.m}R_{t\left.\right)}{}^c{}_r{}^s=0\sps
\end{equation}
    and the contraction of (\ref{e49}) with $\delta_s{}^m$ will give
\begin{equation}
\label{e50}
    \nabla_{cm}R_t{}^c{}_r{}^m=1/8\nabla_{rt}R
\end{equation}
    which is equivalent to the well-known equation
\begin{equation}
\label{e50s}
    \nabla^\alpha(R_{\alpha\beta}-1/2Rg_{\alpha\beta})=0\spsd
\end{equation}
\end{enumerate}

\subsection{\texorpdfstring{The canonical form of bivectors of the 6-dimensional $\mbox{(pseudo-)}$ Euclidean space $\mathbb R^6_{(p, q)}$ with the metric of the even index q}{The canonical form of bivectors}}

\begin{theorem}
\label{theorem7} (On the canonical form of a bivector.)
    For the space $\mathbb R^6_{(p, q)}$ with the metric of the even index q = 0,6, a nondegenerate skew-symmetric bilinear form can be reduced to \texttt{canonical form} in some basis
\begin{equation}
\label{e1g}
    \frac{1}{2}R_{\alpha\beta}X^\alpha Y^\beta=
    R_{16}X^{\left[\right.1} Y^{6\left.\right]}+
    R_{23}X^{\left[\right.2} Y^{3\left.\right]}+
    R_{45}X^{\left[\right.4} Y^{5\left.\right]}\spsd
\end{equation}
\end{theorem}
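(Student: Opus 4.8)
The plan is to reduce the classification of the skew form $R_{\alpha\beta}$ to the classification of its associated traceless operator on $\mathbb{C}^4$, diagonalize that operator, and read off the three pairs. First I would associate to the bivector the traceless $4\times4$ matrix $R_a{}^b=\frac{1}{2}A^{\alpha\beta}{}_a{}^bR_{\beta\alpha}$, exactly as in Corollary \ref{corollary1} and the relations (\ref{e21}); by (\ref{e44b}) this is the algebraic realization of the isomorphism $so(6,\mathbb{C})\cong sl(4,\mathbb{C})$. By Theorem \ref{theorem5} every orthogonal transformation $K_\alpha{}^\beta\in SO(6,\mathbb{C})$ lifts to a pair $\pm S_a{}^b\in SL(4,\mathbb{C})$, and conjugating the bivector by $K$ amounts to conjugating $R_a{}^b$ by $S$. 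Hence reducing $R_{\alpha\beta}$ to normal form by metric-preserving transformations is the same problem as bringing $R_a{}^b$ to Jordan normal form under $SL(4,\mathbb{C})$-conjugation, and the three invariant coefficients $R_{16},R_{23},R_{45}$ will be functions of the eigenvalues of $R_a{}^b$.

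In the nondegenerate case the operator $R_a{}^b$ is semisimple, so I would pick a basis $X^a,Y^a,Z^a,T^a$ of $\mathbb{C}^4$, normalized by $\varepsilon_{abcd}X^aY^bZ^cT^d=1$ as in (\ref{e51}), in which $R_a{}^b=\mathrm{diag}(\lambda_1,\lambda_2,\lambda_3,\lambda_4)$ with $\lambda_1+\lambda_2+\lambda_3+\lambda_4=0$. The six decomposable bivectors $X\wedge Y,\,X\wedge Z,\,X\wedge T,\,Y\wedge Z,\,Y\wedge T,\,Z\wedge T$ are then eigenvectors of the induced action on $\Lambda^2\mathbb{C}^4$ with eigenvalues $\lambda_i+\lambda_j$, and they pair off into the three metric-nondegenerate two-planes $\{X\wedge Y,\,Z\wedge T\}$, $\{X\wedge Z,\,Y\wedge T\}$, $\{X\wedge T,\,Y\wedge Z\}$, on each of which the two eigenvalues are opposite because the trace vanishes. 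Converting to the real orthonormal directions through the $\eta$-operators of (\ref{e32v}) turns each such pair into a single rotation block, and the three blocks are precisely the three terms $R_{16}X^{[1}Y^{6]}+R_{23}X^{[2}Y^{3]}+R_{45}X^{[4}Y^{5]}$, with $R_{23}\sim\lambda_1+\lambda_2$, $R_{45}\sim\lambda_1+\lambda_3$, and $R_{16}\sim\lambda_1+\lambda_4$.

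To descend from $\mathbb{C}\mathbb{R}^6$ to the real space $\mathbb{R}^6_{(p,q)}$ of even index $q$, I would impose self-conjugacy of $R_{\alpha\beta}$ with respect to the Hermitian polarity $s_{aa'}$ of Theorem \ref{theorem6}, Item II), i.e. the reality condition $R_{ab'cd'}=\bar R_{b'ad'c}$ of Corollary \ref{corollary2}. This fixes how the eigenvalues $\lambda_i$ and the eigenframe behave under conjugation, and I would use it to choose the diagonalizing $SL(4,\mathbb{C})$ element so that the induced $SO(6,\mathbb{C})$ transformation is real and carries the frame (\ref{e26v}) into a genuine pseudo-orthonormal basis of $\mathbb{R}^6_{(p,q)}$.

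The main obstacle, I expect, is precisely this reality step combined with the semisimplicity hypothesis: one must check that nondegeneracy of the form together with the even-index Hermitian structure really does force $R_a{}^b$ to be diagonalizable by a polarity-compatible real frame, and that the three pairing two-planes are non-isotropic, so that $R_{16},R_{23},R_{45}$ are all nonzero. Once the correct real eigenframe is produced, the purely complex diagonalization and the back-translation through $A_{\alpha\beta a}{}^b$ and $\eta_i{}^{ab}$ are routine.
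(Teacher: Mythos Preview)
Your route is essentially the paper's route: pass from $R_{\alpha\beta}$ to the traceless operator $R_a{}^b$ via $A_{\alpha\beta a}{}^b$, diagonalize it, and translate back. The paper carries out exactly this, including the verification that conjugating $R_{\alpha\beta}$ by $K_\alpha{}^\beta$ corresponds to $\tilde R_p{}^t = S_p{}^a R_a{}^b (S^{-1})_b{}^t$.

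There is, however, a misattribution in your argument that you should fix. You write that ``in the nondegenerate case the operator $R_a{}^b$ is semisimple'', and later flag the diagonalizability as the main obstacle. But semisimplicity does not come from nondegeneracy of the bilinear form; it comes from the reality condition itself, and only for $q=0,6$. The point is that for even index the involution is realized by a Hermitian polarity $s_{aa'}$ (Theorem~\ref{theorem6}, Item~II), and the reality of $R_{\alpha\beta}$ translates to $R_b{}^a = -\bar R^a{}_b$, i.e.\ $iR_b{}^a$ is Hermitian with respect to $s_{aa'}$. For $q=0$ or $q=6$ the polarity $s_{aa'}$ is \emph{definite} (cf.\ Table~\ref{table1}), so the ordinary spectral theorem applies: $iR_b{}^a$ is diagonalizable by a transformation in $SU(4)$, which descends to a genuine real orthogonal transformation in $SO^e(6)$. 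The eigenvalues $\lambda_j$ are then purely imaginary with $\sum\lambda_j=0$, and the rest of your translation through $\eta_i{}^{ab}$ goes through as you describe. No separate ``nondegeneracy $\Rightarrow$ semisimple'' step is needed or available.

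The paper explicitly notes that for $q=2,4$ the polarity has indefinite signature, so the spectral theorem fails and diagonalization is genuinely obstructed; this is why the statement is restricted to $q=0,6$. Once you replace your nondegeneracy justification with ``definite polarity $\Rightarrow$ Hermitian $\Rightarrow$ unitarily diagonalizable'', the obstacle you anticipated dissolves and your proof is complete and matches the paper's.
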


\begin{proof}
    In the case of the space $\mathbb R^6_{(p, q)}$ with the metric of the even index q, the equations
\begin{equation}
\label{e2g}
    R_{\alpha\beta}=A_{\alpha\beta a}{}^b R_b{}^a\sps
    R_b{}^a=-\bar R^a{}_b\sps R_a{}^a=0
\end{equation}
     are executed that means that $iR_b{}^a$ is an Hermitian tensor in the case of q = 0,6. Therefore, the matrix of the tensor $R_b{}^a$ is reduced to the diagonal form by transformations of the group isomorphic to $SU(4)$. These transformations correspond to ones of the orthogonal group $SO^e(6,\mathbb R)$. It follows that the matrix of the tensor $R_b{}^a$ in the special basis has the form
\begin{equation}
\label{e3g}
    \begin{array}{c}
    R_b{}^a=
    \left(
    \begin{array}{cccc}
    \lambda_1 & 0 & 0 & 0\\
    0 & \lambda_2  & 0 & 0\\
    0 & 0 &\lambda_3  & 0\\
    0 & 0 & 0 & \lambda_4 \\
    \end{array}
    \right)\sps\\ \\
    \begin{array}{ccc}
    q=0,6, & \lambda_1 ,\lambda_2, \lambda_3, \lambda_4 \in iR, &
    \lambda_1+\lambda_2+\lambda_3+\lambda_4=0\spsd\\
    \end{array}
    \end{array}
\end{equation}
    At the same time, the two equalities
\begin{equation}
\label{e4g}
    \tilde R_b{}^a=S_b{}^cR_c{}^d\bar S^a{}_d\sps
    S_a{}^b\bar S^c{}_b=\delta_a{}^c
\end{equation}
    are executed.

\begin{proof}
    We consider a transformation $K_\alpha{}^\beta$ from  the group $SO^e(6,\mathbb R)$ and a spinor representation $S_a{}^b$ corresponding to $K_\alpha{}^\beta$ from the group $SU(4)$
\begin{equation}
\label{e5g}
    \begin{array}{c}
    K_\alpha{}^\beta K_\gamma{}^\delta R_{\beta\delta}=
    -A^{\beta\delta}{}_a{}^b K_{\left[\right.\alpha}{}^{ak}
    K_{\beta\left.\right]bk}A_{\beta\delta}{}_c{}^d R_d{}^c
    =-\frac{1}{2}(\frac{1}{2}\delta_a{}^b\delta_c{}^d-
    2\delta_a{}^d\delta_c{}^b)K_{\left[\right.\alpha}{}^{ak}
    K_{\beta\left.\right]bk}R_d{}^c=\\
    =K_{\left[\right.\alpha}{}^{ak}K_{\beta\left.\right]bk}R_a{}^b=
    \frac{1}{2}A_{\alpha\beta c}{}^dK_{dr}{}^{ak}K^{cr}{}_{bk}R_a{}^b
    =\frac{1}{8}K_{dr}{}^{ak}K_{mn}{}^{sl}\varepsilon_{slbk}\varepsilon^{mncr}
    R_a{}^b A_{\alpha\beta c}{}^d=\\=\tilde R_{\alpha\beta}=
    A_{\alpha\beta c}{}^d \tilde R_d{}^c\spsd
    \end{array}
\end{equation}
    Multiply the both sides (\ref{e5g}) by $A^{\alpha\beta}{}_p{}^t$ and obtain
\begin{equation}
\label{e6g}
    \begin{array}{c}
    \frac{1}{8}K_{pr}{}^{ak}K_{mn}{}^{sl}\varepsilon_{slbk}\varepsilon^{mntr}
    R_a{}^b=\tilde R_p{}^t\sps\\
    \frac{1}{2}S_{\left[\right.p}{}^aS_{r\left.\right]}{}^k
    S_{\left[\right.m}{}^sS_{n\left.\right]}{}^l
    \varepsilon_{slbk}\varepsilon^{mntr}R_a{}^b=
    \frac{1}{2}S_{\left[\right.p}{}^a(S_{r\left.\right]}{}^k
    S_m{}^sS_n{}^l\varepsilon_{kslb})\varepsilon^{mntr}R_a{}^b=\\
    =\frac{1}{2}S_{\left[\right.p}{}^a((S^{-1})_{|b|}{}^q
    \varepsilon_{r\left.\right]mnq})\varepsilon^{mntr}R_a{}^b
    =\frac{1}{4}(6S_p{}^a(S^{-1}){}_b{}^t-4S_r{}^a(S^{-1}){}_b{}^q
    \delta_{\left[\right.q}{}^t\delta_{p\left.\right]}{}^rR_a{}^b)=\\
    =S_p{}^aR_a{}^b(S^{-1})_b{}^t=S_p{}^aR_a{}^b\bar S^t{}_b=\tilde R_p{}^t\spsd
    \end{array}
\end{equation}
\end{proof}

    Using the special basis in the case of q = 0,6, we found the corresponding coordinates of the cobivector from $\mathbb R^6_{(p,q)}$
\begin{equation}
\label{e7g}
    \begin{array}{c}
    R_{16}=A_{16a}{}^bR_b{}^a= -R_{61} \sps \\
    R_{23}=A_{23a}{}^bR_b{}^a= -R_{32} \sps \\
    R_{45}=A_{45a}{}^bR_b{}^a= -R_{54} \spsd \\
    \end{array}
\end{equation}
\end{proof}

    Note that for the case q = 2,4, a similar statement can be formulated. But here there will be some difficulties associated with the problem of the diagonalization since in this case, in the special basis, the matrix of the Hermitian polarity tensor will be different from the identity.

\subsection{\texorpdfstring{Geometric representation of a twistor in $\mathbb R^6_{(2,4)}$}{Geometric representation of a twistor}}
\subsubsection{Stereographic projection}
\Abstract{
    \indent This part defines the notion \emph{extension of the isotropic vector} for the space $\mathbb R^6_{(2,4)}$ with the metric of the index equal to 4. It will be shown how to choose a vector of \emph{unit extension}. Then vectors collinear to this vector will differ from the latter by a real factor r: \emph{extension of the flagpole}.
    }

    Let the metric of the space $\mathbb R^6_{(2,4)}$ has the form
\begin{equation}
\label{e78}
    dS^2=dT^2+dV^2-dW^2-dX^2-dY^2-dZ^2\sps
\end{equation}
    and let a cross-section of the light cone $K_6$
\begin{equation}
\label{e79}
    T^2+V^2-W^2-X^2-Y^2-Z^2=0
\end{equation}
     be set by the plane V+W=1. Let's consider the stereographic projection of this section on the plane (V=0,W=1) with the pole $N(0,\frac{1}{2},\frac{1}{2},0,0,0)$ so that the point P(T,V,W,X,Y,Z) corresponds to the point $p(t,0,1,x,y,z)$ in the plane (V=0,W=1). Then
\begin{equation}
\label{e80}
    T/t=X/x=Y/y=Z/z=-\frac{(V-\frac{1}{2})}{\frac{1}{2}}\spsd
\end{equation}
    We make the substitution
\begin{equation}
\label{e82}
    \varsigma=-ix+y \sps
    \omega=-i(t+z)  \sps
    \eta=i(z-t)
\end{equation}
    and obtain
\begin{equation}
\label{e83}
    \varsigma =\frac{-iX+Y}{2V-1}\sps
    \eta=\frac{-i(T+Z)}{2V-1}    \sps
    \omega=\frac{i(Z-T)}{2V-1}   \spsd
\end{equation}
    Therefore, the metric, induced in the cross-section, has the form
 \begin{equation}
\label{e84}
    d s^2:=dT^2-dX^2-dY^2-dZ^2=-
    \frac{ d\varsigma d\bar\varsigma+d\omega d\eta}
    {(\varsigma\bar\varsigma+\eta\omega)^2}
\end{equation}
    (the proof of this fact is given in Appendix (\ref{e81p})-(\ref{e86p})). Put
\begin{equation}
\label{e85}
    X:=
    \left(
    \begin{array}{ccc}
    \omega & \varsigma \\
    -\bar\varsigma & \eta
    \end{array}
    \right) \sps
    dX:=
    \left(
    \begin{array}{ccc}
    d\omega & d\varsigma \\
    -d\bar\varsigma & d\eta
    \end{array}
    \right) \sps
    \frac{\partial}{\partial X}:=
    \left(
    \begin{array}{ccc}
    \frac{\partial}{\partial\omega}   &
    \frac{\partial}{\partial\varsigma} \\
    -\frac{\partial}{\partial\bar\varsigma} &
    \frac{\partial}{\partial\eta}
    \end{array}
    \right)\spsd
\end{equation}
    Then (\ref{e84}) can be rewritten as
\begin{equation}
\label{e86}
    d s^2=-\frac{det(dX)}{(det(X))^2}\sps
    \bar X^T+X=0\spsd
\end{equation}
    Consider the linear-fractional group L
\begin{equation}
\label{e87}
    {\tilde X}=(AX+B)(CX+D)^{-1} \sps
    S:=
    \left(
    \begin{array}{ccc}
     A & B \\
     C & D
    \end{array}
    \right) \sps
    detS=1\spsd
\end{equation}
    The reality condition, imposed on X ($X^*+X=0)$, gives the linear-fractional unitary subgroup  $LU(2,2)$ so that the matrix S from (\ref{e87}) satisfies the identity
\begin{equation}
\label{e87a}
    S^*\hat E S=\hat E\sps
    \hat E:=\left(
    \begin{array}{cc}
    0 & E \\
    E & 0
    \end{array}
    \right)\spsd
\end{equation}
    Further, we define the special basis \cite[v. 2, p. 65, eq. (6.2.18), p. 306, eq. (9.3.7)(eng)]{Penrose1}
\begin{equation}
\label{e89}
    \begin{array}{ccccc}
    R^{12} & = & 1/\sqrt{2}(V+W)  & = &  \omega^0\xi^1-\omega^1\xi^0 \sps\\
    R^{34} & = & 1/\sqrt{2}(V-W)  & = &  \bar\pi^0\bar\eta^1-\bar\pi^1\bar\eta^0\sps\\
    R^{14} & = & i/\sqrt{2}(T+Z)  & = &  \omega^0\bar\eta^0-\xi^0\bar\pi^0\sps\\
    R^{23} & = & i/\sqrt{2}(Z-T)  & = &  \xi^1\bar\pi^1-\omega^1\bar\eta^1\sps\\
    R^{24} & = & 1/\sqrt{2}(Y+iX) & = & \omega^1\bar\eta^0-\xi^1\bar\pi^0\sps\\
    R^{13} & = & 1/\sqrt{2}(Y-iX) & = & \xi^0\bar\pi^1-\omega^0\bar\eta^1\spsd
    \end{array}
\end{equation}
    This relations are the remarkable because it shows as the bivector $R^{ab}$ is expressed in terms of its spinor components. Define
\begin{equation}
\label{e88}
    \begin{array}{c}
    X:=YZ^{-1} \sps
    {\tilde Y}=AY+BZ \sps
    {\tilde Z}=CY+DZ \sps
    Y=\left(
    \begin{array}{cc}
    \omega^0 & \xi^0 \\
    \omega^1 & \xi^1
    \end{array}
    \right)\sps
    Z=\left(
    \begin{array}{cc}
    \bar\pi_0 & \bar \eta_0 \\
    \bar\pi_1 & \bar \eta_1
    \end{array}
    \right)\sps
    \end{array}
\end{equation}
    then from (\ref{e89}), the equations
\begin{equation}
\label{e99}
    R:=\parallel R^{ab}\parallel=
    \left(
    \begin{array}{ccc}
    (detY)J & YZ^{-1}(detZ)J \\
    -(YZ^{-1}(detZ)J)^T & (detZ)J
    \end{array}
    \right)=
    \left(
    \begin{array}{c}
    Y\\
    Z
    \end{array}
    \right)J\left(
    \begin{array}{cc}
    Y^T & Z^T
    \end{array}
    \right)\sps
\end{equation}
    $$
    J:=\left(
    \begin{array}{cc}
    0  & E \\
    -E & 0
    \end{array}
    \right)\sps
    \tilde R=SRS^T
    $$
    will follow. Determine
\begin{equation}
    \tilde S:=\tilde IS\tilde I^{-1}\sps
    \tilde I:=\frac{1}{\sqrt 2}
    \left(
    \begin{array}{ccc}
    E & E  \\
    -E & E
    \end{array}
    \right) \sps
\end{equation}
    then
\begin{equation}
\label{e101}
    \tilde S^*\tilde E\tilde S=\tilde E\spsd
\end{equation}
    The matrixes S form the group isomorphic $SU(2,2)$ so the matrixes $\tilde S$ form the group $SU(2,2)$. A transformation from the group $LU(2,2)$  is called \emph{twistor transformation}. Due to the double covering of the connected identity component of the group $SO(2,4)$ (which is denoted as $SO^e(2,4)$) by the group $SU(2,2)$ and due to the double covering of the conformal group $C^{\uparrow 4}_+(1,3)$ \cite[v. 2, p. 304, eq. (9.2.10)(eng)] {Penrose1} by the group $SO^e(2,4)$,  the existence of the isomorphisms
\begin{equation}
\label{e102}
    \begin{array}{c}
    SU(2,2)/\{\pm 1 ;\pm i\}
    \cong LU(2,2)\cong C^{\uparrow 4}_+(1,3) \cong
    SO^e(2,4)/\{\pm 1\}
    \end{array}
\end{equation}
    will imply. This means that the group $LU(2,2)$ exhausts all conformal transformations of the group $\mathbb C^{\uparrow 4}_+(1,3)$. The matrix S is restored up to a factor $\lambda $, that $\lambda^4=1$ (det(S)=1), and whence we obtain the ambiguity. Sine we have the equalities
\begin{equation}
\label{e90}
    Y=AX+B\ \ \Rightarrow \ \ dX=AdY \sps
    Y=X^{-1}\ \ \Rightarrow \ \ dX=-X^{-1}dXX^{-1}\sps
\end{equation}
    where A and B is some constant matrixes, then
\begin{equation}
\label{e92}
    \tilde Z^*\ d\tilde X\ \tilde Z=
    Z^*\ dX\ Z\spsd
\end{equation}
    This equation is an invariant under the group LU(2,2). The proof of this fact is considered in Appendix (\ref{e87p})-(\ref{e100p}). Other invariant can be obtained with the help of the identity
\begin{equation}
\label{e93}
    Y=AX+B\ \ \Rightarrow \ \
    \frac{\partial}{\partial X}=A^T\frac{\partial}{\partial Y} \sps
    Y=X^{-1}\ \ \Rightarrow \ \ \frac{\partial}{\partial X}
    =-Y^T\frac{\partial}{\partial Y}Y^T\sps
\end{equation}
    where A and B is also some constant matrixes. This invariant will have the form
\begin{equation}
\label{e95}
    \tilde Z^{-1}\frac{\partial}{\partial
    \tilde X^T}\tilde Z^{*\ -1}=
     Z^{-1}\frac{\partial}{\partial X^T} Z^{*\ -1}
\end{equation}
    (the proof of this fact is given in Appendix (\ref{e101p})-(\ref{e114p})). This means that there is a real vector $\tilde L$ tangent to the hyperboloid resulting with the help of the cross-section of the cone $K_6$ by the plane V + W = 1. This vector is an invariant under transformations of a basis from the group LU(2,2) (i.e., coordinate-independent in the tangent space to this hyperboloid). The vector $\tilde L$ is uniquely determined by the matrix
\begin{equation}
\label{e96}
    \begin{array}{c}
    \hat L:=\frac{1}{\sqrt{2}}
    (Z^{-1}\frac{\partial}{\partial X^T} Z^{*\ -1}-
    \bar Z^{-1}\frac{\partial}{\partial X} Z^{T\ -1})=\\ \\
    =\left(
    \begin{array}{cc}
    0  & 1 \\
    -1 & 0
    \end{array}
    \right)
    (\frac{\partial}{\partial\omega}(-\bar\eta^0\pi^0+\eta^0\bar\pi^0)+
    \frac{\partial}{\partial\eta}(-\bar\eta^1\pi^1+\eta^1\bar\pi^1)+\\ \\
    +\frac{\partial}{\partial\xi}(-\bar\eta^1\pi^0+\eta^0\bar\pi^1)+
    \frac{\partial}{\partial\bar\xi}(\bar\eta^0\pi^1-\eta^1\bar\pi^0))
    \cdot \frac{1}{(det(Z))^2 \sqrt{2}}:=
    \left(
    \begin{array}{cc}
    0  & 1 \\
    -1 & 0
    \end{array}
    \right)\tilde L\spsd
    \end{array}
\end{equation}
    For the metric (\ref{e86}), the norm of this vector will be such
\begin{equation}
\label{e96a}
    \parallel\tilde L\parallel =-\frac{1}{2(det(Y))^2}=-\frac{1}{(V+W)^2}\spsd
\end{equation}
     An isotropic vector k is called a vector of \emph{first type unit extension} \cite[v. 1, p. 36, eq. (1.4.16)(eng)]{Penrose1} in the case when k is sets the point belonging to the cross-section of the isotropic cone by the plane V + W = 1. Then $\parallel \tilde L \parallel =- 1$ and any isotropic vector K collinear with k is defined as
\begin{equation}
\label{e96aa}
    K=(-\parallel\tilde L\parallel)^\frac{1}{2}k\spsd
\end{equation}
    However, when V=-W we will obtain a vector with the infinite first type extension. To learn to distinguish between them, it is necessary to set a cross-section of the cone $K_6$ by the plane T+Z=1 and  to enter a vector $\tilde{\tilde L}$ with the norm
\begin{equation}
\label{e96ab}
    \parallel\tilde{\tilde L}\parallel =-\frac{1}{(T+Z)^2}
\end{equation}
     in the same way. An isotropic vector k is called a vector of \emph{second type unit extension} in the case when k is set the point belonging to the cross-section of the isotropic cone by the plane T + Z = 1 and the first type extension will not be finite. We define \emph{extension} of the vector K as
\begin{enumerate}
     \item \emph{first type extension} if such the extension is finite;
     \item \emph{second type extension} if the first type extension is infinite.
\end{enumerate}
     Note that the vector $\tilde L$ is not coordinate-independent in the space $\mathbb R^6_{(2,4)}$ although it is an invariant of the tangent space to the hyperboloid resulting the cross-section of the cone $K_6$ by the plane V+W=1. Our next task is to find an invariant in the space $\mathbb R^6_{(2,4)}$.
\vspace{1cm}

\subsubsection{The geometric twistor picture in the 6-dimensional space}$ $

     \indent Now there is a possibility to represent \emph{isotropic twistor} in the space $\mathbb R^6_{(2,4)}$ visually. Let's consider a pair of vectors of an equal extension in $\mathbb R^6_{(2,4)}$
\begin{equation}
\label{e59}
     K^\alpha=\eta^\alpha{}_{ab}\ iT^{\left[\right. a}X^{b\left.\right]}\sps
     N^\alpha=\eta^\alpha{}_{ab}\ T^{\left[\right. a}Z^{b\left.\right]}\spsd
\end{equation}
     From Lemma \ref{lemma1} of the second chapter, the conditions
\begin{equation}
\label{e61}
    K^\alpha K_\alpha=0\sps
    N^\alpha K_\alpha=0\sps
    N^\alpha N_\alpha=0
\end{equation}
    will follow. We choose a vector $Y^a$ in such a way to satisfy the conditions
\begin{equation}
\label{e62}
    Y^aY_a=0\sps
    Y^aX_a=0\sps
    Y^aZ_a=0\sps
\end{equation}
\begin{equation}
\label{e66}
    \varepsilon^{abcd}X_aY_bZ_cT_d=X^cZ_cY^dT_d=1\sps
    \varepsilon^{abcd}=24X^{\left[\right.a}Y^bZ^cT^{d\left.\right]}\spsd
\end{equation}
    Thus, we will obtain the vector basis $X^a, Y^a, Z^a, T^a$ (recall that $Y_a=s_{aa '}Y^{a'}$)
\begin{equation}
\label{e63}
    \begin{array}{c}
    Y^aY_a=0\sps
    Y^aX_a=0\sps
    Y^aZ_a=0\sps
    X^aX_a=0\sps
    X^aT_a=0\sps\\
    Z^aZ_a=0\sps
    Z^aT_a=0\sps
    T^aT_a=0\spsd
    \end{array}
\end{equation}
    Whence,
\begin{equation}
\label{e67}
    \varepsilon_{abcd}T^{\left[\right.c}Y^{d\left.\right]}=
    -2X_{\left[\right.a}Z_{b\left.\right]}\spsd
\end{equation}
    Therefore, the vectors
\begin{equation}
\label{e68}
    L^\alpha=\eta^\alpha{}_{ab}(-T^{\left[\right.a}Y^{b\left.\right]}+
                                X^{\left[\right.a}Z^{b\left.\right]})\sps
    M^\alpha=\eta^\alpha{}_{ab}(-i)(T^{\left[\right.a}Y^{b\left.\right]}+
                                X^{\left[\right.a}Z^{b\left.\right]})
\end{equation}
    satisfy to the parities
    $$
    L^\alpha=\bar L^\alpha\sps
    M^\alpha=\bar M^\alpha\sps
    $$
    $$
    L^\alpha K_\alpha=0\sps
    L^\alpha M_\alpha=0\sps
    M^\alpha K_\alpha=0\sps
    L^\alpha N_\alpha=0\sps
    M^\alpha N_\alpha=0
    $$
\begin{equation}
\label{e69}
    L^\alpha L_\alpha=-2\sps
    M^\alpha M_\alpha=-2
\end{equation}
    Now we can construct the threevector
\begin{equation}
\label{e70}
    P^{\alpha\beta\gamma}=6K^{\left[\right.\alpha}
                           N^\beta L^{\gamma\left.\right]}\spsd
\end{equation}
    Knowing $K^\alpha$, we know $T^a$ and $X^a$ up to
\begin{equation}
\label{e71}
    X^a\longmapsto\lambda_1 X^a+\mu_1 T^a\sps
    T^a\longmapsto\nu_1 X^a+\xi_1 T^a\sps
    det\left(
    \begin{array}{cc}
    \lambda_1 & \mu_1 \\
    \nu_1  & \xi_1
    \end{array}
    \right)=1\spsd
\end{equation}
    And if we know $N^\alpha$ then an arbitrariness in a choice of $T^a$ and $Z^a$ will be such
\begin{equation}
\label{e72}
    Z^a\longmapsto\lambda_2 Z^a+\mu_2 T^a\sps
    T^a\longmapsto\nu_2 Z^a+\xi_2  T^a\sps
    det\left(
    \begin{array}{cc}
    \lambda_2 & \mu_2 \\
    \nu_2  & \xi_2
    \end{array}
    \right)=1\spsd
\end{equation}
    Therefor, $\nu_1=\nu_2=0$ and $\xi_2=\xi_1$. For $Y^a$, we obtain
\begin{equation}
\label{e73}
    Y^a  \longmapsto  \alpha X^a+\beta Y^a+\gamma Z^a+\delta T^a\spsd
\end{equation}
    If we require that two such bases (similar (\ref{e63})) are related by a transformation from the group LU(2,2) then we obtain
    $$
    \begin{array}{cclccl}
    X^a & \longmapsto & \tau^{-1} X^a+\mu T^a\ , \ &
    T^a & \longmapsto & \tau T^a\sps\\
    Z^a & \longmapsto & \tau^{-1} Z^a+\chi T^a\ , \ &
    Y^a & \longmapsto & -\bar\chi X^a+\tau Y^a-\bar\mu Z^a+\delta T^a\sps
    \end{array}
    $$
\begin{equation}
\label{e74}
    \bar\chi\mu+\bar\mu\chi+\tau\bar\delta+\bar\tau\delta=0\sps
    \tau\bar\tau=1\spsd
\end{equation}
    Whence,
\begin{equation}
\label{e75}
    \begin{array}{cclcccl}
    X^{\left[\right.a}T^{b\left.\right]}& \longmapsto &
    X^{\left[\right.a}T^{b\left.\right]}& \Leftrightarrow &
    K^\alpha & \longmapsto &  K^\alpha \sps\\
    Z^{\left[\right.a}T^{b\left.\right]}& \longmapsto &
    Z^{\left[\right.a}T^{b\left.\right]}& \Leftrightarrow &
    N^\alpha & \longmapsto &  N^\alpha \sps
    \end{array}
\end{equation}
    $$
    \begin{array}{ccl}
    T^{\left[\right.a}Y^{b\left.\right]} & \longmapsto &
    -\tau\bar\chi T^{\left[\right.a}X^{b\left.\right]}+
    \tau^2 T^{\left[\right.a}Y^{b\left.\right]}-
    \bar\mu\tau T^{\left[\right.a}Z^{b\left.\right]}\sps\\
    X^{\left[\right.a}Z^{b\left.\right]} & \longmapsto &
    \tau^{-2} X^{\left[\right.a}Z^{b\left.\right]}+
    \tau^{-1}\chi X^{\left[\right.a}T^{b\left.\right]}+
    \tau^{-1}\mu T^{\left[\right.a}Z^{b\left.\right]}
    \end{array}
    $$
    Define
\begin{equation}
\label{e76}
    \tau=:e^{i\Theta}
\end{equation}
    then
    $$
    L^\alpha \psps \longmapsto \psps
    L^\alpha \cos(2\Theta) +M^\alpha\sin(2\Theta)-
    $$
    $$
    -i(\bar\chi\tau-\bar\tau\chi)K^\alpha+
    (\mu\bar\tau+\tau\bar\mu)N^\alpha)
    $$
    $$
    M^\alpha \psps \longmapsto \psps
    M^\alpha \cos(2\Theta) -L^\alpha\sin(2\Theta)+
    $$
\begin{equation}
\label{e77}
    +(\bar\chi\tau+\bar\tau\chi)K^\alpha-
    i(\mu\bar\tau-\tau\bar\mu)N^\alpha)\\
\end{equation}
    Therefor, the 3-half-plane, spanned by $K^\alpha, N^\alpha, L^\alpha$, is coordinate-independent in the space $\mathbb R^6_{(2,4)}$. Thus, our design can be presented as follows. The first type extension of the vectors $K^\alpha$ and $N^\alpha $ should be the same. $K^\alpha$ and $N^\alpha$ determine \emph{flagpole}: the set of vectors with:
\begin{enumerate}
    \item the first type extension is equal to the first type extension of the vector $K^\alpha$;
    \item the start coinciding with the beginning of the vector $K^\alpha$.
\end{enumerate}
    $K^\alpha$, $N^\alpha$, $L^\alpha$ determine the 3-half-plane which we call \emph{flag-plane}. Thus, knowing $K^\alpha$ and $N^\alpha$, we know the twistor $T^a$ up to the phase $\Theta$. In turn, in the 2-plane $(L^\alpha, M^\alpha)$, $2\Theta$ is an angle of the rotation of the flag (3-half-plane $P^{\alpha\beta\gamma}$) around the flagpole $(N^\alpha,K^\alpha)$. Therefore, a rotation of the flag on $2\pi$ will lead to the twistors $-T^a$, and only a rotation on $4\pi$ will return our design to the original state. In addition, collinear twistors can be distinguished from each other using the concept \emph{extension of the vector} for the $K^\alpha$ so that under the transformation $T^a\longmapsto rT^a,\ Y^a\longmapsto r^{-1}Y^a\ (r\in R\backslash \{0\})$, the flagpole is multiplied by r and the flag-plane remains unchanged. Finally, it should be noted that the mentioned geometrical structure is uniquely determined by the twistor $T^a$. In the case of the infinite first type extension of the vector $K^\alpha$, we consider the cone $K_4\subset K_6$ on which the vector $N^\alpha$ lays. But non-zero vectors $K^\alpha$ and $N^\alpha$ have the finite second type extension giving the geometric interpretation of a spinor on the isotropic cone $K_4$.

\section{The theorem on two quadrics}
\Abstract{
    \indent In this chapter, we study the common solution
    $$
    \left\{
    \begin{array}{lcl}
    X^a & = & \dot X^a -ir^{ab}\dot Y_b\sps \\
    Y_b & = & \dot Y_b
    \end{array}
    \right.
    $$
    of the bitwistor equation leading to the Rosenfeld null-pair
    $$
    X^A:=(X^a,Y_b)\spsd
    $$
    $\dot X^a,\dot Y_b$ will be some partial solutions of the bitwistor equation (\ref{e53aa}). We are interested in the locus of points defined by the equation
    $$
    X^a=0\spsd
    $$
    It is shown that the solutions of this equation lead to the two quadrics, for which the modified triality principle is just. It is proved that the modified triality principle is the generalization of the Cartan triality principle and the Klein correspondence that allows to realize it explicitly with the help of operators $\eta_A{}^{KL}$ which is the generalization of the connecting Norden operators $\eta_\alpha{}^{ab}$. The connecting operators $\eta_A{}^{KL}$ satisfy the Clifford equation that leads to the Cayley numbers. A proof of the generalized theorem is an example of an application of the 6-dimensional spinor formalism, developed above, which is closely associated with the 4-dimensional spinor formalism \cite{Penrose1}.
}

\subsection{Solutions of the bitwistor equation}$ $
    \indent Let us consider the bundle $A^\mathbb C$ with fibers isomorphic to $\mathbb C^4$ and the base $\mathbb CV^6$ which is an analytic complex space with the quadratic metric. The equation
\begin{equation}
\label{e143}
    \nabla^{a\left(b\right.}X^{c\left.\right)}=0\psps
    (a,b,...=\overline{1,4})
\end{equation}
    is called \emph{bitwistor equation} ($X^c$ are analytic functions). By the above, the bitwistor equation is a conformal invariant. Furthermore, the integrability condition of the equation (\ref{e143}) has the form
\begin{equation}
\label{e154}
    \frac{1}{2}\varepsilon_{akmn}\nabla^{m\left(\right.n}
    \nabla^{d|k|}X^{c\left.\right)}=\frac{5}{6}C_a{}^c{}_l{}^d
    X^l=0
\end{equation}
    (the proof of this fact is given in Appendix (\ref{e2500}) - (\ref{e2503})). We restrict ourselves by the case of the conformal space (see (\ref{e24}))
\begin{equation}
\label{e155}
    C_a{}^c{}_l{}^d=0\spsd
\end{equation}
    This means that the space $\mathbb CV^6 $ is conformal to the space $\mathbb C\mathbb R^6$ which, without loss of generality, we shall consider below. If $X^c$ is a solution of (\ref{e143}) then the spin-tensor $\nabla^{ab}\nabla^{cd}X^r$ is antisymmetric on $rd$, it is also antisymmetric on $br$ in view of that the space is flat and the derivatives are commute. In addition, there is the antisymmetry on the pairs $ab,cd,rc,ra$. This means that $\nabla^{ab}\nabla^{cd}X^r$ is antisymmetric on $abcdr$ and hence it equals to zero. We fix the point O in the space $\mathbb C\mathbb R^6$: O is the origin of coordinates. All other points describe by vectors $r^a$ with the beginning at O then for $r^\alpha\ne 0$, we have
\begin{equation}
\label{e156}
    \nabla_\alpha r^\beta=\delta_\alpha{}^\beta\spsd
\end{equation}
    Therefore, $\nabla^{ab}X^c$ is a constant antisymmetric on $abc$ that follows from (\ref{e154}). We set
\begin{equation}
\label{e157}
    \nabla^{cd}X^a=
    -i\varepsilon^{cdab}Y_b\spsd
\end{equation}
    Integrating this equation, we obtain \emph{common solution}
\begin{equation}
\label{e158}
    \left\{
    \begin{array}{lcl}
    X^a & = & \dot X^a -ir^{ab}\dot Y_b\sps \\
    Y_b & = & \dot Y_b\spsd
    \end{array}
    \right.
\end{equation}
    Here $r^{ab}$ is the bivector from the formula (\ref{e1}), where the factor i is chosen for the convenience (it is clarified under the considering the real case). $\dot X^a$ is a constant vector field whose the value matches the one of the field $X^a$ at the point O. For the space $\mathbb R^6_{(2,4)}$, we have
\begin{equation}
\label{e159}
    Y_k=s_{km'}\bar Y^{m'} \sps \bar Y^{m'}=\overline{Y^m}\spsd
\end{equation}
    In addition, the radius vector $r^\gamma$($=\frac{1}{2}\eta^\gamma{}_{ab}R^{ab}$) satisfies the following relations
\begin{equation}
\label{e160}
    \frac{1}{2}r^{ab}r_{ab}=pf(r)\sps
    r^{ab}r_{bc}=-\frac{1}{2}pf(r)\delta_c{}^a\spsd
\end{equation}

\subsection{Rosenfeld null-pairs}$ $

    \indent Denote by $\mathbb A^{\mathbb C*}$ the spinor 4-dimensional complex vector space. Such the space is dual to the space $\mathbb A^{\mathbb C}$. Then the 8-dimensional complex space $\mathbb T^2$ is formed as the direct sum $\mathbb A^{\mathbb C}\oplus\mathbb A^{\mathbb C*}$. That is, if $X^a$ $(a, b ,...=\overline{1,4})$ are the coordinates of a vector in $A^{\mathbb C}$, and $Y_b$ are the coordinates of a covector in $A^{\mathbb C*}$ then
\begin{equation}
\label{e161}
    X^A:=(X^a,Y_b)\psps (A,B,...=\overline{1,8})
\end{equation}
    are the coordinates of a vector in $\mathbb T^2$. The transformation (\ref{e158}) is a linear one not keeping the structure of the direct sum. We will consider the bivector coordinates $r^{ab}$ as the ones in the complex affine space $\mathbb C \mathbb A_6$. We are interested in a set of points defined by the equation
\begin{equation}
\label{e162}
    X^a=0\psps \Leftrightarrow \psps \dot X^a=ir^{ab}\dot Y_b\spsd
\end{equation}
    This is a system of 4 linear equations with 6 unknowns. To determine its rank, we consider \emph{homogeneous equation}
\begin{equation}
\label{e163}
    r^{ab}\dot Y_b=0,
\end{equation}
    which has nontrivial solutions if and only if the bivector is simple
\begin{equation}
\label{e164}
    \varepsilon_{abfc}r^{ab}r^{cd}=-pf(r)\delta_f{}^d=0\sps
\end{equation}
     and therefore this bivector can be represented as
\begin{equation}
\label{e165}
     r^{ab}_{\mbox{\scriptsize{homogeneous}}}=\dot P^a\dot Q^b-\dot P^b\dot Q^a\sps
\end{equation}
     where $P^a$ and $Q^a$ are defined up to linear combinations of them. From this, it follows that
\begin{equation}
\label{e166}
     \dot P^a\dot Y_a=0\sps
     \dot Q^a\dot Y_a=0\spsd
\end{equation}
     Denote by $X^a$, $S^a$, $Z^a$ those solutions of the equation
\begin{equation}
\label{e167}
     X^aY_a = 0\sps
\end{equation}
     that form a basis. Then our solution (\ref{e165}) takes the form
\begin{equation}
\label{e168}
    r^{ab}_{\mbox{\scriptsize{homogeneous}}}=
    \lambda_1\dot S^{\left[\right.a}\dot X^{b\left.\right]}+
    \lambda_2\dot X^{\left[\right.a}\dot Z^{b\left.\right]}+
    \lambda_3\dot S^{\left[\right.a}\dot Z^{b\left.\right]}
\end{equation}
    and hence determines a 3-dimensional subspace in the bivector space. From here, \emph{common solution}
\begin{equation}
\label{e169}
    r^{ab}=
    r^{ab}_{\mbox{\scriptsize{particular}}}+
    \lambda_1\dot S^{\left[\right.a}\dot X^{b\left.\right]}+
    \lambda_2\dot X^{\left[\right.a}\dot Z^{b\left.\right]}+
    \lambda_3\dot S^{\left[\right.a}\dot Z^{b\left.\right]}
\end{equation}
    of the equation (\ref{e162}) is obtained, where $r^{ab}_{\mbox{\scriptsize{particular}}}$ is an arbitrary bivector being \emph{particular solution} of (\ref{e162}).\\

\subsection{\texorpdfstring{Construction of the quadrics $\mathbb CQ_6$ and $\mathbb C\tilde Q_6$}{Construction of the two quadrics}}$ $

    \indent The space $\mathbb T^2$ will be the complex space in which the scalar square of a vector is determined by the quadratic form
\begin{equation}
\label{e170}
    \varepsilon_{AB}X^AX^B=2X^aY_a
\end{equation}
    in the sense of (\ref{e161}) so that the matrix of the spin-tensor $\varepsilon_{AB}$ has the form
\begin{equation}
\label{e172}
    \parallel \varepsilon_{AB}\parallel=\left(
    \begin{array}{cc}
    0 & \delta_a{}^c \\
    \delta^b{}_d  & 0
    \end{array}
    \right)
\end{equation}
    in the special basis. The form (\ref{e170}) is invariant under the transformation (\ref{e158})
\begin{equation}
\label{e171}
    X^aY_a=(\dot X^a-ir^{ab}\dot Y_b)\dot Y_a=\dot X^a\dot Y_b\spsd
\end{equation}
    For the fixed $r^{ab}$, the equation (\ref{e162}) will define the 4-dimensional subspace in $\mathbb T^2$ which will be the 4-dimensional planar generator of the cone
\begin{equation}
\label{e173}
    \varepsilon_{AB}X^AX^B=0\spsd
\end{equation}
    Thus, in the projective space $\mathbb C \mathbb P_7$, we can consider the quadric $\mathbb CQ_6$ defined by the equation (\ref{e173}). 4 basis points of the generator satisfy the condition
\begin{equation}
\label{e174}
    \varepsilon_{AB}X^A_iX^B_j=0\psps (i,j,...=\overline{1,4})\spsd
\end{equation}
    Put
\begin{equation}
\label{e175}
    X^A_1:=(\dot X^a,\dot Y_b)\sps
    X^A_2:=(\dot Z^a,\dot T_b)\sps
    X^A_3:=(\dot L^a,\dot N_b)\sps
    X^A_4:=(\dot K^a,\dot M_b)\spsd
\end{equation}
    On the basis of the common solution (\ref{e169}), each point of the quadric $\mathbb CQ_6$  can be associated to the 3-dimensional isotropic plane of the space $\mathbb C \mathbb A_6$. The point (t,v,w,x,y,z) of the space $\mathbb C \mathbb A_6$ can be represented by the line  $(\lambda T, \lambda V, \lambda U, \lambda S, \lambda W, \lambda X,$ $ \lambda Y, \lambda Z)$ of the space $\mathbb C\mathbb R^8$ having the metric
\begin{equation}
\label{e176}
    dL^2=dT^2+dV^2+dU^2+dS^2+dW^2+dX^2+dY^2+dZ^2\spsd
\end{equation}
    This line will be a generator of the isotropic cone $\mathbb CK_8$
\begin{equation}
\label{e177}
    T^2+V^2+U^2+S^2+W^2+X^2+Y^2+Z^2=0\spsd
\end{equation}
    The intersection of the 7-plane
\begin{equation}
\label{e1771}
    U-iS=1
\end{equation}
    with the cone $\mathbb CK_8 $ has the induced metric
\begin{equation}
\label{e1772}
    d\tilde L^2=dT^2+dV^2+dW^2+dX^2+dY^2+dZ^2\spsd
\end{equation}
    This space has the form of a paraboloid in $\mathbb CK_8$, and it is identical to the space $\mathbb C\mathbb R^6$
\begin{equation}
\label{e1773}
    U=1+iS=\frac{1}{2}(1-T^2-V^2-W^2-X^2-Y^2-Z^2)\spsd
\end{equation}
    Every generator of this cone (a set of points belonging to $\mathbb CK_8$ with the constant ratio  T:V:U:S:W:X:Y:Z), not lying on the hyperplane $U=iS$, intersects the paraboloid in the single point. Every generator of the cone, lying on the hyperplane $U=iS$, corresponds to the point belonging to the infinity of the space $\mathbb C\mathbb R^6$. Thus, straight lines of $\mathbb C\mathbb R^8$ passing through the origin of $\mathbb C\mathbb R^8$ correspond to points of the projective space $\mathbb C \mathbb P_7$. The stereographic projection of this section on the plane (S=0, U=1) with the pole $N(0,0,\frac{1}{2},\frac{i}{2},0,0,0,0)$ maps the point P(T,V,U,S,W,X,Y,Z) of the hyperboloid to the point p(t,v,1,0,w,x,y,z) of the plane (S=0, U=1)
\begin{equation}
\label{e17731}
    \begin{array}{c}
    \lambda T=t\sps
    \lambda V=v\sps
    \lambda W=w\sps
    \lambda X=x\sps
    \lambda Y=y\sps
    \lambda Z=z\sps\\
    \lambda =\frac{1}{U+iS}\sps
    \lambda U=\frac{1}{2}(1-t^2-v^2-w^2-x^2-y^2-z^2)=-\lambda iS+1\sps
    pf(r)=-\frac{U-iS}{U+iS}\spsd
    \end{array}
\end{equation}
    All generators of the same cone $\mathbb CK_8$ form the quadric $\mathbb C\tilde Q_6 $ in the projective space $\mathbb C \mathbb P_7$
\begin{equation}
\label{e179}
    G_{AB}R^AR^B=0\spsd
\end{equation}

\subsection{\texorpdfstring{Correspondence  $\mathbb CQ_6\longmapsto \mathbb C\tilde Q_6$}{Correspondence: CQ to C'Q}}

\begin{enumerate}
\item
    \indent On the basis of (\ref{e169}),
\begin{equation}
\label{e178}
    r^{ab}=r^{ab}_{\mbox{\scriptsize{particular}}}+r^{ab}_{\mbox{\scriptsize{homogenous}}}=
    r^{ab}_{\mbox{\scriptsize{particulare}}}+
    \lambda_1\dot S^{\left[\right.a}\dot X^{b\left.\right]}+
    \lambda_2\dot X^{\left[\right.a}\dot Z^{b\left.\right]}+
    \lambda_3\dot S^{\left[\right.a}\dot Z^{b\left.\right]}\spsd
\end{equation}
    Therefor, by this equation, the 4-dimensional planar generator of the cone $\mathbb CK_8$ is determined. The equations (\ref{e174}), (\ref{e175}) will define the system
\begin{equation}
\label{e182}
    \left\{
    \begin{array}{lcl}
    ir^{ab}\dot Y_b & = & \dot X^a\sps\\
    ir^{ab}\dot T_b & = & \dot Z^a\sps\\
    ir^{ab}\dot N_b & = & \dot L^a\sps\\
    ir^{ab}\dot M_b & = & \dot K^a
    \end{array}
    \right.
\end{equation}
    with the conditions
\begin{equation}
\label{e183}
    \begin{array}{c}
    \dot X^a\dot Y_a=0\sps
    \dot Z^a\dot T_a=0\sps
    \dot L^a\dot N_a=0\sps
    \dot K^a\dot M_a=0\sps\\
    \dot X^a\dot T_a=-\dot Z^a\dot Y_a\sps
    \dot X^a\dot N_a=-\dot L^a\dot Y_a\sps
    \dot X^a\dot M_a=-\dot K^a\dot Y_a\sps\\
    \dot Z^a\dot N_a=-\dot L^a\dot T_a\sps
    \dot Z^a\dot M_a=-\dot K^a\dot T_a\sps
    \dot K^a\dot N_a=-\dot L^a\dot M_a\spsd
    \end{array}
\end{equation}
    Thus, from the 16 equations with the 6 unknowns $r^{ab}$, only 6 from them  will be significant (the 10 communication conditions (\ref{e183})).  Then the 3-dimensional planar generator $\mathbb C \mathbb P_3$ belonging to the quadric $\mathbb C \mathbb Q_6$ will uniquely define the point of $\mathbb C \mathbb A_6$ and hence the point of the quadric $\mathbb C \tilde Q_6$.
\item
    If from the system (\ref{e182}), we know only the one equation
\begin{equation}
\label{e184}
    ir^{ab}\dot Y_b=\dot X^a
\end{equation}
    with the condition
\begin{equation}
\label{e185}
    \dot X^a\dot Y_a=0
\end{equation}
    then from the 4 equations, only 3 from them will be significant (the 1 communication condition (\ref{e185})). This means that the point of the quadric $\mathbb CQ_6$  will uniquely define the 3-dimensional planar generator $\mathbb C \mathbb P_3$ belonging to the quadric $\mathbb C \tilde Q_6$. This is follows from (\ref{e169}).

\begin{figure}\center
\includegraphics[width=0.7\textwidth]{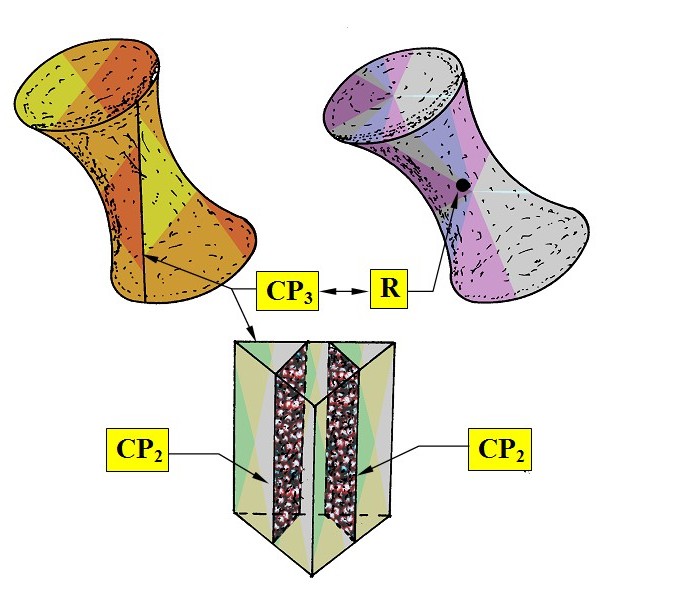}
\caption{Correspondence  $\forall CP_2\subset CP_3 \leftrightarrow R$}
\vspace{1cm}
\includegraphics[width=0.7\textwidth]{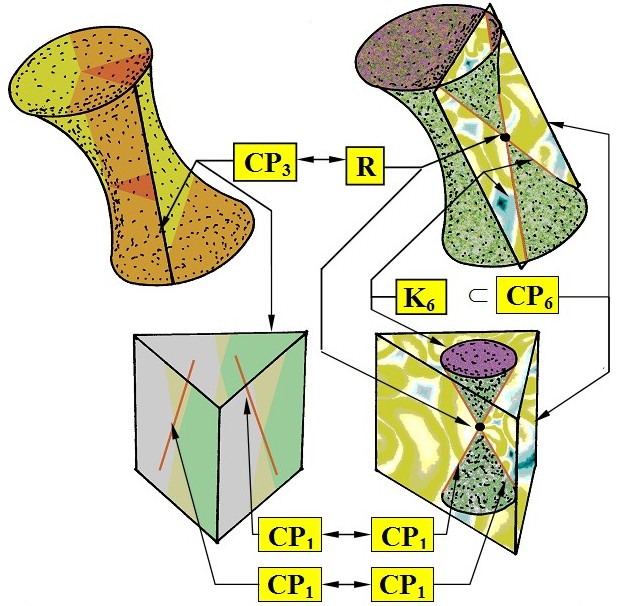}
\caption{Correspondence $CP_3\supset CP_1\leftrightarrow CP_1\subset K_6$}
\end{figure}

\item
    If from the system (\ref{e182}), we know only the two equations
\begin{equation}
    \left\{
    \begin{array}{lcl}
    ir^{ab}Y_b & = & X^a\sps\\
    ir^{ab}T_b & = & Z^a
    \end{array}
    \right.
\end{equation}
    with the condition
\begin{equation}
\label{e187}
    \begin{array}{c}
    \dot X^a\dot Y_a=0\sps
    \dot Z^a\dot T_a=0\sps
    \dot X^a\dot T_a=-\dot Z^a\dot Y_a
    \end{array}
\end{equation}
    then from the 8 equations, only 5 from them will be significant (the 6 unknowns and the 3 communication conditions (\ref{e187})). This means that the rectilinear generator  $\mathbb C\mathbb P_1$ of the quadric $\mathbb CQ_6$ will uniquely define the rectilinear generator $\mathbb C \mathbb P_1$ belonging to the quadric $\mathbb C \tilde Q_6$. In this case, the manifold of generators $\mathbb C \mathbb P_1 (\mathbb CQ_6)$ belonging to the same generator $\mathbb C\mathbb P_3(\mathbb CQ_6)$ defines the beam of generators $\mathbb C\mathbb P_1(\mathbb C\tilde Q_6)$ belonging to the quadric $\mathbb C \tilde Q_6$ (this beam is a cone). The center of the beam is determined by the system (\ref{e182}).
\item
    If from the system (\ref{e182}), we know only three equations
\begin{equation}
\label{e188}
    \left\{
    \begin{array}{lcl}
    ir^{ab}\dot Y_b & = & \dot X^a\sps\\
    ir^{ab}\dot T_b & = & \dot Z^a\sps\\
    ir^{ab}\dot N_b & = & \dot L^a
    \end{array}
    \right.
\end{equation}
    with the condition
\begin{equation}
\label{e189}
    \begin{array}{c}
    \dot X^a\dot Y_a=0\sps
    \dot Z^a\dot T_a=0\sps
    \dot L^a\dot N_a=0\sps\\
    \dot X^a\dot T_a=-\dot Z^a\dot Y_a\sps
    \dot X^a\dot N_a=-\dot L^a\dot Y_a\sps
    \dot Z^a\dot N_a=-\dot L^a\dot T_a
    \end{array}
\end{equation}
    then from the 12 equations, only 6 from them will be significant (the 6 unknowns and the 6 communication conditions (\ref{e189})). This means, that the 2-dimensional generator $\mathbb C\mathbb P_2$ of the quadric $\mathbb CQ_6$ will uniquely define the point of the quadric $\mathbb C \tilde Q_6$. At the same time, the manifold of generators $\mathbb C\mathbb P_2(\mathbb CQ_6)$ belonging to the same generator $\mathbb C\mathbb P_3(\mathbb CQ_6)$ uniquely determines the same point of the quadric $\mathbb C \tilde Q_6$. This point is determined by the system (\ref{e182}).
\end{enumerate}

\subsection{\texorpdfstring{The connection operators $\eta^A{}_{KL}$}{The connection operators for n=8}}$ $
    \indent Based on the foregoing, we consider the rectilinear generator of the quadric $\mathbb CQ_6$ defined by the bivector
\begin{equation}
\label{e190}
    \begin{array}{c}
    \hat R^{AB}=X_1^{\left[\right.A}X_2^{B\left.\right]}=
    \left(
    \begin{array}{cc}
    \dot X^a\dot Z^b-\dot X^b\dot Z^a & \dot X^a\dot T_d-\dot Y_d\dot Z^a \\
    \dot Y_c\dot Z^b-\dot T_c\dot X^b & \dot Y_c\dot T_d-\dot Y_d\dot T_c
    \end{array}
    \right)=\\ \\
    =\left(
    \begin{array}{cc}
    -2r^{\left[\right.a|k|}r^{b\left.\right]r}\dot Y_k\dot T_r &
    2ir^{ar}\dot Y_{\left[\right.r}\dot T_{b\left.\right]} \\
    i\varepsilon^{kbmn}r_{ck}\dot Y_{\left[\right.m}\dot T_{n\left.\right]}+
    \delta_c{}^b\dot X^k\dot T_k &
    2Y_{\left[\right.c}\dot T_{d\left.\right]} \\
    \end{array}
    \right)=\\ \\
    =\dot T_l\dot X^l\left(
    \begin{array}{cc}
    -\frac{1}{2}i\delta^a{}_kr^\gamma r_\gamma & r^{ar} \\
    r_{ck} & -i\delta_c{}^r
    \end{array}
     \right)\cdot\frac{1}{\dot T_l\dot X^l}\left(
    \begin{array}{cc}
    i\varepsilon^{kbmn}\dot Y_m\dot T_n  &
    0 \\
    -\delta_r{}^b r^{mn}\dot Y_m\dot T_n  &
    2iY_{\left[\right.r}T_{d\left.\right]}
    \end{array}
    \right):=R^A{}_KP^{KB}\spsd
    \end{array}
\end{equation}
    Put
\begin{equation}
\label{e191}
    R^{AB}:=\varepsilon^{BC}R^A{}_C=\dot T_l\dot X^l\left(
    \begin{array}{cc}
    r^{an} & -\frac{1}{2}i\delta^a{}_kr^\gamma r_\gamma \\
    -i\delta_c{}^n & r_{ck}
    \end{array}
    \right)\sps
\end{equation}
\begin{equation}
\label{e191a}
    R_{AB}:=\varepsilon_{AC}R^C{}_B=\dot T_l\dot X^l\left(
    \begin{array}{cc}
    r_{ck} & -i\delta_c{}^n \\
    -\frac{1}{2}i\delta^a{}_kr^\gamma r_\gamma & r^{an}
    \end{array}
    \right)\spsd
\end{equation}
    At the same time, the equation
\begin{equation}
\label{e192}
    R_A{}^C\hat R^{AB}=0
\end{equation}
    will be true that means that any spin-tensor $\hat R^{AB}$, representing the generator $\mathbb C \mathbb P_1 (\mathbb CQ_6)$, will contain the same spin-tensor $R^A{}_K$ in its expansion, wherein the second spin-tensor $P^{KB}$ of the decomposition will be responsible for the position of $\mathbb C \mathbb P_1$ in $\mathbb C \mathbb P_3$. Therefore, there is a reason to assign the bispinor $R^{AB}$ to the point of the quadric $\mathbb C \tilde Q_6$. This point is uniquely determined. In the transition to the space $\mathbb C\mathbb R^8$, we put
\begin{equation}
\label{e193a}
    \begin{array}{ccccccccc}
    r^{12} & = & \frac{1}{\sqrt 2}(v+iw)\sps &
    r^{13} & = & \frac{1}{\sqrt 2}(x+iy)\sps &
    r^{14} & = & \frac{i}{\sqrt 2}(t+iz)\sps \\
    r^{23} & = & \frac{i}{\sqrt 2}(iz-t)\sps &
    r^{24} & = & \frac{1}{\sqrt 2}(-x+iy)\sps&
    r^{34} & = & \frac{1}{\sqrt 2}(v-iw)\spsd
    \end{array}
\end{equation}
    Then we define \emph{homogeneous coordinates} of $\mathbb C\mathbb R^8$ as follows
\begin{equation}
\label{e193}
    \lambda=\left\{\begin{array}{ccccccccccccccl}
    R^{12} &:&  R^{13}  &:&  R^{14}  &:&  R^{23} &:& R^{24} &:& R^{34} &:&
    R^{15} &:& R^{51}\sps \\
    r^{12} &:& r^{13} &:& r^{14} &:& r^{23} &:& r^{24} &:& r^{34} &:&
    -\frac{1}{2}ir^\gamma r_\gamma &:& -i\sps
    \end{array}
    \right.
\end{equation}
\begin{equation}
\label{e194}
    \begin{array}{ccccccccc}
    R^{12} & = & \frac{1}{\sqrt 2}(V+iW)\sps &
    R^{13} & = & \frac{1}{\sqrt 2}(iY+X)\sps &
    R^{14} & = & \frac{1}{\sqrt 2}(iT-Z)\sps \\
    R^{23} & = & \frac{1}{\sqrt 2}(-Z-iT)\sps &
    R^{24} & = & \frac{1}{\sqrt 2}(-X+iY)\sps &
    R^{34} & = & \frac{1}{\sqrt 2}(V-iW)\sps \\
    R^{51} & = & S-iU\sps &
    R^{15} & = & \frac{1}{2}(iU+S)\sps &
    \end{array}
\end{equation}
    $$
    \begin{array}{cccccccc}
    R^{12} & = & -R^{21} & = & R^{78} & = & -R^{87}\sps \\
    R^{13} & = & -R^{31} & = & R^{86} & = & -R^{68}\sps \\
    R^{14} & = & -R^{41} & = & R^{67} & = & -R^{76}\sps \\
    R^{23} & = & -R^{32} & = & R^{58} & = & -R^{85}\sps \\
    R^{24} & = & -R^{42} & = & R^{75} & = & -R^{57}\sps \\
    R^{34} & = & -R^{43} & = & R^{56} & = & -R^{65}\sps \\
    R^{15} & = &  R^{26} & = & R^{37} & = &  R^{48}\sps \\
    R^{51} & = &  R^{62} & = & R^{73} & = &  R^{84}\sps
    \end{array}
    $$
\begin{equation}
\label{e195}
    \begin{array}{c}
    R^{AB}R_{AB}=8(R^{12}R^{34}-R^{13}R^{24}+R^{14}R^{23}+
    R^{15}R^{51})=\\ \\
    =4(T^2+V^2+U^2+S^2+W^2+X^2+Y^2+Z^2):=4nf(R)\\ \\
    R^{AB}R_{CB}=\frac{1}{2}nf(R)\delta_C{}^A
    \ \ \Rightarrow \ \ R^{AB}R_{AB}=4nf(R)\spsd
    \end{array}
\end{equation}
    For $(R_i)^\Lambda$, to define a generator of the quadric $\mathbb C\tilde Q_6$, it is necessary and sufficient to have the condition
\begin{equation}
\label{e197}
    G_{AB}R^A_iR^B_j=0\spsd
\end{equation}
    We define some connection operators $\eta_A{}^{BC}$ so that
\begin{equation}
\label{e198}
     R_A=\frac{1}{4}\eta_A{}^{BC}R_{BC}\sps
     R^A=\frac{1}{4}\eta^A{}_{BC}R^{BC}\spsd
\end{equation}
     Then these operators satisfy the reduced Clifford equation
\begin{equation}
\label{e199}
     G_{AB}\delta_K{}^L=\eta_{AK}{}^R\eta_B{}^L{}_R+
     \eta_{BK}{}^R\eta_A{}^L{}_R\spsd
\end{equation}
     Therefore, we can define the operators $\gamma_A$ as
\begin{equation}
\label{e2o}
     \gamma_A:=\sqrt{2}\left(
     \begin{array}{cc}
     0 & \sigma_A \\
     \eta_A & 0
     \end{array}
     \right)\sps
     \eta_A:=\eta_A{}^{KR}\sps
     \sigma_A:=(\eta_A)^T{}_{RL}\spsd
\end{equation}
    Then $\gamma_A$ will satisfy the Clifford equation
\begin{equation}
\label{e3o}
     \gamma_A\gamma_B+\gamma_B\gamma_A=2G_{AB} I\spsd
\end{equation}
    At the same time, the lowering and raising of single indices is done by using the metric spin-tensor $\varepsilon_{AB}$, determined above. Define
\begin{equation}
\label{e4o}
     \varepsilon_{PQRT}:=\eta^A{}_{PQ}\eta^B{}_{RT}G_{AB}\sps
     \varepsilon_{PQRT}=\varepsilon_{RTPQ}
\end{equation}
     that will give another metric spin-tensor $\varepsilon_{PQRT}$ with which the help we can raise and lower a pair of indices. Indeed, if the equation (\ref{e199}) contracts with $\delta_L{}^K$ then we obtain
\begin{equation}
\label{e5o}
     G_{AB}=\frac{1}{4}\eta_A{}^{PQ}\eta_{BPQ}\spsd
\end{equation}
    Contract (\ref{e4o}) with $\eta_C{}^{PQ}$ and obtain
\begin{equation}
\label{e6o}
     \eta_{CRT}=\frac{1}{4}\eta_C{}^{PQ}\varepsilon_{PQRT}\spsd
\end{equation}
    Now, the identity (\ref{e199}) can be rewritten as (contracting with  $\eta^A{}_{ST}\eta^B{}_{PQ}$)
\begin{equation}
\label{e7o}
     \varepsilon_{STPQ}\delta_K{}^L=
     \varepsilon_{STK}{}^R\varepsilon_{PQ}{}^L{}_R+
     \varepsilon_{PQK}{}^R\varepsilon_{ST}{}^L{}_R\spsd
\end{equation}
     Contracting (\ref{e7o}) with $\delta_L{}^K$, we obtain
\begin{equation}
\label{e8o}
     \varepsilon_{STPQ}=
     \frac{1}{4}\varepsilon_{ST}{}^{KR}\varepsilon_{PQKR}\spsd
\end{equation}
     The result of the applying for the two metric spin-tensors should be the same
\begin{equation}
\label{e9o}
     \varepsilon_{PQ}=\frac{1}{4}\varepsilon_{PQRT}\varepsilon^{RT}\spsd
\end{equation}
     Thus, in the presence of the 3 metric tensor $G_{AB},\varepsilon_{PQRT},\varepsilon_{RT}$, we obtain the equation (\ref{e199}) from the Clifford equation (\ref{e3o}). Then $\eta_A{}^{BC}$ will be the generators of the corresponding Clifford algebra. Now, we will lower the index L in (\ref{e7o}) and contract this equation with the $\varepsilon^{ST}$. Then we will obtain
\begin{equation}
\label{e10o}
     \varepsilon_{PQ(KL)}=\frac{1}{2}\varepsilon_{PQ}\varepsilon_{KL}\sps
     \varepsilon_{[PQ](KL)}=0
\end{equation}
     that will lead to the identity
\begin{equation}
\label{e11o}
    \eta_A{}^{(MN)}=\frac{1}{8}\eta_A{}^{KL}\varepsilon_{KL}\varepsilon^{MN}\spsd
\end{equation}
    If the matrixes of the tensors $g_{\Lambda \Psi}$ and  $\varepsilon_{KL}$ have the form
\begin{equation}
\label{e12o}
    \begin{array}{c}
    \parallel G_{AB} \parallel=\left(
    \begin{array}{cccccccc}
    1 & 0 & 0 & 0 & 0 & 0 & 0 & 0\\
    0 & 1 & 0 & 0 & 0 & 0 & 0 & 0\\
    0 & 0 & 1 & 0 & 0 & 0 & 0 & 0\\
    0 & 0 & 0 & 1 & 0 & 0 & 0 & 0\\
    0 & 0 & 0 & 0 & 1 & 0 & 0 & 0\\
    0 & 0 & 0 & 0 & 0 & 1 & 0 & 0\\
    0 & 0 & 0 & 0 & 0 & 0 & 1 & 0\\
    0 & 0 & 0 & 0 & 0 & 0 & 0 & 1\\
    \end{array}
    \right)\sps
    \parallel\varepsilon_{KL}\parallel=\left(
    \begin{array}{cccccccc}
    0 & 0 & 0 & 0 & 1 & 0 & 0 & 0\\
    0 & 0 & 0 & 0 & 0 & 1 & 0 & 0\\
    0 & 0 & 0 & 0 & 0 & 0 & 1 & 0\\
    0 & 0 & 0 & 0 & 0 & 0 & 0 & 1\\
    1 & 0 & 0 & 0 & 0 & 0 & 0 & 0\\
    0 & 1 & 0 & 0 & 0 & 0 & 0 & 0\\
    0 & 0 & 1 & 0 & 0 & 0 & 0 & 0\\
    0 & 0 & 0 & 1 & 0 & 0 & 0 & 0\\
    \end{array}
    \right)\sps
    \end{array}
\end{equation}
    then the significant coordinates of the operators $\eta^\Lambda{}_{KL}$ in some basis would be such
\begin{equation}
\label{e13o}
    \begin{array}{llll}
    \eta^2{}_{12}=\frac{1}{\sqrt{2}}\sps  &
    \eta^2{}_{34}=\frac{1}{\sqrt{2}}\sps  &
    \eta^5{}_{12}=-\frac{i}{\sqrt{2}}\sps &
    \eta^5{}_{34}=\frac{i}{\sqrt{2}}\sps  \\
    \eta^2{}_{78}=\frac{1}{\sqrt{2}}\sps  &
    \eta^2{}_{56}=\frac{1}{\sqrt{2}}\sps  &
    \eta^5{}_{78}=-\frac{i}{\sqrt{2}}\sps &
    \eta^5{}_{56}=\frac{i}{\sqrt{2}}\sps  \\
    \eta^1{}_{14}=-\frac{i}{\sqrt{2}}\sps &
    \eta^1{}_{23}=\frac{i}{\sqrt{2}}\sps  &
    \eta^8{}_{14}=-\frac{1}{\sqrt{2}}\sps &
    \eta^8{}_{23}=-\frac{1}{\sqrt{2}}\sps \\
    \eta^1{}_{67}=-\frac{i}{\sqrt{2}}\sps &
    \eta^1{}_{58}=\frac{i}{\sqrt{2}}\sps  &
    \eta^8{}_{67}=-\frac{1}{\sqrt{2}}\sps &
    \eta^8{}_{58}=-\frac{1}{\sqrt{2}}\sps\\
    \eta^7{}_{13}=-\frac{i}{\sqrt{2}}\sps &
    \eta^7{}_{24}=-\frac{i}{\sqrt{2}}\sps &
    \eta^6{}_{13}=\frac{1}{\sqrt{2}}\sps  &
    \eta^6{}_{24}=-\frac{1}{\sqrt{2}}\sps \\
    \eta^7{}_{68}=\frac{i}{\sqrt{2}}\sps  &
    \eta^7{}_{57}=\frac{i}{\sqrt{2}}\sps  &
    \eta^6{}_{68}=-\frac{1}{\sqrt{2}}\sps &
    \eta^6{}_{57}= \frac{1}{\sqrt{2}}\sps \\
    \eta^4{}_{15}=\frac{1}{\sqrt{2}}\sps  &
    \eta^4{}_{51}=\frac{1}{\sqrt{2}}\sps  &
    \eta^3{}_{15}=-\frac{i}{\sqrt{2}}\sps &
    \eta^3{}_{51}=\frac{i}{\sqrt{2}}\sps  \\
    \eta^4{}_{26}=\frac{1}{\sqrt{2}}\sps  &
    \eta^4{}_{62}=\frac{1}{\sqrt{2}}\sps  &
    \eta^3{}_{26}=-\frac{i}{\sqrt{2}}\sps &
    \eta^3{}_{62}=\frac{i}{\sqrt{2}}\sps  \\
    \eta^4{}_{37}=\frac{1}{\sqrt{2}}\sps  &
    \eta^4{}_{73}=\frac{1}{\sqrt{2}}\sps  &
    \eta^3{}_{37}=-\frac{i}{\sqrt{2}}\sps &
    \eta^3{}_{73}=\frac{i}{\sqrt{2}}\sps  \\
    \eta^4{}_{48}=\frac{1}{\sqrt{2}}\sps  &
    \eta^4{}_{84}=\frac{1}{\sqrt{2}}\sps  &
    \eta^3{}_{48}=-\frac{i}{\sqrt{2}}\sps &
    \eta^3{}_{84}=\frac{i}{\sqrt{2}}
    \end{array}
\end{equation}
    so that in an abbreviated form, we can rewrite \ref{e13o} as
\begin{equation}
\label{e14o}
    \eta_A{}^{MN}=\left(
    \begin{array}{cc}
    \eta_\alpha{}^{ab} & \lambda\delta^a{}_d \\
    \mu\delta_c{}^b & \eta_{\alpha cd}
    \end{array}
    \right)\sps
    \eta_{\alpha cd}=\frac{1}{2}\varepsilon_{abcd}\eta_\alpha{}^{ab}\sps
\end{equation}
    where $\varepsilon_{abcd}$ is the quadrivector (\ref{e2}). In fact, here we used the same basis as in the formula (\ref{e194}). \\
    \indent Next, we consider the bivector $\hat R^{AB}$ of the form (\ref{e190}) such that its vectors $X_1{}^A,X_2{}^A$, defined by the formula (\ref{e175}), satisfy the system (\ref{e182}). By this, the identities
\begin{equation}
\label{e15o}
    \begin{array}{c}
    ir^{ab}\dot Y_b=\dot X^a\sps
    ir_{ab}\dot Z^b=pf(r)\dot T_a\sps\\[2ex]
    \frac{1}{2}ir_{cd}\varepsilon^{abcd}\dot Y_b\varepsilon_{aklm}=
    \dot X^a\varepsilon_{aklm}\sps\\[2ex]
    3ir_{\left[\right. kl}\dot Y_{m\left.\right]}=
    \dot X^a\varepsilon_{aklm}\sps\\[2ex]
    ir_{kl}\dot Y_m+2ir_{m\left[\right. k}\dot Y_{l\left.\right]}=
    \dot X^a\varepsilon_{aklm}
    \end{array}
\end{equation}
    is determined. We contract the last identity with $\dot Z^m$ and obtain
\begin{equation}
\label{e16o}
    ir_{kl}\dot Y_m\dot Z^m=\dot X^a\dot Z^a\varepsilon_{klmn}+
    pf(r)\cdot 2\dot T_{\left[\right. k}\dot Y_{l\left.\right]}\spsd
\end{equation}
    Thus, the bispinor $\hat R^{AB}$ determines the rectilinear generator $\mathbb C\mathbb P_1 (\mathbb CQ_6)$ belonging to the planar generator $\mathbb C\mathbb P_3 (\mathbb CQ_6)$ which determines the point of the quadric $\mathbb C\tilde Q_6$. $R^{AB}$ will be the coordinates of this point. We define the spin-tensor $\hat{\hat R}^{AB}$
\begin{equation}
\label{e17o}
    \hat{\hat R}^{AB}:=\hat R^{AK} \hat P_K{}^B\sps
    \hat P_K{}^B:=\left(
    \begin{array}{cc}
    2\delta_m{}^k & 0\\
    0 & -2pf(r)i \delta^n{}_r
    \end{array}
    \right)\spsd
\end{equation}
    The spin-tensor $\hat{\hat R}^{AB}$ will continue to represent the rectilinear generator $\mathbb C\mathbb P_1 (\mathbb CQ_6)$ belonging to the planar generator $\mathbb C\mathbb P_3 (\mathbb CQ_6)$. We apply the operator $\eta^\Lambda{}_{KL}$ to the spin-tensor $\hat{\hat R}^{AB}$ and obtain
\begin{equation}
\label{e18o}
    \eta^A{}_{KL}\hat{\hat R}^{KL}=
    \eta^A{}_{KL}\dot Y_m\dot Z^m
    \left(
    \begin{array}{cc}
    r^{an} & -\frac{1}{2}i\delta^a{}_kr^\gamma r_\gamma \\
    -i\delta_c{}^n & r_{ck}
    \end{array}
    \right)=
    \eta^A{}_{KL}R^{KL}\spsd
\end{equation}
    Thus, the operators $\eta^\Lambda{}_{KL}$ take each rectilinear generator $\mathbb C \mathbb P_1 (\mathbb CQ_6)$, belonging to the planar generator $\mathbb C \mathbb P_3 (\mathbb CQ_6)$, to the same generator $\mathbb C \mathbb P_3 (\mathbb CQ_6)$, and this determines the point of the quadric $\mathbb C \tilde Q_6$. In homogeneous coordinates, the spin-tensor $R^{KL}$ determines the coordinates of the point R of the space $\mathbb C\mathbb R^8$.\\
    \indent We contract the identity (\ref{e199}) with the $\delta_L{}^B$
\begin{equation}
\label{e19o}
    G_{AK}=S_A{}^M\varepsilon_{KM}\sps
    S_A{}^M:=\eta_A{}^{MR}\eta_L{}^L{}_R+\eta_A{}^L{}_R\eta_L{}^{MR}\spsd
\end{equation}
    Since $G_{AB},\varepsilon_{KL}$ have the form (\ref{e12o}) then $\parallel S_A{}^M\parallel $ has the form
\begin{equation}
\label{e20o}
    \parallel S_A{}^M \parallel=\left(
    \begin{array}{cccccccc}
    0 & 0 & 0 & 0 & 1 & 0 & 0 & 0\\
    0 & 0 & 0 & 0 & 0 & 1 & 0 & 0\\
    0 & 0 & 0 & 0 & 0 & 0 & 1 & 0\\
    0 & 0 & 0 & 0 & 0 & 0 & 0 & 1\\
    1 & 0 & 0 & 0 & 0 & 0 & 0 & 0\\
    0 & 1 & 0 & 0 & 0 & 0 & 0 & 0\\
    0 & 0 & 1 & 0 & 0 & 0 & 0 & 0\\
    0 & 0 & 0 & 1 & 0 & 0 & 0 & 0\\
    \end{array}
    \right)\spsd
\end{equation}
    Therefore, $S_A{}^M$ is an involution tensor, and the quadrics are \emph{B-cylinders}. \\
    \indent Let us find out what is the number of a family which comprises this generator $\mathbb C\mathbb P_3 (\mathbb CQ_6)$. For this purpose, we consider the conditions
\begin{equation}
\label{e21o}
    \begin{array}{c}
    \varepsilon_{AB}X_i{}^AX_j{}^B=0\sps\\[2ex]
    X^{ABCD}:=\varepsilon^{ijkl}X_i{}^AX_j{}^BX_k{}^CX_l{}^D\sps
    \end{array}
\end{equation}
    where $\varepsilon^{IJKL}$ is a 4-vector antisymmetrical in all indices. Also we consider the 8-vector $e_{ABCDKLMN}$ antisymmetrical in all indices too. If in the condition
\begin{equation}
\label{e22o}
    \frac{1}{24}e_{ABCDKLMN}X^{ABCD}=\rho\varepsilon_{KR}\varepsilon_{LT}\varepsilon_{MU}
    \varepsilon_{NV}X^{RTUV}\sps \rho^2=1\sps
\end{equation}
    $\rho=1$, then we say that the planar generator $\mathbb C\mathbb P_3(\mathbb CQ_6)$ belongs to family I, and if $\rho=-1$, then the planar generator belongs to family II. In our case
\begin{equation}
\label{e23o}
    X_i{}^A=(\dot X_i{}^a,\dot Y_{ib})
\end{equation}
    and then
\begin{equation}
\label{e24o}
    \varepsilon^{ijkl}X_i{}^1X_j{}^2X_k{}^3X_l{}^4=\rho
    \varepsilon^{ijkl}X_i{}^1X_j{}^2X_k{}^3X_l{}^4\spsd
\end{equation}
    The spin-tensor $\varepsilon_{KL}$ has the form (\ref{e12o}). Whence, $\rho=1$. This means that our generators should belong to family I.\\
    \indent In addition, there is the tensor $\tilde S_K{}^L$
\begin{equation}
\label{e25o}
    \parallel \tilde S_K{}^L \parallel=\frac{1}{\sqrt{2}}\left(
    \begin{array}{cccccccc}
    i & 0 & 0 & 0 &-i & 0 & 0 & 0\\
    0 & i & 0 & 0 & 0 &-i & 0 & 0\\
    0 & 0 & i & 0 & 0 & 0 &-i & 0\\
    0 & 0 & 0 & i & 0 & 0 & 0 &-i\\
    1 & 0 & 0 & 0 & 1 & 0 & 0 & 0\\
    0 & 1 & 0 & 0 & 0 & 1 & 0 & 0\\
    0 & 0 & 1 & 0 & 0 & 0 & 1 & 0\\
    0 & 0 & 0 & 1 & 0 & 0 & 0 & 1\\
    \end{array}
    \right)\sps
     det \parallel\tilde S_K{}^L\parallel=1
\end{equation}
    such that
\begin{equation}
\label{e26o}
    \varepsilon_{AB}\tilde S_K{}^A\tilde S_L{}^B=G_{KL}\spsd
\end{equation}
    Therefor, we will obtain
\begin{equation}
\label{e27o}
    X_i{}^A=\tilde S_K{}^AR_i{}^K\sps
\end{equation}
    and (\ref{e21o}) can be rewritten as
\begin{equation}
\label{e28o}
    \begin{array}{c}
    G_{AB}R_i{}^AR_i{}^K=0\sps\\[2ex]
    X^{ABCD}=\tilde S_K{}^A\tilde S_L{}^B\tilde S_N{}^C\tilde S_M{}^DR^{KLNM}\spsd
    \end{array}
\end{equation}
    Since $\rho = 1$ then we have the identities
\begin{equation}
\label{e29o}
    \begin{array}{c}
    \frac{1}{24}e_{ABCDKLMN}X^{ABCD}=\varepsilon_{KR}\varepsilon_{LT}\varepsilon_{MU}
    \varepsilon_{NV}X^{RTUV}\sps\\[2ex]
    \frac{1}{24}e_{ABCDKLMN}\tilde S_U{}^A \tilde S_S{}^B\tilde S_T{}^C\tilde S_V{}^D
    R^{USTV}=\\
    =\varepsilon_{KP}\varepsilon_{LQ}\varepsilon_{MX}
    \varepsilon_{NY}\tilde S_G{}^P \tilde S_H{}^Q\tilde S_Z{}^X\tilde S_W{}^Y
    R^{GHZW}\\[2ex]
    \frac{1}{24}e_{ABCDKLMN}\tilde S_U{}^A \tilde S_S{}^B\tilde S_T{}^C\tilde S_V{}^D
    \tilde S_P{}^K \tilde S_Q{}^L\tilde S_X{}^M\tilde S_Y{}^NR^{USTV}=\\
    =G_{PG}G_{QM}G_{XZ}G_{YW}R^{GHZW}\sps\\[2ex]
    \frac{1}{24}det \parallel S_P{}^A \parallel e_{PQXYUSTV}R^{USTV}=G_{PG}G_{QM}G_{XZ}G_{YW}R^{GHZW}\sps\\[2ex]
    \frac{1}{24}e_{PQXYUSTV}R^{USTV}=G_{PG}G_{QM}G_{XZ}G_{YW}R^{GHZW}\spsd
    \end{array}
\end{equation}
    This shows that the generator $\mathbb C\mathbb P_3 (\mathbb C\tilde Q_6)$ must belong to family I. In order to achieve the same result for the generator of family II we should choose the spin-tensor
\begin{equation}
\label{e30o}
    \tilde\varepsilon_{KL}:=\sqrt{i}\varepsilon_{KL}
\end{equation}
    as the metric spin-tensor by means of which single indices lower and raise.

\subsection{\texorpdfstring{Correspondence $\mathbb C\tilde Q_6\longmapsto \mathbb CQ_6$}{Correspondence: C'Q to CQ}}$ $
    \indent Applying the operators $\eta_\Lambda{}^{KL}$ to (\ref{e197}), we obtain
\begin{equation}
\label{e199a}
      \begin{array}{c}
      R_i^{AB}R_{j\ AB}=0\psps\Leftrightarrow\psps
      ((R_i){}^{AB}-(R_j){}^{AB})((R_k){}_{AB}-(R_l){}_{AB})=0
      \Leftrightarrow\\[2ex]
      ((r_i){}^{ab}-(r_j){}^{ab})((r_k){}_{ab}-(r_l){}_{ab})=0\spsd
      \end{array}
\end{equation}
    Here as usual, i, j is the number of basis points. This defines the system
\begin{equation}
\label{e200}
    \left\{
    \begin{array}{lcl}
    i(r_1){}^{ab}\dot Y_b & = & \dot X^a\sps\\
    i(r_2){}^{ab}\dot Y_b & = & \dot X^a\sps\\
    i(r_3){}^{ab}\dot Y_b & = & \dot X^a\sps\\
    i(r_4){}^{ab}\dot Y_b & = & \dot X^a\sps
    \end{array}
    \right.
    \psps\Leftrightarrow\psps
    \left\{
    \begin{array}{lcl}
    i((r_1){}^{ab}-(r_2){}^{ab})\dot Y_b & = & 0\sps\\
    i((r_1){}^{ab}-(r_3){}^{ab})\dot Y_b & = & 0\sps\\
    i((r_3){}^{ab}-(r_4){}^{ab})\dot Y_b & = & 0\sps\\
    i(r_1){}^{ab}\dot Y_b & = & \dot X^a\spsd
    \end{array}
    \right.
\end{equation}
    Next we consider only the right system. It is constructed as follows. Always there is a covector $Y_a$ which resets 3 different simple bivectors. This statement is reduced to the existence of a covector orthogonal to three linearly independent vectors since any simple bivector is decomposed by the formula $r^{ab}=2P^{\left[\right.a} Q^{b\left.\right]}$. From the fourth equation, the spinor $X^a$ is determined. Therefore, such the system is always defined. On the other hand, on the basis of the fact that all $r_i{}^{ab}$ have the form (\ref{e169}) (for the fixed $\lambda_1, \lambda_2, \lambda_3$), the equality
\begin{equation}
\label{e201}
     ((r_i){}^{ab}-(r_j){}^{ab})((r_k){}_{ab}-(r_l){}_{ab})=0
\end{equation}
    is executed.

\begin{enumerate}
\item
    So, let us know the last equation of the system (\ref{e200})
\begin{equation}
\label{e202}
     i{r_1}^{ab}\dot Y_b  =  \dot X^a
\end{equation}
     then we have the 4 equations, all of which will be significant. Since, we have the eight unknowns $(X^a,Y_b)$ for fixed $(r_1){}^{ab}$ then the point of the quadric $\mathbb C \tilde Q_6$ uniquely defines the 3-dimensional planar generator $\mathbb C\mathbb P_3 (\mathbb CQ_6)$.
\item
     If we know all the equations of the system (\ref{e200}) with the conditions
\begin{equation}
\label{e204}
     \begin{array}{c}
     ((r_1){}^{ab}-(r_2){}^{ab})((r_1){}_{ab}-(r_2){}_{ab})=0\sps
     ((r_1){}^{ab}-(r_3){}^{ab})((r_1){}_{ab}-(r_3){}_{ab})=0\sps \\
     ((r_3){}^{ab}-(r_4){}^{ab})((r_3){}_{ab}-(r_4){}_{ab})=0\sps \\
      (r_1){}^{ab}(r_2){}_{ab}=0\sps
      (r_1){}^{ab}(r_3){}_{ab}=0\sps
      (r_1){}^{ab}(r_4){}_{ab}=0\sps \\
      (r_2){}^{ab}(r_3){}_{ab}=0\sps
      (r_2){}^{ab}(r_4){}_{ab}=0\sps
      (r_3){}^{ab}(r_4){}_{ab}=0
     \end{array}
\end{equation}
     then from the 16 equations, only 7 from them will be significant (the 8 unknowns and the 9 communication conditions (\ref{e204})). Therefore, the generator $\mathbb C \mathbb P_3 (\mathbb C \tilde Q_6)$ will uniquely define the point of the quadric $\mathbb CQ_6$.
\item
     If we know the three equations of system (\ref{e200})
\begin{equation}
\label{e205}
    \left\{
    \begin{array}{lcl}
    i((r_1){}^{ab}-(r_2){}^{ab})\dot Y_b & = & 0\sps\\
    i((r_1){}^{ab}-(r_3){}^{ab})\dot Y_b & = & 0\sps\\
    i(r_1){}^{ab}\dot Y_b & = & \dot X^a
    \end{array}
    \right.
\end{equation}
    with the conditions
\begin{equation}
\label{e206}
     \begin{array}{c}
     ((r_1){}^{ab}-(r_2){}^{ab})((r_1){}_{ab}-(r_2){}_{ab})=0\sps
     ((r_1){}^{ab}-(r_3){}^{ab})((r_1){}_{ab}-(r_3){}_{ab})=0\sps\\
      (r_1){}^{ab}(r_2){}_{ab}=0\sps
      (r_1){}^{ab}(r_3){}_{ab}=0\sps
      (r_2){}^{ab}(r_3){}_{ab}=0
     \end{array}
\end{equation}
     then from the 12 equations, only 7 from them will be significant (the 8 unknowns and the 5 communication conditions (\ref{e206})). This means that the generator $\mathbb C \mathbb P_2 (\mathbb C \tilde Q_6)$ will uniquely define the point of the quadric $\mathbb CQ_6$. In this case, the manifold of generators $\mathbb C \mathbb P_2 (\mathbb C \tilde Q_6)$, belonging to the generator $\mathbb C \mathbb P_3 (\mathbb C \tilde Q_6)$, uniquely defines the point of the quadric $\mathbb CQ_6$.
\item
     If we know the two equations of the system (\ref{e200})
\begin{equation}
\label{e207}
    \left\{
    \begin{array}{lcl}
    i((r_1){}^{ab}-(r_2){}^{ab})\dot Y_b & = & 0\sps\\
    i(r_1){}^{ab}\dot Y_b & = & \dot X^a\\
    \end{array}
    \right.
\end{equation}
     with the conditions
\begin{equation}
\label{e208}
     ((r_1){}^{ab}-(r_2^{ab})((r_1){}_{ab}-(r_2){}_{ab})=0\sps
      (r_1){}^{ab}(r_2){}_{ab}=0
\end{equation}
     then from the 8 equations, only 6 from them will be significant (the 8 unknowns and the 2 communication conditions \ref{e208})). This means that the rectilinear generator  $\mathbb C\mathbb P_1$ of the quadric $\mathbb C\tilde Q_6$ will uniquely define the rectilinear generator $\mathbb C \mathbb P_1$ belonging to the quadric $\mathbb CQ_6$. In this case, the manifold of generators $\mathbb C \mathbb P_1 (\mathbb C\tilde Q_6)$, belonging to the same generator $\mathbb C\mathbb P_3(\mathbb C\tilde Q_6)$, defines the beam of generators $\mathbb C\mathbb P_1(\mathbb CQ_6)$ belonging to the quadric $\mathbb CQ_6$ (this beam is a cone). The center of the beam is determined by the system (\ref{e200}).
\end{enumerate}

\subsection{Theorem on two quadrics}$ $
     \indent Thus, the theorem is proved
\begin{theorem}
\label{theorem9}(The triality principle for two B-cylinders).\\
     In the projective space $\mathbb C \mathbb P_7$, there are two quadrics (two B-cylinders) with the following main properties:
\begin{enumerate}
    \item The planar generator $\mathbb C \mathbb P_3$ of a one quadric will define one-to-one the point R on the other quadric.
    \item The planar generator $\mathbb C \mathbb P_2$ of a one quadric will uniquely define the point R on the other quadric. But the point R of the second quadric can be associated to the manifold of planar generators $\mathbb C \mathbb P_2$ belonging to the same planar generator $\mathbb C \mathbb P_3$ of the first quadric.
    \item The rectilinear generator $\mathbb C \mathbb P_1$ of a one quadric will define one-to-one the rectilinear generator $\mathbb C \mathbb P_1$ of the other quadric. And all the rectilinear generators belonging to the same planar generator $\mathbb C \mathbb P_3$ of the first quadric define the beam centered at R belonging to the second quadric.
\end{enumerate}
\end{theorem}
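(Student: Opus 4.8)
The plan is to assemble the two correspondences already constructed into a single symmetric statement, so I would first fix the concrete models of the two quadrics produced by the bitwistor equation. The quadric $\mathbb CQ_6\subset\mathbb C\mathbb P_7$ is the isotropic cone of the form (\ref{e170}), whose $3$-dimensional planar generators are exactly the loci $X^a=0$ of the common solution (\ref{e158}), i.e. the subspaces (\ref{e162}) indexed by a bivector $r^{ab}$; the quadric $\mathbb C\tilde Q_6$ is the isotropic cone (\ref{e179}) obtained by regarding $r^{ab}$ as a point of $\mathbb C\mathbb A_6$ embedded in $\mathbb C\mathbb R^8$ through the homogeneous coordinates (\ref{e193}). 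With both quadrics realised on the same data $(\dot X^a,\dot Y_b,r^{ab})$, every incidence claim reduces to a rank count on the linear system relating the base points of a generator of one quadric to the bivector representing a point of the other.

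Next I would prove Property 1. A generator $\mathbb C\mathbb P_3\subset\mathbb CQ_6$ is spanned by four points $X_i^A=(\dot X_i^a,\dot Y_{ib})$, which by (\ref{e174})--(\ref{e175}) must all satisfy the system (\ref{e182}). The sixteen equations in the six unknowns $r^{ab}$ are constrained by the ten compatibility relations (\ref{e183}) expressing that the points lie on a common isotropic generator, so exactly six equations remain significant and $r^{ab}$ --- hence a single point of $\mathbb C\tilde Q_6$ --- is uniquely determined. Conversely, for fixed $r^{ab}$ the single equation (\ref{e184}) with the incidence condition (\ref{e185}) has a four-parameter solution space, recovering the generator $\mathbb C\mathbb P_3$; this is precisely the content of the common solution (\ref{e169}). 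Bijectivity then follows because the transformation (\ref{e158}) is invertible and the connecting operators introduced below are regular.

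Then I would treat Properties 2 and 3 by truncating the system (\ref{e182}) (and symmetrically (\ref{e200})) to fewer base points and repeating the count. Retaining two equations leaves, after the compatibility conditions (\ref{e187}), five significant equations, giving the line-to-line map $\mathbb C\mathbb P_1\leftrightarrow\mathbb C\mathbb P_1$; retaining three equations together with (\ref{e189}) leaves six, giving the $\mathbb C\mathbb P_2\to$ point map. The simplicity of the relevant bivectors, namely the decomposition $r^{ab}=2P^{[a}Q^{b]}$ guaranteed by the Pfaffian vanishing (\ref{e164}) together with Lemma \ref{lemma1} and Corollary \ref{corollary1}, is what makes the surplus equations compatible; the same decomposition shows that all lines $\mathbb C\mathbb P_1$ inside a fixed generator $\mathbb C\mathbb P_3$ share the centre $R$ and therefore sweep out the cone (beam) asserted in Property 3.

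Finally I would make the correspondence manifestly symmetric, which both justifies the ``two B-cylinders'' terminology and yields triality. For this I would introduce the connecting operators $\eta^A{}_{KL}$ of (\ref{e198}) that send the spin-tensor $R^{KL}$ built from a generator (\ref{e190})--(\ref{e191}) to the point $R^A$ of $\mathbb C\tilde Q_6$, verify that they satisfy the reduced Clifford equation (\ref{e199}), and read off from (\ref{e19o})--(\ref{e20o}) that $S_A{}^M$ is an involution, so each quadric is a cylinder over the other and the roles of $\mathbb CQ_6$ and $\mathbb C\tilde Q_6$ are interchangeable. The main obstacle is the systematic bookkeeping of \emph{significant} versus \emph{communication} equations in each truncation of the system: one must show that the rank of each linear system is exactly the claimed value, uniformly across all cases, and in Property 3 one must argue geometrically that the family of lines lying in a fixed $\mathbb C\mathbb P_3$ maps onto a full cone rather than a lower-dimensional locus. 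Everything else is the assembly of the two already-established correspondences and the regularity of $\eta^A{}_{KL}$.
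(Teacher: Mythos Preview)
Your proposal is correct and follows essentially the same route as the paper: the theorem is stated only after Sections~5.4--5.6 have already carried out exactly the rank counts you describe on the systems (\ref{e182}) and (\ref{e200}), introduced the operators $\eta^A{}_{KL}$ satisfying (\ref{e199}), and read off the B-cylinder involution from (\ref{e19o})--(\ref{e20o}). Your identification of the bookkeeping of significant versus communication equations as the main technical point is precisely what the paper does case by case.
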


    This theorem is actually the generalization of the Klein correspondence. Prove this.
\begin{proof}$ $\\
    On the quadric $\mathbb CQ_6$, we consider only those generators which have the form
\begin{equation}
\label{e209}
     X^A=(0,Y_b)\spsd
\end{equation}
    The manifold of such the generators is diffeomorphic to $\mathbb C \mathbb P_3$. In this case, each generator can be associated to the point of the quadric $\mathbb CQ_4 \subset \mathbb C \tilde Q_6$. According to the system (\ref{e182}), the first equation of it takes the form
\begin{equation}
\label{e210}
      r^{ab}\dot Y_b=0\spsd
\end{equation}
    Until the end of the proof, we set $\bf{A,B,A',B',...}$$=\overline{1,2}$. In addition, we consider the spinor representation of a twistor according to \cite[v.2, p. 49, eq. (6.1.24) and p.65, (6.2.18)(eng)]{Penrose1}.
 \begin{equation}
\label{e211}
     \begin{array}{c}
     \dot Y_b=(\dot \pi_{\mbox{\scriptsize\bf B}},
     \dot{\bar\omega}^{{\mbox{\scriptsize\bf B}}'})\sps\\ \\
     r^{ab}=const\left(
     \begin{array}{cc}
     -\frac{1}{2}\varepsilon^{\mbox{\scriptsize\bf AB}}r_cr^c &
     ir^{\mbox{\scriptsize\bf A}}{}_{\mbox{\scriptsize\bf B}'} \\[1.2ex]
     -i\bar r_{{\mbox{\scriptsize\bf A}}'}{}^{\mbox{\scriptsize\bf B}} &
     \bar\varepsilon_{{\mbox{\scriptsize\bf A}}'{\mbox{\scriptsize\bf B}}'}
     \end{array}
     \right)\spsd
     \end{array}
\end{equation}
     Therefore, the equation $R^{ab}Y_b=0$ can be rewritten as a system of two equations
\begin{equation}
\label{e212}
     \left\{
     \begin{array}{ccc}
     -\frac{1}{2}\varepsilon^{\mbox{\scriptsize\bf AB}}r_cr^c
     \dot \pi_{\mbox{\scriptsize\bf B}}+
     ir^{\mbox{\scriptsize\bf A}}{}_{{\mbox{\scriptsize\bf B}}'}
     \dot{\bar\omega}^{{\mbox{\scriptsize\bf B}}'} & = & 0\sps \\[1.2ex]
     -i\bar r_{{\mbox{\scriptsize\bf A}}'}{}^{\mbox{\scriptsize\bf B}}
     \dot \pi_{\mbox{\scriptsize\bf B}}+
     \bar\varepsilon_{{\mbox{\scriptsize\bf A}}'{\mbox{\scriptsize\bf B}}'}
     \dot{\bar\omega}^{{\mbox{\scriptsize\bf B}}'} & = & 0\sps
     \end{array}
     \right.
\end{equation}
     and only one of them will be significant
\begin{equation}
\label{e213}
     ir^{{\mbox{\scriptsize\bf AA}}'}
     \dot{\bar\pi}_{{\mbox{\scriptsize\bf A}}'}=
     \dot \omega^{\mbox{\scriptsize\bf A}}\spsd
\end{equation}
    Here we use the metric spinors $\bar\varepsilon_{\mbox{\scriptsize\bf A'B'}}, \varepsilon_{\mbox{\scriptsize\bf AB}}$ with the help of which  spinor indices raise and lower. This spinors pass each other by the conjugation. This defines the system
\begin{equation}
\label{e215}
     \begin{array}{c}
     \left\{
     \begin{array}{ccc}
     ir^{\mbox{\scriptsize\bf AA}'}
     \dot{\bar\pi}_{\mbox{\scriptsize\bf A}'} & = &
     \dot \omega^{\mbox{\scriptsize\bf A}}\sps\\[1.2ex]
     ir^{\mbox{\scriptsize\bf AA}'}
     \dot{\bar\eta}_{\mbox{\scriptsize\bf A}'} & = &
     \dot \xi^{\mbox{\scriptsize\bf A}}\sps
     \end{array}
     \right.
     \dot Y_b=(\dot \pi_{\mbox{\scriptsize\bf B}},
     \dot{\bar\omega}^{\mbox{\scriptsize\bf B}'})\sps
     \dot T_b=(\dot \eta_{\mbox{\scriptsize\bf B}},
     \dot{\bar\xi}^{\mbox{\scriptsize\bf B}'})\spsd
     \end{array}
\end{equation}
     The system coincides with the one \cite[v. 2, p. 63, eq. (6.2.14)(eng)]{Penrose1} which, in turn, leads to the Klein correspondence.
\end{proof}

     It should be noted in conclusion that from this theorem, the Cartan triality principle implies:
     There are the 3 diffeomorphic manifolds:
\begin{enumerate}
      \item the manifold of all points of the quadric;
      \item the manifold of I-family maximal planar generators;
      \item the manifold of II-family maximal planar generators.
\end{enumerate}
      This is true, because the two constructed quadrics can be identified, for example, by means of the spin-tensor $\tilde S_K{}^L$. The manifold of all points of the quadric is diffeomorphic to the maximal plain generator manifold of one of the two families. In addition, since the Cartan triality principle is performed then the operators $\eta^i{}_{KL}$ for the inclusion $\mathbb R^8 \subset \mathbb C^8$ define the octave algebra\footnote{The algebraic definition of \emph{structural constant} has been published in \cite{Andreev_3}. This definition has the form $\eta_{ij}{}^k:=\sqrt{2}\eta_i{}^{AB}\eta_j{}_{YA}\eta^k{}_{ZB}X^YX^Z$. Here $X^AX_A=2$; $A,B,...=\overline{1,8}$; $i,j,...=\overline{1,8}$; \emph{algebra identity} is determined as $\frac{1}{4\sqrt{2}}\eta^i{}^{AB}\varepsilon_{AB}$; the metric spin-tensor $\varepsilon_{AB}$ is the same as in (\ref{e12o}). Later \cite{Andreev_4}, \cite{Andreev_5}, this will lead to the generalization of the definition on n-dimensional spaces with n mod 8 = 0 and hence to the \emph{group alternative-elastic algebra} definition \cite{Andreev_6}.
\foreignlanguage{russian}{
\References{
     \bibitem{Andreev_3}
      K. V. Andreev: Structure constants of the algebra of octaves. The Clifford equation. Izv. Vyssh. Uchebn. Zaved. Mat., 2001, no. 3, 3–6. \url{http://mi.mathnet.ru/eng/ivm856}
     \bibitem{Andreev_4}
      K.V. Andreev. On the spinor formalism for even n. [\href{http://arxiv.org/abs/1202.0941}{arXiv:1202.0941v2}].
      \bibitem{Andreev_5}
      К.В. Андреев [K.V. Andreev]: О спинорном формализме при четной размерности базового пространства [O spinornom formalizme pri chetno\u\i\ razmernosti bazovogo prostranstva]. ВИНИТИ - 298-B-11 [VINITI-298-V-11], июнь 2011 [iun' 2011]. [in Russian: On the spinor formalism for the base space of even dimension]
      \bibitem{Andreev_6}
      K.V. Andreev. On the metric hypercomplex group alternative-elastic algebras for n mod 8 = 0. (\href{http://arxiv.org/abs/1110.4737}{arXiv:1110.4737v1}).
}
}
      } since satisfy to the reduced Clifford equation. This assertion is based on the results given in the monography \cite[v.2, p. 461-464(eng)]{Penrose1} which deals with the structure constants of this algebra.

\newpage
\subsection{Summary}
     On the defence, the following major provisions are submitted:
\begin{enumerate}
     \item An inclusion $\mathbb R^6_{(p, q)}\subset \mathbb C \mathbb R^6$ is carried out by means of the operator $H_i{}^\alpha$. The operator $H_i{}^\alpha$ defines the involution, the spinor representation of which has the form
            $$
            S_\alpha{}^{\beta '}=\frac{1}{4}\eta_\alpha{}^{ab}
            \bar \eta^{\beta '}{}_{c'd'}\cdot 2s_a{}^{c'}s_b{}^{d '}\sps
            s_a{}^{c '}\bar s_{c'}{}^d=\pm\delta_a{}^d
            $$
         for odd q and
            $$
            S_\alpha{}^{\beta '}=\frac{1}{4}\eta_\alpha{}^{ab}
            \bar \eta^{\beta '}{}_{c'd'}\cdot s^{kc'}s^{nd'}\varepsilon_{knab}\sps
            s_{ac'}=\pm\bar s_{c'a}
            $$
         for even q.

        \item The operators $A_{\alpha\beta a}{}^b$ are founded in an explicit form. This operators are determined by using the correspondence between bivectors of the space $\mathbb C \mathbb R^6$ and traceless operators of the space $\mathbb C^4$. This allows us to study the algebraic structure of the curvature tensor $R_{\alpha\beta\gamma\delta}$ of the space $\mathbb C \mathbb R^6$ on its spinor image: the spin-tensor $R_a{}^b{}_c{}^d$.
        \item It is proved that a simple isotropic bivector of the space $\Lambda^2\mathbb C^6$ defines a degenerate Rosenfeld null-pair up to a complex factor: a vector and covector of the spaces $\mathbb C^4$, the contracting of which is zero.
        \item It is stated that a bivector of the space $\Lambda^2\mathbb R^6_{(p, q)}$ for even q can be reduced to the canonical form in some basis.
        \item The generalized triality principle for a pair of B-cylinders is proved.
        \item The operators $\eta_A{}^{KL}$ satisfying the Clifford equation and responsible for the correspondence between rectilinear generators of two B-cylinders are defined. In addition, these operators define the structure constants of the octave algebra.
\end{enumerate}

\newpage
\section{Appendix}
\subsection{Proof of the second chapter equations}
\subsubsection{\texorpdfstring{Proof of the equations containing the operator $A_{\alpha\beta a}{}^b$}{Proof of the equations containing the operator  A}}$ $

\indent Define $(\alpha,\beta,...=1,2,3,4,5,6;\ a_1,b_1,a,b,e,f,k,l,m,n,...=1,2,3,4)$. We have
\begin{equation}
\label{e1p}
       A_{\alpha\beta d}{}^c=
       \eta_{\left[\right.\alpha}{}^{ca}
       \eta_{\beta\left.\right]}{}_{da}
\end{equation}
\begin{equation}
\label{e2p}

\end{equation}
where $C_a{}^c{}_l{}^d$ is the Weyl tensor analogue.

\newpage
}
\References{
\bibitem{Besse1}
A.L. Besse: Einstein Manifolds. Ergeb. Math. Grenzgeb. (3), Vol. 10. Springer-Verlag, Berlin, 1987. Reprinted: Classics in Mathematics. Springer-Verlag, Berlin, 2008. Russian translation: А. Бессе [A. Besse]: Многообразия Эйнштейна [Mnogoobraziya \`E\u\i nshte\u\i na]. т. I, II [Tom I, II]. Мир [Mir], Москва [Moscow], 1990. Russian translation by Д. В. Алексеевский [D.V. Alekseevski\u\i]. The list of isomorphisms between classical Lie algebras for $n\le 6$ is given on the pages 200-201, item 7.101.
\bibitem{Dubrovin1}
В.А. Дубровин [V.A. Dubrovin], С.П. Новиков [S.P. Novikov], А.Т. Фоменко [A.T. Fomenko]: Современная геометрия [Sovremennaya geometriya]. Наука [Nauka], Москва [Moskva], 1986. English translation: V.A. Dubrovin, S.P. Novikov, A.T. Fomenko: Modern Geometry. Grad. Texts Math., Part 1: Vol. 93, Part 2: Vol. 104. Springer, New York, Part 1: 1984, Part 2: 1985.
\bibitem{Cartan1}
\'{E}. Cartan: Le\c{c}ons sur la Th\'{e}orie des Spineurs, 2 Vols.. Vol.\ I: Les Spineurs de l'Espace a Trois Dimensions. Actual. Sci. Ind., Vol.\ 643, Expos\'{e}s G\'{e}om., Vol.\ 9. Vol.\ II: Les Spineurs de l'Espace a n $>$ 3 dimensions. Les Spineurs en G\'{e}om\'{e}trie Riemanienne. Actual. Sci. Ind., Vol.\ 701, Expos\'{e}s G\'{e}om., Vol.\ 11. Hermann, Paris, 1938. English translation: The Theory of Spinors. Hermann, Paris, 1966. Reprinted: Dover Publications, Inc., New York, 1981. Russian translation: Э. Картан [\`E Kartan]: Теория спиноров [Teoriya spinorov]. Платон [Platon], Москва [Moskva], 1997. Russian translation by П. А. Широков [P.A. Shirokov].
\bibitem{Koboyasi1}
S. Kobayashi, K. Nomizu: Foundations of Differential Geometry. Volume 2. Interscience Tracts Pure Appl. Math., Vol. 15.2. Interscience, New York, 1969. Reprinted: Wiley Classics Library. Wiley-Interscience. New York, 1996. Russian translation: Ш. Кобаяси [Sh. Kobayasi],  К. Номидзу [K. Nomidzu]: Основы дифференциальной геометрии [Osnovy differentsial'no\u\i\ geometrii]. т. 2 [Tom 2]. Наука [Nauka], Москва [Moskva], 1981. Russian translation by Л.В. Сабинин [L.V. Sabinin]. Almost complex manifolds are studied on the pages 114-141. However, as real manifolds are considered Hermitian manifolds unlike this thesis.
\bibitem{Koboyasi2}
S. Kobayashi: Transformation Groups in Differential Geometry. Ergeb. Math. Grenzgeb. (2), Vol. 70. Springer-Verlag, Berlin, 1972. Russian translation: Ш. Кобаяси [Sh. Kobayasi]: Группы преобразований в дифференциальной геометрии [Gruppy preobrazovani\u\i\ v differentsial'no\u\i\ geometrii]. Наука [Nauka], Москва [Moskva], 1986. Russian translation by Л.В. Сабинин [L.V. Sabinin]. A piece of the information on the Riemannian space and complex manifolds is given on pages 1-119.
\bibitem{Kotelnikov1}
А.П. Котельников [A.P. Kotel'nikov]: Винтовое счисление и некоторые приложения его к геометрии и механнике [Vintovoe schislenie i nekotorye prilozheniya ego k geometrii i mekhanike]. S.n., Казань [Kazan'], 1895. [in Russian: Screw Calculus and Some of Its Applications to Geometry and Mechanics]
\bibitem{Lichnerowicz1}
A. Lichnerowicz: Th\'eorie globale des connexions et des groupes d'holonomie. Travaux Rech. Math.. Dunod, Paris, 1955. Consiglio Naz. Rich., Monogr. Mat., Vol. 2. Edizioni Cremonese, Rome, 1955, 1962. English translation: Global Theory of Connections and Holonomy Groups. Noordhoff, Leyden, 1976. Russian translation: А. Лихнерович [A. Likhnerovich]: Теория связностей в целом и группы голономий [Teoriya svyaznoste\u\i\ v tselom i gruppy golonomi\u\i]. ИЛ [IL], Москва [Moskva], 1960. Russian translation by С.П. Фиников [S.P. Finikov] under edition В.В. Рыжкова [V.V. Ryzhkova]. Almost complex manifolds and connections on they are studied. However, as real manifolds are considered Hermitian manifolds unlike this thesis.
\bibitem{Manin1}
Ю.И. Манин [Yu.I. Manin]: Калибровочные поля и комплексная геометрия [Kalibrovochnye polya i kompleksnaya geometriya]. Москва [Moskva], Наука [Nauka], 1996. English translation: Yu.I. Manin: Gauge Field Theory and Complex Geometry. Grundlehren Math. Wiss., Vol. 289. Springer-Verlag, Berlin, 1988. The Minkowski space is studied as the manifold of real points of the big cell of the Grassmannian of complex planes in the twistor space on the pages 15-72.
\bibitem{Landau1}
Л.Д. Ландау [L.D. Landau], Е.М. Лифшиц [E.M. Livshits]: Теория поля [Teoriya polya]. Наука [Nauka], Москва [Moskva], 1988. English translation: L.D. Landau, E.M. Lifshitz: The Classical Theory of Fields. Pergamon Press, Oxford, 1961.
\bibitem{Neifeld1}
Э.Г. Нейфельд [\`E.G. Ne\u\i fel'd]: Об инволюциях в комплексных пространствах [Ob involyutsiyakh v kompleksnykh prostranstvakh]. Тр. Геом. Семин. [Tr. Geom. Semin.], Казанский университет [Kazanski\u\i\ universitet] (Выпуск [Vypusk]) 19(1989)71-82 (Mathnet URL: \url{http://mi.mathnet.ru/eng/kutgs98}). [in Russian: Involutions in complex spaces]
\bibitem{Neifeld2}
Э.Г. Нейфельд [\`E.G. Ne\u\i fel'd]: Геометрия поверхности в проективном пространстве над алгеброй [Geometriya poverkhnosti v proektivnom prostranstve nad algebro\u\i]. In: Ю.А. Яфаров [Yu.A. Yafarov] (Ed.): Геометрия обобщенных пространств [Geometriya obobshchennykh prostranstv]. Башкирский Государственный Педагогический Институт [Bashkirski\u\i\ Gosudarstvenny\u\i\ Pedagogicheski\u\i\ Institut], Уфа [Ufa], 1982, pp. 32-51. [in Russian: Geometry of a surface in the projective space over an algebra]
\bibitem{Neifeld3}
Э.Г. Нейфельд [\`E.G. Ne\u\i fel'd]: О внутренних геометриях поляризованных комплексных грассманианов [O vnutrennikh geometriyakh polyarizovannykh kompleksnykh grassmanianov]. Изв. Высш. Учебн. Завед., Матем. [Izv. Vyssh. Uchebn. Zaved., Mat.] (1995) No. 5 (396), 51-54 (Mathnet URL: \url{http://mi.mathnet.ru/eng/ivm1741}). English translation: \`E.G. Ne\u\i fel'd: On the intrinsic geometries of polarized complex Grassmannians. Russian Math. (Iz. VUZ) 39(1995) No. 5, 46-49.
\bibitem{Neifeld4}
Э.Г. Нейфельд [\`E.G. Ne\u\i fel'd]: Нормализация комплексных грассманианов и квадрик [Normalizatsiya kompleksnykh grassmanianov i kvadrik]. Тр. Геом. Семин. [Tr. Geom. Semin.], Казанский университет [Kazanski\u\i\ universitet] (Выпуск [Vypusk]) 20(1990)58-69 (Mathnet URL: \url{http://mi.mathnet.ru/kutgs82}). [in Russian: Normalization of complex Grassmannians and quadrics]
\bibitem{Neifeld5}
Э.Г. Нейфельд [\`E.G. Ne\u\i fel'd]: О внутренних геометриях нормализованного пенроузиана [O vnutrennikh  geometriyakh normalizovannogo penrouziana]. Тр. Геом. Семин. [Tr. Geom. Semin.], Казанский университет [Kazanski\u\i\ universitet] (Выпуск [Vypusk]) 20(1990)70-73 (Mathnet URL: \url{http://mi.mathnet.ru/eng/kutgs83}). [in Russian: Intrinsic geometries of a normalized Penrosian]
\bibitem{Neifeld6}
Э.Г. Нейфельд [\`E.G. Ne\u\i fel'd]: Аффинные связности на нормализованном многообразии плоскостей проективного пространства [Affinnye svyaznosti na normalizovannom mnogoobrazii ploskoste\u\i\ proektivnogo prostranstva].  Изв. Высш. Учебн. Завед., Матем. [Izv. Vyssh. Uchebn. Zaved., Mat.] (1976) No. 11 (174) 48-55 (Mathnet URL: \url{http://mi.mathnet.ru/eng/ivm8577}). English translation: E.G. Ne\u\i fel'd: Affine connections on the normalized manifold of planes in a projective space. Sov. Math. (Iz. VUZ) 20(1978) No. 11, 42–48.
\bibitem{Neifeld7}
Э.Г. Нейфельд [\`E.G. Ne\u\i fel'd]: О внутренних геометриях нуль-плоскостей максимальной размерности поляритетов второго порядка [O vnutrennikh  geometriyakh nul'-ploskoste\u\i\ maksimal'no\u\i\ razmernosti polyaritetov vtorogo poryadka]. Тр. Геом. Семин. [Tr. Geom. Semin.], Казанский университет [Kazanski\u\i\ universitet] (Выпуск [Vypusk]) 14(1982)50-55 (Mathnet URL: \url{http://mi.mathnet.ru/eng/kutgs178}). [in Russian: Intrinsic geometries of manifolds of zero-planes of maximal dimension of second-order polarities]
\bibitem{Norden1}
А.П. Норден [A.P. Norden]: О комплексном представлении тензоров пространства Лоренца [O kompleksnom predstavlenii tenzorov prostranstva Lorentsa]. Изв. Высш. Учебн. Завед., Матем. [Izv. Vyssh. Uchebn. Zaved., Mat.] (1959) No. 1 (8), 156-164 (Mathnet URL: \url{http://mi.mathnet.ru/eng/ivm2415}). [in Russian: On a complex representation of the tensors of Lorentz space]
\bibitem{Norden2}
А.П. Норден [A.P. Norden]: Обобщение основной теоремы теории нормализации. [Obobshenie osnovno\u\i\ teoremy normalizatsii]. Изв. Высш. Учебн. Завед., Матем. [Izv. Vyssh. Uchebn. Zaved., Mat.] (1966) No. 2 (51), 78-82 (Mathnet URL: \url{http://mi.mathnet.ru/eng/ivm2690}). [in Russian: A generalization of the fundamental theorem of the theory of normalization]
\bibitem{Norden3}
А.П. Норден [A.P. Norden]: О структуре связности на многообразии прямых неевклидового пространства. [O structure svyaznosti na mnogoobrazii pryamykh neevklidivogo prostranstva]. Изв. Высш. Учебн. Завед., Матем. [Izv. Vyssh. Uchebn. Zaved. Mat.] (1972) No. 12 (127), 84-94 (Mathnet URL: \url{http://mi.mathnet.ru/eng/ivm4158}). [in Russian: The structure of the connection on a manifold of lines in a non-Euclidean space]
\bibitem{Norden4}
А.П. Норден [A.P. Norden]: Аффинная связность на поверхностях проективного пространства. [Affinnaya svyaznost' na poverkhnostyakh proektivnogo prostranstva]. Мат. Сборник [Mat. Sbornik], N.S. 20(62)(1947)263-281 (Mathnet URL: \url{http://mi.mathnet.ru/eng/msb6216}). [in Russian: Affine connection on the surfaces of a projective space]
\bibitem{Norden5}
А.П. Норден [A.P. Norden]: Теория нормализации и векторные расслоения [Teoriya normalizatsii i vektornye rassloeniya]. Тр. Геом. Семин. [Tr. Geom. Semin.], Казанский университет [Kazanski\u\i\ universitet] (Выпуск [Vypusk]) 9(1976)68-77 (Mathnet URL: \url{http://mi.mathnet.ru/eng/kutgs256}). [in Russian: Normalization theory and vector bundles]
\bibitem{Norden6}
А.П. Норден [A.P. Norden]: Пространства аффинной связности [Prostranstva affinno\u\i\ svyaznosti]. Наука [Nauka], Москва [Moskva], 1976 (EqWorld URL: \url{http://eqworld.ipmnet.ru/ru/library/books/Norden1976ru.djvu}). [in Russian: Affinely Connected Spaces]
\bibitem{Penrose1}
R. Penrose, W. Rindler: Spinors and Space-Time. Vol. 1: Two-Spinor Calculus and Relativistic Fields. Vol. 2: Spinor and Twistor Methods in Space-Time Geometry. Cambridge Monogr. Math. Phys.. Cambridge University Press, Cambridge, Vol. 1: 1984, Vol. 2: 1986. Russian translation: Р. Пенроуз [R. Penrouz], В. Риндлер [V. Rindler]: Спиноры и пространство-время [Spinory i prostranstvo-vremya]. Мир [Mir], Москва [Moskva], т. 1 [Tom 1]: 1987, т. 2 [Tom 2]: 1988. Russian translation by Е.М. Серебрянный [E.M. Serebryanny\u\i] and З.А. Штейнгард [Z.A. Shеу\u\i ngard] under edition Д.М. Гальцов [D.M. Gal'tsov].
\bibitem{Penrose2}
R. Penrose: The twistor programme. Rep. Math. Phys. 12(1977)65-76 (\href{http://dx.doi.org/10.1016/0034-4877(77)90047-7}{DOI: 10.1016/0034-4877(77)90047-7}).
\bibitem{Penrose3}
R. Penrose: Structure of space-time. In: C.M. DeWitt, J.A. Wheeler (Eds.): Battelles Rencontres. 1967 Lectures in Mathematics and Physics. W.A. Benjamin Inc., New York, 1968, Chap. VII, pp. 121-235. Russian translation: Р. Пенроуз [R. Penrouz]: Структура пространства-времени [Struktura prostranstva-vremeni]. Мир [Mir], Москва [Moskva], 1972 (EqWorld URL: \url{http://eqworld.ipmnet.ru/ru/library/books/Penrouz1972ru.djvu}).
\bibitem{Penrose4}
R. Penrose: Spinor classification of energy tensors. In:
В.П. Шелест [V.P. Shelest], А.Е. Левашов [A.E. Levashov], М.Ф. Широков [M.F. Shirokov], К.А. Пирагас [K.A. Piragas] (Eds.):
Гравитация: проблемы, перспективы. Памяти Алексея Зиновьевича Петрова посвящается. [Gravitatsiya: problemy, perspektivy. Pamyati Alekseya Zinov'evicha Petrova posvyashchaetsya] - Gravitation: Problems, Perspectives. The book is dedicated to the memory of Aleksey Zinovievich Petrov. Наукова Думка [Naukova Dumka], Киев [Kiev], 1972, pp. 203-215.
\bibitem{Petrov1}
А.З. Петров [A.Z. Petrov]: Классификация пространств, определяющих поля тяготения [Klassifikatsiya prostranstv, opredelyayushchikh polya tyagoteniya]. Уч. Зап. Казан. Гос. Унив. [Uch. Zap. Kazan. Gos. Univ.] 114(1954) No. 8, 55-69 (Mathnet URL: \url{http://mi.mathnet.ru/uzku344}). English translation: A.Z. Petrov: The classification of spaces defined by gravitational fields. Gen. Rel. Grav. 22(2000)1665-1685 (\href{http://dx.doi.org/10.1023/A:1001910908054}{DOI: 10.1023/A:1001910908054}).
\bibitem{Petrov2}
А.З. Петров [A.Z. Petrov]: Пространства Эйнштейна [Prostranstva \`E\u\i nshte\u\i na]. Физматгиз [Fizmatgiz], Москва [Moskva], 1961 (EqWorld URL: \url{http://eqworld.ipmnet.ru/ru/library/books/Petrov1961ru.djvu}). English translation: A.Z. Petrov: Einstein Spaces. Pergamon Press, Oxford, 1969.
\bibitem{Shevale1}
C. Chevalley: Theory of Lie Groups. I. Princeton Math. Ser., Vol. 8. Princeton University Press, Princeton, 1946. C. Chevalley: Th\'eorie des groupes de Lie.
Tome II: Groupes alg\'ebriques. Actual. Sci. Ind., Vol. 1152. Hermann, Paris, 1951. [in French: Theory of Lie Groups. Volume II: Algebraic Groups] C. Chevalley:
Th\'eorie des groupes de Lie. Tome III: Th\'eor\`emes g\'en\'eraux sur les alg\`ebres de Lie. Actual. Sci. Ind., Vol. 1226. Hermann, Paris, 1955.
[in French: Theory of Lie Groups. 
\nopagebreak[4]Russian edition: К. Шевалле [K. Shevalle]: Теория групп Ли [Teoriya grupp Li]. ИЛ [IL], Москва [Moskva], T. 1 [Tom 1]: 1948
\nopagebreak[4](EqWorld URL: \url{http://eqworld.ipmnet.ru/ru/library/books/Shevalle_t1_1948ru.djvu}), T. 2 [Tom 2]: 1958 (EqWorld URL's:
\nopagebreak[4]\url{http://eqworld.ipmnet.ru/ru/library/books/Shevalle_t2_1958ru.djvu}), T. 3 [Tom 3]: 1958 (EqWorld URL: 
\nopagebreak[4]\url{http://eqworld.ipmnet.ru/ru/library/books/Shevalle_t3_1958ru.djvu}). Russian translation by Л.А. Калужин [L.A. Kaluzhin].
\bibitem{Postnikov1}
М.М. Постников. [M.M. Postnikov] Группы и алгебры Ли [Gruppy i algebry Li]. Наука [Nauka], Москва [Moskva], 1986. English translation: M. Postnikov: Lie Groups and Lie Algebras. Lectures in Geometry, Semester 5. Mir, Moscow, 1986; URSS Publishing, Moscow, 1994. The main ideas of the hypercomplex number constraction on the base of the Bott periodicity are given in the lectures 13-16.
\bibitem{Rosenfeld2}
Б.А. Розенфельд [B.A. Rozenfel'd]: Неевклидовы геометрии [Neevklidovy geometrii]. ГИТТО [GITTO], Москва [Moskva], 1955. [in Russian: Non-Euclidean Geometries]. The Cartan triality principle is given on the page 534.
\bibitem{Rosenfeld1}
Б.А. Розенфельд [B.A. Rozenfel'd]: Многомерные пространства [Mnogomernye prostranstva]. Наука [Nauka], Москва [Moskva], 1966. [in Russian: Multidimensional Spaces]. The m-pair definition is given on the page 384. For this thesis, m=0.
\bibitem{Rosenfeld3}
Хуа Ло-гэн [Khua Lo-g\`en], Б.А. Розенфельд [B.A. Rozenfel'd]: Геометрия прямоугольных матриц и ее применение к вещественной проективной и неевклидовой геометрии.
[Geometriya pryamougol'nykh matrits i ee primenenie k veshchestvenno\u\i\ proektivno\u\i\ i neevklidovo\u\i\ geometrii]. Изв. Высш. Учебн. Завед., Матем. [Izv. Vyssh. Uchebn. Zaved., Matem.] (1957) No. 1, 233-247 (Mathnet URL: \url{http://mi.mathnet.ru/ivm3038}). English translation: Hua Loo-geng (Hua Loo-keng), B.A. Rozenfel'd: The geometry of rectangular matrices and its application to real-projective and non-euclidean geometry. Chin. Math. 8(1966)726-737.
\bibitem{Sintsov1}
Д.М.Синцов [D.M. Sintsov]: Теория коннексов в пространстве в связи с теорией дифференциальных уравнений в частных производных первого порядка. [Teoriya konneksov v prostranstve v svyazi s teorie\u\i\ differentsial'nykh uravneni\u\i\ v chastnykh proizvodnykh pervogo poryadka]. S.n., Казань [Kazan'], 1894. [in Russian: Theory of connexes in space in relation to the theory of first order partial differential equations]
\bibitem{Fadeev1}
Д.К. Фаддеев [D.K. Faddeev]: Лекции по алгебре [Lektsii po algebre]. Наука [Nauka], Москва [Moskva], 1984. [in Russian: Lectures on Algebra]
\bibitem{Hirtseburh1}
F. Hirzebruch: Neue topologische Methoden in der algebraischen Geometrie. Ergeb. Math. Grenzgeb. (2), Vol. 9. Springer-Verlag, Berlin, 1. ed.: 1956, 2. ext. ed.: 1962. English translation of the 2. ext. ed.: Topological Methods in Algebraic Geometry. Grundlehren Math. Wiss., Vol. 131. Springer-Verlag, Berlin, 1966. Russian translation: Ф. Хирцебрух [F. Khirtsebrukh]: Топологические методы в алгебраической геометрии [Topologicheskie metody v algebraichesko\u\i\ geometrii]. Мир [Mir], Москва [Moskva], 1973. Russian translation by Б.Б. Венков [B.B. Venkov].
\bibitem{Hodzh1}
W.V.D. Hodge, D. Pedoe: Methods of Algebraic Geometry, Vol. 2. Cambridge University Press, Cambridge, 1952. Russian edition: В.Д. Ходж [V. D. Khodzh], Д. Пидо [D. Pido]: Методы алгебраической геометрии [Metody algebraichesko\u\i\ geometrii]. Т. 2 [Tom 2]. ИЛ [IL], Москва [Moskva], 1954. Russian translation by А.И. Узков [A.I. Uzkov].
\bibitem{Einshtein1}
J. Stachel (Ed.): The Collected Papers of Albert Einstein. Volume 2: The Swiss Years: Writings, 1900-1909. Princeton University Press, Princeton, 1987. Russian equivalent: А. Эйнштейн [A. \`E\u\i nshte\u\i n]: Сборник научных трудов [Sbornik nauchnykh trudov]. Т.1 [Tom 1]. Наука [Nauka], Москва [Moskva], 1966 (EqWorld URL: \url{http://eqworld.ipmnet.ru/ru/library/books/Einstein_t1_1965ru.djvu}).
\bibitem{Adams1}
J.F. Adams: Spin(8), triality, $F_4$ and all that. In: S.W. Hawking, M. Ro\v{c}ek (Eds.): Superspace and Supergravity. Cambridge University Press, Cambridge, 1981, pp. 435-445.
\bibitem{Brauer1}
R. Brauer  H. Weyl: Spinors in n dimensions. Am. J. Math. 57(1935)425-449 (stable JSTOR URL: \url{http://www.jstor.org/stable/2371218}).
\bibitem{Chevalley1}
C. Chevalley: The Algebraic Theory of Spinors. Columbia University Press, New York, 1954.
\bibitem{Frank1}
F.W. Warner: Foundation of Differentiable Manifolds and Lie Groups. Grad. Texts Math., Vol. 94. Springer, New York, 1983.
\bibitem{Hughston1}
L.P. Hughston: Applications of SO(8) spinors. In: W. Rindler, A. Trautman (Eds.): Gravitation and Geometry, a Volume in Honour of Ivor Robinson. Monogr. Textbook Phys. Sci., Vol. 4. Bibliopolis, Naples, 1987, pp. 253-287.
\bibitem{LeBrun1}
C.R. LeBrun: Ambi-twistors and Einstein's equations. Class. Quantum Grav. 2(1985)555-563 (\href{http://dx.doi.org/10.1088/0264-9381/2/4/020}{DOI: 10.1088/0264-9381/2/4/020}).
\bibitem{Penrose_1}
R. Penrose: Twistor algebra. J. Math. Phys. 8(1967)345-366 (\href{http://dx.doi.org/10.1063/1.1705200}{DOI: 10.1063/1.1705200}).
\bibitem{Penrose_2}
R. Penrose: Twistor theory: its aims and achievements. In: C.J. Isham, R. Penrose, D.W. Sciama (Eds.): Quantum Gravity: An Oxford Symposium held at the Rutherford Laboratory, Chilton, February 15-16, 1974. Clarendon Press, Oxford, 1975, pp. 268-407.
\bibitem{Penrose_3}
R. Penrose: On the  origins of twistor theory. In: W. Rindler, A. Trautman (Eds.): Gravitation and Geometry, a Volume in Honour of Ivor Robinson. Monogr. Textbook Phys. Sci., Vol. 4. Bibliopolis, Naples, 1987, pp. 341-361.
\bibitem{Penrose_4}
R. Penrose: Relativistic symmetry groups. In: A.O. Barut (Ed.): Group Theory in Non-Linear Problems: Lectures presented at the NATO Advanced Study Institute on Mathematical Physics, held in Istanbul, Turkey, August 7-18, 1972. NATO Adv. Study Inst. Ser., Ser. C, Math. Phys. Sci., Vol. 7. Reidel, Dordrecht, 1974, pp. 1-58.
\bibitem{Hochschild1}
G. Hochschild: The Structure of Lie Groups. Holden-Day Ser. Math.. Holden-Day, San Francisco, 1965.
\bibitem{Dirac1}
P.A. Dirac: Wave equations in conformal space. Ann. of Math. (2) 37(1936)429-442 (stable JSTOR URL: \url{http://www.jstor.org/stable/1968455}).
\bibitem{Dirac2}
P.A. Dirac: Relativistic wave equations. Proc. Roy. Soc. London A 155(1936)447-459 (\href{http://dx.doi.org/10.1098/rspa.1936.0111}{DOI: 10.1098/rspa.1936.0111}; stable JSTOR URL: \url{http://www.jstor.org/stable/96758}).
\bibitem{Klein1}
F. Klein: Zur Theorie der Liniencomplexe des ersten und zweiten Grades. Math. Ann. 2(1870)198-226 (\href{http://dx.doi.org/10.1007/BF01444020}{DOI: 10.1007/BF01444020}; Digizeitschriften URL: \url{http://resolver.sub.uni-goettingen.de/purl?GDZPPN002240505}). [in German: On the theory of first and second degree line complexes]
\bibitem{Klein2}
F. Klein: Vorlesungen \"uber h\"ohere Geometrie. Grundlehren Math. Wiss., Vol. 22. Springer-Verlag, Berlin, 1926, pp. 80, 262. Reprinted: Chelsea, New York, 1949, 1957. [in German: Lectures on Higher Geometry]
}

\newpage
\renewcommand{\proofname}{Доказательство.}
\renewcommand{\refname}{Литература}
\def\ptctitle{Содержание}
\def\plttitle{Список таблиц}
\renewcommand{\figurename}{Рис.}
\renewcommand{\tablename}{Таблица}
\newtheorem{theoremr}{Теорема}[section]
\newtheorem{lemmar}{Лемма}[section]
\newtheorem{noter}{Замечание}[section]
\newtheorem{corollaryr}{Следствие}[section]

\def\thetheoremr{\arabic{theoremr}}
\def\thelemmar{\arabic{lemmar}}
\def\thenoter{\arabic{noter}}
\def\thecorollaryr{\arabic{corollaryr}}
\let\Contentsline\contentsline
\def\References#1{{\normalsize\baselineskip=12pt}}
\newcounter{dividepage}
\setcounter{dividepage}{-\thepage}
\addtocounter{dividepage}{1}
\newcounter{myfootpage}
\newcounter{mypage}
\makeatletter
\def\addcontentsline#1#2#3{
  \begingroup
    \let\label\@gobble
    \ifx\@currentHref\@empty
      \Hy@Warning{%
        No destination for bookmark of \string\addcontentsline,%
        \MessageBreak destination is added%
      }%
      \phantomsection
    \fi
    \expandafter\ifx\csname toclevel@#2\endcsname\relax
      \begingroup
        \def\Hy@tempa{#1}%
        \ifx\Hy@tempa\Hy@bookmarkstype
          \Hy@WarningNoLine{%
            bookmark level for unknown #2 defaults to 0%
          }%
        \else
          \Hy@Info{bookmark level for unknown #2 defaults to 0}%
        \fi
      \endgroup
      \expandafter\gdef\csname toclevel@#2\endcsname{0}%
    \fi
    \edef\Hy@toclevel{\csname toclevel@#2\endcsname}%
    \Hy@writebookmark{\csname the#2\endcsname}%
      {#3}%
      {\@currentHref}%
      {\Hy@toclevel}%
      {#1}%
    \ifHy@verbose
      \begingroup
        \def\Hy@tempa{#3}%
        \@onelevel@sanitize\Hy@tempa
        \let\temp@online\on@line
        \let\on@line\@empty
        \Hy@Info{%
          bookmark\temp@online:\MessageBreak
          thecounter {\csname the#2\endcsname}\MessageBreak
          text {\Hy@tempa}\MessageBreak
          reference {\@currentHref}\MessageBreak
          toclevel {\Hy@toclevel}\MessageBreak
          type {#1}%
        }%
      \endgroup
    \fi
    \setcounter{mypage}{\thepage}\addtocounter{mypage}{\thedividepage}
    \addtocontents{#1}{%
      \protect\contentsline{#2}{#3}{\thepage(\themypage)}{\@currentHref}%
    }%
  \endgroup
}
\renewcommand{\@oddhead}{\setcounter{myfootpage}{\thepage}\addtocounter{myfootpage}{\thedividepage}}
\renewcommand{\@evenhead}{\setcounter{myfootpage}{\thepage}\addtocounter{myfootpage}{\thedividepage}}
\renewcommand{\@oddfoot}{\thepage\hbox to 170mm{\quad\hrulefill\quad \themyfootpage}}
\renewcommand{\@evenfoot}{\thepage\hbox to 170mm{\quad\hrulefill\quad \themyfootpage}}
\makeatother

\large
\label{originb}
\thispagestyle{empty}
\begin{center}
БАШКИРСКИЙ ГОСУДАРСТВЕННЫЙ УНИВЕРСИТЕТ\\
\rule{12cm}{.4pt} \\
Математический факультет
\end{center}
\vspace{5cm}
\begin{center}
АНДРЕЕВ Константин Васильевич\\
\vspace{1cm}

{\sf СПИНОРНЫЙ ФОРМАЛИЗМ И ГЕОМЕТРИЯ\\
     ШЕСТИМЕРНЫХ РИМАНОВЫХ ПРОСТРАНСТВ}\\

\vspace{1cm}
01.01.04 - дифференциальная геометрия и топология\\
\vspace{1cm}
Диссертация на соискание ученой степени\\
кандидата физико-математических наук
\end{center}
\vspace{1cm}
\begin{flushleft}
\hspace*{5cm}
Научный руководитель: \\
\hspace*{5cm}
кандидат физико - математических наук,\\
\hspace*{5cm}
доцент Э.Г.Нейфельд.
\vspace{5cm}
\end{flushleft}
\begin{center}
УФА - 1997 г.
\end{center}
\pagebreak

\newpage
\part{Русская редакция}
\parttoc
\newpage
\partlot
\newpage
\section{Введение}$ $
    \indent Предлагаемая диссертационная работа является теоретическим
    исследованием по геометрии 6-мерных римановых пространств
    и посвящена вопросам, связанным с этой геометрией.\\
    \indent Изучение 6-мерных римановых пространств производится с
    помощью соответствующего 6-мерного спинорного формализма
    \cite{Cartan1r}, \cite{Brauer_1r}, \cite{Penrose1r}   и
    теории нормализации Нордена-Нейфельда
    \cite{Neifeld1r}-\cite{Neifeld7r},
    \cite{Norden1r}-\cite{Norden6r},  что
    позволяет упростить важные соотношения, записанные в тензорном виде,
    и приводит к оригинальным результатам.\\
    \indent Выбор темы обусловлен возросшим в последнее время интересом к
    таким пространствам. Они  естественным образом появляются,
    например, в спинорно-твисторном формализме Пенроуза
    \cite{Penrose1r}-\cite{Penrose4r},
    \cite{Penrose_1r}-\cite{Penrose_4r}. Здесь
    важную роль играет псевдоевклидово пространство $\mathbb R^6_{(2,4)}$,
    изотропный конус которого позволяет определить
    конформно-псевдоевклидово пространство Минковского. Более того,
    твисторы в теории Пенроуза будут представлять спиноры, согласованные с
    пространством $\mathbb R^6_{(2,4)}$. Однако, если в монографии \cite{Penrose1r} твисторы
    образуют 4-мерное комплексное векторное расслоение
    с базой - 4-мерным действительным многообразием, то в данной
    работе базой служит 6-мерное аналитическое комплексное
    риманово пространство $\mathbb CV^6$. Это приводит к новым
    результатам в твисторной теории. Конформно-(псевдо-)евклидово
    (псевдо-)риманово пространство в этом случае связывается с комплексной
    аналитической квадрикой $\mathbb CQ_6$
    \cite{Hirtseburh1r}, \cite{Hodzh1r}, что приводит к изучению
    свойств группы $SO(8,\mathbb C)$ \cite{Adams_1r}, \cite{Hughston_1r},
    а следовательно и к принципу тройственности
    Э. Картана \cite{Cartan1r}. Указанные комплексно-евклидовы геометрии
    в данном случае появляются как внутренние геометрии
    этой нормализованной квадрики. Выписывая деривационные уравнения
    для такой квадрики \cite{Neifeld4r},
    можно определить инвариантное при
    конформных преобразованиях, а следовательно и при
    замене нормализации, уравнение, которое по аналогии
    с твисторным уравнением Пенроуза назовем
    битвисторным уравнением. Решения этого уравнения образуют
    пары, которые можно интерпретировать как нуль-пары
    Розенфельда \cite{Rosenfeld1r},
    что приводит к 6-мерной квадрике и принципу
    тройственности Э. Картана. Целесообразно рассматривать
    следующие три диффеоморфных между собой многообразия:
\begin{enumerate}
    \item многообразие точек квадрики $\mathbb CQ_6$;
    \item многообразие плоских образующих $\mathbb C\mathbb P_3$ квадрики
    $\mathbb CQ_6$ максимальной размерности I семейства;
    \item многообразие плоских образующих $\mathbb C\mathbb P_3$ квадрики
    $\mathbb CQ_6$ максимальной размерности II семейства.
\end{enumerate}
    Нормализация этих многообразий позволяет рассматривать
    конформно-(псевдо-)евклидовы связности на этих многообразиях,
    которые будут вейлевыми. Это приводит к обобщению
    принципа тройственности на B-пространства в терминологии
    Нордена.\\
    \indent Итак, 6-мерный спинорный формализм основан
    на работах Э. Картана \cite{Cartan1r} и Брауера
    \cite{Brauer_1r}. 4-мерный спинорный формализм
    и твисторная алгебра описаны в работах Пенроуза
    \cite{Penrose1r}-\cite{Penrose4r},
    \cite{Penrose_1r}-\cite{Penrose_4r}. Связанные
    с этими формализмами изоморфизмы групп и алгебр Ли рассмотрены
    в работах \cite{Besse1r}, \cite{Shevale1r}, \cite{Postnikov1r},
    \cite{Hochschild_1r}. Кроме того, сведения по
    клиффордовым алгебрам и октавам взяты из
    \cite{Postnikov1r} и \cite{Penrose1r}. Сведения о квадриках
    и их плоских образующих приведены в работах
    \cite{Hirtseburh1r} и \cite{Hodzh1r}. Нормализация многообразия
    плоских образующих происходит также, как описано в
    работах \cite{Neifeld4r}-\cite{Neifeld6r},
    \cite{Norden2r}, \cite{Norden4r}, \cite{Norden6r}.
    Связности в расслоениях вводятся согласно
    \cite{Neifeld6r}, \cite{Norden3r}, \cite{Norden5r}, \cite{Norden6r}.
    Вложения действительных пространств в комплексное
    рассмотрены в работе \cite{Neifeld1r}. Действительное и комплексное представления
    римановых пространств проводится согласно \cite{Koboyasi1r},
    \cite{Koboyasi2r} и \cite{Lichnerowicz1r}. Нуль-пары
    Розенфельда взяты из работы \cite{Rosenfeld1r}.
    Соответствие Кляйна приведено в \cite{Klein_1r} и \cite{Klein_2r}.
    О физических приложениях твисторов можно посмотреть
    в работах \cite{Penrose1r}, \cite{Dirac_1r} и \cite{Dirac_2r}.\\
    \indent Рассмотрим основное содержание по главам.
    Для этого необходимо предварительно сделать некоторые определения.

\subsection{Основные определения}$ $
    \indent Эти определения введены согласно работам \cite{Neifeld1r}-\cite{Neifeld7r}.
    Отметим, что все функции, участвующие в построениях
    предполагаются достаточно гладкими. Все определения, утверждения
    и построения носят локальный характер.
    Под комплексным аналитическим римановым пространством
    $\mathbb CV^n$ в дальнейшем будем понимать аналитическое комплексное
    многообразие, снабженное аналитической квадратичной метрикой,
    т.е. метрикой, определенной с помощью симметрического
    невырожденного тензора $g_{\alpha\beta}$, координаты
    которого - аналитические функции координат точки.
    Этому тензору соответствует комплексная риманова связность
    без кручения, коэффициенты которой определяются символами Кристофеля
    и поэтому являются аналитическими функциями.\\
    \indent Касательное расслоение этого многообразия $\tau^\mathbb C(\mathbb CV^n)$ имеет
    слои $\tau_x^\mathbb C\cong \mathbb C\mathbb R^n$, то есть слои, изоморфные n-мерному комплексному
    евклидовому пространству, метрика которого определяется
    значением евклидового метрического тензора в данной точке $x$. Пусть n=6.
    Обозначим через $\Lambda^2\mathbb C^4$ пространство бивекторов
    пространства $\mathbb C^4$, а через $\Lambda$ - соответствующее расслоение
    с базой $\mathbb CV^6$ и слоями, которые изоморфны $\mathbb C\mathbb R^6$ ($\mathbb C\mathbb R^6\cong \Lambda^2\mathbb C^4$).
    Отсюда следует, что в 6-мерном случае комплексное риманово пространство
    $\mathbb CV^6$ будет базой расслоения $A^\mathbb C=\mathbb C^4(\mathbb CV^6)$. При
    этом каноническая проекция $p:\mathbb C^4_x\longmapsto x\in \mathbb CV^6$
    отображает слой $\mathbb C^4_x$ в точку $x$ базы.\\
    \indent Вещественное (псевдо-)риманово пространство $V^n_{(p,q)}$ будем
    рассматривать как поверхность вещественной размерности
    n  в пространстве $\mathbb CV^n$, т.е. локально определять
    с помощью параметрического уравнения
\begin{equation}
\label{re1v}
    z^\alpha=z^\alpha(u^i)\ (\alpha,\beta ,... ,i,j,g,h=\overline{1,n})\sps
\end{equation}
    где $z^\alpha$ - комплексные координаты точки $x$ базы,
    а $u^i$ - параметры: локальные координаты точки
    пространства $V^n_{(p,q)}$. Частные производные $(\partial_i z^\alpha =:
    H_i{}^\alpha)$ определяют вложение вещественного
    касательного пространства $\tau_x^\mathbb R(V^n_{(p,q)})\cong \mathbb R^n_{(p,q)}$ поверхности (\ref{re1v}) в
    комплексное касательное пространство $\tau_x^\mathbb C$
    следующим образом
\begin{equation}
\label{re2v}
    H:\tau_x^\mathbb R\longmapsto\tau_x^\mathbb C\sps
\end{equation}
\begin{equation}
\label{re3v}
\begin{array}{c}
    z^\alpha=z^\alpha(u^i(t))\sps V^\alpha:=\frac{dz^\alpha}{dt}=
    H_i{}^\alpha\frac{du^i}{dt}=:H_i{}^\alpha v^i\sps\\[2ex]
    \frac{du^i}{dt}\in\tau_x^\mathbb R\longmapsto \frac{dz^\alpha}{dt}\in\tau_x^\mathbb C\sps
\end{array}
\end{equation}
    где дифференцирование ведется вдоль вещественной кривой $\gamma (t)$
    поверхности (\ref{re1v}).
    Так как матрица $\parallel H_i{}^\alpha \parallel$ есть невырожденная якобиева
    матрица, то существует оператор $H^i{}_\alpha$ такой, что
\begin{equation}
\label{re4v}
    \left\{
\begin{array}{l}
    H^i{}_\alpha H_i{}^\beta=\delta_\alpha{}^\beta\sps\\
    H^i{}_\alpha H_j{}^\alpha=\delta_j{}^i\spsd
\end{array}
    \right.
\end{equation}
    Отсюда следует, что оператор $H_i{}^\alpha$ определяет
    в комплексном пространстве инволюцию
\begin{equation}
\label{re5v}
    S_\alpha {}^{\beta '}=H^i{}_\alpha\bar  H_i{}^{\beta '}\sps
\end{equation}
    где координаты $\bar H_i{}^{\beta '}$ комплексно
    сопряжены координатам $H_i{}^\beta$ \cite{Neifeld1r}.
    Поэтому
\begin{equation}
\label{re6v}
    v^i=H^i{}_\alpha V^\alpha=\overline{H^i{}_\alpha V^\alpha}
    \ \ \Rightarrow \ \ S_\alpha{}^{\beta '}V^\alpha=\bar V^{\beta '}\spsd
\end{equation}
    Это есть необходимое и достаточное условие того, что
    вектор $V^\alpha\in \tau^\mathbb C_x$ будет вещественным.
    При этом
\begin{equation}
\label{re7v}
    S_\alpha{}^{\beta '}\bar S_{\beta '}{}^\gamma=\delta_\gamma{}^\alpha\spsd
\end{equation}
    Метрику в $\tau_x^\mathbb R(V^n_{(p,q)})$ определим
    условием
\begin{equation}
\label{re8v}
    g_{\alpha\beta}V^\alpha V^\beta=\overline{g_{\alpha\beta}V^\alpha V^\beta}\sps
    \forall \bar V^{\beta '}=S_\alpha{}^{\beta '} V^\alpha\spsd
\end{equation}
    Это означает, что вещественный тензор пространства $\tau_x^\mathbb R(V^n_{(p,q)})$ определяется
    как тензор, самосопряженный относительно указанной эрмитовой инволюции
\begin{equation}
\label{re9v}
    g_{\alpha\beta}=S_\alpha{}^{\gamma '}S_\beta{}^{\delta '}
    \bar g_{\gamma '\delta '}\spsd
\end{equation}
    Поэтому, тензор
\begin{equation}
\label{re10v}
    g_{ij}:=H_i{}^\alpha H_j{}^\beta g_{\alpha\beta}=
    \overline{H_i{}^\alpha H_j{}^\beta g_{\alpha\beta}}
\end{equation}
    будет метрическим тензором $\tau_x^\mathbb R(V^n_{(p,q)})\subset \tau_x^\mathbb C(\mathbb CV^n)$. Вид метрики
    $g_{ij}$  существенно зависит от структуры оператора
    $H_i{}^\alpha$ и следовательно от тензора инволюции
    $S_\alpha{}^{\beta '}$.
    Комплексная риманова связность пространства $\mathbb CV^n$
    будет индуцировать на вещественной поверхности связность вида
\begin{equation}
\label{re11v}
    \nabla_i:=H_i{}^\alpha\nabla_\alpha
\end{equation}
    такую, что
\begin{equation}
\label{re12v}
     \nabla_i H_j{}^\alpha=i\ b_{ij}{}^gH_g{}^\alpha\sps
     \nabla_i g_{jg}=2ib_{i(jg)}\sps b_{ijg}:=b_{ij}{}^hg_{hg}\spsd
\end{equation}
     Потребуем, чтобы индуцированная связность была римановой, тогда
\begin{equation}
\label{re13v}
     b_{ijh}=b_{jih}\sps b_{ijh}=-b_{ihj}\ \ \Rightarrow\ \ \ b_{ijh}=0
\end{equation}
     и следовательно
\begin{equation}
\label{re14v}
     \nabla_iS_\alpha{}^{\beta '}=0\ \ \Rightarrow\ \ \
     \nabla_\gamma S_\alpha{}^{\beta '}=0\spsd
\end{equation}\\

     При n=6 сужение базы $\mathbb CV^6\longmapsto V^6_{(p,q)}$ позволяет
     определить расслоение $A^\mathbb C(S)=\mathbb C^4(S)(V^6_{(p,q)})$, слои которого
     должны быть снабжены некоторой дополнительной структурой $s$.
     Ниже будет показано, что эта структура в случае
     метрики четного индекса (т.е. количество минусов
     четно) определяется эрмитово-симметричным тензором,
     а в случае метрики нечетного индекса структура
     будет определена эрмитовой инволюцией.
     Для псевдориманова пространства четного индекса 4
     расслоение $A^\mathbb C(S)$ назовем твисторным, так как
     его слои будут изоморфны векторному пространству
     $\mathbb T$ \cite{Penrose1r}.

\subsection{Вторая глава}$ $
     \indent Эта часть основана на работах \cite{Penrose1r}-\cite{Penrose4r},
     где развит 4-мерный спинорный формализм. В \cite{Norden1r}
     введены связующие операторы $\eta_\alpha{}^{ab}$.
     Данные по группам Ли взяты из работ \cite{Besse1r},
     \cite{Shevale1r}, \cite{Postnikov1r}.
     6-мерный спинорный формализм, построенный в этой главе,
     основан на следующих 3 изоморфизмах:
\begin{enumerate}
     \item изоморфизм пространств $\mathbb C\mathbb R^6\cong\Lambda^2\mathbb C^4$;
     \item изоморфизм групп $SO(6,\mathbb C)\cong SL(4,\mathbb C)/\{\pm 1\}$;
     \item изоморфизм алгебр Ли $so(6,\mathbb C)\cong sl(4,\mathbb C)$.
\end{enumerate}
     В явном виде эти изоморфизмы описываются так:
\begin{enumerate}
     \item $r^\alpha=\frac{1}{2}\eta^\alpha{}_{ab}R^{ab}$,
     где $r^\alpha$ - координаты вектора в $\mathbb C\mathbb R^6$,
     $R^{ab}$ - координаты бивектора в $\Lambda^2\mathbb C^4$,
     а $\eta^\alpha{}_{ab}$ - координаты связующих операторов Нордена;
     \item $K_\alpha{}^\beta =\frac{1}{4} \eta^\beta{}_{ab}
     \eta_\alpha{}^{cd}\cdot 2 S_c{}^aS_d{}^b$, где
     $K_\alpha{}^\beta$ - координаты преобразования из группы
     $SO(6,\mathbb C)$, а $S_a{}^b$ - координаты преобразования из группы $SL(4,\mathbb C)$;
     \item $T^{\alpha\beta}=A^{\alpha\beta}{}_a{}^bT_b{}^a$,
     где $T^{\alpha\beta}$ - координаты бивектора пространства $\Lambda^2 \mathbb C\mathbb R^6$,
     а $T_a{}^b$ - координаты бесследного оператора пространства $\mathbb C^4$.
\end{enumerate}
     При этом связующие операторы удовлетворяют соотношениям
\begin{equation}
\label{re1r}
    g^{\alpha\beta}=1/4\cdot \eta^{\alpha}{}_{aa_1} \eta^{\beta}{}_{bb_1}
    \varepsilon^{aa_1bb_1}\sps
    \varepsilon^{aa_1bb_1}=
    \eta_{\alpha}{}^{aa_1} \eta_{\beta}{}^{bb_1}g^{\alpha\beta}\sps
\end{equation}
    где $(\alpha,\beta,...=1,2,3,4,5,6;$
    $a_1,b_1,a,b,e,f,k,l,m,n,...=1,2,3,4;$ $i,j,g,h=1,2,3,4,5,6)$.
    Из этого следует, что операторы Нордена удовлетворяют уравнению
    Клиффорда и определяют некоторую полную
    клиффордову алгебру, которая реализуется с помощью алгебры
    матриц размерности $8\times 8$.\\
    \indent Рассматривая далее вложение $\mathbb R^6_{(p,q)}\subset \mathbb C\mathbb R^6$,
    можно получить разложения и для тензора эрмитовой инволюции
    $S_\alpha{}^{\beta '}(=\frac{1}{4}\eta_\alpha{}^{aa_1}\bar\eta^{\beta '}{}_{b'b_1 '}
    s_{aa_1}{}^{b'b_1 '})$
\begin{equation}
\label{re2r}
    s_{aa_1}{}^{b'b'{}_1}=s^{cb'}s^{c_1b'{}_1} \varepsilon_{cc_1aa_1}
    \sps \bar s_{a'b}=\pm s_{ba'}
\end{equation}
    в случае метрики четного индекса инерции q и
\begin{equation}
\label{re3r}
    s_{aa_1}{}^{b'b'{}_1}=2s_{\left[ a \right.}{}^{b'}s_{\left. a_1\right]}
    {}^{b'{}_1}\sps s_a{}_{b '}\bar s_{b '}{}^c=\pm \delta_a{}^c
\end{equation}
    в случае метрики нечетного индекса.
    В качестве показательного примера в специальном базисе рассматривается
    вложение $\mathbb R^6_{(2,4)}\subset \mathbb C\mathbb R^6$.\\
    В последнем параграфе данной главы вводятся обобщенные операторы Нордена,
    как аналитические функции координат точки $z^\gamma$ так, что
    выполнено
\begin{equation}
\label{re4r}
    g^{\alpha\beta}(z^\gamma)=1/4\cdot \eta^{\alpha}{}_{aa_1}(z^\gamma)
    \eta^{\beta}{}_{bb_1}(z^\gamma)\varepsilon^{aa_1bb_1}(z^\gamma)\spsd
\end{equation}
    На этом заканчивается построение необходимого спинорного
    формализма для пространства $\mathbb CV^6$.\\
    \indent Следует отметить, что указанный спинорный формализм
    во многом сходен с 4-мерным спинорным формализмом Пенроуза,
    при котором риманово пространство $V^4$ есть база расслоений
    $\mathbb C^2(V^4_{(1,3)})$ и $\mathbb C^4(V^4_{(1,3)})$. Векторы пространства $\mathbb C^2(V^4_{(1,3)})$
    называются у Пенроуза спинорами, а векторы $\mathbb C^4(V^4_{(1,3)})$ -
    твисторами. Основной отличительной чертой твисторов данной
    диссертации и является то, что твистором называется вектор расслоения
    $\mathbb C^4(V^6_{(2,4)})$, и такое истолкование помогает получить новые результаты. Это может
    привести к новой трактовке физической интерпретации твисторов,
    изложенной в монографии \cite{Penrose1r}.

\subsection{Третья глава}$ $
    \indent Третья глава посвящена введению связности в расслоениях
    с комплексной базой $\mathbb CV^6$.
    Введение связности осуществляется согласно \cite{Neifeld6r},
    \cite{Norden3r}, \cite{Penrose1r}. Комплексное и действительное представления взяты
    из \cite{Koboyasi1r}, \cite{Lichnerowicz1r}.
    В качестве базы задается комплексно-аналитическое
    риманово пространство $\mathbb CV^6$. Для этого рассматривается
    комплексная аналитическая квадрика $\mathbb CQ_6$, вложенная
    в проективное пространство $\mathbb CP_7$
\begin{equation}
\label{re5r}
    G_{AB}X^AX^B=0\sps
\end{equation}
     где $(A,B,...=\overline{1,8})$. Многообразие плоских
     образующих максимального ранга $\mathbb C\mathbb P_3$ какого-нибудь
     семейства (их как известно два) - комплексно шестимерно. Далее рассматривается
     гармоническая нормализация такого семейства, которая в локальных
     координатах имеет вид
\begin{equation}
\label{re6r}
     X^a=X^a(u^\Lambda)\sps Y_b=Y_b(u^\Lambda)\sps
\end{equation}
     где $u^\Lambda$ - двенадцать вещественных параметров
     $(\Lambda ,\Psi ,...=\overline{1,12})$. Первые деривационные
     уравнения этого нормализованного семейства имеют вид
\begin{equation}
\label{re7r}
     \nabla_\Lambda X_a=Y^bM_{\Lambda ab}\sps
     M_{\Lambda ab}=-M_{\Lambda ba}\spsd
\end{equation}
     Для переброски парных индексов используется квадривектор
     $\varepsilon_{abcd}$, кососимметричный по всем
     своим индексам
\begin{equation}
\label{re8r}
     M_\Lambda{}^{ab}=\frac{1}{2}M_{\Lambda cd}\varepsilon^{abcd}\spsd
\end{equation}
     Кроме того, операторы $M_\Lambda{}^{ab}$ каждому вектору
     базы ставит в соответствие бивектор из $\Lambda^2\mathbb C^4$
\begin{equation}
\label{re9r}
     V^{ab}=M_\Lambda{}^{ab}V^\Lambda\spsd
\end{equation}
     Это определит метрический тензор в касательном расслоении
\begin{equation}
\label{re10r}
       G_{\Lambda\Psi}=\frac{1}{4}
       (M_\Lambda{}^{ab}M_\Psi{}^{cd}\varepsilon_{abcd}+
       \bar M_\Lambda{}^{a'b'}\bar M_\Psi{}^{c'd'}\varepsilon_{a'b'c'd'})\spsd
\end{equation}
     Таким образом, база - многообразие плоских образующих - превратится
     в 12-мерное риманово пространство $V^{12}_{(6,6)}$ с метрическим тензором
     $G_{\Lambda\Psi}$ с заданной на нем комплексной структурой
     $f_\Lambda{}^\Psi$, удовлетворяющей следующему соотношению
\begin{equation}
\label{re11r}
       \bigtriangleup_\Lambda{}^\Psi=\frac{1}{2}
       (\delta_\Lambda{}^\Psi+if_\Lambda{}^\Psi)=\frac{1}{2}
       M_{\Lambda ab} M^{\Psi ab}\spsd
\end{equation}
     В качестве слоев расслоения $A^\mathbb C$ рассматривается пространство
     $\mathbb C^4$, определенное 4 базисными точками $X_a$ плоской образующей.
     Комплексной реализацией  пространства $V^{12}_{(6,6)}$  является пространство
     $\mathbb CV^6$ так, что отображение касательных пространств  происходит
     с помощью операторов Нейфельда $m_\alpha{}^\Lambda$.
     Коэффициенты связности
     определятся в этом случае через уравнения
\begin{equation}
\label{re12r}
     \nabla_\alpha m_\alpha{}^\Lambda=0\sps
     \bar\nabla_{\alpha '}\bar m_{\alpha '}{}^\Lambda=0\spsd
\end{equation}
     При этом за комплексную ковариантную производную
     можно принять производную такого вида
\begin{equation}
\label{re13r}
     \nabla_\alpha=m_\alpha{}^\Lambda\nabla_\Lambda\sps
     \bar\nabla_{\alpha '}=\bar m_{\alpha '}{}^\Lambda\nabla_\Lambda\spsd
\end{equation}
     Затем в этой главе устанавливаются свойства римановой связности без кручения,
     продолженной на слои расслоения $A^\mathbb C$. Оказывается, это продолжение
     единственно и задается требованием ковариантного постоянства
     квадривектора $\varepsilon_{abcd}$. После этого с помощью
     операторов вложения удается перейти к вещественной связности,
     но при этом необходимо потребовать ковариантное постоянство эрмитовой
     инволюции.\\
     \indent Все это позволяет рассмотреть конформно-инвариантное
     битвисторное уравнение
\begin{equation}
\label{re14r}
       \nabla^{c\left(\right.d}X^{a\left.\right)}=0\spsd
\end{equation}
     Его решения связываются с нуль парами Розенфельда,
     которые играют важную роль в дальнейших исследованиях.

\subsection{Четвертая глава}$ $
     \indent Четвертая глава посвящена проблеме классификации
     тензора кривизны 6-мерных (псевдо-)римановых пространств и
     свойствам его бивекторов.\\
     \indent Показывается, как, пользуясь определенным в первой главе
     спинорным формализмом, упростить тензорную запись
     основных тождеств для тензора кривизны. Утверждается,
     что классификацию такого тензора можно
     свести к классификации тензора пространства $\mathbb  C^4$ такого, что
\begin{equation}
\label{re15r}
     R_{\alpha\beta\gamma\delta}=
     A_{\alpha\beta a}{}^b A_{\gamma\delta c}{}^d R_b{}^a{}_d{}^c\spsd
\end{equation}
     При этом тождество Бианки, которым удовлетворяет тензор кривизны
\begin{equation}
\label{re16r}
    R_{\alpha \beta \gamma \delta}+R_{\alpha \delta \beta \gamma}+
    R_{\alpha \gamma \delta \beta}=0\sps
\end{equation}
    будет иметь вид
\begin{equation}
\label{re17r}
     R_l{}^d{}_s{}^l=-\frac{1}{8}R\delta_s{}^d\spsd
\end{equation}
    Как видно, вместо 105 уравнений из (\ref{re16r}) можно
    рассматривать всего 16 уравнений из (\ref{re17r}), из
    которых 15 будут существенными. Можно построить таким
    же образом спинорный аналог тензора Вейля
\begin{equation}
\label{re18r}
     C_{\alpha\beta\gamma\delta}=
     A_{\alpha\beta a}{}^b A_{\gamma\delta c}{}^d C_b{}^a{}_d{}^c\spsd
\end{equation}

    Интересен, в качестве следствия из данной теоремы, следующий факт.
    Произвольный простой изотропный бивектор пространства $\Lambda^2\mathbb C^6$
    определит с точностью до множителя вектор пространства $\mathbb C^4$.
    Это даст возможность в пространстве $\mathbb R^6_{(2,4)}$ построить
    геометрическую интерпретацию твистора, во многом схожую с интерпретацией
    спинора Пенроуза в пространстве $\mathbb R^4_{(1,3)}$ - флаг, составленный
    из флагштока и полотнища флага.\\
    \indent И, наконец, утверждается, что в пространстве $\mathbb R^6_{(p,q)}$ четного
    индекса q  любой бивектор может быть приведен к каноническому виду
    в некоторым базисе
\begin{equation}
\label{re19r}
    \frac{1}{2}R_{\alpha\beta}X^\alpha Y^\beta=
    R_{16}X^{\left[\right.1} Y^{6\left.\right]}+
    R_{23}X^{\left[\right.2} Y^{3\left.\right]}+
    R_{45}X^{\left[\right.4} Y^{5\left.\right]}\spsd
\end{equation}

\subsection{Пятая глава}$ $
   \indent В последней главе, используя вектор $X^a$ и ковектор $Y_b$ из
   слоев расслоения $A^\mathbb C$ и ему дуального $A^{\mathbb C*}$, строится 8-мерное комплексное пространство
   $\mathbb T^2$ как прямая сумма $\mathbb C^4\oplus \mathbb C^{*4}$
\begin{equation}
\label{re20r}
    X^A:=(X^a,Y_b)\sps
\end{equation}
    и $X^A\in \mathbb T^2$. При этом $X^a$ и $Y_b$ удовлетворяют
    следующей системе
\begin{equation}
\label{re21r}
    \left\{
    \begin{array}{lcl}
    X^a & = & \dot X^a -ir^{ab}Y_b\sps \\
    Y_b & = & \dot Y_b\sps
    \end{array}
    \right.
\end{equation}
    где $r^{ab}$ - координаты бивектора пространства $\mathbb C\mathbb R^6$,
    а $\dot X^a$, $\dot Y_b$ - значения $X^a$, $Y_b$ в некоторой
    точке O. На самом деле, систему (\ref{re21r}) можно рассматривать
    как решения битвисторного уравнения, а $\dot X^a$, $\dot Y_b$
    будут его частными решениями. Рассматривая г.м.т., для которых
    $X^a=0$, можно прийти к нуль-парам Розенфельда и сформулировать
    следующее утверждение.
\begin{theoremr}
     (Принцип тройственности для двух B- цилиндров).\\
     В проективном пространстве $\mathbb C\mathbb P_7$ существуют
     две квадрики (два B - цилиндра), обладающие
     следующими общими свойствами:
     \begin{enumerate}
     \item Плоская образующая $\mathbb C\mathbb P_3$ одной квадрики
           взаимооднозначно определит точку R другой.
     \item Плоская образующая $\mathbb C\mathbb P_2$ одной квадрики
           однозначно определит точку R другой. Но точке
           R можно сопоставить многообразие плоских образующих
           $\mathbb C\mathbb P_2$, принадлежащих одной плоской образующей $\mathbb C\mathbb P_3$
           второй квадрики.
     \item Прямолинейная образующая $\mathbb C\mathbb P_1$ одной квадрики
           взаимооднозначно определит прямолинейную образующую
           $\mathbb C\mathbb P_1$ из другой. Причем все прямолинейные образующие,
           принадлежащие одной плоской образующей $\mathbb C\mathbb P_3$ первой квадрики,
           определят пучок с центром в точке R, принадлежащий
           второй квадрике.
     \end{enumerate}
\end{theoremr}
      Это позволяет ввести операторы $\eta^A{}_{KL}$ такие, что
\begin{equation}
\label{re22r}
      r^A=\frac{1}{4}\eta^A{}_{KL}R^{KL}\sps
\end{equation}
      где $r^A$ - координаты вектора пространства $\mathbb C\mathbb R^8$, а
      $R^{KL}$ - координаты некоторого тензора пространства $\mathbb C\mathbb R^8$.
      При этом связующие операторы $\eta^A{}_{KL}$ определят
      некоторую полную алгебру Клиффорда, поскольку будут
      удовлетворять клиффордову уравнению
\begin{equation}
\label{re23r}
      G_{AB}\delta_K{}^L=\eta_{AK}{}^R\eta_B{}^L{}_R+\eta_{BK}{}^R\eta_A{}^L{}_R\spsd
\end{equation}
      В этом случае у нас будет пара метрических тензоров
\begin{equation}
\label{re24r}
      \begin{array}{c}
      \varepsilon_{KLMN}=\eta^A{}_{KL}\eta_{AMN}\sps\\[2ex]
      G_{AB}=\frac{1}{4}\eta_A{}^{KL}\eta_B{}^{MN}\varepsilon_{KM}\varepsilon_{LN}\sps
      \varepsilon_{(KL)(MN)}=\frac{1}{2}\varepsilon_{KL}\varepsilon_{MN}\spsd
      \end{array}
\end{equation}
      С помощью первого тензора можно поднимать  и опускать
      парные индексы, а с помощью второго проделывать указанную
      операцию с одиночными индексами. Это накладывает
      жесткие условия на связующие операторы, например
\begin{equation}
\label{re25r}
      \eta^A{}_{(MN)}=\frac{1}{8}\eta^A{}_{KL}\varepsilon^{KL}\varepsilon_{MN}\spsd
\end{equation}
      Такие связующие операторы будут определять структурные константы
      алгебры октав и приведут к двулистному накрытию \linebreak[4]
      $Spin(8,\mathbb C)/\{\pm 1\}\cong SO(8,\mathbb C)$. Поэтому операторы
      $\eta^A{}_{KL}$ во многом схожи с операторами Нордена
      $\eta^\alpha{}_{kl}$ по своим свойствам.\\

\subsection{Заключение}$ $
      \indent Следует отметить, что в конце диссертации имеется Приложение,
      в котором приведены все необходимые алгебраические выкладки.\\
      \indent Основные результаты диссертации опубликованы в открытой печати:
\begin{enumerate}
      \item "О бивекторах 6-мерных римановых пространств".
            УТИС, Уфа, 1996, с. 59-61;
      \item "О структуре тензора кривизны 6-мерных римановых пространств".
             Вестник БГУ, Уфа, N2(I), 1996, с. 44-47;
      \item "О твисторных расслоениях с 6-мерной базой".
            МГС, Казань, 1997, c. 13;
      \item "О геометрии битвисторов".
            РКСА, Уфа, 1997, с. 85-87.
\end{enumerate}
      и докладывались на конференциях:
\begin{enumerate}
      \item "Ленинские горы - 95", г. Москва;
      \item "Чебышевские чтения - 96", г. Москва;
      \item "Лобачевские чтения - 97", г. Казань;
      \item многочисленные конференции в г. Уфе и
      семинары, проходившие в г. Казани (кафедра геометрии КГУ).
\end{enumerate}

      \indent Автор выражает благодарность за помощь в подготовке диссертации
      своему научному руководителю доц. Э.Г. Нейфельду и кафедре
      геометрии КГУ (зав.кафедрой проф. Б.Н. Шапуков).

\newpage
\section{Основные тождества и формулы}$ $
    \indent Эта глава посвящена изучению алгебраических свойств накрытия
    $$
    SO(6,\mathbb C)\cong SL(4,\mathbb C)/{\{\pm 1\}}\spsd
    $$
    На основе этого изоморфизма строится элементарная
    алгебраическая база, необходимая для дальнейших построений.
    Для этого рассматриваются различные векторные расслоения
    с базой $\mathbb CV^6(\mathbb C\mathbb R^6)$. Касательное расслоение $\tau^\mathbb C(\mathbb CV^6)$, содержащее слои,
    изоморфные $\mathbb C\mathbb R^6$, будет изоморфно расслоению $\Lambda$
    со слоями, изоморфными $\Lambda^2\mathbb C^4$, что следует из
    существования связующих операторов Нордена
    $$
    r^\alpha=\frac{1}{2}\eta^\alpha{}_{aa_1}R^{aa_1}\sps
    $$
    где $(\alpha,\beta,...=1,2,3,4,5,6;a_1,b_1,a,b,...=1,2,3,4)$.
    Кроме того, рассматривается расслоение $A^\mathbb C$ со слоями, изоморфными
    $\mathbb C^4$, и базой $\mathbb CV^6(\mathbb C\mathbb R^6)$. Отсюда будет следовать существование
    таких операторов $A_{\alpha\beta a}{}^b$, что
    $$
    T_{\alpha \beta}= A_{\alpha \beta d}{}^cT_c{}^d \sps
    T_a{}^a=0 \sps T_{\alpha \beta}=-T_{\alpha \beta}\spsd
    $$
    Как следует из результатов предпоследнего пункта этой главы,
    рассматривающего инфинитизимальные преобразования, построенные
    операторы являются алгебраической реализацией изоморфизма
    алгебр Ли
    $$
    so(6,\mathbb C)\cong sl(4,\mathbb C)\spsd
    $$
    Затем исследуются вещественные вложения для указанных изоморфизмов
    с помощью оператора вложения $H_i{}^\alpha$ и инволюции $S_\alpha{}^{\beta '}$. При этом операция
    сопряжения, индуцируемая в расслоении $A^\mathbb C$, разбивается
    на два класса. В первом случае (пространство $V^6_{(p,q)}(\mathbb R^6_{(p,q)})$
    имеет метрику четного индекса q) сопряжение осуществляется с помощью
    тензора эрмитового поляритета $s^{aa'}$
    $$
    \bar X^{a'}:=s^{aa'}X_a\spsd
    $$
    Во втором случае, когда q - нечетно, - с помощью тензора эрмитовой
    инволюции $s_a{}^{a'}$
    $$
    \bar X^{a'}:=s_a{}^{a'}X^a\spsd
    $$
    Второй пункт как раз и посвящен выяснению этого факта,
    который доказывается с использованием теорем из монографии
    \cite{Penrose1r}.

\subsection{\texorpdfstring{Бивектора пространства $\Lambda^2\mathbb C^4(\Lambda^2\mathbb R^4)}{О бивекторах}$}
\subsubsection{Операторы Нордена}$ $

     Известно, что можно установить изоморфизм между комплексным
    евклидовым пространством $\mathbb C\mathbb R^6$ $(\mathbb R^6_{(3,3)})$
    и пространством бивекторов  $\Lambda^2\mathbb C^4(\Lambda^2\mathbb R^4)$.
    Этот изоморфизм определяется связующими операторами
    Нордена \cite{Norden1r}, удовлетворяющими следующим условиям
\begin{equation}
\label{re0}
    \frac{1}{2}\eta^\alpha{}_{aa_1}\eta_\beta{}^{aa_1}=\delta_\alpha{}^\beta\sps
    \eta^\alpha{}_{aa_1}\eta_\alpha{}^{bb_1}=\delta_{aa_1}{}^{bb_1}:=
    2\delta_{\left[\right.a}{}^{\left[\right.b}
    \delta_{a_1\left.\right]}{}^{b_1\left.\right]}
\end{equation}
    так, что выполнено
\begin{equation}
\label{re1}
    r^{\alpha}=1/2\cdot \eta^{\alpha}{}_{aa_1}R^{aa_1}\sps
    R^{aa_1}=\eta_{\alpha}{}^{aa_1}r^{\alpha}\sps
\end{equation}
    где $(\alpha,\beta,...=1,2,3,4,5,6;$
    $a_1,b_1,a,b,e,f,k,l,m,n,...=1,2,3,4;$ $i,j,g,h=1,2,3,4,5,6)$,
    причем
\begin{equation}
\label{re2}
    \begin{array}{c}
    g^{\alpha\beta}=1/4\cdot \eta^{\alpha}{}_{aa_1} \eta^{\beta}{}_{bb_1}
    \varepsilon^{aa_1bb_1}\sps
    \varepsilon^{aa_1bb_1}=
    \eta_{\alpha}{}^{aa_1} \eta_{\beta}{}^{bb_1}g^{\alpha\beta}\sps\\[2ex]
    g_{\alpha\beta}=1/4\cdot \eta_{\alpha}{}^{aa_1} \eta_{\beta}{}^{bb_1}
    \varepsilon_{aa_1bb_1}\sps
    \varepsilon_{aa_1bb_1}=
    \eta^{\alpha}{}_{aa_1} \eta^{\beta}{}_{bb_1}g_{\alpha\beta}\spsd
    \end{array}
\end{equation}
    При этом $R^{aa_1}$ - координаты бивектора из пространства
    $\Lambda^2\mathbb C^4$, а $r^{\alpha}$ - координаты его образа в $\mathbb C\mathbb R^6$;
    $g^{\alpha\beta}$ - метрический тензор пространства
    $\mathbb C\mathbb R^6$,  а его образ - тензор $\varepsilon_{aa_1bb_1}$,
    кососимметричный по всем индексам.

\begin{noter}
\label{note10r} Отметим, что с помощью метрического тензора $g_{\alpha\beta}$,
    заданного на пространстве $\mathbb C\mathbb R^6$,
    мы можем поднимать и опускать одиночные индексы, в то время
    как с помощью метрического 4-вектора $\varepsilon_{aa_1bb_1}$,
    заданного в расслоении $\Lambda$,
    мы можем поднимать и опускать только парные кососимметричные индексы, и
    нет такого метрического тензора, с помощью которого
    можно было бы проделать подобную операцию с одиночными
    индексами.
\end{noter}

    Отсюда следует, что существуют операторы $A_{{\alpha}{\beta}d}{}^c$
    такие,  что
\begin{equation}
\label{re4}
    T_{\alpha \beta}= A_{\alpha \beta d}{}^cT_c{}^d \sps
    T_k{}^k=0 \sps T_{\alpha \beta}=-T_{\alpha \beta}\spsd
\end{equation}
\newpage
    \noindent Приведем доказательство этого факта.
\begin{proof} Для символов Леви-Чивита имеем
\begin{equation}
\label{re7d}
    \begin{array}{c}
    \varepsilon^{abcd}\varepsilon_{klmn}=
    24\delta_{\left[\right.k}{}^a\delta_l{}^b\delta_m{}^c\delta_{n\left.\right]}{}^d\sps\\[2ex]
    \varepsilon^{abcd}\varepsilon_{klmd}=
    6\delta_{\left[\right.k}{}^a\delta_l{}^b\delta_{m\left.\right]}{}^c\sps\\[2ex]
    \varepsilon^{abcd}\varepsilon_{klcd}=
    4\delta_{\left[\right.k}{}^a\delta_{l\left.\right]}{}^b\sps\\[2ex]
    \varepsilon^{abcd}\varepsilon_{kbcd}=
    6\delta_k{}^a\sps\\[2ex]
    \varepsilon^{abcd}\varepsilon_{abcd}=
    24\sps\\[2ex]
    \delta_{kl}^{ab}:=2\delta_{\left[\right.k}{}^a\delta_{l\left.\right]}{}^b\spsd
    \end{array}
\end{equation}
    Кроме того, поскольку $\varepsilon^{abcd}$ - метрический, следует,
    что (\cite[т. 2, стр. 83, (6.2.19)]{Penrose1r})
\begin{equation}
\label{re8d}
    R_{ab}=\frac{1}{2}\varepsilon_{abcd}R^{cd}\spsd
\end{equation}
    Тогда с учетом формул (\ref{re1}) и (\ref{re2}) для
    некоторого тензора $T_{\alpha\beta}$
    имеем
    $$
    T_{\left[\alpha \beta \right]}=
    1/4\cdot\eta_\alpha{}^{aa_1}\eta_\beta{}^{bb_1}\cdot
    3/2(T_{a\left[ \right.a_1bb_1\left. \right]}-T_{\left[ \right. bb_1a
    \left.  \right]a_1})=
    $$
    $$
    =1/4\eta_\alpha{}^{aa_1}\eta_\beta{}^{bb_1}\cdot
    1/2(T_{ak}{}^{kd}\varepsilon_{a_1dbb_1}-T^{kd}{}_{ka_1}\varepsilon_
    {adbb_1})=
    1/2\eta_\alpha{}^{ca}\eta_\beta{}^{bk_1}\varepsilon_{adbk_1}\cdot
    $$
\begin{equation}
\label{re20}
    1/4(T^{kd}{}_{kc}-T_{kc}{}^{kd})
    =-1/2\eta_\alpha{}^{bk_1}\eta_\beta{}^{ca}\varepsilon_{adbk_1}\cdot
    1/4(T^{kd}{}_{kc}-T_{kc}{}^{kd})\spsd
\end{equation}
    Поэтому, чтобы получить формулу (\ref{re4}), следует положить
    $$
    A_{\alpha \beta d}{}^c:=\frac{1}{2}\eta_{\left[\alpha \right.}{}^{ca}\eta_{\left.
    \beta \right]}{}^{bk_1}\varepsilon_{dabk_1}
    \sps T_c{}^d:=1/4(T_{kc}{}^{kd}-T^{kd}{}_{kc})\sps
    $$
    $$
    T_{aa_1bb_1}-T_{bb_1aa_1}=4 \varepsilon_{bb_1c\left[\right.a_1}
    T_{a\left.\right]}{}^c\sps
    $$
\end{proof}

    \noindent Выпишем наиболее важные соотношения для операторов
    $A_{\alpha\beta b}{}^a$
\begin{equation}
\label{re21}
    \begin{array}{c}
    A_{\alpha\beta d}{}^cA^{\alpha\beta}{}_r{}^s=\frac{1}{2}\delta_r{}^s\delta_d{}^c-
    2\delta_ r{}^c\delta_d{}^s\sps
    A_{\alpha\beta d}{}^cA^{\lambda\mu}{}_c{}^d=
    2\delta_{\left[\alpha \right.}{}^\mu\delta_{\left.\beta\right]}{}^\lambda\sps
    \\[3ex]
    A_{\alpha\beta d}{}^cA_\gamma{}^\beta{}_r{}^s=
    (\eta_\alpha{}^{cs}\eta_\gamma{}_{rd}+
    \eta_\alpha{}^{ck}\eta_\gamma{}_{kr}\delta_d{}^s)+
    1/2(\eta_\alpha{}^{sn}\eta_\gamma{}_{rn}\delta_d{}^c+
    \\
    +\eta_\alpha{}^{ck}\eta_\gamma{}_{dk}\delta_r{}^s)-
    1/4g_{\alpha\gamma}\delta_r{}^s\delta_d{}^c\sps
    \\[3ex]
    T_m{}^n=\frac{1}{2}A^{\alpha\beta}{}_m{}^nT_{\beta\alpha}
    \sps T_{\beta\alpha}=-T_{\alpha\beta}\spsd
    \end{array}
\end{equation}
    Доказательство этих формул довольно громоздко и, поэтому, вынесено
    в приложение  (\ref{re1p})-(\ref{re5p}).\\

    Следует отметить, что операторы Нордена определяют некоторую
    Клиффордову алгебру.
\begin{proof} Рассмотрим тождество
\begin{equation}
\label{re15v}
    \eta_{\left(\right.\alpha}{}^{a\left[\right.b}
    \eta_{\beta\left.\right)}{}^{cd\left.\right]}=
    -\eta_{\left(\right.\alpha}{}^{\left[\right.cb}
    \eta_{\beta\left.\right)}{}^{|a|d\left.\right]}=
    \eta_{\left(\right.\alpha}{}^{\left[\right.ab}
    \eta_{\beta\left.\right)}{}^{cd\left.\right]}\sps
\end{equation}
    свертка которого с $\varepsilon_{bcdn}$ даст цепочку тождеств
\begin{equation}
\label{re16v}
    \begin{array}{c}
    \eta_{\left(\right.\alpha}{}^{ab}\eta_{\beta\left.\right)}{}^{cd}
    \varepsilon_{bcdn}=\frac{1}{24}
    \eta_{\left(\right.\alpha}{}^{kl}\eta_{\beta\left.\right)}{}^{mn}
    \varepsilon_{klmn}\varepsilon^{abcd}\varepsilon_{bcdn}\sps\\[3ex]
    2\eta_{\left(\right.\alpha}{}^{ab}\eta_{\beta\left.\right)}{}_{bn}=
    -\frac{1}{4}
    \eta_{\left(\right.\alpha}{}^{kl}\eta_{\beta\left.\right)}{}^{mn}
    \varepsilon_{klmn}\delta_n{}^a\sps\\[3ex]
    \eta_{\left(\right.\alpha}{}^{ab}\eta_{\beta\left.\right)}{}_{nb}=
    \frac{1}{2}g_{\alpha\beta}\delta_n{}^a\sps
    \end{array}
\end{equation}
    где $g_{\alpha\beta}$ тот же, что и в формуле (\ref{re2}).
    Определим
\begin{equation}
\label{re17v}
    \eta_\alpha:=\parallel \eta_\alpha{}^{aa_1} \parallel \sps
    \sigma_\alpha:=\parallel -\eta_\alpha{}_{aa_1} \parallel\sps
    \gamma_\alpha:=\sqrt{2}\left(
    \begin{array}{cc}
    0           & \sigma_\alpha \\
    \eta_\alpha & 0
    \end{array}
    \right)\sps
\end{equation}
    где $\lambda ,\psi,...=\overline{1,6}$. Тогда операторы $\gamma_\lambda$
    удовлетворяют следующему тождеству
\begin{equation}
\label{re18v}
    \gamma_\lambda\gamma_\psi+\gamma_\psi\gamma_\lambda=2g_{\lambda\psi}I\sps
\end{equation}
    что следует из уравнения (\ref{re16v}). Это уравнение является
    уравнением Клиффорда (\cite[т. 2, стр. 519, (Б.1)]{Penrose1r})
    так, что $\gamma_1, \gamma_2, \gamma_3, \gamma_4, \gamma_5, \gamma_6$ - операторы с
    комплексными матрицами размерности $8\times 8,\ g_{\alpha\beta}$ -
    метрический тензор (\ref{re2}), а I - единичный оператор.\\

    Верно обратное. Пусть мы имеем уравнение (\ref{re18v}). Тогда
    мы можем построить элемент $\gamma_7$
\begin{equation}
\label{re18va}
    \gamma_7:=\gamma_1\gamma_2\gamma_3\gamma_4\gamma_5\gamma_6\sps
    (\gamma_7){}^2=I\spsd
\end{equation}
    В этом случае, поскольку n=6 (четно), $\gamma_7$ антикоммутирует
    с любым элементом $\gamma_\alpha,(\alpha=\overline{1,6})$.
    Это означает, что для $\gamma_\alpha$ возможно представление
    (\ref{re17v}), и следовательно верны тождества (\ref{re15v}).\\

    Отсюда следует, что операторы Нордена определяют полную
    Клиффордову алгебру, которая образована конечными суммами
\begin{equation}
\label{re18vab}
     AI+B^\lambda\gamma_\lambda+C^{\lambda\mu}\gamma_\lambda\gamma_\mu+...
\end{equation}
     Размерность этой алгебры равна $2^6=64$. Такая алгебра
     может быть представлена полной матричной алгеброй,
     элементы которой имеют размерность $8\times 8$
     (\cite[т. 2, стр. 518-546]{Penrose1r}).
\end{proof}

\subsubsection[Сопряжение в расслоении A]{Сопряжение в расслоении $A^\mathbb C(S)$}$ $

    \indent В этом параграфе формулируется некоторое утверждение,
    касающееся вложения вещественных пространств в комплексные,
    доказательство приводится же в следующем пункте.

    Рассмотрим 6-мерное (псевдо-)евклидово пространство $\mathbb R^6_{(p,q)}$,
    вложенное в $\mathbb C\mathbb R^6$, с метрикой произвольного индекса q,
    касательное пространство $\tau^\mathbb R(\mathbb R^6_{(p, q)})$ которого будем
    рассматривать как вещественное
    подпространство пространства $\mathbb C^4$. Это приведет к
    векторному расслоению $A^\mathbb C(S)$ со слоями, изоморфными $\mathbb C^4$, и
    некоторой структурой s. Нам необходимо выяснить природу этой структуры.
    Для этого рассмотрим простой бивектор из $\tau^\mathbb C(\tau^\mathbb R)$.
    Необходимое и достаточное условие простоты выражается формулой
\begin{equation}
\label{re5}
    R^{ab}=X^aY^b-X^bY^a\sps X^a\, Y^a\in \mathbb C^4\sps
\end{equation}
    где $i,j=\overline{1,6};a,b,c,d,k,l,m,n,...=\overline{1,4}$. \\

\begin{noter}
\label{note11r}
    На основании формулы (\ref{re1}) простому бивектору из
    пространства $\Lambda^2\mathbb C^4$ ставится в соответствие
    изотропный вектор пространства $\mathbb C\mathbb R^6$. Это следует из
    соотношения
\begin{equation}
\label{re19v}
    \begin{array}{c}
    0=\varepsilon_{abcd}X^aY^bX^cY^d=
    \frac{1}{4}\varepsilon_{abcd}R^{ab}R^{cd}
    =\frac{1}{2}\eta_\alpha{}^{km} r^\alpha \eta^\beta{}_{km}r_\beta=
    r^\alpha r_\alpha=0\spsd
    \end{array}
\end{equation}
\end{noter}

    Далее, любой бивектор должен быть самосопряжен относительно
    тензора $s_{aa_1b'b'{}_1}$
    (обозначения введены согласно \cite{Penrose1r})
\begin{equation}
\label{re6}
    s_{aa_1b'b'{}_1}=g_{ij}\eta^i{}_{b'b'{}_1}
    \eta^j{}_{aa_1}\sps
    g_{ij}=1/4\cdot\eta_i{}^{b'b'{}_1}\eta_j{}^{aa_1}
    s_{aa_1b'b'{}_1}\sps
\end{equation}
    $$
    \bar s_{b'b'{}_1aa_1}=s_{aa_1b'b'{}_1}\sps
    $$
    где $g_{ij}$ - метрический тензор (\ref{re10v}).
    Последнее уравнение выражает эрмитову симметрию тензора s.
    Такой тензор был введен в работе \cite{Neifeld1r}. В случае
    метрики четного индекса тензор $s_{aa_1}{}^{b'b'{}_1}$ (поднятие и
    спуск двойных индексов осуществляется с помощью метрического
    4-вектора $\varepsilon_{aa_1bb_1}$) имеет вид
\begin{equation}
\label{re7}
    s_{aa_1}{}^{b'b'{}_1}=s^{cb'}s^{c_1b'{}_1} \varepsilon_{cc_1aa_1}
    \sps \bar s_{a'b}=\pm s_{ba'}\sps
\end{equation}
    а в случае нечетного, получим
\begin{equation}
\label{re8}
    s_{aa_1}{}^{b'b'{}_1}=2s_{\left[ a \right.}{}^{b'}s_{\left. a_1\right]}
    {}^{b'{}_1}\sps s_a{}^{b '}\bar s_{b '}{}^c=\pm\delta_a{}^c\spsd
\end{equation}
    Если $R^{ab}$ прост и принадлежит касательному пространству $\tau^\mathbb R$,
    то для составляющих его векторов
    выполнено (сравн. \cite[т. 2, стр. 80, (6.2.13)]{Penrose1r})
    в случае метрики четного индекса
\begin{equation}
\label{re9}
    X_aX_{a'}s^{aa'}=0 \sps Y_aY_{a'}s^{aa'}=0 \sps X_aY_{a'}s^{aa'}=0\sps
\end{equation}
    а для нечетного
\begin{equation}
\label{re10}
    X_{a'}X^as_a{}^{a'}=0 \sps
    X_{a'}Y^as_a{}^{a'}=0 \sps
    Y_{a'}Y^as_a{}^{a'}=0\spsd
\end{equation}
    Таким образом определяется структура s расслоения $A^\mathbb C(S)$.
    При этом тензор $s_{kk'}$ в случае метрики четного индекса выполняет роль
    метрического тензора, с помощью которого можно поднимать и опускать одиночный
    индекс; а в случае метрики нечетного индекса с помощью тензора $s_k{}^{k'}$
    происходит отождествление штрихованных (комплексно-сопряженных)
    и нештрихованных пространств. Доказательству этого утверждения как раз и посвящен следующий параграф.

\subsection{Спинорное представление тензоров специального вида.
            Накрытия, соответствующие этому разложению}
\subsubsection{\texorpdfstring{Теорема о двулистности накрытия группы $SO(6,\mathbb C)$
            группой $SL(4,\mathbb C)$}{Теорема о двулистном накрытии}}$ $
    \indent Прежде чем приступить к указанному доказательству,
    нам необходимо более тщательно разобраться с
    двулистным накрытием \linebreak[4] $SL(4,\mathbb C)/\{\pm 1\}\cong SO(6,\mathbb C)$.
    Ниже будет получено явное представление этого накрытия
    с помощью связующих операторов Нордена $\eta_\alpha{}^{ab}$.
    Используя это представление, легче разобраться в том,
    как происходит вещественное вложение $\mathbb R^6_{(2,4)}\subset \mathbb C\mathbb R^6$,
    и, соответственно, как построить явное представление
    оператора инволюции в спинорном виде. Кроме того, результаты
    этого пункта пригодятся в исследовании структуры бивекторов
    пространства $\mathbb C\mathbb R^6$.

    Обозначим через $K_\alpha{}^\beta$ преобразования из
    группы $SO(6,\mathbb C)$, а через $S_a{}^b$ - преобразования из
    группы $SL(4,\mathbb C)$. Тогда будет верна следующая теорема.
\vspace{1cm}
\begin{theoremr}
\label{theorem5r} Всякому преобразованию $K_\alpha{}^\beta$ соответствует
    два и только два преобразования $\pm S_a{}^b$($\pm S_{ab}$) таких, что
    $det\parallel S_c{}^d \parallel=1$ $(det\parallel S_{cd} \parallel=1)$.
    И наоборот, любым преобразованиям $\pm S_a{}^b$($\pm S_{ab}$)
    соответствует одно и только одно преобразование $K_\alpha{}^\beta$.
\end{theoremr}
\vspace{1cm}

\begin{proof} Пусть имеется некоторое преобразование $\pm S_a{}^b(\pm S_{ab})$
    такое, что
     $$
     \varepsilon:=\varepsilon_{1234}\sps
     $$
\begin{equation}
\label{re1b}
    S_a{}^b S_{a_1}{}^{b_1} S_c{}^d S_{c_1}{}^{d_1}\varepsilon_{bb_1dd_1}=
    \varepsilon_{aa_1cc_1}\
    (S_{ab} S_{a_1b_1} S_{cd} S_{c_1d_1}\varepsilon^{bb_1dd_1}=
    \varepsilon^{-2}\varepsilon_{aa_1cc_1})\spsd
\end{equation}
    Последнее означает, что $det\parallel S_c{}^d \parallel=1\ (det\parallel S_{cd} \parallel=1)$.
    Положим
\begin{equation}
\label{re2b}
    \begin{array}{c}
    K_\alpha{}^\beta:=\frac{1}{4}\eta_\alpha{}^{ab}\eta^\beta{}_{cd}
    \cdot 2\beta S_{\left[ a \right.}{}^c S_{\left. b \right]}{}^d\
    (K_\alpha{}^\beta:=\frac{1}{4}\eta_\alpha{}^{ab}\eta^\beta{}_{cd}
    \cdot\varepsilon  \beta S_{ak} S_{bl}\varepsilon^{klcd})\sps\\[4ex]
    \beta:=\pm 1\spsd
    \end{array}
\end{equation}
    Тогда на основании (\ref{re1b}) и (\ref{re2b}) получим
\begin{equation}
\label{re3b}
    K_\alpha{}^\beta K_\gamma{}^\delta g_{\beta\delta}=g_{\alpha\gamma}\spsd
\end{equation}
    Таким образом из (\ref{re1b}) следует (\ref{re3b}).\\ \\

    Если теперь, наоборот, задано $K_\alpha{}^\beta$ вида (\ref{re3b}).
    Положим
\begin{equation}
\label{re4b}
    K_{aa_1}{}^{bb_1}:=\eta^\alpha{}_{aa_1}\eta_\beta{}^{bb_1}K_\alpha{}^\beta\spsd
\end{equation}
    При этом (\ref{re3b}) перепишется как
\begin{equation}
\label{re46ab}
    \frac{1}{4}K_{aa_1}{}^{bb_1}K_{cc_1}{}^{dd_1}
    \varepsilon_{bb_1dd_1}=\varepsilon_{aa_1cc_1}\spsd
\end{equation}
    Формула  (\ref{re46ab}) означает, что преобразование
    $K_{aa_1}{}^{bb_1}$ должно быть регулярным, т.е.
    $\forall r^{aa_1}\ne 0 \ \Rightarrow\ K_{aa_1}{}^{bb_1}r^{aa_1}\ne 0,\
    K_{aa_1}{}^{bb_1}r_{bb_1}\ne 0$.\\
\vspace{1cm}
\begin{proof} Действительно,
    предположим обратное: $\exists r^{aa_1}\ne 0$ так, что
    $K_{aa_1}{}^{bb_1}r_{bb_1}=0$. Это будет означать,
    что преобразование $K_{aa_1}{}^{bb_1}$ сингулярно
\begin{equation}
\label{re46abc}
     0=r_{bb_1}\cdot\frac{1}{4} K_{aa_1}{}^{bb_1} K_{cc_1}{}^{dd_1}
     \varepsilon_{bb_1dd_1}=\varepsilon_{aa_1cc_1}r^{aa_1}\spsd
\end{equation}
    Из этого следует, что $r^{aa_1}=0$. Противоречие.
\end{proof}
\vspace{1cm}

    Для дальнейших выкладок нам потребуется следующая лемма.

\vspace{1cm}
\begin{lemmar}
\label{lemma1r}
    Выберем два ненулевых вектора $r_1{}^\alpha ,r_2{}^\beta \in
    \tau^\mathbb C_x$, где $x$ - произвольная точка базы:
    комплексного евклидового пространства $\mathbb C\mathbb R^6$, снабженного
    метрическим тензором $g_{\alpha\beta}$. Тогда следующие три
    условия эквивалентны:
    \begin{enumerate}
     \item  $r_1{}^\alpha(r_1)_\alpha=0$ ,
            $r_2{}^\alpha(r_2)_\alpha=0$ ,
            $r_1{}^\alpha(r_2)_\alpha=0$ ;
     \item  $r_1{}^\alpha=\frac{1}{2}\eta^\alpha{}_{aa_1} X^aY^{a_1}$,
            $r_2{}^\alpha=\frac{1}{2}\eta^\alpha{}_{aa_1} X^aZ^{a_1}$;
     \item  $(r_1)_\alpha=\frac{1}{2}\eta_\alpha{}^{aa_1} \tilde X_a\tilde Y_{a_1}$,
            $(r_2)_\alpha=\frac{1}{2}\eta_\alpha{}^{aa_1} \tilde X_a\tilde Z_{a_1}$;
    \end{enumerate}
    где векторы $X^a,Y^a,Z^a$ принадлежат слою $\mathbb C^4_x$ расслоения $A^\mathbb C$, а
    $\tilde X_a,\tilde Y_a,\tilde Z_a$ - ковекторы двойственного слоя.
\end{lemmar}
\vspace{1cm}

\begin{proof} $1).\Rightarrow 2).$\\
    Рассмотрим первое уравнение условия 1). и положим
\begin{equation}
\label{re1vv}
    r_1{}^{aa_1}:=\eta_\alpha{}^{aa_1}r_1{}^\alpha\sps
\end{equation}
\newpage
    \noindent тогда на основании (\ref{re2}) получим
\begin{equation}
\label{re2vv}
    \begin{array}{c}
    r_1{}^\alpha (r_1)_\alpha=g_{\alpha\beta}r_1{}^\alpha r_1{}^\beta=
    \frac{1}{4}\eta_\alpha{}^{aa_1}\eta_\beta{}^{cc_1}
    \varepsilon_{aa_1cc_1}\frac{1}{2}\eta^\alpha{}_{dd_1}r_1{}^{dd_1}
    \frac{1}{2}\eta^\beta{}_{kk_1}r_1{}^{kk_1}=\\[2ex]
    =\frac{1}{4}r_1{}^{dd_1}
    \delta_{\left[\right.d}{}^a\delta_{d_1\left.\right]}{}^{a_1}
    r_1{}^{kk_1}
    \delta_{\left[\right.k}{}^c\delta_{k_1\left.\right]}{}^{c_1}
    \varepsilon_{aa_1cc_1}=
    \frac{1}{2}
    r_1{}^{aa_1}(r_1)_{aa_1}=0\spsd
    \end{array}
\end{equation}
    Определим
\begin{equation}
\label{re3vv}
    pf(r):=r^\alpha r_\alpha=\frac{1}{2}r^{aa_1}r_{aa_1}\spsd
\end{equation}
    Тогда из (\ref{re16v}) следует
\begin{equation}
\label{re4vv}
    r^{ab}r_{bc}=-pf(r)\delta_c{}^d\ \ \Leftrightarrow\ \ \
    3r^{a\left[\right.b}r^{cd\left.\right]}=pf(r)\varepsilon^{abcd}\spsd
\end{equation}
    Отсюда для вектора $r_1{}^\alpha$ получаем
\begin{equation}
\label{re5vv}
    r_1{}^{ab}r_1{}^{cd}=r_1{}^{ac}r_1{}^{bd}-r_1{}^{bc}r_1{}^{ad}\spsd
\end{equation}
    Поскольку $r_1{}^{cd}$ ненулевой бивектор, то существуют
    такие ковекторы $A_c,B_d$, что $r_1{}^{cd}A_cB_d\ne 0$ и вещественно.
    Положим
\begin{equation}
\label{re6vv}
    P^a:=\sqrt{2}r_1{}^{ak}A_k/\sqrt{(r_1{}^{cd}A_cB_d)}\sps
    Q^a:=\sqrt{2}r_1{}^{bk}B_k/\sqrt{(r_1{}^{cd}A_cB_d)}\sps
\end{equation}
    тогда из (\ref{re5vv}) следует
\begin{equation}
\label{re7vv}
     r_1{}^{ab}=P^{\left[\right.a}Q^{b\left.\right]}\spsd
\end{equation}
     При этом $P^a,Q^a$ - линейно независимы. Таким же образом
     можно получить разложение и для $r_2{}^{ab}$
\begin{equation}
\label{re8vv}
     r_2{}^{ab}=R^{\left[\right.a}S^{b\left.\right]}
\end{equation}
     из второго условия 1). так, что вектора $R^a,S^a$ будут
     также линейно независимы. Из третьего уравнения условия 1).
     вытекает следующее соотношение
\begin{equation}
\label{re9vv}
    \begin{array}{c}
    0=r_1{}^\alpha(r_2)_\alpha=\frac{1}{4}
    \varepsilon_{abcd}r_1{}^{ab}r_2{}^{cd}=
    \frac{1}{4}\varepsilon_{abcd}P^aQ^bR^cS^d=0\spsd
    \end{array}
\end{equation}
    Это означает,  что вектора $P^a,Q^b,R^c,S^d$ - линейно завиcимы
\begin{equation}
\label{re10vv}
    \alpha P^a+\beta Q^a+\gamma R^a+\delta S^a=0\sps
    |\alpha|+|\beta|+|\gamma|+|\delta|\ne 0\spsd
\end{equation}
    При этом  либо $\alpha \ne 0$, либо $\beta \ne 0$.
    В ином случае $(\alpha=\beta=0)$ векторы $R^c,S^c$
    были бы линейно зависимы. Пусть для определенности
    $\alpha \ne 0$. Тогда опять либо $\gamma \ne 0$,
    либо $\delta \ne 0$. Положим
\begin{equation}
\label{re11vv}
    \begin{array}{c}
    X^a:=P^a+(\beta/\alpha) Q^a=-((\gamma/\alpha) R^a+(\delta/\alpha) S^a)\sps
    Y^a:=Q^a\sps\\[2ex]
    Z^a:=\left\{
    \begin{array}{l}
    \ (\alpha/\delta) R^a,\ \delta\ne 0,\gamma =0\sps \\
    -(\alpha/\gamma) S^a,\ \gamma \ne 0\spsd
    \end{array}
    \right.
    \end{array}
\end{equation}
    Таким образом из (\ref{re11vv}) следует условие 2). леммы.\\ \\
    $2).\Rightarrow 1).$\\
    Проверяется непосредственно, например
\begin{equation}
\label{re12vv}
    r_1{}^\alpha(r_2)_\alpha=\frac{1}{4}\varepsilon_{aa_1bb_1}
    X^aY^{a_1}X^bZ^{b_1}=0\spsd
\end{equation}
    \indent Таким же образом доказывается и эквивалентность
    $1).\Rightarrow 3).$ и $3).\Rightarrow 1).$ Эти
    импликации возможны из-за наличия метрического
    тензора в касательном расслоении и метрического
    4-вектора в расслоении $A^\mathbb C$.
\end{proof}
\vspace{1cm}
    Возьмем два изотропных ненулевых вектора
\begin{equation}
\label{re13vv}
    \begin{array}{c}
    r_1{}^\alpha=\frac{1}{2}\eta^\alpha{}_{aa_1} M^aN^{a_1}\sps
    r_2{}^\alpha=\frac{1}{2}\eta^\alpha{}_{aa_1} M^aL^{a_1}
    \end{array}
\end{equation}
    и два ненулевых изотропных ковектора
\begin{equation}
\label{re14vv}
    (\tilde r_1)_\alpha=\frac{1}{2}\eta_\alpha{}^{aa_1} \tilde M_a\tilde N_{a_1}\sps
    (\tilde r_2)_\alpha=\frac{1}{2}\eta_\alpha{}^{aa_1} \tilde M_a\tilde L_{a_1}\sps
\end{equation}
    удовлетворяющих соответственно условиям 2). и 3). леммы \ref{lemma1r}.
    Подействуем на (\ref{re13vv}), (\ref{re14vv}) ортогональным
    преобразованием $K_\alpha{}^\beta$ и получим
\begin{equation}
\label{re15vv}
    \begin{array}{c}
    r_3{}^\alpha:=K_\beta{}^\alpha r_1{}^\beta\sps
    r_4{}^\alpha:=K_\beta{}^\alpha r_2{}^\beta\sps\\[2ex]
    (\tilde r_3)_\alpha:=K_\alpha{}^\beta (\tilde r_1)_\beta\sps
    (\tilde r_4)_\alpha:=K_\alpha{}^\beta (\tilde r_2)_\beta\spsd
    \end{array}
\end{equation}
    Тогда из условия 1). леммы \ref{lemma1r} следует с учетом
    (\ref{re3b}) и (\ref{re46ab})
\begin{equation}
\label{re16vv}
    \begin{array}{c}
    r_3{}^\alpha (r_3)_\alpha=K_\alpha{}^\beta K_\gamma{}^\delta
    g_{\beta\delta} r_1{}^\alpha r_1{}^\gamma =
    r_1{}^\alpha (r_1)_\alpha=0\sps\\[2ex]
    r_4{}^\alpha (r_3)_\alpha=r_2{}^\alpha (r_1)_\alpha=0\sps
    r_4{}^\alpha (r_4)_\alpha=r_2{}^\alpha (r_2)_\alpha=0\sps\\[2ex]
    \tilde r_4{}^\alpha (\tilde r_3)_\alpha=\tilde r_2{}^\alpha (\tilde r_1)_\alpha=0\sps
    \tilde r_4{}^\alpha (\tilde r_4)_\alpha=\tilde r_2{}^\alpha (\tilde r_2)_\alpha=0\sps\\[2ex]
    \tilde r_3{}^\alpha (\tilde r_3)_\alpha=\tilde r_1{}^\alpha \tilde r_1{}_\alpha=0\spsd
    \end{array}
\end{equation}
    Поскольку преобразование $K_{aa_1}{}^{bb_1}$ регулярно, то векторы
    и ковекторы (\ref{re15vv}) будут ненулевыми, следовательно, из
    условий 2). и 3).  леммы \ref{lemma1r} получим
\begin{equation}
\label{re17vv}
    \begin{array}{c}
    r_3{}^\alpha=\frac{1}{2}\eta^\alpha{}_{aa_1} X^a Y^{a_1}\sps
    r_4{}^\alpha=\frac{1}{2}\eta^\alpha{}_{aa_1} X^a Z^{a_1}\\[2ex]
    (\tilde r_3)_\alpha=\frac{1}{2}\eta_\alpha{}^{aa_1}\tilde X_a\tilde  Y_{a_1}\sps
    (\tilde r_4)_\alpha=\frac{1}{2}\eta_\alpha{}^{aa_1}\tilde X_a\tilde  Z_{a_1}\spsd
    \end{array}
\end{equation}
    Рассмотрим теперь тождество
\begin{equation}
\label{re18vv}
    r_3{}^{\left[\right.\alpha}r_4{}^{\beta \left.\right]}=
    K_{\left[\right.\gamma}{}^{\left[\right.\alpha}
    K_{\delta \left.\right]}{}^{\beta \left.\right]}
    r_1{}^{\left[\right.\gamma}r_2{}^{\delta \left.\right]}\spsd
\end{equation}
    Распишем его с помощью формул (\ref{re4}) и (\ref{re21})
\begin{equation}
\label{re19vv}
    \begin{array}{c}
    A^{\alpha\beta}{}_a{}^b\cdot\frac{1}{4}X^aY^{b_1}X^cZ^{c_1}
    \varepsilon_{cc_1bb_1}=\\
    =\frac{1}{4}A^{\gamma\delta}{}_r{}^s M^rN^{k_1}M^l L^{l_1}\varepsilon_{ll_1sk_1}
    A_{\gamma\delta c}{}^d A^{\alpha\beta}{}_a{}^b\cdot
    \frac{1}{8}(K_{dm}{}^{ak}K^{cm}{}_{bk}-K^{cmak}K_{dmbk})\sps\\[2ex]
    X^aY^{b_1}X^cZ^{c_1}\varepsilon_{cc_1bb_1}=\\
    =2 \delta_r{}^d
    \delta_c{}^s M^rN^{k_1}M^lL^{l_1}\varepsilon_{k_1ll_1s}
    \cdot\frac{1}{8}(K_{dm}{}^{ak}K^{cm}{}_{bk}-K^{cmak}K_{dmbk})\sps\\[2ex]
    X^a(Y^{b_1}X^cZ^{c_1}\varepsilon_{cc_1bb_1})=\\
    =M^d(N^{k_1}M^lL^{l_1}\varepsilon_{k_1ll_1c})
    \cdot\frac{1}{4}(K_{dm}{}^{ak}K^{cm}{}_{bk}-K^{cmak}K_{dmbk})\spsd
    \end{array}
\end{equation}
    Определим
\begin{equation}
\label{re20vv}
    \begin{array}{c}
    T_b:=Y^{b_1}X^cZ^{c_1}\varepsilon_{cc_1bb_1}\sps
    P_c:=N^{k_1}M^lL^{l_1}\varepsilon_{k_1ll_1c}\sps\\[2ex]
    \tilde K_d{}^c{}_b{}^a:=
    \frac{1}{8}(K^{cmak}K_{dmbk}-K_{dm}{}^{ak}K^{cm}{}_{bk})
    \end{array}
\end{equation}
    так, что выполнено
\begin{equation}
\label{re21vv}
    X^cT_c=0\sps
    M^cP_c=0\sps
    \tilde K_c{}^c{}_b{}^a=0\sps\tilde K_d{}^c{}_b{}^b=0\sps
\end{equation}
\begin{equation}
\label{re22vv}
    K_{aa_1}{}^{bb_1}K_{cc_1}{}^{dd_1}-
    K_{aa_1}{}^{dd_1}K_{cc_1}{}^{bb_1}=
    8\varepsilon_{cc_1k\left[\right.a_1}
    \tilde K_{a\left.\right]}{}^k{}_r{}^{\left[\right.b}
    \varepsilon^{b_1\left.\right]rdd_1}\spsd
\end{equation}
    Откуда
\begin{equation}
\label{re23vv}
    X^aT_b=-2 M^dP_c\tilde K_d{}^c{}_b{}^a\spsd
\end{equation}
    Таким же образом из тождества
\begin{equation}
\label{re24vv}
    (\tilde r_3)_{\left[\right.\gamma}
    (\tilde r_4)_{\delta\left.\right]}=
    K_{\left[\right.\gamma}{}^{\left[\right.\alpha}
    K_{\delta\left.\right]}{}^{\beta\left.\right]}
    (\tilde r_1)_{\left[\right.\alpha}
    (\tilde r_2)_{\beta\left.\right]}\sps
\end{equation}
    определяя
\begin{equation}
\label{re25vv}
    \tilde T^b:=\tilde Y_{b_1}\tilde X_c\tilde Z_{c_1}\varepsilon^{cc_1bb_1}\sps
    \tilde P^b:=\tilde N_{k_1}\tilde M_l\tilde Z_{l_1}\varepsilon^{k_1ll_1b}\sps
\end{equation}
    можно получить
\begin{equation}
\label{re26vv}
    \tilde X_d\tilde T^c=-2 \tilde M_a\tilde P^b\tilde K_d{}^c{}_b{}^a\spsd
\end{equation}
    Найдем теперь однородное решение, удовлетворяющее уравнениям
    (\ref{re23vv}) и (\ref{re26vv})
\begin{equation}
\label{re27vv}
    \left\{
    \begin{array}{c}
    (\tilde K_{\mbox{однородное}})_d{}^c{}_b{}^aM^dP_c=0\sps \\
    (\tilde K_{\mbox{однородное}})_d{}^c{}_b{}^a\tilde M_a\tilde P^b=0\sps
    \end{array}
    \right.
    \ \ \Leftrightarrow \ \ \
    \left\{
    \begin{array}{c}
    M^dP_d=0\sps  \\
    \tilde M_a\tilde P^a=0\spsd
    \end{array}
    \right.
\end{equation}
    Эти две системы должны совпадать тождественно, поскольку
    левая система верна для любых $M^a,\tilde M_a,P_a,\tilde P^a$,
    удовлетворяющих правой системе. Это возможно только
    при
\begin{equation}
\label{re28vv}
    (\tilde K_{\mbox{однородное}})_d{}^c{}_b{}^a=\alpha \delta_d{}^c
    \delta_b{}^a \sps \alpha \in \mathbb C\spsd
\end{equation}
    Рассмотрим далее частное решение, например, уравнения
    (\ref{re23vv}). Это решение должно быть регулярным, что
    означает невозможность выполнения условия
\begin{equation}
\label{re29vv}
    \exists\ M^d\ne 0 , P_c\ne 0\sps{\mbox{   что   }}\
    (\tilde K_{\mbox{частное}})_d{}^c{}_b{}^aM^dP_c=0
\end{equation}
    (выполнение условия (\ref{re29vv}) равносильно (формула
    (\ref{re22vv})) сингулярности преобразования $K_{aa_1}{}^{bb_1}$).
    Для решения (\ref{re23vv}) нам потребуется следующая лемма.
\vspace{1cm}
\begin{lemmar}
\label{lemma2r}
    Пусть {\bf {A,B,C,...}} - собирательные индексы. Тогда
    следующие 3 условия на
    $\lambda_{\mbox{\scriptsize\bf AB}}{}^{\mbox{\scriptsize\bf Q}}$
    эквивалентны:
\begin{enumerate}
    \item $\lambda_{\mbox{\scriptsize\bf AB}}{}^{\mbox{\scriptsize\bf Q}}
          \xi_{\mbox{\scriptsize\bf Q}}$ имеет вид $\rho_{\mbox{\scriptsize\bf A}}
          \xi_{\mbox{\scriptsize\bf B}}$ для всякого $\xi_{\mbox{\scriptsize\bf Q}}$;
    \item $\lambda_{{\mbox{\scriptsize\bf A}_1}\left[\right. {\mbox{\scriptsize\bf B}_1}}{}^
                {\left(\right.{\mbox{\scriptsize\bf  Q}_1}}
           \lambda_{|{\mbox{\scriptsize\bf A}_2}|{\mbox{\scriptsize\bf B}_2}\left.\right]}{}^
                {{\mbox{\scriptsize\bf  Q}_1}\left.\right)}$=0;
    \item $\lambda_{\mbox{\scriptsize\bf AB}}{}^{\mbox{\scriptsize\bf Q}}$
          можно представить как $\alpha_{\mbox{\scriptsize\bf A}}
          \varphi_{\mbox{\scriptsize\bf B}}{}^{\mbox{\scriptsize\bf Q}}$,
          либо как $\theta_{\mbox{\scriptsize\bf A}}{}^
          {\mbox{\scriptsize\bf Q}}\beta{\mbox{\scriptsize\bf B}}$.\\
\end{enumerate}
\end{lemmar}
\vspace{1cm}

\begin{proof} Оно приведено на стр. 205 монографии \cite[т. 1]{Penrose1r}.
    Поскольку в ее доказательстве метрический тензор
    участия не принимал, то эта лемма справедлива для
    любого расположения индексов: сверху или снизу.\\
\end{proof}
\vspace{1cm}

    Применим лемму \ref{lemma2r} к уравнению (\ref{re23vv}),
    тогда получим 2 варианта:
\begin{equation}
\label{re30vv}
    \begin{array}{ccccc}
    a). & (\tilde K_{\mbox{частное}})_d{}^c{}_b{}^aP_c=A^aB_{bd}\sps &
    b). & (\tilde K_{\mbox{частное}})_d{}^c{}_b{}^aP_c=A_d{}^aB_b
    \end{array}
\end{equation}
     \indent Во-первых. Предположим, что условия a). и  b).  выполняются
     одновременно. Воспользуемся еще одной леммой.
\vspace{1cm}
\begin{lemmar}
\label{lemma3r}
     Из $\psi_{\mbox{\scriptsize\bf AB}}\varphi_{\mbox{\scriptsize\bf C}}=
     \chi_{\mbox{\scriptsize\bf A}}\theta_{\mbox{\scriptsize\bf BC}}$
     следует выполнение $\psi_{\mbox{\scriptsize\bf AB}}=
     \chi_{\mbox{\scriptsize\bf A}}\xi_{\mbox{\scriptsize\bf B}}$,
     $\theta_{\mbox{\scriptsize\bf BC}}=\xi_{\mbox{\scriptsize\bf B}}
     \varphi_{\mbox{\scriptsize\bf C}}$ для некоторого
     $\xi_{\mbox{\scriptsize\bf B}}$.
\end{lemmar}
\vspace{1cm}

\begin{proof} Оно приведено на стр. 205 монографии \cite[т. 1]{Penrose1r}.
    И так же, как и в предыдущей лемме, расположение индексов
    не существенно.
\end{proof}
\vspace{1cm}

    Применим эту лемму к уравнению (\ref{re30vv}), что даст
\begin{equation}
\label{re31vv}
    (\tilde K_{\mbox{частное}})_d{}^c{}_b{}^aP_c=A_dB^aC_b\spsd
\end{equation}
    Но существует такой вектор $M^d\ne 0$, что $M^dP_d=0$ и
    $M^dA_d=0$, поэтому  из (\ref{re31vv}) следует
    (\ref{re29vv}), что невозможно. Из этого
    заключаем, что a).  и  b). из (\ref{re30vv}) одновременно
    выполняться не могут.\\ \\

    Во-вторых. Применим лемму \ref{lemma2r} теперь уже к уравнению (\ref{re30vv}).
    Это даст 4 варианта:
\begin{equation}
\label{re32vv}
    \begin{array}{ccccc}
    I). & a). & (\tilde K_{\mbox{частное}})_d{}^c{}_b{}^a=A^{ac}B_{db}\sps &
          b). & (\tilde K_{\mbox{частное}})_d{}^c{}_b{}^a=C^aD^c{}_{db}\sps\\[2ex]
    II).& a). & (\tilde K_{\mbox{частное}})_d{}^c{}_b{}^a=S_d{}^aE_b{}^c\sps &
          b). & (\tilde K_{\mbox{частное}})_d{}^c{}_b{}^a=U_d{}^{ac}V_b\spsd
    \end{array}
\end{equation}
    Варианты b). в обоих случаях отпадают, поскольку приводят
    к сингулярным преобразованиям (смотри пояснения после
    формулы (\ref{re31vv})).\\
\newpage
    Для определенности рассмотрим случай II).a). Свернем
    общее решение  уравнения (\ref{re23vv})
\begin{equation}
\label{re33vv}
    \tilde K_d{}^c{}_b{}^a=S_d{}^aE_b{}^c+\alpha\delta_d{}^c
    \delta_b{}^a
\end{equation}
    с $\delta_c{}^d$ и, используя (\ref{re21vv}), получим
\begin{equation}
\label{re34vv}
    0=S_k{}^aE_b{}^k+4\alpha\delta_b{}^a\ \ \Rightarrow\ \ \
    E_b{}^k=-4\alpha(S^{-1})_b{}^k
\end{equation}
    (преобразование $S_k{}^a$  невырождено, т.к. в ином случае
    преобразование $\tilde K_d{}^c{}_b{}^a$ будет сингулярным,
    что повлечет за собой сингулярность преобразования
    $K_{aa_1}{}^{bb_1})$.
    Поэтому
\begin{equation}
\label{re35vv}
    \tilde K_d{}^c{}_b{}^a=(-\alpha)(4S_d{}^a(S^{-1})_b{}^c-
    \delta_d{}^c\delta_b{}^a)\spsd
\end{equation}
    Свернем (\ref{re22vv}) с $\varepsilon_{dd_1pp_1}K_{ss_1}{}^{pp_1}$,
    что даст с учетом (\ref{re46ab})
\begin{equation}
\label{re36vv}
    \begin{array}{c}
    K_{aa_1}{}^{bb_1}\varepsilon_{ss_1cc_1}-
    K_{cc_1}{}^{bb_1}\varepsilon_{ss_1aa_1}=
    8\varepsilon_{cc_1k\left[\right.a_1}
    \tilde K_{a\left.\right]}{}^k{}_r{}^{\left[\right.b}
    K_{ss_1}{}^{b_1\left.\right]r}\spsd
    \end{array}
\end{equation}
    Свернем (\ref{re36vv}) с $\varepsilon^{ss_1cc_1}$,
    используя формулы (\ref{re7d}),
\begin{equation}
\label{re37vv}
    5K_{aa_1}{}^{bb_1}=8\tilde
    K_{\left[\right.a}{}^k{}_{\left|\right.r}{}^{\left[\right.b}
    K_{k\left.\right|a_1\left.\right]}{}^{b_1\left.\right]r}
\end{equation}
    и подставим (\ref{re35vv}) в (\ref{re37vv})
\begin{equation}
\label{re38vv}
    \begin{array}{c}
    5K_{aa_1}{}^{bb_1}=(-8\alpha)
    K_{k\left[\right.a_1}{}^{\left[\right.b_1|r|}
    (4S_{a\left.\right]}{}^{b\left.\right]}(S^{-1})_r{}^k-
    \delta_{a\left.\right]}{}^{|k|}\delta_r{}^{b\left.\right]})\sps\\[2ex]
    K_{aa_1}{}^{bb_1}=\frac{32\alpha}{5+8\alpha}
    K_{k\left[\right.a_1}{}^{r\left[\right.b_1}
    S_{a\left.\right]}{}^{b\left.\right]}(S^{-1})_r{}^k
    \end{array}
\end{equation}
     ($\alpha\ne 0,\alpha\ne \pm 5/8$,  в ином случае преобразование
     $\tilde K_a{}^k{}_r{}^b$  будет сингулярным). Положим
\begin{equation}
\label{re39vv}
     K_{aa_1}{}^{bb_1}:=
     2M_{\left[\right.a_1}{}^{\left[\right.b_1}
     S_{a\left.\right]}{}^{b\left.\right]}
\end{equation}
     и получим
\begin{equation}
\label{re40vv}
    \begin{array}{c}
     K_{aa_1}{}^{bb_1}:=
    \frac{32\alpha}{5+8\alpha}
    S_{\left[\right.a}{}^{\left[\right.b}
    (M_{a_1\left.\right]}{}^{b_1\left.\right]}+
    \frac{1}{2}
    S_{a_1\left.\right]}{}^{b_1\left.\right]}M_k{}^r(S^{-1})_r{}^k)=
    2M_{\left[\right.a_1}{}^{\left[\right.b_1}
    S_{a\left.\right]}{}^{b\left.\right]}\spsd
    \end{array}
\end{equation}
     Положим
\begin{equation}
\label{re41vv}
     M_k{}^r:=\beta S_k{}^r\Rightarrow
     \beta=\frac{8\alpha}{5-8\alpha}M_k{}^r(S^{-1})_r{}^k\Rightarrow \alpha=\frac{1}{8}\spsd
\end{equation}
     Тогда из (\ref{re40vv}) следует
\begin{equation}
\label{re42vv}
    \begin{array}{c}
     K_{aa_1}{}^{bb_1}:=
     2\beta S_{\left[\right.a_1}{}^{\left[\right.b_1}
     S_{a\left.\right]}{}^{b\left.\right]}\spsd
    \end{array}
\end{equation}
    Подстановкой (\ref{re42vv}) в (\ref{re46ab}) находим,
    что $\beta=\pm 1$.\\
    \indent Подобным же образом рассматривается и пункт I).a).
    В этом случае преобразование $K_{aa_1}{}^{bb_1}$
    имеет вид
\begin{equation}
\label{re42vva}
    K_{aa_1}{}^{bb_1}:=
    \beta S_{ac}S_{a_1c_1}\varepsilon\varepsilon^{cc_1bb_1}\spsd
\end{equation}
     Заметим, что  множитель $\varepsilon$ можно включить в определение $S_{ac}$.\\
     \indent Таким образом от (\ref{re3b}) можно действительно прийти к (\ref{re1b}), чем
     и закончено доказательство обратной части теоремы.
     Поэтому преобразованию $S_a{}^b$($S_a{}_b$) будет соответствовать
     одно и только одно преобразование $K_\alpha{}^\beta$ и, наоборот,
     любому преобразованию $K_\alpha{}^\beta$ будет соответствовать
     два и только два преобразования $\pm S_a{}^b$($\pm S_a{}_b$)таких, что
     $det\parallel S_c{}^d \parallel=1\ (det\parallel S_{cd} \parallel=1)$.     \\

    Выясним, какое из преобразований соответствует собственным преобразованиям
    $K_\alpha{}^\beta$. Для этого рассмотрим следующее тождество
    $$
    K_\alpha{}^\beta K_\gamma{}^\delta K_\lambda{}^\mu
    K_\nu{}^\chi K_\pi{}^\omega K_\sigma{}^\xi
    e_{\beta\delta\mu\chi\omega\xi}=\pm
    e_{\alpha\gamma\lambda\nu\pi\sigma}\sps
    $$
\begin{equation}
\label{re5b}
    e_{\beta\delta\mu\chi\omega\xi}=
    e_{[\beta\delta\mu\chi\omega\xi]}\sps
    \hat e:=e_{123456}\spsd
\end{equation}
    При этом под $e_{\beta\delta\mu\chi\omega\xi}$ понимается
    6 - вектор, кососимметричный по всем индексам. Следовательно,
    мы можем получить эквивалентную (\ref{re5b}) запись
\begin{equation}
\label{re20v}
    \begin{array}{c}
    K_1{}^\beta K_2{}^\delta K_3{}^\mu
    K_4{}^\chi K_5{}^\omega K_6{}^\xi
    e_{\beta\delta\mu\chi\omega\xi}=\pm e_{123456} \ \ \Leftrightarrow \ \ \
    det\parallel K_\alpha{}^\beta \parallel=\pm 1\spsd
    \end{array}
\end{equation}
    Если $K_\alpha{}^\beta$ - собственное преобразование, то в
    (\ref{re5b}) выбирается знак ''+''.
    Это означает, что $det\parallel K_\alpha{}^\beta \parallel=1$, в ином
    случае (несобственные преобразования) выбирается знак ''-''. Поскольку
    для 4-вектора имеются тождества, следующие из формулы (\ref{re4})
    $$
    e_{\alpha\beta\gamma\delta}=
    e_{[\alpha\beta\gamma\delta]}\sps
    $$
\begin{equation}
\label{re6b}
    e_{\alpha\beta\gamma\delta}=A_{\alpha\beta b}{}^{a}
    A_{\gamma\delta d}{}^{c} e_a{}^b{}_c{}^d\sps
\end{equation}
    $$
    e_a{}^b:=\frac{1}{3}e_a{}^k{}_k{}^b\sps
    $$
    то, воспользовавшись его симметриями,
    можно получить разложение
    $$
    B_{\alpha\beta\gamma\delta r}{}^k:=
    A_{\alpha\beta r}{}^d A_{\gamma\delta d}{}^k+
    A_{\alpha\beta c}{}^k A_{\gamma\delta r}{}^c\sps
    $$
\begin{equation}
\label{re7b}
    e_{\alpha\beta\gamma\delta}:=B_{\alpha\beta\gamma\delta r}{}^k e_k{}^r\sps
\end{equation}
    $$
    e_k{}^k=0
    $$
    (доказательство в приложении (\ref{re56p}) - (\ref{re58p})).
    В свою очередь с помощью этих формул можно получить разложение и
    для 6-вектора
    $$
    e_{\alpha\beta\gamma\delta\lambda\mu}=
    A_{\alpha\beta b}{}^aA_{\gamma\delta d}{}^c
    A_{\lambda\mu l}{}^k\ e_a{}^b{}_c{}^d{}_k{}^l\sps
    $$
    $$
    e_a{}^b{}_c{}^d{}_k{}^l=\frac{i}{8}
    (2((4\delta_k{}^b\delta_c{}^l-\delta_k{}^l\delta_c{}^b)\delta_a{}^d
        +(4\delta_k{}^d\delta_a{}^l-\delta_k{}^l\delta_a{}^d)\delta_c{}^b)-
    $$
\begin{equation}
\label{re8b}
        -(4\delta_k{}^b\delta_a{}^l-\delta_k{}^l\delta_a{}^b)\delta_c{}^d-
        (4\delta_k{}^d\delta_c{}^l-\delta_k{}^l\delta_c{}^d)\delta_a{}^b)
\end{equation}
    (доказательство в приложении (\ref{re59p}) - (\ref{re68p})).
     Из (\ref{re8b}) вытекает тождество
\begin{equation}
\label{re78p}
       \begin{array}{c}
       e_{\alpha\gamma\lambda\nu\pi\sigma}=2
       \eta_{\left[\right.\alpha}{}^{bb_1}\eta_{\gamma |dd_1|}
       \eta_{\lambda}{}^{mm_1}\eta_{\nu |xx_1|}
       \eta_{\pi}{}^{rr_1}\eta_{\sigma\left.\right]ss_1}\cdot
       i \delta_r{}^d\delta_m{}^s\delta_b{}^x
       \delta_{b_1}{}^{d_1}\delta_{m_1}{}^{x_1}\delta_{r_1}{}^{s_1}=\\
       =\frac{1}{4}\eta_{\left[\right.\alpha}{}^{bb_1}\eta_{\gamma}{}^{dd_1}
       \eta_{\lambda}{}^{mm_1}\eta_{\nu}{}^{xx_1}
       \eta_{\pi}{}^{rr_1}\eta_{\sigma\left.\right]}{}^{ss_1}\cdot
       i\ \varepsilon_{rb_1dd_1}\varepsilon_{mr_1ss_1}\varepsilon_{bm_1xx_1}=\\
       =i(A_{\alpha\gamma b}{}^aA_{\pi\sigma a}{}^cA_{\lambda\nu c}{}^b+A_{\alpha\gamma b}{}^aA_{\lambda\nu c}{}^bA_{\pi\sigma a}{}^c)
       \end{array}
\end{equation}
       Применяя определение (\ref{re5b}) к собственным (несобственным)
       преобразованиям, получим из (\ref{re8b}) и (\ref{re78p})
\begin{equation}
\label{re79p}
       \begin{array}{c}
       S_a{}^b S_{a_1}{}^{b_1} S_c{}^d S_{c_1}{}^{d_1}\varepsilon_{bb_1dd_1}=
       \varepsilon_{aa_1cc_1}\ (
       S_{ab} S_{a_1b_1} S_{cd} S_{c_1d_1}\varepsilon^{bb_1dd_1}=
       \varepsilon^{-2}\varepsilon_{aa_1cc_1})\sps
       \end{array}
\end{equation}
     что даст тождество (\ref{re1b}).
     Отсюда следует, что (\ref{re42vv}) соответствует собственным, а (\ref{re42vva})
     соответствует несобственным преобразованиям $K_\alpha{}^\beta$.\\
     \indent И наконец, преобразования $S_a{}^b$ и $iS_a{}^b$ принадлежат
     одной и той же группе $SL(4,\mathbb C)$. Это означает, что можно в формуле
     (\ref{re42vv}) рассматривать только случай, когда $\beta=+1$.
     Поэтому группа $SL(4,\mathbb C)$ двулистно накрывает связную компоненту единицы
     группы $SO(6,\mathbb C)$ (ее мы обозначим через $SO^e(6,\mathbb C)$).\\
\end{proof}

\newpage
\subsubsection{\texorpdfstring{Вещественная реализация двулистного накрытия группы $SO(6,\mathbb C)$
               группой $SL(4,\mathbb C)$ в присутствии инволюции $S_\alpha{}^{\beta '}}{Вещественная реализация двулистного накрытия}$}
\begin{theoremr}
\label{theorem6r}
     Пусть в 6-мерном комплексном евклидовом пространстве $\mathbb C\mathbb R^6$
     задана инволюция вида
\begin{equation}
\label{re28b}
     S_\alpha{}^{\beta '}\bar S_{\beta '}{}^\gamma =
     \delta_\alpha{}^\gamma\sps
     S_\alpha{}^{\beta '}S_\gamma {}^{\delta '}\bar g_{\beta '\delta '}=
     g_{\alpha\gamma}\spsd
\end{equation}
     Определим
\begin{equation}
\label{re29b}
     \begin{array}{ccccc}
     & s_{aba'b'} & = & \bar\eta_{\beta 'a'b'}\eta^{\alpha}{}_{ab}
     S_\alpha{}^{\beta '}\sps &
     \end{array}
\end{equation}
     тогда будут выполнены следующие соотношения
\begin{equation}
\label{re29ba}
     s_{aba'b'}=\bar s_{a'b'ab}\sps
     s_{ab}{}^{a'b'}\bar s_{a'b'}{}^{cd}=2\delta_{ab}^{cd}\sps
\end{equation}
     и будет существовать два и только два разложения
\begin{equation}
\label{re30b}
     \begin{array}{ccccccc}
     I). &  s_{ab}{}^{a'b'}   =
     2s_{\left[\right.a}{}^{a'}  s_{b\left.\right]}{}^{b'}\sps
       & s_a{}^{b'}\bar s_{b'}{}^c=\pm\delta_a{}^c\sps\\[2ex]
     II). &    s_{ab}{}_{a'b'}    =
     2s_{\left[\right.a}{}_{|a'|}s_{b\left.\right]}{}_{b'}\sps
      &  s_{ab'}=\pm \bar s_{b'a}\spsd &
     \end{array}
\end{equation}
     Кроме того, для вещественного случая будут
     верны следующие тождества
\begin{equation}
\label{re22v}
     \begin{array}{cccc}
     I).  &\bar \eta_i{}^{a'b'}=\eta_j{}^{cd}s_c{}^{a'}s_d{}^{b'} &
     II).&\bar \eta_i{}^{a'b'}=\eta_{jcd}s^{ca'}s^{db'}\sps \\
     & \bar A_{ij a'}{}^{b'}=A_{ijc}{}^d\bar s_{a'}{}^cs_d{}^{b'} &
     & \bar A_{ij a'}{}^{b'}=-A_{ijc}{}^d s_{da'}s^{cb'}\sps\\
     \end{array}
\end{equation}
\end{theoremr}
\vspace{1cm}
\begin{proof}
      Доказательство разложения (\ref{re30b}) проводится также
      как и в предыдущей теореме. Все изменения сводятся только
      к замене преобразования $K_\alpha{}^\beta$ на преобразование
      $S_\alpha{}^{\beta '}$ так, что аналогом (\ref{re3b}) служит
\begin{equation}
\label{re23v}
       S_\alpha{}^{\beta '}S_\gamma{}^{\delta '} \bar g_{\beta '\delta '}=
       g_{\alpha\gamma}\sps
\end{equation}
      что даст уравнение аналогичное (\ref{re1b}) (соответствующий множитель включен в определение спин-тензора s)
\begin{equation}
\label{re2vvv}
    \begin{array}{c}
    s_a{}^{b'} s_{a_1}{}^{{b '}_1} s_c{}^{d'} s_{c_1}{}^{{d'}_1}
    \bar \varepsilon_{b'{b'}_1d'{d'}_1}=
    \varepsilon_{aa_1cc_1}\sps\\[2ex]
    (s_{ab'} s_{a_1{b'}_1} s_{cd'} s_{c_1{d'}_1}
    \bar\varepsilon^{b'{b'}_1d'{d'}_1}=
    \varepsilon_{aa_1cc_1})\spsd
    \end{array}
\end{equation}
      Из (\ref{re28b}) и (\ref{re23v}) можно получить
\begin{equation}
\label{re3vvv}
     S_\alpha{}^{\beta '}=\bar S^{\beta '}{}_\alpha\spsd
\end{equation}
     Из этого следует
\begin{equation}
\label{re4vvv}
     \bar s_{a'b'ab}=\bar \eta{}^{\alpha '}{}_{a'b'}
     \eta_{\beta ab} \bar S_{\alpha '}{}^\beta=
     \bar \eta{}_{\beta 'a'b'}
     \eta^\alpha{}_{ab} \bar S^{\beta '}{}_\alpha=
     s_{aba'b'}\spsd
\end{equation}
     \indent Отметим, что в касательном расслоении $\tau^\mathbb C(\tau^\mathbb R)$
     существует метрический тензор $g_{\alpha\beta}(g_{ij})$,
     с помощью которого осуществляется спуск и подъем
     одиночных индексов. В расслоении $A^\mathbb C(S)$ аналогичную
     роль выполняет тензор $\varepsilon_{abcd}$. За тензор
     $\bar g_{\alpha '\beta '}$($\bar\varepsilon_{a'b'c'd'}$)
     принимается тензор, координаты которого сопряжены координатам
     тензора $g_{\alpha\beta}(\varepsilon_{abcd})$.\\
     \indent Рассмотрим цепочку тождеств, следующих из (\ref{re28b})
\begin{equation}
\label{re5vvv}
     \begin{array}{c}
     S_\alpha{}^{\beta '}\bar S_{\beta '}{}^\gamma =
     \delta_\alpha{}^\gamma\sps\\[2ex]
     \frac{1}{4}
     \eta_\alpha{}^{aa_1}\bar\eta^{\beta '}{}_{b'{b'}_1}
     s_{aa_1}{}^{b'{b'}_1}
     \frac{1}{4}
     \bar\eta_{\beta '}{}^{d'{d'}_1}\eta^\gamma{}_{cc_1}
     \bar s_{d'{d'}_1}{}^{cc_1}=
     \frac{1}{4}
     \eta_\alpha{}^{aa_1}\eta^\gamma{}_{cc_1}
     2\delta_{\left[\right.a}{}^c\delta_{a_1\left.\right]}{}^{c_1}\sps\\[2ex]
     \frac{1}{2}s_{aa_1}{}^{b'{b'}_1}\bar s_{b'{b'}_1}{}^{cc_1}=
     \delta_{aa_1}^{cc_1}\spsd
\end{array}
\end{equation}
     Исследуем теперь  случай II). Из последнего тождества
     (\ref{re5vvv}) получим
\begin{equation}
\label{re6vvv}
     \begin{array}{c}
     s_{ac'}s_{a_1{c'}_1}\bar \varepsilon^{c'{c'}_1b'{b'}_1}
     \cdot
     \bar s_{b'd}\bar s_{{b'}_1d_1} \varepsilon^{dd_1ff_1}=
     2\delta_{ff_1}^{cc_1}\sps\\[2ex]
     s_{ac'}s_{a_1{c'}_1}
     \bar s_{b'd}\bar s_{{b'}_1d_1} \bar\varepsilon^{c'{c'}_1b'{b'}_1}=
     \varepsilon_{aa_1dd_1}\spsd
\end{array}
\end{equation}
     Определим $s^{kl'}$ следующим образом
\begin{equation}
\label{re7vvv}
      s^{kl'}s_{km'}=\delta_{m'}{}^{l'}
\end{equation}
      так, что
\begin{equation}
\label{re8vvv}
     s^{kl'}s^{k_1{l'}_1}s^{mn'}s^{m_1{n'}_1}
     \varepsilon_{kk_1mm_1}=
     \bar \varepsilon^{l'{l'}_1n'{n'}_1}\spsd
\end{equation}
     Домножим (\ref{re6vvv}) на  $s^{ak'} s^{a_1{k'}_1}s^{dn'}s^{d_1{n'}_1}$
     и с учетом (\ref{re8vvv}) получим
\begin{equation}
\label{re9vvv}
     s^{dn'}\bar s_{b'd}s^{d_1{n'}_1}\bar s_{{b'}_1d_1}
     \bar \varepsilon^{k'{k'}_1b'{b'}_1}=
     \bar \varepsilon^{k'{k'}_1n'{n'}_1}\spsd
\end{equation}
     Положим
\begin{equation}
\label{re10vvv}
     \bar N_{b'}{}^{n'}:=s^{dn'}\bar s_{b'd}\sps
\end{equation}
     тогда (\ref{re9vvv}) перепишется так
\begin{equation}
\label{re11vvv}
     \bar N_{\left[\right.b'}{}^{n'}
     \bar N_{{b'}_1\left.\right]}{}^{{n'}_1}=
     \delta_{\left[\right.b'}{}^{n'}
     \delta_{{b'}_1\left.\right]}{}^{{n'}_1}\spsd
\end{equation}
     Отсюда следует (доказательство в приложении
     (\ref{re69p}) - (\ref{re74p}))
\begin{equation}
\label{re12vvv}
     \bar N_{b'}{}^{n'}=s^{dn'}\bar s_{b'd}=n\delta_{b'}{}^{n'}=
     ns^{dn'}s_{db'}\sps n^2=1\spsd
\end{equation}
     Поэтому из (\ref{re7vvv}) вытекает
\begin{equation}
\label{re13vvv}
     \bar s_{b'd}=\pm s_{db'}\spsd
\end{equation}
     Подобным же образом разбирается случай I). Из
     тождества (\ref{re5vvv}) следует
\begin{equation}
\label{re14vvv}
     s_{\left[\right.a}{}^{b'}
     \delta_{a_1\left.\right]}{}^{{b'}_1}
     \bar s_{b'}{}^c\bar s_{{b_1}'}{}^{c_1}=
     \delta_{\left[\right.a}{}^c
     \delta_{a_1\left.\right]}{}^{c_1}\spsd
\end{equation}
     Положим
\begin{equation}
\label{re15vvv}
     N_a{}^c:=s_a{}^{b'}\bar s_{b'}{}^c\sps
\end{equation}
     то получим
\begin{equation}
\label{re16vvv}
     \bar N_{\left[\right.a}{}^c
     \bar N_{a_1\left.\right]}{}^{c_1}=
     \delta_{\left[\right.a}{}^c
     \delta_{a_1\left.\right]}{}^{c_1}\spsd
\end{equation}
     Откуда и следует
\begin{equation}
\label{re17vvv}
     N_a{}^c=n\delta_a{}^c=s_a{}^{b'}\bar s_{b'}{}^c \sps n^2=1\sps
\end{equation}
     определяя окончательно следующее соотношение
\begin{equation}
\label{re17vvva}
     s_a{}^{b'}\bar s_{b'}{}^c=\pm\delta_a{}^c\spsd
\end{equation}

      И нам останется доказать только (\ref{re22v}). Воспользуемся
      оператором вложения $H_i{}^\alpha$ и инволюцией
      $S_\alpha{}^{\beta '}$, определенными по формуле (\ref{re5v}).
      Для случая II). имеем
\begin{equation}
\label{re24v}
      \begin{array}{c}
      \bar \eta_i{}^{a'b'}=\bar H_i{}^{\alpha '}
      \bar \eta_{\alpha '}{}^{a'b'}=
      \bar H_i{}^{\alpha '}\bar S_{\alpha '}{}^\beta
      \eta_{\beta cd} s^{ca'}s^{db'}=\\[2ex]
      = H_i{}^\beta \eta_{\beta cd} s^{ca'}s^{db'}=
      \eta_{i cd} s^{ca'}s^{db'}\sps
      \end{array}
\end{equation}
\begin{equation}
\label{re25v}
      \begin{array}{c}
      \bar A_{ij a'}{}^{b'}=\bar \eta_{\left[\right. i}{}^{b'k'}
      \bar \eta_{j\left.\right] a'k'}=
      H_i{}^{\gamma}H_j{}^\delta \eta_{\left[\right. \gamma |bk|}
      \eta_{\delta\left.\right]}{}^{ak} s_{aa'}s^{bb'}=
      -A_{ij b}{}^a s_{aa'}s^{bb'}\spsd
      \end{array}
\end{equation}
      В случае I). доказательство такое
\begin{equation}
\label{re24va}
      \begin{array}{c}
      \bar \eta_i{}^{a'b'}=\bar H_i{}^{\alpha '}
      \bar \eta_{\alpha '}{}^{a'b'}=
      \bar H_i{}^{\alpha '}\bar S_{\alpha '}{}^\beta
      \eta_\beta{}^{cd} s_c{}^{a'}s_d{}^{b'}=\\[2ex]
      = H_i{}^\beta \eta_\beta{}^{cd} s_c{}^{a'}s_d{}^{b'}=
      \eta_i{}^{cd} s_c{}^{a'}s_d{}^{b'}\sps
      \end{array}
\end{equation}
\begin{equation}
\label{re25va}
      \begin{array}{c}
      \bar A_{ij a'}{}^{b'}=\bar \eta_{\left[\right. i}{}^{b'k'}
      \bar \eta_{j\left.\right] a'k'}=
      H_i{}^{\gamma}H_j{}^\delta \eta_{\left[\right. \gamma}{}^{ck}
      \eta_{\delta\left.\right] dk} s_c{}^{b'}\bar s_{a'}{}^d=\\[2ex]
      =\eta_{\left[\right. i}{}^{ck} \eta_{j\left.\right]}{}_{dk}
      s_c{}^{b'}\bar s_{a'}{}^d=
      A_{ij d}{}^c s_c{}^{b'}\bar s_{a'}{}^d\spsd
      \end{array}
\end{equation}
\end{proof}

\subsubsection{\texorpdfstring{Вложение $\mathbb R^6_{(2,4)}\subset \mathbb C\mathbb R^6$  в специальном базисе}{Действительное вложение}}$ $
    \indent Рассмотрим теперь в качестве примера вложение вещественного пространства
    $\mathbb R^6_{(2,4)}$ в комплексное пространство $\mathbb C\mathbb R^6$.
    В этом случае у нас появится возможность
    с помощью тензора $s_{aa'}$ осуществить отождествление
    верхних штрихованных с нижними нештрихованными индексами.
    Рассмотрим цепочку тождеств
\begin{equation}
\label{re31b}
    \begin{array}{c}
    K_i{}^j=\bar K_i{}^j\sps
    K_i{}^j:=H_i{}^\alpha H^j{}_\beta K_\alpha{}^\beta\sps\\[2ex]
    \eta_j{}^{ab}K_i{}^j\eta^i{}_{cd}=
    \eta_j{}^{ab}\bar K_i{}^j\eta^i{}_{cd}\sps\\[2ex]
    2S_{\left[\right.c}{}^a S_{d\left.\right]}{}^b=\frac{1}{4}
    \eta_j{}^{ab}\bar\eta^{jm'n'}2\bar S_{m'}{}^{k'}\bar S_{n'}{}^{l'}
    \bar\eta_{ik'l'}\eta^{i}{}_{cd}\sps
    \end{array}
\end{equation}
\begin{equation}
\label{re31bb}
    \begin{array}{c}
    S_{\left[\right.c}{}^aS_{d\left.\right]}{}^b=\frac{1}{4}
    s^{abm'n'}\bar S_{\left[\right.m'}{}^{k'}\bar S_{n'\left.\right]}{}^{l'}
    s_{cdk'l'}\sps\\[2ex]
    S_{\left[\right.c}{}^aS_{d\left.\right]}{}^b=
    s^{am'}s^{bn'}\bar S_{\left[\right.m'}{}^{k'}\bar S_{n'\left.\right]}{}^{l'}
    s_{ck'}s_{dl'}\sps\\[2ex]
    s^{lk'}\bar S_{k'}{}^{m'}s_{am'}
    S_{\left[\right.c}{}^aS_{d\left.\right]}{}^b
    s_{bn'}\bar S_{r'}{}^{n'}s^{sr'}=
    \delta_{\left[\right.c}{}^a\delta_{d\left.\right]}{}^b\spsd
    \end{array}
\end{equation}
    Определим
\begin{equation}
\label{re31bc}
    N_c{}^l:=s^{lk'}\bar S_{k'}{}^{m'}s_{am'}S_c{}^a
\end{equation}
    и получим
\begin{equation}
\label{re31bd}
    N_{\left[\right.c}{}^aN_{d\left.\right]}{}^b=
    \delta_{\left[\right.c}{}^a\delta_{d\left.\right]}{}^b\spsd
\end{equation}
    Откуда будет следовать выражение (доказательство в приложении
     (\ref{re69p}) - (\ref{re74p}))
\begin{equation}
\label{re32b}
    N_c{}^l=s^{lk'}\bar S_{k'}{}^{m'}s_{am'}S_c{}^a=n\delta_c{}^l\sps
    n=\pm 1\spsd
\end{equation}
    Выбирая знак ''+'' в (\ref{re32b}), мы получаем
    преобразования из  группы,  изоморфной группе $SU(2,2)$, которая
    будет, как видно из вышесказанного, двулистно накрывать
    связную компоненту единицы группы $SO^e(2,4)$. Эта
    компонента определится следующими условиями
\begin{equation}
\label{re33b}
    1).\ \ det\| K_\alpha{}^\beta\|=1\ \ \alpha ,\beta=\overline{1,6}\sps
    2).\ \ det\| K_\alpha{}^\beta\|>0\ \ \alpha ,\beta=\overline{1,2}\spsd
\end{equation}
    Если в (\ref{re32b}) выбрать ''-'', то знак в 2). из (\ref{re33b})
    изменится на противоположный. Далее, чтобы лучше уяснить как это
    происходит на практике,
    воспользуемся представлением полученных результатов
    в специальном базисе. Для этого определим
    базис пространства $\mathbb C\mathbb R^6$  следующим образом
\begin{equation}
\label{re26v}
    \begin{array}{lccc}
    t^\alpha & =(1,0,0,0,0,0)\sps &
    v^\alpha & =(0,1,0,0,0,0)\sps \\
    w^\alpha & =(0,0,i,0,0,0)\sps &
    x^\alpha & =(0,0,0,i,0,0)\sps \\
    y^\alpha & =(0,0,0,0,i,0)\sps &
    z^\alpha & =(0,0,0,0,0,i)\spsd
    \end{array}
\end{equation}
     Пусть в этом базисе матрица метрического тензора $g_{\alpha\beta}$
     имеет вид
\begin{equation}
\label{re27v}
     \parallel g_{\alpha\beta} \parallel=
     \left(
       \begin{array}{cccccc}
       1 & 0 & 0 & 0 & 0 & 0\\
       0 & 1 & 0 & 0 & 0 & 0\\
       0 & 0 & 1 & 0 & 0 & 0\\
       0 & 0 & 0 & 1 & 0 & 0\\
       0 & 0 & 0 & 0 & 1 & 0\\
       0 & 0 & 0 & 0 & 0 & 1\\
       \end{array}
    \right)\spsd
\end{equation}
    Определим вещественную реализацию вложения $\mathbb R^6_{(2,4)}\subset \mathbb C\mathbb R^6$
    оператором $H_i{}^\alpha$
\begin{equation}
\label{re28v}
     \parallel H_i{}^\alpha \parallel=
     \left(
       \begin{array}{cccccc}
       1 & 0 & 0 & 0 & 0 & 0\\
       0 & 1 & 0 & 0 & 0 & 0\\
       0 & 0 & i & 0 & 0 & 0\\
       0 & 0 & 0 & i & 0 & 0\\
       0 & 0 & 0 & 0 & i & 0\\
       0 & 0 & 0 & 0 & 0 & i\\
       \end{array}
    \right)\sps
     \parallel H^i{}_\alpha \parallel=
     \left(
       \begin{array}{cccccc}
       1 & 0 &  0 &  0 &  0 &  0\\
       0 & 1 &  0 &  0 &  0 &  0\\
       0 & 0 & -i &  0 &  0 &  0\\
       0 & 0 &  0 & -i &  0 &  0\\
       0 & 0 &  0 &  0 & -i &  0\\
       0 & 0 &  0 &  0 &  0 & -i\\
       \end{array}
    \right)\spsd
\end{equation}
    Тогда базис (\ref{re26v}) будет самосопряжен относительно инволюции
    $S_\alpha{}^{\beta '}$ вида
\begin{equation}
\label{re29v}
     \parallel S_\alpha{}^{\beta '} \parallel=
     \left(
       \begin{array}{cccccc}
       1 & 0 &  0 &  0 &  0 &  0\\
       0 & 1 &  0 &  0 &  0 &  0\\
       0 & 0 & -1 &  0 &  0 &  0\\
       0 & 0 &  0 & -1 &  0 &  0\\
       0 & 0 &  0 &  0 & -1 &  0\\
       0 & 0 &  0 &  0 &  0 & -1\\
       \end{array}
    \right)\spsd
\end{equation}
\newpage
    \noindent Поэтому в пространстве $\mathbb R^6_{(2,4)}$ индуцируемый метрический тензор
    $g_{ij}$ будет иметь матрицу
\begin{equation}
\label{re30v}
     \parallel g_{ij} \parallel=
     \left(
       \begin{array}{cccccc}
       1 & 0 &  0 &  0 &  0 &  0\\
       0 & 1 &  0 &  0 &  0 &  0\\
       0 & 0 & -1 &  0 &  0 &  0\\
       0 & 0 &  0 & -1 &  0 &  0\\
       0 & 0 &  0 &  0 & -1 &  0\\
       0 & 0 &  0 &  0 &  0 & -1\\
       \end{array}
    \right)=\parallel H_i{}^\alpha H_j{}^\beta g_{\alpha\beta} \parallel
\end{equation}
\vspace{1cm}
    в базисе
\vspace{1cm}
\begin{equation}
\label{re31v}
    \begin{array}{lcccccc}
    t^i  &= H^i{}_\alpha  t^\alpha  &=(1,0,0,0,0,0) & \sps &
    v^i  &= H^i{}_\alpha  v^\alpha  &=(0,1,0,0,0,0) \sps\\[3ex]
    w^i  &= H^i{}_\alpha  w^\alpha  & =(0,0,1,0,0,0) & \sps &
    x^i  &= H^i{}_\alpha  x^\alpha  & =(0,0,0,1,0,0) \sps\\[3ex]
    y^i  &= H^i{}_\alpha  y^\alpha  & =(0,0,0,0,1,0) & \sps &
    z^i  &= H^i{}_\alpha  z^\alpha  & =(0,0,0,0,0,1)\spsd
    \end{array}
\end{equation}
\vspace{1cm}
    Определим векторный базис в расслоении $A^\mathbb C(S)$ так
    $$
    X^a=(1,0,0,0) \sps
    Y^a=(0,1,0,0) \sps
    $$
\vspace{1cm}
\begin{equation}
\label{re51}
    Z^a=(0,0,1,0) \sps
    T^a=(0,0,0,1)\sps
\end{equation}
\vspace{1cm}
    $$
    \varepsilon_{abcd}X^aY^bZ^cT^d=1\sps \varepsilon=1\spsd
    $$
\vspace{1cm}
    Тогда в базисах (\ref{re30v}) и (\ref{re51}) имеет
    место разложение
    $$
    R^{ab}=2(R^{12}X^{\left[ \right.a}Y^{b\left. \right]}+
             R^{13}X^{\left[ \right.a}Z^{b\left. \right]}+
             R^{14}X^{\left[ \right.a}T^{b\left. \right]}+
    $$
\newpage
\begin{equation}
\label{re52}
             +R^{23}Y^{\left[ \right.a}Z^{b\left. \right]}+
             R^{24}Y^{\left[ \right.a}T^{b\left. \right]}+
             R^{34}Z^{\left[ \right.a}T^{b\left. \right]})=
\end{equation}
    $$
    \begin{array}{c}
    =\frac{1}{\sqrt{2}}
    (R^{12}+R^{34})\cdot {\sqrt{2}}(X^{\left[ \right.a}Y^{b\left. \right]}+
                    Z^{\left[ \right.a}T^{b\left. \right]})+\\ \\
    +\frac{1}{\sqrt{2}}
    (R^{12}-R^{34})\cdot {\sqrt{2}}(X^{\left[ \right.a}Y^{b\left. \right]}-
                    Z^{\left[ \right.a}T^{b\left. \right]})+\\ \\
    +\frac{1}{\sqrt{2}}
    (R^{13}+R^{24})\cdot {\sqrt{2}}(X^{\left[ \right.a}Z^{b\left. \right]}+
                    Y^{\left[ \right.a}T^{b\left. \right]})+\\ \\
    +\frac{i}{\sqrt{2}}
    (R^{13}-R^{24})\cdot (-i{\sqrt{2}})(X^{\left[ \right.a}Z^{b\left. \right]}-
                    Y^{\left[ \right.a}T^{b\left. \right]})+\\ \\
    +\frac{-i}{\sqrt{2}}
    (R^{14}+R^{23})\cdot i{\sqrt{2}}(X^{\left[ \right.a}T^{b\left. \right]}+
                    Y^{\left[ \right.a}Z^{b\left. \right]})+\\ \\
    +\frac{-i}{\sqrt{2}}
    (R^{14}-R^{23})\cdot i{\sqrt{2}}(X^{\left[ \right.a}T^{b\left. \right]}-
                    Y^{\left[ \right.a}Z^{b\left. \right]})=\\ \\
    =(Tt^i+Vv^i+Ww^i+Xx^i+Yy^i+Zz^i)\cdot
    \eta_i{}^{ab}=r^i\eta_i{}^{ab}\spsd
    \end{array}
    $$
    Поэтому мы можем положить
\begin{equation}
\label{re32v}
    \begin{array}{cccccc}
    v^i\eta_i{}^{ab} & := & {\sqrt{2}}(X^{\left[ \right.a}Y^{b\left. \right]}+
                    Z^{\left[ \right.a}T^{b\left. \right]})\sps &
    w^i\eta_i{}^{ab} & := & {\sqrt{2}}(X^{\left[ \right.a}Y^{b\left. \right]}-
                    Z^{\left[ \right.a}T^{b\left. \right]})\sps \\[2ex]
    y^i\eta_i{}^{ab} & := & {\sqrt{2}}(X^{\left[ \right.a}Z^{b\left. \right]}+
                    Y^{\left[ \right.a}T^{b\left. \right]})\sps &
    x^i\eta_i{}^{ab} & := & -{\sqrt{2}i}(X^{\left[ \right.a}Z^{b\left. \right]}-
                    Y^{\left[ \right.a}T^{b\left. \right]})\sps \\[2ex]
    z^i\eta_i{}^{ab} & := & {\sqrt{2}i}(X^{\left[ \right.a}T^{b\left. \right]}+
                    Y^{\left[ \right.a}Z^{b\left. \right]})\sps &
    t^i\eta_i{}^{ab} & := & {\sqrt{2}i}(X^{\left[ \right.a}T^{b\left. \right]}-
                    Y^{\left[ \right.a}Z^{b\left. \right]})\sps
    \end{array}
\end{equation}
    что определит операторы Нордена $\eta_i{}^{aa_1}$ в этих базисах как
\begin{equation}
\label{re56}
    \begin{array}{crcrcrcr}
    \eta_2{}^{12}= & \frac{1}{\sqrt{2}}\sps &
    \eta_2{}^{34}= & \frac{1}{\sqrt{2}}\sps &
    \eta_3{}^{12}= & \frac{1}{\sqrt{2}}\sps &
    \eta_3{}^{34}= & -\frac{1}{\sqrt{2}}\sps \\
    \eta_1{}^{14}= & \frac{i}{\sqrt{2}}\sps &
    \eta_1{}^{23}= & -\frac{i}{\sqrt{2}}\sps  &
    \eta_6{}^{14}= & \frac{i}{\sqrt{2}}\sps &
    \eta_6{}^{23}= & \frac{i}{\sqrt{2}}\sps \\
    \eta_5{}^{13}= & \frac{1}{\sqrt{2}}\sps  &
    \eta_5{}^{24}= & \frac{1}{\sqrt{2}}\sps  &
    \eta_4{}^{13}= & -\frac{i}{\sqrt{2}}\sps  &
    \eta_4{}^{24}= & \frac{i}{\sqrt{2}}\spsd
    \end{array}
\end{equation}
    Из (\ref{re32v}) вытекают следующие тождества
\begin{equation}
\label{re33v}
    \begin{array}{cc}
    T=\frac{i}{\sqrt{2}}(R^{23}-R^{14})\sps &
    V=\frac{1}{\sqrt{2}}(R^{12}+R^{34})\sps \\[2ex]
    W=\frac{1}{\sqrt{2}}(R^{12}-R^{34})\sps &
    X=\frac{i}{\sqrt{2}}(R^{13}-R^{24})\sps \\[2ex]
    Y=\frac{1}{\sqrt{2}}(R^{13}+R^{24})\sps &
    Z=\frac{-i}{\sqrt{2}}(R^{14}+R^{23})\sps\\[4ex]
    R^{12}=\frac{1}{\sqrt{2}}(V+W)\sps  &
    R^{13}=\frac{1}{\sqrt{2}}(Y-iX)\sps \\[2ex]
    R^{14}=\frac{i}{\sqrt{2}}(T+Z)\sps  &
    R^{23}=\frac{i}{\sqrt{2}}(Z-T)\sps  \\[2ex]
    R^{24}=\frac{1}{\sqrt{2}}(Y+iX)\sps &
    R^{34}=\frac{1}{\sqrt{2}}(V-W)\sps
    \end{array}
\end{equation}
    так, что обратные величины $\eta^i{}_{aa_1}$ имеют вид
\begin{equation}
\label{re57}
    \begin{array}{crcrcrcr}
    \eta^2{}_{12}= & \frac{1}{\sqrt{2}} \sps &
    \eta^2{}_{34}= & \frac{1}{\sqrt{2}} \sps &
    \eta^3{}_{12}= & \frac{1}{\sqrt{2}} \sps &
    \eta^3{}_{34}= & -\frac{1}{\sqrt{2}}\sps \\
    \eta^1{}_{14}= & -\frac{i}{\sqrt{2}}\sps &
    \eta^1{}_{23}= & \frac{i}{\sqrt{2}} \sps &
    \eta^6{}_{14}= & -\frac{i}{\sqrt{2}}\sps &
    \eta^6{}_{23}= & -\frac{i}{\sqrt{2}}\sps \\
    \eta^5{}_{13}= & \frac{1}{\sqrt{2}} \sps &
    \eta^5{}_{24}= & \frac{1}{\sqrt{2}} \sps &
    \eta^4{}_{13}= & \frac{i}{\sqrt{2}} \sps &
    \eta^4{}_{24}= & -\frac{i}{\sqrt{2}}\spsd
    \end{array}
\end{equation}
    И, кроме того, будут верны равенства
\begin{equation}
\label{re35v}
    \begin{array}{ccccccccccc}
    \overline{R^{23}} & = & -R^{23} & = &  R_{41} & \sps &
    \overline{R^{34}} & = &  R^{34} & = &  R_{12}   \sps\\[2ex]
    \overline{R^{12}} & = &  R^{12} & = &  R_{34} & \sps &
    \overline{R^{13}} & = &  R^{24} & = &  R_{31}   \spsd
    \end{array}
\end{equation}
    Выберем ковекторный базис, согласуя его с базисом (\ref{re51})
    при выполнении (\ref{re35v}), следующим образом
\begin{equation}
\label{re34v}
\begin{array}{l}
    X_a=s_{aa'}\bar X^{a'}=(0,0,1,0)\sps
    Y_a=s_{aa'}\bar Y^{a'}=(0,0,0,1)\sps\\
    Z_a=s_{aa'}\bar Z^{a'}=(1,0,0,0)\sps
    T_a=s_{aa'}\bar T^{a'}=(0,1,0,0)\spsd
\end{array}
\end{equation}
    Этим определится эрмитовый поляритет, которым
    наделено расслоение $A^\mathbb C(S)$ (его база - $\mathbb R^6_{(2,4)}$), с матрицей
\begin{equation}
\label{re36v}
       \parallel s_{aa'} \parallel=
       \left(
       \begin{array}{cccc}
       0 & 0 & 1 & 0\\
       0 & 0 & 0 & 1\\
       1 & 0 & 0 & 0\\
       0 & 1 & 0 & 0
       \end{array}
       \right)\spsd
\end{equation}
\newpage
     \noindent Отсюда следует, что пффафиан бивектора $R^{ab}$ имеет вид
\begin{equation}
\label{re53}
    \begin{array}{c}
    pf(R):=\frac{1}{2}R^{ab}R_{ab}=\\
    =2(R^{12}R^{34}-R^{13}R^{24}+R^{14}R^{23})=\\
    =T^2+V^2-W^2-X^2-Y^2-Z^2\spsd
    \end{array}
\end{equation}
      Вид матрицы тензора s в некотором специальном базисе
      для остальных случаев вложения приведен в таблице \ref{table1r}.\\

\begin{table}
\caption{Вид матрицы тензора s для действительных  вложений.}
\label{table1r}
\begin{center}
\begin{tabular}{|c|c|c|c|c|}
      \hline
      П-п & Пространство & s &
      s в спец. базисе & Изоморфизм\\
      \hline
      1  &
      $
      \begin{array}{c}
      R^6   \\
      \ll ++++++ \gg
      \end{array}
      $  &
      $s_{kk'}$ &
      $\left(
      \begin{array}{cccc}
      1 & 0 & 0 & 0\\
      0 & 1 & 0 & 0\\
      0 & 0 & 1 & 0\\
      0 & 0 & 0 & 1
      \end{array}
      \right)
      $&
      $\small{SU(4)/\{\pm 1\} \cong SO^e(6)}$\\
      \hline
      2  &
      $
      \begin{array}{c}
      R^6_{(1,5)}   \\
      \ll +----- \gg
      \end{array}
      $  &
      $s_k{}^{k'}$ &
      $\left(
      \begin{array}{cccc}
      0 & 1 & 0 & 0\\
     -1 & 0 & 0 & 0\\
      0 & 0 & 0 & 1\\
      0 & 0 &-1 & 0
      \end{array}
      \right)
      $&
      $\small{SL(2,H)/\{\pm 1\} \cong SO^e(1,5)}$\\
      \hline
      3  &
      $
      \begin{array}{c}
      R^6_{(2,4)}   \\
      \ll ++---- \gg
      \end{array}
       $ &
       $s_{kk'}$ &
       $\left(
       \begin{array}{cccc}
       0 & 0 & 1 & 0\\
       0 & 0 & 0 & 1\\
       1 & 0 & 0 & 0\\
       0 & 1 & 0 & 0
       \end{array}
       \right)
       $&
      $\small{SU(2,2)/\{\pm 1\} \cong SO^e(2,4)}$\\
      \hline
       4  &
      $
      \begin{array}{c}
      R^6_{(3,3)}   \\
      \ll +++--- \gg
      \end{array}
       $ &
       $s_k{}^{k'}$ &
       $\left(
       \begin{array}{cccc}
       1 & 0 & 0 & 0\\
       0 & 1 & 0 & 0\\
       0 & 0 & 1 & 0\\
       0 & 0 & 0 & 1
       \end{array}
       \right)
       $&
      $\small{SL(4,R)/\{\pm 1\} \cong SO^e(3,3)}$\\
      \hline
\end{tabular}
\end{center}
\end{table}

\subsubsection{Инфинитезимальные преобразования}$ $
    \indent Пусть имеется $K_\alpha{}^\beta(\lambda)$ -
    однопараметрическое семейство, удовлетворяющее условию
\begin{equation}
\label{re34b}
    g_{\alpha\delta}=K_\alpha{}^\beta(\lambda)
    K_\delta{}^\gamma(\lambda)g_{\beta\gamma}\sps
    K_\alpha{}^\beta(0)=\delta_\alpha{}^\beta\spsd
\end{equation}
\newpage
    \noindent Инфинитезимальные преобразования, ему соответствующие, определим так
\begin{equation}
\label{re35b}
     T_\delta{}^\gamma=\left.\left[\frac{d}{d\lambda}
     K_\delta{}^\gamma(\lambda)\right]\right|_{\lambda =0}\spsd
\end{equation}
     Тогда из (\ref{re34b}) следует
\begin{equation}
\label{re36b}
     T_{\alpha\beta}=-T_{\beta\alpha}\spsd
\end{equation}
     Согласно \cite[т. 1, стр. 224]{Penrose1r} из (\ref{re34b})
     всегда следует (\ref{re36b}), а из
     (\ref{re36b}) $\ll$экспонентцированием$\gg$
\begin{equation}
\label{re37b}
     K_\delta{}^\gamma(\lambda ):=exp(\lambda T_\delta{}^\gamma)
\end{equation}
     можно получить (\ref{re34b}).\\
     \indent Пусть, кроме того задано однопараметрическое семейство
     $S_a{}^b(\lambda )$ такое, что
\begin{equation}
\label{re38b}
      S_a{}^b(\lambda )S_c{}^d(\lambda )S_{a_1}{}^{b_1}(\lambda )
      S_{c_1}{}^{d_1}(\lambda )\varepsilon_{bb_1dd_1}=
      \varepsilon_{aa_1cc_1}\sps
      S_a{}^b(0)=\delta_a{}^b\spsd
\end{equation}
      Продифференцируем его, предварительно полагая
\begin{equation}
\label{re39b}
     T_a{}^b:=\left.\left[\frac{d}{d\lambda}
     S_a{}^b(\lambda)\right]\right|_{\lambda =0}\sps
\end{equation}
     и получим
\begin{equation}
\label{re40b}
     \varepsilon_{b\left[\right.a_1cc_1}T_{a\left.\right]}{}^b=0\ \ \
     \Leftrightarrow\ \ T_a{}^a=0\spsd
\end{equation}
     Верно обратное. Пусть
\begin{equation}
\label{re41b}
     S_a{}^b(\lambda ):=exp(\lambda T_a{}^b)\sps
\end{equation}
     тогда будет выполнено следующее тождество
\begin{equation}
\label{re42b}
      \begin{array}{c}
      S_a{}^b(\lambda )S_c{}^d(\lambda )S_{a_1}{}^{b_1}(\lambda )
      S_{c_1}{}^{d_1}(\lambda )\varepsilon_{bb_1dd_1}=\\ \\
      =exp(\lambda T_a{}^b)exp(\lambda T_c{}^d)exp(\lambda T_{a_1}{}^{b_1})
      exp(\lambda T_{c_1}{}^{d_1})\varepsilon_{bb_1dd_1}=\\ \\
      =det(exp(\lambda T_a{}^b))\varepsilon_{aa_1cc_1}=
      exp(\lambda tr(T_a{}^b))\varepsilon_{aa_1cc_1}=\varepsilon_{aa_1cc_1}\spsd
      \end{array}
\end{equation}
      Поскольку
\begin{equation}
\label{re43b}
       K_\alpha{}^\beta(\lambda)=\frac{1}{4}\eta_\alpha{}^{aa_1}
       \eta^\beta{}_{bb_1}2S_{\left[\right.a}{}^b(\lambda)
       S_{a_1\left.\right]}{}^{b_1}(\lambda)\sps
\end{equation}
      то дифференцируя по $\lambda$, полагая $\lambda =0$ и
      опуская верхний индекс с помощью метрического тензора
      $g_{\alpha\beta}$, получим
\begin{equation}
\label{re44b}
      T_{\alpha\beta}=\frac{1}{2}\eta_\alpha{}^{aa_1}
      \eta_{\beta bb_1}(T_a{}^b\delta_{a_1}{}^{b_1}+
      T_{a_1}{}^{b_1}\delta_a{}^b)=A_{\alpha\beta b}{}^aT_a{}^b\spsd
\end{equation}
      Теперь видна и цель этого пункта. На самом деле (\ref{re44b})
      есть алгебраическая интерпретация изоморфизма алгебр Ли
\begin{equation}
\label{re45b}
      so(6,\mathbb C)\cong sl(4,\mathbb C)\sps
\end{equation}
      и определение  (\ref{re4}) в начале этой главы вполне
      оправдано.

\subsection{Обобщенные операторы Нордена}$ $
      \indent Если задано аналитическое комплексное риманово пространство $\mathbb CV^6$,
      которое будет является базой касательного расслоения
      $\tau^\mathbb C$ и расслоения $\Lambda$, то существует тензор $g_{\alpha\beta}(z^\gamma)$,
      который на этом пространстве является метрическим и
      аналитичен по $z^\gamma$ ($z^\gamma$ -координаты точки базы).
      Обозначим через $\tilde{\dot g}_{\alpha\beta}$ значение этого
      тензора в некоторой точке $O(z_o^\gamma)$
\begin{equation}
\label{re42v}
      \tilde{\dot g}_{\alpha\beta}:=
      g_{\alpha\beta}(z_o^\gamma)\spsd
\end{equation}
      Поскольку тензор $\tilde{\dot g}_{\alpha\beta}$ имеет
      симметрическую матрицу, то она может быть
      приведена к диагональному виду в некотором базисе с
      помощью невырожденного преобразования $\dot P_\alpha{}^\gamma$
\begin{equation}
\label{re43v}
      \dot g_{\alpha\beta}=
      \dot P_\alpha{}^\gamma\dot P_\beta{}^\delta
      \tilde{\dot g}_{\gamma\delta}\sps
      \dot P_\alpha{}^\gamma:=
      P_\alpha{}^\gamma(z_o^\delta)\sps
\end{equation}
      где $P_\alpha{}^\gamma(z^\delta)$ - аналитические функции
      координат точки. Но для тензора $\dot g_{\alpha\beta}$
      будут справедливы следующие соотношения
\begin{equation}
\label{re44v}
    \dot g^{\alpha\beta}=1/4\cdot \dot\eta^{\alpha}{}_{aa_1} \dot\eta^{\beta}{}_{bb_1}
    \dot\varepsilon^{aa_1bb_1}\sps
    \dot\varepsilon^{aa_1bb_1}=
    \dot\eta_{\alpha}{}^{aa_1} \dot\eta_{\beta}{}^{bb_1}\dot g^{\alpha\beta}\sps
\end{equation}
    где $\dot\eta^{\alpha}{}_{aa_1}$ - связующие операторы Нордена,
    удовлетворяющие соотношению (\ref{re0}). Тогда из (\ref{re43v})
    следует
\begin{equation}
\label{re45v}
      g_{\alpha\beta}(z_o^\gamma):=
      \dot g_{\alpha\beta}
      (\dot P^{-1}){}_\gamma{}^\alpha
      (\dot P^{-1}){}_\delta{}^\beta=
      (\dot P^{-1}){}_\gamma{}^\alpha
      (\dot P^{-1}){}_\delta{}^\beta
      \dot\eta_{\alpha}{}^{aa_1} \dot\eta_{\beta}{}^{bb_1}
      \dot\varepsilon_{aa_1bb_1}\spsd
\end{equation}
\newpage
    \noindent Определим обобщенные операторы Нордена как
\begin{equation}
\label{re46v}
      \begin{array}{c}
      \eta_{\alpha}{}^{aa_1}(z_o^\delta):=
      (\dot P^{-1}){}_\gamma{}^\alpha\dot\eta_{\alpha}{}^{aa_1}\sqrt{\varepsilon^{-1}(z_o^\delta)}\sps\\[2ex]
      \varepsilon_{abcd}(z_o^\delta)=\varepsilon(z_o^\delta)\dot\varepsilon_{abcd}\sps \varepsilon_{1234}(z_o^\delta)=\varepsilon(z_o^\delta)\spsd
      \end{array}
\end{equation}
      В качестве корня можно взять любой из 2 вариантов.
      Вообще говоря, далее $\ll$нолик$\gg$ можно будет опустить,
      поскольку все выкладки справедливы для произвольной точки
      O, и при этом функции $P_\alpha{}^\gamma (z^\delta),\ \varepsilon(z_o^\delta)$ -
      аналитичны. Тогда из (\ref{re44v})  будет следовать
\begin{equation}
\label{re47v}
    \begin{array}{c}
    g^{\alpha\beta}(z^\delta)=1/4\cdot \eta^{\alpha}{}_{aa_1}(z^\delta)
    \eta^{\beta}{}_{bb_1}(z^\delta)
    \varepsilon^{aa_1bb_1}(z^\delta)\sps\\[2ex]
    \varepsilon^{aa_1bb_1}(z^\delta)
    =\eta_{\alpha}{}^{aa_1}(z^\delta)
    \eta_{\beta}{}^{bb_1}(z^\delta)g^{\alpha\beta}(z^\delta)\spsd
    \end{array}
\end{equation}
    В дальнейшем мы будем пользоваться обобщенными
    операторами Нордена.

\newpage
\section{\texorpdfstring{Связности в расслоении $A^\mathbb C$ и базой $\mathbb CV^6$}{Связности в расслоении}}$ $

    \indent В этой главе рассматривается два подхода к введению
    связности в расслоении $A^\mathbb C$. Первый описан в монографии
    \cite{Penrose1r}, а второй следует из теории нормализации
    Нордена-Нейфельда. В первом пункте как раз и рассматривается
    определение связности в расслоениях согласно этим
    теориям.\\
    \indent Во втором пункте рассматривается нормализация многообразия
    одного из двух семейств
    плоских образующих  квадрики $\mathbb CQ_6$, вложенной в проективное
    пространство $\mathbb C\mathbb P_7$. Это многообразие диффеоморфно многообразию
    точек самой квадрики. Рассматривая деривационные
    уравнения нормализованного семейства плоских образующих мы
    приходим к определению операторов Нордена через
    операторы Нейфельда. Если в расслоении $A^\mathbb C$ рассматривать в
    качестве метрического тензора квадривектор $\varepsilon_{abcd}$,
    кососимметричный по всем индексам, то на базе индуцируется
    метрический тензор $G_{\Lambda\Psi}$, что превращает многообразие
    плоских образующих в вещественное риманово пространство $V^{12}_{(6,6)}$
    с комплексной структурой $f_\Lambda{}^\Psi$.
    Можно перейти к комплексной реализации нашего многообразия с базой $\mathbb CV^6$.
    Поставив каждой 4-мерной образующей конуса
    8-мерного пространства $\mathbb C\mathbb R^8$ (т.е в проективной геометрии это
    как раз и будет 3-мерная образующая квадрики $\mathbb CQ_6\subset \mathbb C\mathbb P_7$)
    соответствующий слой из расслоения $A^\mathbb C$ с базой
    $\mathbb CV^6$, получим, что риманова связность, введенная по формулам
    $$
    \nabla_\alpha g_{\beta\gamma}=0\sps
    \bar\nabla_{\alpha '} \bar g_{\beta '\gamma '}=0\sps
    $$
    где $(\alpha,\beta,...=1,2,3,4,5,6$),
    единственным образом продолжается до эквиаффинной связности
    в расслоении $A^\mathbb C(\mathbb CV^6)$ вида
    $$
    \nabla_\alpha\varepsilon_{abcd}=0\sps
    \bar\nabla_{\alpha '}\bar \varepsilon_{a'b'c'd'}=0\spsd
    $$
    Существование и единственность такой связности и доказывается
    в данной главе.\\
    \indent Далее рассматривается вещественная связность, индуцируемая
    вложением $V^6_{(p,q)}\subset \mathbb CV^6$. Такая связность
    должна быть согласована с инволюцией, т.е. должно быть
    выполнено следующее соотношение
    $$
    \nabla_\alpha S_\beta{}^{\gamma '}=0\sps
    \bar\nabla_{\alpha '} S_\beta{}^{\gamma '}=0\spsd
    $$
\newpage
    \noindent Затем, используя результаты первой главы,
    вводится либо эрмитов поляритет, либо эрмитова инволюция в
    расслоении $A^\mathbb C$. При этом указанная структура должна быть
    ковариантно постоянна.
    Из результатов этих пунктов получается битвисторное
    уравнение
    $$
    \nabla^{a\left(\right.b}X^{c\left.\right)}=0\sps
    $$
    которое конформно-инвариантно и инвариантно при преобразованиях
    нормализации. Решения этого уравнения будут рассмотрены
    в следующей главе.

\subsection{\texorpdfstring{Связность в расслоении}{Связность в расслоении}}$ $
     \indent Пусть задано расслоение R с базой $V^{2n}_{(n,n)}$ и
     слоями, изоморфными $\mathbb C^k$.
     Определим оператор ковариантной производной,
     действующий в расслоении R вдоль векторного поля
     X как отображение двух гладких сечений слоя
     $\mathbb C^k_x$
\begin{equation}
\label{re1w}
     \begin{array}{c}
     \nabla_X s:x\longmapsto \nabla_X s(x)\sps
     \end{array}
\end{equation}
     где $s(x)$ - сечение. При $X=\frac{\partial}{\partial x^i}$
     это даст разложение
\begin{equation}
\label{re2w}
     \nabla_\frac{\partial}{\partial x^i} s=\nabla_i s
\end{equation}
     $(i,j,k,...=\overline{1,2n})$. При этом оператор $\nabla_i$
     должен удовлетворять следующим соотношениям (которые, кстати говоря,
     можно положить в его определение)
\begin{equation}
\label{re3w}
     \begin{array}{c}
     \nabla_i(X^a+Y^a)=\nabla_iX^a+\nabla_iY^a\sps\\
     \nabla_i(fX^a)=f\nabla_iX^a+X^a\nabla_if\sps\\
     \nabla_i(X_aY^a)=Y^a\nabla_iX_a+X_a\nabla_iY^a\sps\\
     \nabla_i \bar X^{a'}=\overline{\nabla_i  X^a}\sps
     \nabla_i \bar X_{a'}=\overline{\nabla_i  X_a}\sps\\
     \nabla_ik=0\sps\\
     \nabla_i(g+h)=\nabla_ig+\nabla_ih\sps\\
     \nabla_i(gh)+g\nabla_ih+h\nabla_ig
     \end{array}
\end{equation}
     $(a,b,c,...,f=\overline{1,n})$. При этом k,g,h - аналитические функции, $k=const$; $X^a,\ Y^a$ - векторы
     слоя $\mathbb C^k_x$, а  $X_a,\ Y_a$ - ковекторы двойственного
     пространства ${\mathbb C^*}^k_x$. Сечение $s(x)$  может быть разложено
     по базису $s_a(x)$ слоя  $\mathbb C^k_x$
\begin{equation}
\label{re3wa}
     s=s^as_a
\end{equation}
     так, что коэффициенты связности определятся из следующего
     уравнения
\begin{equation}
\label{re3wb}
     \nabla_is_a=\Gamma_{ia}{}^cs_c\spsd
\end{equation}
     Тогда дифференцирование можно осуществить следующим
     образом
\begin{equation}
\label{re3wc}
     \nabla_i X^a=\partial_i X^a+\Gamma_{ic}{}^aX^c\sps
\end{equation}
     при этом повторная ковариантная производная запишется
     в следующем виде
\begin{equation}
\label{re3wd}
     \nabla_i\nabla_jX^a=\partial_i\nabla_jX{}^a-
     \Gamma_{ij}{}^k\nabla_kX^a+\Gamma_{ic}{}^aX_j{}^c\sps
\end{equation}
     где $\Gamma_{ij}{}^k$ определяют связность в касательном
     расслоении.\\

     Тензором кручения назовем тензор
     $T_{ij}{}^k$, удовлетворяющий соотношению
\begin{equation}
\label{re4w}
     2\nabla_{\left[\right.i}\nabla_{j\left.\right]}f=
     T_{ij}{}^k\nabla_kf\spsd
\end{equation}
     Тензором кривизны назовем тензор $R_{ijk}{}^l$, удовлетворяющий
     следующему соотношению
\begin{equation}
\label{re4wa}
     (2\nabla_{\left[\right.i}\nabla_{j\left.\right]}-T_{ij}{}^k\nabla_k)
     X^i=R_{ijk}{}^lX^k\spsd
\end{equation}
     Если кручение нулевое, то соответствующий оператор $\nabla_i$
     назовем симметричным. Пусть $\nabla_i$ - симметричный оператор,
     а $\tilde\nabla_i$ - произвольный. Тогда
\begin{equation}
\label{re5w}
     (\tilde\nabla_i-\nabla_i)f=0\sps
\end{equation}
     и можно определить такой тензор $Q_{ib}{}^a$, называемый тензором
     деформации, что
\begin{equation}
\label{re6w}
     \begin{array}{cc}
     (\tilde\nabla_i-\nabla_i)X^a=Q_{ib}{}^a X^b\sps &
     (\tilde\nabla_i-\nabla_i)X_a=-Q_{ia}{}^b X_b\sps\\
     (\tilde\nabla_i-\nabla_i)X^{a'}=Q_{ib'}{}^{a'} X^{b'}\sps &
     (\tilde\nabla_i-\nabla_i)X_{a'}=-Q_{ia}{}^{b'} X_{b'}\spsd\\
     \end{array}
\end{equation}
     Если теперь $R=\tau^\mathbb R(V^{2n}_{(n,n)})$ есть касательное расслоение,
     то кручение оператора $\tilde\nabla_i$ будет иметь вид
\begin{equation}
\label{re7w}
     \tilde T_{ij}{}^k=2Q_{[ij]}{}^k\sps
\end{equation}
     где $Q_{ij}{}^k$ - тензор деформации в касательном расслоении.

\subsubsection{\texorpdfstring{Нормализация (спинорная) квадрики $\mathbb CQ_6$ в $\mathbb CP_7$}{Спинорная нормализация}}$ $
     \indent Рассмотрим невырожденную квадрику $\mathbb CQ_6$, вложенную в
     проективное пространство $\mathbb C\mathbb P_7$.
     Она может быть описана уравнением
\begin{equation}
\label{re1aa}
     G_{AB}X^AX^B=0\ \ \ \Leftrightarrow\ \ \  (X,X)=0\
     (A,B,...=\overline{1,8})\spsd
\end{equation}
     На основании принципа тройственности Картана \cite[стр. 175]{Cartan1r}
     многообразие точек квадрики диффеоморфно многообразию трехмерныx
     плоских образующих, составляющих 2 семейства (таким образом
     мы имеем 3 изоморфных друг другу многообразия). Базисные точки
     этих образующих
\begin{equation}
\label{re2aa}
     X_a=(X_a{}^A)\psps (a,b,...,i,j,...,p,q,...=\overline{1,4})
\end{equation}
     определят уравнения
\begin{equation}
\label{re3aa}
      (X_a,X_b)=0\spsd
\end{equation}
     Определим плоскую образующую ее матричной координатой
     $Z=(Z_a^p)$ \cite{Rosenfeld3r}
\begin{equation}
\label{re4aa}
     X_a:=A_a+B_pZ^p_a\sps (A_a,B_p):=d_{ap}\sps B^a:=d^{ap}B_p\sps
\end{equation}
     тогда из (\ref{re3aa}) следует
\begin{equation}
\label{re5aa}
     Z_{ab}=-Z_{ba}\sps Z_{ab}:=d_{ap}Z_a^p\spsd
\end{equation}
     Это означает, что $X_a$ зависят от 6 комплексных параметров.
     Как известно \cite{Neifeld2r}, нормализация многообразия
     плоских образующих квадрики определяется заданием такого
     вещественного дифференциального соответствия между ее плоскими
     образующими максимальной размерности
\begin{equation}
\label{re6aa}
     f:\ \mathbb C\mathbb P_3(X_a)\rightarrow \mathbb C\mathbb P_3(Y_p)\sps
\end{equation}
     что образующей $\mathbb C\mathbb P_3(X_a)$ соответствует
     плоскость $\mathbb C\mathbb P_3(Y_p)$, не пересекающая первую. Для 6-мерной
     квадрики эти плоские образующие необходимо принадлежат
     одному семейству.
     Мы будем требовать, чтобы нормализация была гармонической
     \cite[стр. 209]{Norden6r} .
     В локальных координатах нормализация
     определяется параметрическими уравнениями
\begin{equation}
\label{re7aa}
      X_a=X_a(u^\Lambda )\sps Y_a=Y_a(u^\Lambda )\ \
      (\Lambda ,\Psi ,...=\overline{1,12})\spsd
\end{equation}
     При этом выполнены соотношения
\begin{equation}
\label{re8aa}
     (X_a,X_b)=0\sps (Y_p,Y_q)=0\sps (X_a,Y_p)=c_{ap}\spsd
\end{equation}
     Ввиду невырожденности $c_{ap}$ мы можем определить
\begin{equation}
\label{re9aa}
     Y^a:=c^{ap}Y_p\sps c^{ap}c_{pb}=\delta_b^a\sps (X_a,Y^b)=\delta_b^a\spsd
\end{equation}

\subsubsection{Операторы Нейфельда}$ $
     \indent Деривационные уравнения нормализованного семейства плоских
     образующих имеют вид \cite{Neifeld2r}
\begin{equation}
\label{re10aa}
     \left\{
     \begin{array}{lcl}
     \nabla_\Lambda X_a & = & Y^b M_{\Lambda ab}\sps\\
     \tilde \nabla_\Lambda Y^b & = & X_a N_\Lambda{}^{ab}\spsd
     \end{array}
     \right.
\end{equation}
     Далее, из (\ref{re8aa}) вытекает
\begin{equation}
\label{re11aa}
     M_{\Lambda (ab)}=0\sps N_\Lambda{}^{(ab)}=0\sps
     \Gamma_{\Lambda a}{}^c=\tilde \Gamma_{\Lambda a}{}^c\sps
\end{equation}
     где $\Gamma_{\Lambda a}{}^c$ - коэффициенты конформно-псевдоевклидовой
     связности в комплексном векторном расслоении, база
     которого есть многообразие плоских образующих.
     Отметим, что комплексное векторное расслоение
     метризуемо в том смысле, что в нем можно задать
     поле метрического 4-вектора $\varepsilon_{abcd}$, и
     поскольку нормализация гармоническая,
     то определенная выше связность - эквиаффинна, а 4-вектор
     $\varepsilon_{abcd}$ ковариантно постоянен. Это позволяет
     использовать его для переброски индексов.
     Операторы $M_\Lambda{}^{ab}$ есть связующие операторы,
     которые каждому бивектору слоя ставят в соответствие
     вещественный вектор касательного расслоения
\begin{equation}
\label{re12aa}
     V^{ab}:=M_\Lambda{}^{ab}V^\Lambda\spsd
\end{equation}
     Это соответствие будет взаимнооднозначно. Отсюда следует, что
     можно определить
\begin{equation}
\label{re13aa}
       \left\{
       \begin{array}{ccc}
       M^{\Lambda ab}M_{\Lambda cd} & = & \delta_{cd}^{ab}\sps\\
       \bar M^{\Lambda a'b'}M_{\Lambda cd} & = & 0\sps
       \end{array}
       \right.
       det
       \left\|
       \begin{array}{c}
       M_{\Lambda ab} \\
       \bar M_{\Lambda a'b'}
       \end{array}
       \right\|
       \ne 0\spsd
\end{equation}
       Тогда оператор
\begin{equation}
\label{re14aa}
       \bigtriangleup_\Lambda{}^\Psi=\frac{1}{2}
       (\delta_\Lambda{}^\Psi+if_\Lambda{}^\Psi)=\frac{1}{2}
       M_{\Lambda ab} M^{\Psi ab}
\end{equation}
       есть единичный аффинор Нордена \cite{Norden1r} такой, что
\begin{equation}
\label{re15aa}
       f_\Lambda{}^\Psi M^{\Lambda cd}=-iM^{\Psi cd}\sps
\end{equation}
       где $f_\Lambda{}^\Psi$  есть оператор комплексной
       структуры
\begin{equation}
\label{re16aa}
      f^2=-E\spsd
\end{equation}
      Определим согласно работе \cite{Neifeld1r} операторы
      $m_\alpha{}^\Lambda$ таким образом, что
\begin{equation}
\label{re17aa}
       \left\{
       \begin{array}{ccc}
       m_\alpha{}^\Lambda m^\beta{}_\Lambda & = & \delta_\alpha{}^\beta\sps\\
       m_\alpha{}^\Lambda \bar m^{\beta '}{}_\Lambda & = & 0\sps
       \end{array}
       \right.
       det
       \left\|
       \begin{array}{c}
       m^\alpha{}_\Lambda \\
       \bar m^{\alpha '}{}_\Lambda
       \end{array}
       \right\|
       \ne 0\sps
\end{equation}
       и тогда
\begin{equation}
\label{re18aa}
       \bigtriangleup_\Lambda{}^\Psi=\frac{1}{2}
       (\delta_\Lambda{}^\Psi+if_\Lambda{}^\Psi)=
       m^\alpha{}_\Lambda m_\alpha{}^\Psi\ \
       (\alpha ,\beta ,...=\overline{1,6})
\end{equation}
       есть все тот же  единичный аффинор Нордена \cite{Norden1r}.
       При этом
\begin{equation}
\label{re19aa}
       f_\Lambda{}^\Psi m_\alpha{}^\Lambda=-im_\alpha{}^\Psi\spsd
\end{equation}
       Это означает, что верно следующее разложение
\begin{equation}
\label{re20aa}
       m_\alpha{}^\Lambda=\frac{1}{2}\eta_\alpha{}^{ab}
       M^\Lambda{}_{ab}\spsd
\end{equation}
       для некоторых $\eta_\alpha{}^{ab}=-\eta_\alpha{}^{ba}$.
       Для произвольного тензора $A_{\Lambda\Psi}$
       будет иметь место следующее разложение
\begin{equation}
\label{re21aa}
       \left\{
       \begin{array}{ccc}
       a_{\alpha\beta} & = & m_\alpha{}^\Lambda m_\beta{}^\Psi
       A_{\Lambda\Psi}\sps\\
       a_{\alpha '\beta} & = & \bar m_{\alpha '}{}^\Lambda m_\beta{}^\Psi
       A_{\Lambda\Psi}\sps
       \end{array}
       \right.
       \left\{
       \begin{array}{ccc}
       a_{abcd} & = & M^\Lambda{}_{ab} M^\Psi{}_{cd}
       A_{\Lambda\Psi}\sps\\
       a_{a'b'cd} & = & \bar M^\Lambda{}_{a'b'} M^\Psi{}_{cd}
       A_{\Lambda\Psi}\spsd\\
       \end{array}
       \right.
\end{equation}
       При этом метрическому 4-вектору будет соответствовать
       метрический тензор $G_{\Lambda\Psi}$ так, что
\begin{equation}
\label{re22aa}
       \left\{
       \begin{array}{ccl}
       g_{\alpha \beta} & = & m_\alpha{}^\Lambda m_\beta{}^\Psi
       G_{\Lambda\Psi}\sps\\
       g_{\alpha '\beta} & =  & 0\sps
       \end{array}
       \right.
       \left\{
       \begin{array}{ccl}
       \varepsilon_{abcd} & = & M^\Lambda{}_{ab} M^\Psi{}_{cd}
       G_{\Lambda\Psi}\sps\\
       \varepsilon_{a'b'cd} & =  & 0\spsd
       \end{array}
       \right.
\end{equation}
       Обратные соотношения имеют вид
       $$
       G_{\Lambda\Psi}=\frac{1}{4}
       (M_\Lambda{}^{ab}M_\Psi{}^{cd}\varepsilon_{abcd}+
       \bar M_\Lambda{}^{a'b'}\bar M_\Psi{}^{c'd'}\bar\varepsilon_{a'b'c'd'})\sps
       $$
\begin{equation}
\label{re23aa}
       \eta^\alpha{}_{ab}=m^\alpha{}_\Lambda M^\Lambda{}_{ab}
       \sps
       \bar \eta^{\alpha '}{}_{a'b'}=
       \bar m^{\alpha '}{}_\Lambda \bar M^\Lambda{}_{a'b'}\sps
\end{equation}
       $$
       \eta^{\alpha '}{}_{ab}=\bar m^{\alpha '}{}_\Lambda M^\Lambda{}_{ab}\equiv 0
       \sps
       \bar \eta^{\alpha }{}_{a'b'}=
       m^\alpha {}_\Lambda \bar M^\Lambda{}_{a'b'}\equiv 0\spsd
       $$
       Последняя пара уравнений появляется ввиду аналитичности $M^\Lambda{}_{ab}$.
       Отсюда с учетом (\ref{re13aa}),(\ref{re17aa}),(\ref{re20aa})
       будет следовать
\begin{equation}
\label{re24aa}
       \begin{array}{c}
       \eta^\alpha{}_{ab}\eta_\alpha{}^{cd}=
       M^\Lambda{}_{ab}M_\Lambda{}^{cd}=\delta_{ab}^{cd}\sps
       \frac{1}{4}\eta_\alpha{}^{ab}\eta^\beta{}_{cd}\delta_{ab}^{cd}=
       \delta_\alpha{}^\beta\spsd
       \end{array}
\end{equation}
       Таким образом многообразие плоских образующих
       квадрики $\mathbb CQ_6$ снабжено метрическим тензором $G_{\Lambda\Psi}$
       и поэтому диффеоморфно псевдориманову вещественному пространству
       $V^{12}_{(6,6)}$  с комплексной структурой $f_\Lambda{}^\Psi$.

\subsubsection{Вещественная и комплексная реализации связности}$ $
       \indent Перейдем к построению более общей связности.
       Назовем две связности эквивалентными, если они определяют
       один и тот же параллельный перенос вдоль любой
       кривой базы. Вещественная и комплексная реализации даны
       согласно \cite[с. 169-178]{Lichnerowicz1r}.
\vspace{1cm}
\begin{theoremr}
\label{theorem111r}
       Пусть $V^{2n}_{(n,n)}$ - вещественное псевдориманово пространство
       с комплексной структурой, а $\mathbb CV^n$ - комплексное аналитическое
       риманово пространство: комплексная реализация $V^{2n}_{(n,n)}$. Тогда следующие
       два определения эквивалентны (определяют одну и ту же
       связность)
       \begin{enumerate}
       \item  В касательном расслоении $\tau^\mathbb R(V^{2n}_{(n,n)})$ существует
              риманова связность без кручения такая, что тензор $m_\alpha{}^\Lambda$
              ковариантно постоянен
\begin{equation}
\label{re20w}
        \nabla_\Lambda G_{\Theta\Psi}=0\sps
\end{equation}
\begin{equation}
\label{re21w}
       \nabla_\Lambda m_\alpha{}^\Psi=0\sps
       \nabla_\Lambda \bar m_{\alpha '}{}^\Psi=0\spsd
\end{equation}
       \item  В касательном расслоении $\tau^\mathbb C(\mathbb CV^n)$ существует
              риманова связность без кручения такая, что тензор $m_\alpha{}^\Psi$
              ковариантно постоянен
\begin{equation}
\label{re22w}
       \left\{
       \begin{array}{l}
       \nabla_\alpha g_{\beta\gamma}=0\sps\\
       \bar\nabla_{\alpha '} g_{\beta\gamma}=0\sps
       \end{array}
       \right.
       \left\{
       \begin{array}{l}
       \nabla_\alpha \bar g_{\beta '\gamma '}=0\sps\\
       \bar\nabla_{\alpha '} \bar g_{\beta '\gamma '}=0\sps
       \end{array}
       \right.
\end{equation}
\begin{equation}
\label{re23w}
       \left\{
       \begin{array}{l}
       \nabla_\beta m_\alpha{}^\Psi=0\sps\\
       \bar\nabla_{\beta '} m_\alpha{}^\Psi=0
       \end{array}
       \right.
\end{equation}
       \end{enumerate}
       и сделано определение
\begin{equation}
\label{re24w}
       \nabla_\alpha:=m_\alpha{}^\Lambda{}\nabla_\Lambda\sps
       \nabla_{\alpha '}:=\bar m_{\alpha '}{}^\Lambda\nabla_\Lambda\spsd
\end{equation}
\end{theoremr}
\vspace{1cm}

\begin{proof}$ $\\
        \indent Во-первых.
        Пусть связность 1). существует, тогда домножим
        (\ref{re21w}) на $m_\beta{}^\Lambda$, то с учетом определения
        (\ref{re23w}) получим
\begin{equation}
\label{re25w}
        \begin{array}{c}
        \nabla_\Lambda m_\alpha{}^\Psi=0\sps\\[2ex]
        0=m_\beta{}^\Lambda\nabla_\Lambda m_\alpha{}^\Psi=\nabla_\beta m_\alpha{}^\Psi\sps\\[2ex]
        0=\bar m_{\beta '}{}^\Lambda\nabla_\Lambda m_\alpha{}^\Psi=\nabla_{\beta '} m_\alpha{}^\Psi\spsd
        \end{array}
\end{equation}
        Обратно. Пусть выполнено (\ref{re23w}), тогда c учетом
        определений (\ref{re17aa}) и (\ref{re18aa})
\begin{equation}
\label{re26w}
        \begin{array}{c}
        \nabla_\alpha:=m_\alpha{}^\Lambda\nabla_\Lambda\sps\\[2ex]
        m^\alpha{}_\Psi\nabla_\alpha=
        \bigtriangleup_\Psi{}^\Lambda\nabla_\Lambda
        \ \ \Leftrightarrow\ \ \
        \bar m^{\alpha '}{}_\Psi\bar\nabla_{\alpha '}
        =\bar\bigtriangleup_\Psi{}^\Lambda\nabla_\Lambda\spsd
        \end{array}
\end{equation}
        Сложим два последних уравнения и получим
\begin{equation}
\label{re27w}
        \begin{array}{c}
        \nabla_\Lambda=
        (\bigtriangleup_\Psi{}^\Lambda+
        \bar\bigtriangleup_\Psi{}^\Lambda)\nabla_\Lambda=
        m^\alpha{}_\Lambda\nabla_\alpha+
        \bar m^{\alpha '}{}_\Lambda\bar \nabla_{\alpha '}\spsd
        \end{array}
\end{equation}
        Тогда из условий (\ref{re23w}) следует
\begin{equation}
\label{re28w}
        \nabla_\Lambda m^\beta{}_\Psi{}=
        m^\alpha{}_\Lambda{}\nabla_\alpha m^\beta{}_\Psi+
        \bar m^{\alpha '}{}_\Lambda\bar \nabla_{\alpha '} m^\beta{}_\Psi{}=0\spsd
\end{equation}
        \indent Во-вторых.
        Поскольку из (\ref{re21w}) или из  (\ref{re23w}) следует ковариантное постоянство
        оператора комплексной структуры из-за выполнения
        (\ref{re19aa}), то согласно \cite[т. 2, с. 135-139]{Koboyasi1r} из этого вытекает 
        существование аффинной связности в касательном расслоении
        $\tau^\mathbb C(\mathbb CV_n)$. Рассматривая риманову связность без кручения получим,
        что если известна связность вида 1)., то можно определить
        символы связности вида 2)., расписав условие (\ref{re21w})
\begin{equation}
\label{re29w}
        \begin{array}{c}
        \Gamma_{\Lambda\alpha}{}^\beta:=
        \Gamma_{\Lambda\Theta}{}^\Psi m_\alpha{}^\Theta m^\beta{}_\Psi+
        m^\beta{}_\Psi\partial_\Lambda m_\alpha{}^\Psi\sps\\[2ex]
        \bar \Gamma_{\Lambda\alpha '}{}^{\beta '}:=
        \Gamma_{\Lambda\Theta}{}^\Psi \bar m_{\alpha '}{}^\Theta \bar m^{\beta '}{}_\Psi+
        \bar m^{\beta '}{}_\Psi\partial_\Lambda \bar m_{\alpha '}{}^\Psi\spsd
        \end{array}
\end{equation}
        А если известна связность вида 2)., то можно определить
        символы связности вида 1)., расписав условие (\ref{re23w})
\begin{equation}
\label{re30w}
\begin{array}{l}
        \Gamma_{\beta\phantom{'}\Theta}{}^\Psi:=
        \Gamma_{\beta\phantom{'}\alpha}{}^\gamma m^\alpha{}_\Theta m_\gamma{}^\Psi+
        \bar \Gamma_{\beta\phantom{'} \alpha '}{}^{\gamma '}
        \bar m^{\alpha '}{}_\Theta \bar m_{\gamma '}{}^\Psi-
        m^\alpha{}_\Theta\partial_{\beta\phantom{'}} m_\alpha{}^\Psi-
        \bar m^{\alpha '}{}_\Theta\partial_{\beta\phantom{'}} \bar m_{\alpha '}{}^\Psi\sps\\[2ex]
        \Gamma_{\beta'\Theta}{}^\Psi:=
        \Gamma_{\beta'\alpha}{}^\gamma m^\alpha{}_\Theta m_\gamma{}^\Psi+
        \bar \Gamma_{\beta' \alpha '}{}^{\gamma '}
        \bar m^{\alpha '}{}_\Theta \bar m_{\gamma '}{}^\Psi-
        m^\alpha{}_\Theta\partial_{\beta '} m_\alpha{}^\Psi-
        \bar m^{\alpha '}{}_\Theta\partial_{\beta '} \bar m_{\alpha '}{}^\Psi\spsd
\end{array}
\end{equation}
        При этом выполнено
\begin{equation}
\label{re31w}
        \begin{array}{c}
        \Gamma_{\Lambda\Theta}{}^\Psi:=
        \Gamma_{\beta\Theta}{}^\Psi m^\beta{}_\Lambda+\Gamma_{\beta '\Theta}{}^\Psi \bar m^{\beta'}{}_\Lambda\sps
        \Gamma_{\beta\Theta}{}^\Psi=\Gamma_{\Lambda\Theta}{}^\Psi m_\beta{}^\Lambda\sps
        \Gamma_{\beta '\Theta}{}^\Psi=\Gamma_{\Lambda\Theta}{}^\Psi \bar m_{\beta '}{}^\Lambda\sps\\[2ex]
        \partial_\beta = m_\beta{}^\Psi\partial_\Psi\sps
        \bar \partial_{\beta '}= \bar m_{\beta '}{}^\Psi\partial_\Psi\sps
        \partial_\Lambda = m^\beta{}_\Lambda\partial_\beta +\bar m^{\beta '}{}_\Lambda\bar \partial_{\beta '}\spsd
        \end{array}
\end{equation}
        \indent В-третьих.
        Из (\ref{re17aa}) и (\ref{re18aa}) следует
\begin{equation}
\label{re32w}
        \begin{array}{c}
        g_{\alpha\beta}=G_{\Psi\Lambda} m_\alpha{}^\Psi m_\beta{}^\Lambda\sps\\
        G_{\Theta\Upsilon}\bigtriangleup_\Lambda{}^\Theta
        \bigtriangleup_\Psi{}^\Upsilon=m^\alpha{}_\Lambda
        m^\beta{}_\Psi g_{\alpha\beta}\sps\\
        \frac{1}{2}(G_{\Lambda\Psi}+i
        G_{\Theta\left(\right.\Lambda} f_{\Psi\left.\right)}{}^\Theta)=
        m^\alpha{}_\Lambda m^\beta{}_\Psi g_{\alpha\beta}\sps\\
        \frac{1}{2}(G_{\Lambda\Psi}-i
        G_{\Theta\left(\right.\Lambda} f_{\Psi\left.\right)}{}^\Theta)=
        \bar m^{\alpha '}{}_\Lambda \bar m^{\beta '}{}_\Psi{}
        \bar g_{\alpha '\beta '}\sps\\
        G_{\Lambda\Psi}=m^\alpha{}_\Lambda m^\beta{}_\Psi g_{\alpha\beta}+
        \bar m^{\alpha '}{}_\Lambda \bar m^{\beta '}{}_\Psi{}
        g_{\alpha '\beta '}\spsd
        \end{array}
\end{equation}
        Поэтому из условий (\ref{re22w}) следуют
        условия (\ref{re20w}). Обратно, при выполнении (\ref{re20w})
        имеем
\begin{equation}
\label{re33w}
        \begin{array}{c}
        m_\alpha{}^\Lambda m_\beta{}^\Psi m_\gamma{}^\Theta{}
        \nabla_\Lambda G_{\Psi\Theta}=0\ \ \Leftrightarrow\ \ \
        \nabla_\alpha g_{\beta\gamma}=0\sps\\
        \bar m_{\alpha '}{}^\Lambda m^\Psi{}_\beta m_\gamma{}^\Theta
        \nabla_\Lambda G_{\Psi\Theta}=0\ \ \Leftrightarrow\ \ \
        \nabla_{\alpha '} g_{\beta\gamma}=0\spsd
        \end{array}
\end{equation}
        \indent В-четвертых.
        Поскольку риманова связность без кручения условия 1). единственна,
        то и единственна связность условия 2).\\
\end{proof}

\begin{theoremr}
\label{theorem112r}
      Пусть в качестве базы расслоения задано вещественное псевдориманово
      пространство $V^{12}_{(6,6)}$. Тогда две связности без кручения, заданные
      в расслоениях $\tau^\mathbb R(V^{12}_{(6,6)})$ и $A^\mathbb C$ эквивалентны:
      \begin{enumerate}
      \item Риманова связность, заданная в расслоении $\tau^\mathbb R(V^{12}_{(6,6)})$
            условием
\begin{equation}
\label{re34w}
            \nabla_\Lambda G_{\Psi\Theta}=0\spsd
\end{equation}
      \item Риманова связность, заданная в расслоении $A^\mathbb C$
            условиями
\begin{equation}
\label{re35w}
            \nabla_\Lambda \varepsilon_{abcd}=0\sps
            \nabla_\Lambda \bar\varepsilon_{a'b'c'd'}=0\spsd
\end{equation}
      При этом коэффициенты связности 2). однозначно определятся
      из условия
\begin{equation}
\label{re36w}
            \nabla_\Lambda M_\Psi{}^{ab}=0\sps
            \nabla_\Lambda \bar M_\Psi{}^{a'b'}=0\spsd
\end{equation}
      \end{enumerate}
\end{theoremr}

\begin{proof}
      Риманова связность без кручения, заданная условием (\ref{re34w})
      в касательном расслоении всегда существует и единственна.
      Распишем первое условие (\ref{re36w})
\begin{equation}
\label{re37w}
      \nabla_\Lambda M_\Psi{}^{aa_1}=
      \partial_\Lambda M_\Psi{}^{aa_1} -
      \Gamma_{\Lambda\Psi}{}^{\Theta}M_\Theta{}^{aa_1}+
      \Gamma_{\Lambda c}{}^aM_\Psi{}^{ca_1}+
      \Gamma_{\Lambda c}{}^{a_1}M_\Psi{}^{ac}=0\spsd
\end{equation}
      Домножим это уравнение на $M^\Psi{}_{ac_1}$
\begin{equation}
\label{re38w}
      \Gamma_{\Lambda c_1}{}^{a_1}=-\frac{1}{2}
      (M^\Psi{}_{ac_1}\partial_\Lambda M_\Psi{}^{aa_1}
      -\Gamma_{\Lambda\Psi}{}^\Theta M_\Theta{}^{aa_1} M^\Psi{}_{ac_1}+
      \Gamma_{\Lambda a}{}^a\delta_{c_1}{}^{a_1})\spsd
\end{equation}
      Кроме того, из условий (\ref{re35w}) и (\ref{re13aa}) следует
\begin{equation}
\label{re39w}
      \begin{array}{c}
      \frac{1}{24} M_\Psi{}^{ab} M^\Psi{}^{cd}
      \partial_\Lambda(M^\Theta{}_{ab} M_\Theta{}_{cd})=
      \frac{1}{24}\varepsilon^{abcd}\partial_\Lambda\varepsilon_{abcd}=\\[2ex]
      =\frac{1}{24}\varepsilon^{abcd}(\nabla_\Lambda\varepsilon_{abcd}+
      4\Gamma_{\Lambda\left[\right. a}{}^k\varepsilon_{|k|bcd\left.\right]})=
      \frac{1}{6}\varepsilon^{abcd}\Gamma_{\Lambda a}{}^k
      \varepsilon_{kbcd}=\Gamma_{\Lambda k}{}^k
      \end{array}
\end{equation}
      Исходя из этого, можно положить уравнение (\ref{re38w})
      в определение символов связности 2).

      Пусть в расслоении $A^\mathbb C$ существует еще один
      оператор симметричной ковариантной производной
      $\tilde\nabla_\Lambda$ такой, что
\begin{equation}
\label{re40w}
       \tilde\nabla_\Lambda\varepsilon_{abcd}=0\ \ \Rightarrow\ \ \
       (\tilde\nabla_\Lambda-\nabla_\Lambda)\varepsilon_{abcd}=0
       \ \ \Leftrightarrow\ \ \ Q_{\Lambda k}{}^k=0\sps
\end{equation}
       где тензор $Q_{\Lambda a}{}^b$ - тензор деформации,
       определенный в расслоении $A^\mathbb C$. Пусть тензор
       $Q_{\Lambda \Psi}{}^\Theta$ - тензор деформации в
       касательном расслоении $\tau^\mathbb R(V^{12}_{(6,6)})$. Рассмотрим
       действие таких операторов на бивекторах
       $R^{ab}=M_\Psi{}^{ab}r^\Psi$
\begin{equation}
\label{re41w}
      \begin{array}{c}
      (\tilde\nabla_\Lambda-\nabla_\Lambda)R^{ab}=
      (Q_{\Lambda k}{}^a\delta_t{}^b-
       Q_{\Lambda k}{}^b\delta_t{}^a)R^{kt}=\\
      =M_\Psi{}^{ab}(\tilde\nabla_\Lambda-\nabla_\Lambda)r^\Psi=
       M_\Psi{}^{ab}Q_{\Lambda \Theta}{}^\Psi r^\Theta\sps\\
      (\tilde\nabla_\Lambda-\nabla_\Lambda)R_{ab}=-M^\Theta{}_{ab}Q_{\Lambda \Theta}{}^\Psi r_\Psi\spsd
      \end{array}
\end{equation}
      Отсюда следует цепочка тождеств
\begin{equation}
\label{re42w}
      \begin{array}{c}
      M_\Psi{}^{ab}Q_{\Lambda \Theta}{}^\Psi r^\Theta=
      2Q_{\Lambda\left[\right. k}{}^a\delta_{t\left.\right]}{}^b R^{kt}\sps\\
      M_\Psi{}^{ab}Q_{\Lambda \Theta}{}^\Psi r^\Theta=
      2Q_{\Lambda\left[\right. k}{}^a\delta_{t\left.\right]}{}^b M_\Theta{}^{kt} r^\Theta\sps\\
      M_\Psi{}^{ab}Q_{\Lambda \Theta}{}^\Psi =M_\Theta{}^{kt}
      2Q_{\Lambda\left[\right. k}{}^a\delta_{t\left.\right]}{}^b=
      2M_\Theta{}^{k\left[\right.b}Q_{\Lambda k}{}^{a\left.\right]}\sps\\
      Q_{\Lambda \Theta\Psi} =M_\Theta{}^{kb}M_\Psi{}_{ab}
      Q_{\Lambda k}{}^a+\bar M_\Theta{}^{k'b'}\bar M_\Psi{}_{a'b'}
      \bar Q_{\Lambda k'}{}^{a'}=\\
      =-M_\Psi{}^{kb}M_\Theta{}_{ab}
      Q_{\Lambda k}{}^a-\bar M_\Psi{}^{k'b'}\bar M_\Theta{}_{a'b'}
      \bar Q_{\Lambda k'}{}^{a'}\spsd
      \end{array}
\end{equation}
\newpage
      \noindent Откуда получаем
\begin{equation}
\label{re43w}
      Q_{\Lambda\Theta\Psi}=-Q_{\Lambda\Psi\Theta}\spsd
\end{equation}
      В отсутствии кручения имеем
\begin{equation}
\label{re44w}
      Q_{\Lambda\Theta\Psi}=Q_{\Theta\Lambda\Psi}\sps
\end{equation}
      поэтому
\begin{equation}
\label{re45w}
      Q_{\Lambda\Theta\Psi}=0\sps
\end{equation}
      и это означает единственность связности 2).
\end{proof}

\vspace{2cm}
\begin{corollaryr}
\label{corollary113r}
      Пусть в качестве базы расслоения задано комплексное аналитическое риманово
      пространство $\mathbb CV^6$. Тогда две связности без кручения, заданные
      в расслоениях $\tau^\mathbb C(\mathbb CV^6)$ и $A^\mathbb C$ эквивалентны:
      \begin{enumerate}
      \item Риманова аналитическая связность, заданная в расслоении \linebreak[4] $\tau^\mathbb C(\mathbb CV^6)$
            условиями
\begin{equation}
\label{re46w}
            \nabla_\alpha g_{\beta\gamma}=0\sps
            \bar\nabla_{\alpha '} g_{\beta '\gamma '}=0\spsd
\end{equation}
      \item Риманова аналитическая связность, заданная в расслоении $A^\mathbb C$
            условиями
\begin{equation}
\label{re47w}
            \nabla_\alpha\varepsilon_{abcd}=0\sps
            \bar\nabla_{\alpha '}\bar \varepsilon_{a'b'c'd'}=0\spsd
\end{equation}
      При этом коэффициенты связности 2). однозначно определятся
      из условия
\begin{equation}
\label{re49w}
            \nabla_\alpha \eta_\beta{}^{ab}=0\sps
            \bar\nabla_{\alpha '} \bar\eta_{\beta '}{}^{a'b'}=0\spsd
\end{equation}
      \end{enumerate}
\end{corollaryr}
\vspace{1cm}

\begin{proof}
      Доказательство следует из теорем \ref{theorem111r},\ref{theorem112r}, аналитичности (\ref{re23aa}) и уравнения (\ref{re19aa}).
      В частности, аналитичность $\eta_\beta{}^{ab}$ означает  $\partial_{\alpha '}\eta_\beta{}^{ab}\equiv 0$, а из уравнения (\ref{re19aa})
      следует $\Gamma_{\alpha '\beta}{}^\gamma\equiv 0$
\end{proof}

\newpage
\subsubsection{\texorpdfstring{Инволюция в $\mathbb C\mathbb P_7$}{Инволюция в проективном пространстве}}$ $
       \indent Пусть теперь в $\mathbb C\mathbb P_7$ нам задана инволюция в смысле
       \cite{Neifeld1r}
\begin{equation}
\label{re25aa}
       \bar S_{A'}{}^B S_B{}^{D'}=\delta_{A'}{}^{D'}\sps
\end{equation}
       тогда условие действительности точки $X^A$ примет вид
\begin{equation}
\label{re26aa}
       S_A{}^{B'}\bar X^A=X^{B'}\spsd
\end{equation}
       Потребуем, чтобы эта инволюция определяла вложение
       действительной квадрики в комплексную, что равносильно
       тому, что определяющий ее тензор также будет самосопряжен
       относительно этой инволюции. Тогда плоские образующие
       максимальной размерности вещественной квадрики
       должны удовлетворять условиям
\begin{equation}
\label{re27aa}
       1).\ \ \bar S_{A'}{}^B\bar X_{a'}^{A'}s_a{}^{a'}=X_a^B\sps
       2).\ \ \bar S_{A'}{}^B\bar X_{a'}^{A'}s^{aa'}=X^{aB}\spsd
\end{equation}
       Здесь тензоры $s_a{}^{a'}$ и $s^{aa'}$ определяют
       в комплексном расслоении соответственно эрмитову
       инволюцию и эрмитов поляритет соответственно.
       Эти два случая возникают из-за того, что у нас
       в расслоении со слоями, изоморфными $\mathbb C^4$, нет тензора,
       с помощью которого можно поднимать и опускать одиночные индексы.
       Первый случай означает, что сама образующая и
       сопряженная ей принадлежат одному семейству; во-втором
       же случае указанные образующие представляют два
       различных семейства. На основании результатов
       второй главы этим исчерпываются все возможные случаи
       вещественного вложения.
       Из (\ref{re30b}), (\ref{re25aa}) - (\ref{re27aa})
       следует
\begin{equation}
\label{re28aa}
       1).\ \ s_a{}^{a'}\bar s_{a'}{}^b=\pm\delta_a{}^b\sps
       2).\ \ s_{aa'}\bar s^{a'b}=\pm\delta_a{}^b\spsd
\end{equation}
     Далее, будем рассматривать только случай 2). как
     наиболее интересный с точки зрения физики
     \cite[т. 2, с. 86]{Penrose1r}. Случай 1). рассматривается
     аналогично, и его мы опустим. Тогда
\begin{equation}
\label{re29aa}
       \bar X^{b'}=\bar s^{b'a}X_a\spsd
\end{equation}
       Поэтому мы можем написать эквивалентные (\ref{re8aa}), (\ref{re9aa})
       выражения
\begin{equation}
\label{re30aa}
     (X_a,\bar X^{b'})=0\sps (Y_p,\bar Y^{q'})=0\sps
     (X_a,\bar Y_{b'})=s_{ab'}\spsd
\end{equation}
\newpage
     \noindent Положим
\begin{equation}
\label{re31aa}
     \left\{
     \begin{array}{lcl}
     \nabla_\Lambda \bar X^{a'} & = & \bar Y_{b'} \tilde M_\Lambda{}^{a'b'}\sps\\
     \nabla_\Lambda \bar Y_{b'} & = & \bar X^{a'} \tilde N_{\Lambda a'b'}\sps
     \end{array}
     \right.
\end{equation}
     тогда из (\ref{re30aa}) будут следовать тождества
\begin{equation}
\label{re32aa}
     \tilde M_{\Lambda a'b'}=-\frac{1}{2}s_{ca'}s_{db'}
     \varepsilon^{cdab}M_{\Lambda ab}\sps
     \nabla_\Lambda s_{ab'}=0\spsd
\end{equation}
     Поэтому равенством
\begin{equation}
\label{re33aa}
     S_\Lambda{}^\Theta=\frac{1}{2}
     (M_{\Lambda ab} \bar M^\Theta{}_{c'd'}
     \bar s^{c'a}\bar s^{d'b}+
     \bar M_{\Lambda a' b'} M^\Theta{}_{cd}
     s^{ca'}s^{db'})
\end{equation}
     мы определим вещественную инволюцию вида
\begin{equation}
\label{re34aa}
     S_\Lambda{}^\Theta S_\Theta{}^\Psi=\delta_\Lambda{}^\Psi\sps
     \bar M_{\Lambda a'b'}=-S_\Lambda{}^\Psi\tilde M_{\Psi a'b'}\sps
     S_\Lambda{}^\Theta f_\Theta{}^\Lambda=
     -f_\Lambda{}^\Theta S_\Theta{}^\Lambda\spsd
\end{equation}
     Кроме того можно определить
     еще одну инволюцию (согласно \cite{Neifeld1r})
\begin{equation}
\label{re38aa}
       \left\{
       \begin{array}{ccl}
       S_\alpha{}^\beta & =  & 0 \sps\\
       S_\alpha{}^{\beta '} & = & m_\alpha{}^\Lambda \bar m^{\beta '}{}_\Psi
       S_\Lambda{}^\Psi\sps
       \end{array}
       \right.
       S_\alpha{}^{\beta '}\bar S_{\beta '}{}^\gamma=\delta_\alpha{}^\gamma\spsd
\end{equation}

\subsubsection{Риманова связность, согласованная с инволюцией}
\begin{corollaryr}
\label{corollary114r}
      Пусть в качестве базы расслоения задано комплексное аналитическое риманово
      пространство $\mathbb CV^6$. Тогда две вещественные связности без кручения, заданные
      в расслоениях $\tau^\mathbb C(\mathbb CV^6)$ и $A^\mathbb C(S)$ эквивалентны
      \begin{enumerate}
      \item Риманова вещественная связность, заданная в расслоении $\tau^\mathbb C(\mathbb CV^6)$
            условиями
\begin{equation}
\label{re50w}
            \nabla_\alpha g_{\beta\gamma}=0\sps
            \nabla_\alpha S_\gamma{}^{\beta'}=0
\end{equation}
      (такую риманову связность назовем согласованной с инволюцией).
      \item Риманова вещественная связность, заданная в расслоении $A^\mathbb C(S)$
            условиями
\begin{equation}
\label{re51w}
            \nabla_\alpha\varepsilon_{abcd}=0\sps
            \nabla_\alpha s_{aba'b'}=0\spsd
\end{equation}
      При этом коэффициенты связности 2). однозначно определятся
      из условия
\begin{equation}
\label{re52w}
            \nabla_\alpha \eta_\beta{}^{ab}=0\spsd
\end{equation}
      \end{enumerate}
\end{corollaryr}

\begin{proof}
      В условиях следствия \ref{corollary113r} рассмотрим связность 1).,
      заданную условиями (\ref{re49w}), тогда из условий вещественности следует
\begin{equation}
\label{re54w}
      S_\beta{}^{\gamma '}\bar\partial_{\gamma '}=\partial_\beta\sps
\end{equation}
      поэтому из ковариантного постоянства тензора инволюции
      получим
\begin{equation}
\label{re55w}
      \nabla_\gamma=S_\gamma{}^{\beta '}\nabla_{\beta '}\sps
\end{equation}
      что определит вещественную связность.
      Если положить
\begin{equation}
\label{re56w}
      s_{aba'b'}:=\eta^\alpha{}_{ab}\bar \eta_{\beta 'a'b'}S_\alpha{}^{\beta '}\sps
\end{equation}
      то из (\ref{re52w}) и (\ref{re50w}) вытекает
\begin{equation}
\label{re57w}
       \nabla_\alpha s_{aba'b'}=0\spsd
\end{equation}
\end{proof}
\vspace{2cm}

\begin{corollaryr}
\label{corollary115r}
      Пусть в качестве базы расслоения задано вещественное риманово
      пространство $V^6_{(2,4)}$. Тогда две вещественные связности без кручения, заданные
      в расслоениях $\tau^\mathbb R(V^6_{(2,4)})$ и $A^\mathbb C(S)$ эквивалентны
      \begin{enumerate}
      \item Риманова вещественная связность, заданная в расслоении $\tau^\mathbb R(V^6_{(2,4)})$
            условиями
\begin{equation}
\label{re58w}
            \nabla_i g_{jk}=0\spsd
\end{equation}
      \item Риманова вещественная связность, заданная в расслоении $A^\mathbb C(S)$
            условиями
\begin{equation}
\label{re59w}
            \nabla_i\varepsilon_{abcd}=0\sps
            \nabla_is_{ab'}=0\spsd
\end{equation}
      При этом коэффициенты связности 2). однозначно определятся
      из условия
\begin{equation}
\label{re60w}
            \nabla_i \eta_j{}^{ab}=0\spsd
\end{equation}
      \end{enumerate}
\end{corollaryr}
\newpage

\begin{proof}
      Это следствие вытекает из предыдущего следствия \ref{corollary114r}
      при условии ковариантного постоянства оператора вложения $H_i{}^\alpha$,
      что определит соответствующие коэффициенты связности.
      Нам останется только доказать ковариантное постоянство
      тензора эрмитового поляритета. Поскольку
\begin{equation}
\label{re61w}
    \nabla_\alpha s_{abc'd'}=
    \nabla_\alpha s_{\left[ \right. a
    \left| c' \right|} s_{b \left. \right] d'}=0\sps
\end{equation}
    развертывая его по правилу Лейбница и свертывая с $s^{ac'}$, получим
\begin{equation}
\label{re62w}
    \nabla_\alpha s_{bd'}=-1/2s_{bd'}s^{ac'}\nabla_\alpha s_{ac'}\spsd
\end{equation}
    После свертки с $s^{bd'}$ этого уравнения окончательно имеем
\begin{equation}
\label{re63w}
    s^{ac'}\nabla_\alpha s_{ac'}=0\sps \nabla_\alpha s_{ac'}=0\spsd
\end{equation}
\end{proof}
\vspace{1cm}

\subsubsection{Битвисторное уравнение}
       Из выполнения (\ref{re10aa}),(\ref{re13aa}),(\ref{re20aa}),
       полагая
\begin{equation}
\label{re51aa}
       \nabla_{ab}:=\eta^\alpha{}_{ab}\nabla_\alpha\sps
\end{equation}
       получим
\begin{equation}
\label{re52aa}
       \nabla_\alpha X_a=Y^b\eta_{\alpha ab}\ \ \Leftrightarrow\ \
       \nabla_{cd} X_a=Y^b\varepsilon_{cdab}
\end{equation}
        так, что будут выполнены уравнения
\begin{equation}
\label{re53aa}
       \nabla_{c\left(\right.d}X_{a\left.\right)}=0\sps
       \nabla^{c\left(\right.d}X^{a\left.\right)}=0\sps
\end{equation}
        последнее из которых мы назовем битвисторным уравнением.
        С помощью этого уравнения можно исследовать конформную
        структуру пространств вида $\mathbb C\mathbb R^6$. Следует
        отметить, что битвисторное уравнение не меняется
        при конформных преобразованиях метрики и инвариантно
        при преобразованиях нормализации в смысле \cite{Neifeld2r}, \cite{Neifeld4r}.
\begin{proof}
        Действительно, положим, что конформное преобразование
        метрики имеет вид
\begin{equation}
\label{re144}
        g_{\alpha\beta}\longmapsto\hat g_{\alpha\beta}=
        \Omega^2g_{\alpha\beta}\spsd
\end{equation}
        Тогда из формулы
\begin{equation}
\label{re145}
        \hat\nabla_\alpha\hat\varepsilon_{abcd}=
        \nabla_\alpha\varepsilon_{abcd}=0
\end{equation}
        следует
\begin{equation}
\label{re146}
        0=\hat\nabla_\alpha(\Omega^2\varepsilon_{abcd})=
        \varepsilon_{abcd}(2\Omega\nabla_\alpha\Omega-
        \Omega^2\Theta_{\alpha k}{}^k)=0\spsd
\end{equation}
        Положим
\begin{equation}
\label{re147}
        B_\alpha:=\frac{1}{2}\Theta_{\alpha k}{}^k\spsd
\end{equation}
        Поскольку $\nabla_\alpha$ и $\hat\nabla_\alpha$
        симметричны, то выполнено
\begin{equation}
\label{re148}
        Q_{\alpha\beta\gamma}=Q_{\beta\alpha\gamma}
\end{equation}
        в касательном расслоении $\tau^\mathbb C(\mathbb CV^6)$. Тогда в расслоении $A^\mathbb C$ выполнено
\begin{equation}
\label{re149}
        \Theta_{cc_1a}{}^b=B_{ca}\delta_{c_1}{}^b-B_{c_1a}\delta_c{}^b
        \sps
        B_{ab}=-B_{ba}\spsd
\end{equation}
        Откуда
\begin{equation}
\label{re150}
        B_\alpha=\frac{1}{2}\eta_\alpha{}^{ab}B_{ab}=
        \Omega^{-1}\nabla_\alpha\Omega\spsd
\end{equation}
        Положим
\begin{equation}
\label{re151}
        \hat X^c=X^c\spsd
\end{equation}
        Тогда с учетом
\begin{equation}
\label{re152}
        \hat\nabla_{ab}X^c=\nabla_{ab}X^c+
        2B_{\left[\right. a|k|}\delta_{\left.b\right]}{}^cX^k
\end{equation}
        получим
\begin{equation}
\label{re153}
        \hat\nabla^{a\left(b\right.}\hat X^{c\left.\right)}=
        \Omega^{-2}\nabla^{a\left(\right.b}X^{c\left.\right)}\spsd
\end{equation}
        Это значит,  что битвисторное уравнение
        конформно-инвариантно.
\end{proof}

\newpage
\section{\texorpdfstring{Теоремы о тензоре кривизны. Каноническая форма бивекторов
         6-мерных (псевдо-) евклидовых пространств $\mathbb R^6_{(p,q)}$
         с метрикой четного индекса~q}{Теоремы о тензоре кривизны. Каноническая форма бивекторов}}$ $

    \indent Так как введенная в касательном расслоении к $\mathbb CV_6$
    связность удовлетворяет условию
    $$
    \nabla_\alpha g_{\gamma\delta}=0\sps
    \bar\nabla_{\alpha '} \bar g_{\gamma '\delta '}=0\sps
    $$
    а связность в расслоении $A^\mathbb C$ определяется из уравнений
    $$
    \nabla_\alpha\eta_\beta{}^{ab}=0\sps
    \bar \nabla_{\alpha '}\bar\eta_{\beta '}{}^{a'b'}=0\sps
    $$
    то можно выбрать некоторый неголономный
    специальный базис такой, что метрика
    $g_{\gamma\delta}$ будет иметь в нем диагональный вид с $\ll$+1$\gg$
    на главной диагонали, а обобщенные операторы Нордена будут
    иметь постоянные существенные координаты наподобие
    формул (\ref{re56}) - (\ref{re57}). Из этого
    следует, что операторы $A_{\alpha\beta a}{}^b$ в
    этом базисе тоже имеют в качестве координат константы.
    Тогда тензор кривизны с помощью операторов $A_{\alpha\beta a}{}^b$
    можно представить в следующем виде
    $$
    R_{\alpha\beta\gamma\delta}=
    A_{\alpha\beta a}{}^bA_{\gamma\delta c}{}^d R_b{}^a{}_d{}^c\spsd
    $$
    При этом, зная структуру тензора $R_b{}^a{}_d{}^c$, можно
    восстановить структуру тензора кривизны. Но
    исследование структуры тензора  $R_b{}^a{}_d{}^c$
    облегчается тем, что он почти не содержит несущественных
    компонент. В 4-мерном случае подобные $R_b{}^a{}_d{}^c$
    тензоры, названные спинорами кривизны \cite{Penrose1r},
    сильно упрощают классификацию тензора кривизны 4-мерного
    пространства, впервые осуществленную Петровым прямыми
    тензорными методами. Поэтому
    следует ожидать, что легче классифицировать будет
    тензор $R_b{}^a{}_d{}^c$ нежели заниматься
    классификацией тензора $R_{\alpha\beta\gamma\delta}$ 6-мерного пространства.
    Первая часть этой главы и посвящена связи таких тензоров.\\
    \indent В третьем пункте рассматривается вопрос о каноническом
    виде кососимметричной билинейной форме для
    метрики четного индекса q в пространстве $\mathbb R^6_{(p,q)}$.
    Указанная форма в некотором базисе имеет вид
    $$
    \frac{1}{2}R_{\alpha\beta}X^\alpha Y^\beta=
    R_{16}X^{\left[\right.1} Y^{6\left.\right]}+
    R_{23}X^{\left[\right.2} Y^{3\left.\right]}+
    R_{45}X^{\left[\right.4} Y^{5\left.\right]}\spsd
    $$
    Кроме того, устанавливается такой факт, как соответствие
    вектора из слоя расслоения $A^\mathbb C$ с базой $\mathbb C\mathbb V^6$ и
    изотропного простого бивектора, принадлежащего
    изотропному конусу $K_6$ слоя касательного расслоения над одной и той же точкой x. Это соответствие
    с точностью до множителя $re^{i\Theta} \in \mathbb C$ определит указанный вектор слоя.
    На основании этого соответствия мы можем говорить
    о геометрической интерпретации изотропного
    (в смысле $s_{aa'}X^aX^{a'}=0$) твистора  из $\mathbb C^4$ в
    пространстве $R^6_{(2,4)}$. Для ее осуществления нам необходимо
    научиться сравнивать изотропные векторы, принадлежащие
    конусу $K_6$. Поэтому с помощью стереографической
    проекции мы инвариантным (координатно-независимым) образом определяем
    некоторый касательный к $K_6$ вектор, приложенный к точке P.
    Его норма, взятая со знаком ''-'', сопоставляется изотропному
    вектору K с началом в вершине конуса, а концом в точке P и
    называется протяженностью вектора K. Тогда можно
    выбрать вектор k единичной протяженности и все изотропные вектора
    сравнивать с этим вектором. При этом неоднозначность соответствия
    устраняется так: r - есть протяженность любого изотропного вектора,
    определенного указанным изотропным простым бивектором, принадлежащим
    конусу $K_6$ (флагшток), а $\Theta$ есть угол поворота 3-полуплоскости П
    (полотнище флага), натянутой на бивектор и некоторый вектор,
    ортогональный плоскости П${}_1$, определяемой бивектором, вокруг
    этой плоскости П${}_1$. Полученная интерпретация аналогична
    соответствию спиноров и изотропных векторов пространства
    Минковского, рассмотренного в монографии \cite{Penrose1r}.
\vspace{1cm}

\subsection{Теорема о  битензорах 6-мерных пространств}$ $
    \indent Прежде чем перейти к свойствам тензора кривизны
    пространства $\mathbb CV^6$, рассмотрим следующую теорему.
\vspace{1cm}
\begin{theoremr}
\label{theorem1r}
    Классификацию битензора, обладающего свойствами
\begin{equation}
\label{re11}
    R_{\alpha \beta \gamma \delta}=R_{\left[ \alpha \beta \right]
    \left[\gamma \delta] \right.} \sps
    R_{\alpha \beta \gamma \delta}=R_{\gamma \delta \alpha \beta}\sps
    R_{\alpha \beta \gamma \delta}+R_{\alpha \delta \beta \gamma}+
    R_{\alpha \gamma \delta \beta}=0
\end{equation}
    и принадлежащего касательному расслоению $\tau^\mathbb C(\mathbb CV^6)$ над
    шестимерным аналитическим римановым пространством
    $\mathbb CV^6$, можно свести к классификации тензора $R_a{}^b{}_c{}^d$
    4-мерного комплексного векторного пространства $\mathbb C^4$ такого, что
\begin{equation}
\label{re12}
    R_{\alpha \beta \gamma \delta}= A_{\alpha \beta d}{}^c
    A_{\gamma \delta r}{}^sR_c{}^d{}_s{}^r\spsd
\end{equation}
    Кроме того, выполнены следующие соотношения
\begin{equation}
\label{re13}
    R_k{}^k{}_s{}^r=R_s{}^r{}_k{}^k=0 \sps  R_c{}^d{}_s{}^r=
    R_s{}^r{}_c{}^d\spsd
\end{equation}
    Разложение
\begin{equation}
\label{re14}
    R_c{}^d{}_s{}^r=C_c{}^d{}_s{}^r-P_{cs}{}^{dr}-\frac{1}{40}\cdot
    R(3\delta_s{}^d \delta_c{}^r-2\delta_s{}^r \delta_c{}^d)
\end{equation}
    соответствует разложению тензора $R_{\alpha\beta}{}^{\gamma\delta}$
\begin{equation}
\label{re14a}
    R_{\alpha\beta}{}^{\gamma\delta}=C_{\alpha\beta}{}^{\gamma\delta}+
    R_{\left[\alpha \right.}{}^{\left[ \gamma \right.}
    g_{\left.\beta \right]}{}^{\left. \delta \right]}-
    1/10Rg_{\left[\alpha \right.}{}^{\left[ \gamma \right.}
    g_{\left.\beta \right]}{}^{\left. \delta \right]}
\end{equation}
    на неприводимые ортогональными  преобразованиями компоненты,
    которые будут удовлетворять следующим соотношениям
\begin{equation}
\label{re15}
    P_{cs}{}^{rd}=-4(R_{\left[c \right.}{}^{\left[r \right.}
    {}_{\left. s \right]}{}^{\left. d\right]}+
    R_k{}^{\left[r \right.}{}_{\left[c \right.}{}^{\left|k\right|}
    \delta_{\left. s\right]}{}^{\left.d\right]})\sps
\end{equation}
\begin{equation}
\label{re16}
    C_c{}^d{}_s{}^r=R_{\left(c \right.}{}^{\left(d \right.}{}_
    {\left. s\right)}{}^{\left. r\right)}+
    \frac{1}{40}\cdot R\delta_{\left(s \right.}{}^d\delta_{\left. c\right)}{}^r\sps
    C_c{}^d{}_s{}^r=C_{\left(c \right.}{}^{\left(d \right.}{}_
    {\left. s\right)}{}^{\left.r\right)}\sps
\end{equation}
\begin{equation}
\label{re17}
    R=R_\beta{}^\beta=-2\cdot R_k{}^r{}_r{}^k\sps
    P_{kc}{}^{kd}=1/2\cdot R\delta_c{}^d\sps
\end{equation}
\begin{equation}
\label{re18}
    R_l{}^d{}_s{}^l=-\frac{1}{8}\cdot R\delta_s{}^d\sps
\end{equation}
    последнее из которых является эквивалентом тождества
    Бианки (\ref{re11}).
\end{theoremr}

\begin{proof}
    На основании (\ref{re1}) верно следующее равенство
\begin{equation}
\label{re22}
    R_{\alpha\beta\gamma\delta}=1/16\cdot\eta_{\alpha}{}^{aa_1}
    \eta_{\beta}{}^{bb_1}\eta_{\gamma}{}^{cc_1}
    \eta_{\delta}{}^{dd_1}R_{aa_1bb_1cc_1dd_1}\spsd
\end{equation}
    Положим
\begin{equation}
\label{re23}
    R_c{}^d{}_s{}^r:=\frac{1}{4}R_{ck}{}^{dk}{}_{st}{}^{rt}
    \sps  R_{\beta\gamma}=\frac{1}{4}\cdot\eta_{\beta}{}^{cs}\eta_{\gamma}{}_{rd}
    \cdot P_{cs}{}^{rd}\spsd
\end{equation}
    Из этого с учетом (\ref{re4}) вытекает формула (\ref{re12})
\begin{equation}
\label{re12a}
    R_{\alpha \beta \gamma \delta}= A_{\alpha \beta d}{}^c
    A_{\gamma \delta r}{}^sR_c{}^d{}_s{}^r\spsd
\end{equation}
    Отсюда следует, что
\begin{equation}
\label{re7p}
       \begin{array}{c}
       R_{\beta\delta}=R_{\alpha\beta}{}^{\alpha}{}_\delta=
       A_{\alpha\beta d}{}^cA^\alpha{}_{\delta r}{}^s
       R_c{}^d{}_s{}^r=
       (\eta_\beta{}^{cs}\eta_{\delta rd}+
       \eta_\beta{}^{ck}\eta_{\delta kr}\delta_d{}^s)
       R_c{}^d{}_s{}^r=\\ \\ \\
       =\frac{1}{4}
       \eta_\beta{}^{cs}\eta_{\delta rd}\cdot 4
       (R_{\left[\right.c}{}^{\left[\right.d}
       {}_{s\left.\right]}{}^{r\left.\right]}-
       R_{\left[\right.c}{}^k
       {}_{|k|}{}^{\left[\right.r}
       \delta_{s\left.\right]}{}^{d\left.\right]})\spsd
       \end{array}
\end{equation}
       Положим
\begin{equation}
\label{re8p}
       \begin{array}{c}
       P_{cs}{}^{rd}:=
       -4(R_{\left[\right.c}{}^{\left[\right.r}
       {}_{s\left.\right]}{}^{d\left.\right]}-
       R_{\left[\right.c}{}^k
       {}_{|k|}{}^{\left[\right.r}
       \delta_{s\left.\right]}{}^{d\left.\right]})\sps
       \end{array}
\end{equation}
       тогда
\begin{equation}
\label{re9p}
       R_{\beta\delta}=\frac{1}{4}
       \eta_\beta{}^{cs}\eta_{\delta rd}P_{cs}{}^{rd}\sps
\end{equation}
       чем и доказана формула  (\ref{re15}).
       Поскольку скалярная кривизна имеет вид
\begin{equation}
\label{re10p}
       \begin{array}{c}
       R=R_\beta{}^\beta=\frac{1}{4}
       \eta_\beta{}^{cc_1}\eta^\beta{}_{aa_1}P_{cc_1}{}^{aa_1}=
       \frac{1}{4}\varepsilon_{aa_1}{}^{cc_1}P_{cc_1}{}^{aa_1}=
       \frac{1}{2}P_{aa_1}{}^{aa_1}=-2R_k{}^r{}_r{}^k
       \end{array}
\end{equation}
       и, кроме того, выполнено
\begin{equation}
\label{re11p}
       \begin{array}{c}
       P_{ks}{}^{kd}=
       -4(R_{\left[\right.k}{}^{\left[\right.k}
       {}_{s\left.\right]}{}^{r\left.\right]}+
       R_{\left[\right.r}{}^k
       {}_{|k|}{}^{\left[\right.r}
       \delta_{s\left.\right]}{}^{d\left.\right]})=\\ \\ \\
       =-4(-\frac{1}{2}R_k{}^d{}_s{}^k+
       \frac{1}{4}(R_k{}^r{}_r{}^k\delta_s{}^d+
       4R_k{}^d{}_s{}^k-2R_k{}^d{}_s{}^k))=
       -R_k{}^r{}_r{}^k\delta_s{}^d=\frac{1}{2}R\delta_s{}^d\sps
       \end{array}
\end{equation}
       то формулы (\ref{re17}) действительно будут верны.\\

       \indent Тождества Бианки (\ref{re11}) можно переписать
       следующим образом
\begin{equation}
\label{re26}
    (A_{\alpha\beta d}{}^cA_{\gamma\delta}{}_r{}^s+
    A_{\alpha\gamma d}{}^cA_{\delta\beta}{}_r{}^s+
    A_{\alpha\delta d}{}^cA_{\beta\gamma}{}_r{}^s)\cdot
    R_c{}^d{}_s{}^r=0\spsd
\end{equation}
    Свернув это уравнение с $A^{\alpha\beta}{}_t{}^lA^{\gamma\delta}{}_m{}^n$,
    получим, принимая во внимание (\ref{re21}),
    $$
    4R_k{}^l{}_m{}^k\delta_t{}^n+
    4R_r{}^n{}_t{}^r\delta_m{}^l-
    2R_k{}^l{}_t{}^k\delta_m{}^n-
    2R_k{}^n{}_m{}^k\delta_t{}^l-
    $$
\begin{equation}
\label{re27}
    -2R_k{}^r{}_r{}^k\delta_t{}^n\delta_m{}^l+
    R_r{}^k{}_k{}^r\delta_m{}^n\delta_t{}^l=0\spsd
\end{equation}
    Свертка этого уравнения c $\delta_n{}^t$ и
    приведет нас к формуле (\ref{re18}).
    При этом все 15 существенных  уравнений  сохранены.
    (Все выкладки выполнены в приложении -  формулы
    (\ref{re16p}) - (\ref{re18p})).\\

    \indent Положим
    $$
    C_{\alpha\beta}{}^{\gamma\delta}:=
    A_{\alpha\beta d}{}^cA^{\gamma\delta}{}_r{}^sC_c{}^d{}_s{}^r\sps
    $$
\begin{equation}
\label{re24}
    C_{\alpha\beta}{}^{\gamma\delta}:=
    R_{\alpha\beta}{}^{\gamma\delta}-
    R_{\left[\alpha \right.}{}^{\left[ \gamma \right.}
    g_{\left.\beta \right]}{}^{\left. \delta \right]}+
    1/10Rg_{\left[\alpha \right.}{}^{\left[ \gamma \right.}
    g_{\left.\beta \right]}{}^{\left. \delta \right]}\spsd
\end{equation}
    Из (\ref{re4}), (\ref{re15}), (\ref{re17}) следует
    $$
    R_{\left[\alpha \right.}{}^{\left[ \gamma \right.}
    g_{\left.\beta \right]}{}^{\left. \delta \right]}=
    A_{\alpha\beta d}{}^cA^{\gamma\delta}{}_r{}^s
    \cdot\frac{1}{4}(P_{sc}{}^{dr}-1/2R\delta_s{}^d\delta_c{}^r+
    \frac{1}{4}R\delta_s{}^r\delta_c{}^d)\sps
    $$
\begin{equation}
\label{re25}
    g_{\left[\alpha \right.}{}^{\left[ \gamma \right.}
    g_{\left.\beta \right]}{}^{\left. \delta \right]}=
    A_{\alpha\beta d}{}^cA^{\gamma\delta}{}_r{}^s
    \cdot\frac{1}{4}(1/2\delta_s{}^r\delta_c{}^d-2\delta_s{}^d\delta_c{}^r)\sps
\end{equation}
    откуда получим разложения (\ref{re14}), (\ref{re16}).
    (Выкладки находятся в приложении - формулы
    (\ref{re12p}) - (\ref{re20p})).\\
\end{proof}
\vspace{1cm}

\subsubsection{Следствия теоремы}
\begin{corollaryr}
\label{corollary1r}
\begin{enumerate}
    \item
    Условия простоты бивектора 6-мерного пространства $\mathbb C\mathbb R_6$ записываются
    в следующем виде
\begin{equation}
\label{re28a}
    p^{\left[\alpha\beta\right.}p^{\gamma\delta\left.\right]}=0\spsd
\end{equation}
    Координатам такого бивектора можно сопоставить бесследовую комплексную
    матрицу $4\times 4$ с нулевым следом такую, что выполнено следующее
    условие
\begin{equation}
\label{re28}
    p_l{}^dp_s{}^l-1/4(p_l{}^kp_k{}^l)\delta_s{}^d=0\spsd
\end{equation}
    \item
    Простому (выполнен пункт 1). этого следствия)
    изотропному ($p^{\alpha\beta}p_{\alpha\beta}=0$) бивектору
    пространства $\mathbb C\mathbb R^6$ можно сопоставить вырожденную
    нуль-пару Розенфельда: ковектор и
    вектор пространства $\mathbb C^4$, свертка которых есть нуль.
    При этом указанные вектор и ковектор
    определятся с точностью до комплексного множителя.\\
\end{enumerate}
\end{corollaryr}
\vspace{1cm}

\begin{proof}
    \indent 1). Бивектор прост тогда и только тогда, когда имеет
    место разложение
\begin{equation}
\label{re1www}
    p^{\alpha\beta}=X^\alpha Y^\beta-Y^\alpha X^\beta\spsd
\end{equation}
    Поэтому, если выполнено (\ref{re1www}), то будет
    верна формула (\ref{re28a}).\\
    \indent Обратно, если выполнены условия (\ref{re28a}), то
    их можно расписать следующим образом
\begin{equation}
\label{re2www}
    p^{\alpha\beta}p^{\gamma\delta}-
    p^{\alpha\gamma}p^{\beta\delta}+
    p^{\beta\gamma}p^{\alpha\delta}=0\spsd
\end{equation}
    Свернем это уравнение с такими ненулевыми ковекторами
    $T_\delta$  и $Z_\gamma$, что
    $p^{\gamma\delta}Z_\gamma T_\delta \ne 0$
\begin{equation}
\label{re3www}
    p^{\alpha\beta}=\frac{1}{p^{\lambda\mu}Z_\lambda T_\mu}
    (p^{\alpha\gamma}Z_\gamma p^{\beta\delta}T_\delta-
     p^{\beta\gamma}Z_\gamma p^{\alpha\delta}T_\delta)\spsd
\end{equation}
    Положим
\begin{equation}
\label{re4www}
    X^\alpha:=\frac{1}{p^{\lambda\mu}Z_\lambda T_\mu}p^{\alpha\gamma}Z_\gamma\sps
    Y^\beta:=\frac{1}{p^{\lambda\mu}Z_\lambda T_\mu}p^{\beta\delta}T_\delta\sps
\end{equation}
    откуда и будут следовать условия (\ref{re1www}).
    Поскольку тензор $R_{\alpha\beta\gamma\delta}=p_{\alpha\beta}
    p_{\gamma\delta}$ удовлетворяет условиям теоремы \ref{theorem1r},
    то формула (\ref{re28}) есть прямое следствие тождеств
    Бианки (\ref{re18}).\\

    \indent 2). В условиях первого пункта добавится условие
    изотропности
\begin{equation}
\label{re5www}
    p^{\alpha\beta}p_{\alpha\beta}=0\sps
\end{equation}
    которое ввиду формул (\ref{re21}) примет вид
\begin{equation}
\label{re6www}
    \begin{array}{c}
    A^{\alpha\beta}{}_a{}^b A_{\alpha\beta c}{}^d
    p_b{}^a p_d{}^c=0\sps\\ \\
    p_b{}^a p_a{}^b=0\spsd
    \end{array}
\end{equation}
    Отсюда следует, что  существуют такие
    ненулевые $X^a$ и $Y_b$, что
\begin{equation}
\label{re7www}
    p_a{}^b=X^aY_b\sps X^aY_a=0\spsd
\end{equation}
    Эту формулу можно рассматривать и как следствие леммы \ref{lemma1r}
    второй главы (для этого достаточно рассмотреть
    бивектор $p^{\alpha\beta}=r_1{}^{\left[\right.\alpha}
    r_2{}^{\beta\left.\right]}$,
\newpage
    \noindent где  $r_1{}^\alpha$ и $r_2{}^\alpha$ те же, что и в условии леммы). При этом $X^a$ и $Y_b$ определены
    с точностью до преобразования
\begin{equation}
\label{re8www}
    X^a\longmapsto e^\phi X^a\sps
    Y_b\longmapsto e^{-\phi} Y_b\spsd
\end{equation}
\end{proof}
\vspace{1cm}

    Отметим, что пара $(X^a,Y_b)$ будет является нуль-парой
    Розенфельда. В пространстве $\mathbb C\mathbb P^4='\mathbb C^4/'\mathbb C$
    (где $'\mathbb C^s=\mathbb C^s/{0}$) $X^a$ определит точку, а $Y_b$ -
    плоскость с условием инцидентности
\begin{equation}
\label{re9www}
     X^aY_a=0\spsd
\end{equation}
    Поэтому можно определить пространство $\mathbb C$П${}^4='\mathbb C^*{}^4/'\mathbb C$,
    двойственное пространству  $\mathbb C\mathbb P^4$. Тогда пространство
    $\mathbb C\mathbb P^4\times \mathbb C$П${}^4$ будет пространством нуль-пар Розенфельда.
    Следует отметить, что такие пространства изучались впервые
    Синцовым \cite{Sintsov1r} и Котельниковым \cite{Kotelnikov1r}.\\
\vspace{1cm}

\begin{corollaryr}
\label{corollary2r}
    В случае действительности битензора из теоремы \ref{theorem1r}
    на соответствующий тензор накладывается условие
\begin{equation}
\label{re29}
    R_{ab'cd'}=\bar R_{b'ad'c}
\end{equation}
    для метрики четного индекса и
\begin{equation}
\label{re30}
    R_a{}^{b'}{}_c{}^{d'}=
    \bar R_a{}^{b'}{}_c{}^{d'}
\end{equation}
    для метрики нечетного индекса.
\end{corollaryr}

\begin{proof}
    Оно основано на свойствах тензора вложения $s_{...}{}^{...}$.\\
\end{proof}
\newpage

\subsection{Основные свойства и тождества тензора кривизны}$ $
    \indent В качестве примера рассмотрим основные свойства
    тензора кривизны риманова пространства $\mathbb CV^6$.
    Поскольку в некотором неголономном базисе
    операторы $A_{\alpha\beta a}{}^b$ являются константами,
    то все свойства тензора кривизны мы можем получить, рассматривая
    тензор $R_a{}^b{}_c{}^d$. Тензор кривизны пространства $\mathbb CV_6$ в  некоторой
    окрестности U удовлетворяет  соотношениям теоремы \ref{theorem1r}.
    Положим
\begin{equation}
\label{re21p}
       \begin{array}{c}
       \Box_a{}^d:=\frac{1}{2}
       (\nabla_{ak}\nabla^{dk}-
       \nabla^{dk}\nabla_{ak})\sps\\ \\
       \Box_{\alpha\beta}:=
       2\nabla_{\left[\right.\alpha}\nabla_{\beta\left.\right]}\spsd
       \end{array}
\end{equation}
       Ввиду ковариантного постоянства обобщенных операторов
       Нордена будем иметь
\begin{equation}
\label{re22p}
       \begin{array}{c}
       \nabla_{\left[\right.\alpha}\nabla_{\beta\left.\right]}=
       \frac{1}{4}
       \eta_{\left[\right.\alpha}{}^{aa_1}\eta_{\beta\left.\right]}{}^{bb_1}
       \nabla_{aa_1}\nabla_{bb_1}=\\ \\
       =\frac{1}{4}
       \eta_\alpha{}^{aa_1}\eta_\beta{}^{bb_1}\cdot\frac{3}{2}
       (\nabla_{a\left[\right.a_1}\nabla_{bb_1\left.\right]}-
       \nabla_{\left[\right.bb_1}\nabla_{a\left.\right]a_1})=\\ \\
       =\frac{1}{4}
       \eta_\alpha{}^{aa_1}\eta_\beta{}^{bb_1}\cdot\frac{3}{2}
       (\delta_{\left[\right.a_1}{}^k\delta_b{}^n
       \delta_{b_1\left.\right]}{}^{n_1}\nabla_{ak}\nabla_{nn_1}-
       \delta_{\left[\right.b}{}^n\delta_{b_1}{}^{n_1}
       \delta_{a\left.\right]}{}^k\nabla_{nn_1}\nabla_{ka_1})=\\ \\
       =\frac{1}{4}
       \eta_\alpha{}^{aa_1}\eta_\beta{}^{bb_1}\cdot\frac{1}{4}
       (\varepsilon_{a_1bb_1d}\varepsilon^{knn_1d}\nabla_{ak}\nabla_{nn_1}-
       \varepsilon_{bb_1ad}\varepsilon^{nn_1kd}\nabla_{nn_1}\nabla_{ka_1})=\\ \\
       =\frac{1}{4}
       \eta_\alpha{}^{aa_1}\eta_\beta{}^{bb_1}\cdot\frac{1}{4}
       (\varepsilon_{a_1bb_1d}\varepsilon^{kd}{}_{nn_1}\nabla_{ak}\nabla^{nn_1}+
       2\varepsilon_{bb_1a_1d}\nabla^{kd}\nabla_{ka})=\\ \\
       =\frac{1}{4}
       \eta_\alpha{}^{aa_1}\eta_\beta{}^{bb_1}\cdot\frac{1}{4}
       \varepsilon_{a_1bb_1d}(2\delta_{\left[\right.n}{}^k
       \delta_{n_1\left.\right]}{}^d\nabla_{ak}\nabla^{nn_1}+
       2\nabla^{kd}\nabla_{ka})=\\ \\
       =\frac{1}{4}
       \eta_\alpha{}^{aa_1}\eta_{\beta a_1d}
       (\nabla_{an}\nabla^{nd}+
       \nabla^{kd}\nabla_{ka})=
       A_{\alpha\beta d}{}^a\cdot\frac{1}{4}
       (\nabla_{ak}\nabla^{dk}-
       \nabla^{dk}\nabla_{ak})\spsd
       \end{array}
\end{equation}
       Поэтому
\begin{equation}
\label{re23p}
       \begin{array}{c}
       \Box_{\alpha\beta}=A_{\alpha\beta d}{}^a\Box_a{}^d\spsd
       \end{array}
\end{equation}
     Сформулируем несколько основных утверждений,
     касающихся оператора $\Box_a{}^d$:
\begin{enumerate}
    \item
     Из тождества Риччи
\begin{equation}
\label{re46s}
     \Box_{\alpha\beta}k^{\gamma\delta}=
     R_{\alpha\beta\lambda}{}^\gamma k^{\lambda\delta}+
     R_{\alpha\beta\lambda}{}^\delta k^{\gamma\lambda}\sps
     \Box_{\alpha\beta}r^\gamma=
     R_{\alpha\beta\lambda}{}^\gamma r^{\lambda}
\end{equation}
\newpage
     \noindent будут следовать тождества ($k^{\alpha\beta}=-k^{\beta\alpha}$)
\begin{equation}
\label{re47s}
     \Box_a{}^bk_d{}^c=R_a{}^b{}_m{}^ck_d{}^m-R_a{}^b{}_d{}^nk_n{}^c\sps
     \Box_a{}^br^{cc_1}=R_a{}^b{}_m{}^cr^{mc_1}+R_a{}^b{}_m{}^{c_1}r^{cm}\spsd
\end{equation}
     А уже из них окончательно получим
\begin{equation}
\label{re47}
    \Box_a{}^bX^c=R_a{}^b{}_m{}^cX^m\sps
    \Box_a{}^bX_c=-R_a{}^b{}_c{}^mX_m
\end{equation}
    (доказательство в приложении: (\ref{re25p}) - (\ref{re42p})).
    \item
    Дифференциальные тождества Бианки
\begin{equation}
\label{re48s}
    \nabla_{\left[\right.\alpha}R_{\beta\gamma\left.\right]\delta\lambda}=0
\end{equation}
    примут вид
\begin{equation}
\label{re48}
    \nabla_{\left[cm\right.}R_{t\left.\right]}{}^k{}_r{}^s=
    \delta_{\left[m\right.}{}^k\nabla_{c\left|n\right|}R_{t\left.
    \right]}{}^n{}_r{}^s
\end{equation}
    (Доказательство в приложении: (\ref{re43p}) - (\ref{re50p})).
    \item
    Свернем (\ref{re48}) с $\delta_k{}^c$ и получим
\begin{equation}
\label{re49}
    \nabla_{c\left(\right.m}R_{t\left.\right)}{}^c{}_r{}^s=0\sps
\end{equation}
    а свертка последнего с $\delta_s{}^m$ даст
\begin{equation}
\label{re50}
    \nabla_{cm}R_t{}^c{}_r{}^m=1/8\nabla_{rt}R\sps
\end{equation}
     что эквивалентно известному уравнению
\begin{equation}
\label{re50s}
    \nabla^\alpha(R_{\alpha\beta}-1/2Rg_{\alpha\beta})=0\spsd
\end{equation}
\end{enumerate}

\subsection{\texorpdfstring{Каноническая форма бивекторов
         6-мерных $\mbox{(псевдо-)}$ евклидовых пространств $\mathbb R^6_{(p,q)}$
         с метрикой четного индекса q}{Каноническая форма бивекторов}}
\begin{theoremr}
\label{theorem7r}
    (О канонической форме бивектора).\\
    Для пространства $\mathbb R^6_{(p,q)}$ с метрикой четного индекса q=0,6
    невырожденная кососимметрическая билинейная форма может
    быть приведена в некотором базисе к каноническому виду
\begin{equation}
\label{re1g}
    \frac{1}{2}R_{\alpha\beta}X^\alpha Y^\beta=
    R_{16}X^{\left[\right.1} Y^{6\left.\right]}+
    R_{23}X^{\left[\right.2} Y^{3\left.\right]}+
    R_{45}X^{\left[\right.4} Y^{5\left.\right]}\spsd
\end{equation}
\end{theoremr}

\begin{proof}
    Поскольку в случае пространства $\mathbb R^6_{(p,q)}$
    с метрикой четного индекса q верно
\begin{equation}
\label{re2g}
    R_{\alpha\beta}=A_{\alpha\beta a}{}^b R_b{}^a\sps
    R_b{}^a=-\bar R^a{}_b\sps R_a{}^a=0\sps
\end{equation}
    что означает эрмитову симметрию  $iR_b{}^a$ в
    случае q=0,6. Ввиду этого матрица тензора $R_b{}^a$
    приводится к диагональному виду с помощью преобразований
    из некоторой группы, изоморфной $SU(4)$. Этим преобразованиям
    соответствуют преобразования из ортогональной группы $SO^e(6,\mathbb R)$.
    Отсюда следует, что матрица тензора $R_b{}^a$ в специальном
    базисе имеет вид
\begin{equation}
\label{re3g}
    \begin{array}{c}
    R_b{}^a=
    \left(
    \begin{array}{cccc}
    \lambda_1 & 0 & 0 & 0\\
    0 & \lambda_2  & 0 & 0\\
    0 & 0 &\lambda_3  & 0\\
    0 & 0 & 0 & \lambda_4 \\
    \end{array}
    \right)\sps\\ \\
    \begin{array}{ccc}
    q=0,6, & \lambda_1 ,\lambda_2, \lambda_3, \lambda_4 \in iR, &
    \lambda_1+\lambda_2+\lambda_3+\lambda_4=0\spsd\\
    \end{array}
    \end{array}
\end{equation}
    При этом выполнены 2 равенства
\begin{equation}
\label{re4g}
    \tilde R_b{}^a=S_b{}^cR_c{}^d\bar S^a{}_d\sps
    S_a{}^b\bar S^c{}_b=\delta_a{}^c\spsd
\end{equation}
\begin{proof}
    Будем рассматривать преобразования $K_\alpha{}^\beta$ из
    связной компоненты $SO^e(6,\mathbb R)$ и его соответствующее
    спинорное представление из группы $SU(4)$ вида $S_a{}^b$
\begin{equation}
\label{re5g}
    \begin{array}{c}
    K_\alpha{}^\beta K_\gamma{}^\delta R_{\beta\delta}=
    -A^{\beta\delta}{}_a{}^b K_{\left[\right.\alpha}{}^{ak}
    K_{\beta\left.\right]bk}A_{\beta\delta}{}_c{}^d R_d{}^c=\\ \\
    =-\frac{1}{2}(\frac{1}{2}\delta_a{}^b\delta_c{}^d-
    2\delta_a{}^d\delta_c{}^b)K_{\left[\right.\alpha}{}^{ak}
    K_{\beta\left.\right]bk}R_d{}^c=\\ \\
    =K_{\left[\right.\alpha}{}^{ak}K_{\beta\left.\right]bk}R_a{}^b=
    \frac{1}{2}A_{\alpha\beta c}{}^dK_{dr}{}^{ak}K^{cr}{}_{bk}R_a{}^b=\\ \\
    =\frac{1}{8}K_{dr}{}^{ak}K_{mn}{}^{sl}\varepsilon_{slbk}\varepsilon^{mncr}
    R_a{}^b A_{\alpha\beta c}{}^d=\tilde R_{\alpha\beta}=
    A_{\alpha\beta c}{}^d \tilde R_d{}^c\spsd
    \end{array}
\end{equation}
    Умножим обе части (\ref{re5g}) на $A^{\alpha\beta}{}_ p{}^t$
    и получим
\begin{equation}
\label{re6g}
    \begin{array}{c}
    \frac{1}{8}K_{pr}{}^{ak}K_{mn}{}^{sl}\varepsilon_{slbk}\varepsilon^{mntr}
    R_a{}^b=\tilde R_p{}^t\sps\\ \\
    \frac{1}{2}S_{\left[\right.p}{}^aS_{r\left.\right]}{}^k
    S_{\left[\right.m}{}^sS_{n\left.\right]}{}^l
    \varepsilon_{slbk}\varepsilon^{mntr}R_a{}^b=
    \frac{1}{2}S_{\left[\right.p}{}^a(S_{r\left.\right]}{}^k
    S_m{}^sS_n{}^l\varepsilon_{kslb})\varepsilon^{mntr}R_a{}^b=\\ \\
    =\frac{1}{2}S_{\left[\right.p}{}^a((S^{-1})_{|b|}{}^q
    \varepsilon_{r\left.\right]mnq})\varepsilon^{mntr}R_a{}^b=\\ \\
    =\frac{1}{4}(6S_p{}^a(S^{-1}){}_b{}^t-4S_r{}^a(S^{-1}){}_b{}^q
    \delta_{\left[\right.q}{}^t\delta_{p\left.\right]}{}^rR_a{}^b)=\\ \\
    =S_p{}^aR_a{}^b(S^{-1})_b{}^t=S_p{}^aR_a{}^b\bar S^t{}_b=\tilde R_p{}^t\spsd
    \end{array}
\end{equation}
\end{proof}
    Используя специальный базис, найдем в случае q=0,6 соответствующие
    координаты кобивектора из $\mathbb R^6_{(p,q)}$
\begin{equation}
\label{re7g}
    \begin{array}{c}
    R_{16}=A_{16a}{}^bR_b{}^a= -R_{61} \sps \\
    R_{23}=A_{23a}{}^bR_b{}^a= -R_{32} \sps \\
    R_{45}=A_{45a}{}^bR_b{}^a= -R_{54} \spsd \\
    \end{array}
\end{equation}
\end{proof}
    Отметим, что похожее утверждение можно было бы
    сформулировать и для случая q=2,4. Однако здесь
    возникнут некоторые сложности, связанные с
    проблемой диагонализации, поскольку в этом
    случае матрица тензора эрмитового поляритета в
    специальном базисе будет отлична от единичной.
\vspace{1cm}

\subsection{\texorpdfstring{Геометрическое представление твистора в $\mathbb R^6_{(2,4)}$}{Геометрическое представление твистора}}
\subsubsection{Стереографическая проекция}$ $

    \indent В этой части определяется понятие протяженности изотропного
    вектора пространства $\mathbb R^6_{(2,4)}$ с метрикой индекса 4.
    Ниже будет показано как выбрать вектор единичной протяженности.
    Тогда векторы, коллинеарные такому вектору, будут отличаться
    от последнего на некоторый  действительный множитель r - "протяженность
    флагштока".

    Метрика пространства $\mathbb R^6_{(2,4)}$ имеет вид
\begin{equation}
\label{re78}
    dS^2=dT^2+dV^2-dW^2-dX^2-dY^2-dZ^2\spsd
\end{equation}
    Пусть, кроме того, задано сечение светового конуса $K_6$
\begin{equation}
\label{re79}
    T^2+V^2-W^2-X^2-Y^2-Z^2=0
\end{equation}
    плоскостью V+W=1. Рассмотрим стереографическую проекцию
    этого сечения    на    плоскость     (V=0,W=1)    с     полюсом
    $N(0,\frac{1}{2},\frac{1}{2},0,0,0)$ так,  что  точке  P(T,V,W,X,Y,Z)
    соответствует $p(t,0,1,x,y,z)$ на плоскости V=0. Тогда
    выполнено
\begin{equation}
\label{re80}
    T/t=X/x=Y/y=Z/z=-\frac{(V-\frac{1}{2})}{\frac{1}{2}}\spsd
\end{equation}
    Сделаем замену
\begin{equation}
\label{re82}
    \varsigma=-ix+y \sps
    \omega=-i(t+z)  \sps
    \eta=i(z-t)
\end{equation}
    и получим
\begin{equation}
\label{re83}
    \varsigma =\frac{-iX+Y}{2V-1}\sps
    \eta=\frac{-i(T+Z)}{2V-1}    \sps
    \omega=\frac{i(Z-T)}{2V-1}   \spsd
\end{equation}
    Поэтому индуцированная метрика имеет вид
\begin{equation}
\label{re84}
    d s^2:=dT^2-dX^2-dY^2-dZ^2=-
    \frac{ d\varsigma d\bar\varsigma+d\omega d\eta}
    {(\varsigma\bar\varsigma+\eta\omega)^2}
\end{equation}
    (Доказательство этого факта вынесено в приложение
    (\ref{re81p}) - (\ref{re86p})).
    Положим
\begin{equation}
\label{re85}
    X:=
    \left(
    \begin{array}{ccc}
    \omega & \varsigma \\
    -\bar\varsigma & \eta
    \end{array}
    \right) \sps
    dX:=
    \left(
    \begin{array}{ccc}
    d\omega & d\varsigma \\
    -d\bar\varsigma & d\eta
    \end{array}
    \right) \sps
    \frac{\partial}{\partial X}:=
    \left(
    \begin{array}{ccc}
    \frac{\partial}{\partial\omega}   &
    \frac{\partial}{\partial\varsigma} \\
    -\frac{\partial}{\partial\bar\varsigma} &
    \frac{\partial}{\partial\eta}
    \end{array}
    \right)
\end{equation}
    Тогда (\ref{re84}) примет вид
\begin{equation}
\label{re86}
    d s^2=-\frac{det(dX)}{(det(X))^2}\sps
    \bar X^T+X=0\spsd
\end{equation}
\newpage
    \noindent Рассмотрим группу дробно-линейных преобразований L
\begin{equation}
\label{re87}
    {\tilde X}=(AX+B)(CX+D)^{-1} \sps
    S:=
    \left(
    \begin{array}{ccc}
     A & B \\
     C & D
    \end{array}
    \right) \sps
    detS=1\spsd
\end{equation}
    Условие действительности, накладываемое на X ($X^*+X=0)$,
    даст подгруппу унитарных дробно-линейных преобразований
    LU(2,2) так, что матрица S из (\ref{re87}) удовлетворяет условию
\begin{equation}
\label{re87a}
    S^*\hat E S=\hat E\sps
    \hat E:=\left(
    \begin{array}{cc}
    0 & E \\
    E & 0
    \end{array}
    \right)\spsd
\end{equation}
    Определим далее в специальном базисе согласно (\ref{re33v})
    \cite[т. 2, с. 83, (6.2.18), с. 361, (9.3.7)]{Penrose1r}
\begin{equation}
\label{re89}
    \begin{array}{ccccc}
    R^{12} & = & 1/\sqrt{2}(V+W)  & = &  \omega^0\xi^1-\omega^1\xi^0 \sps\\
    R^{34} & = & 1/\sqrt{2}(V-W)  & = &  \bar\pi^0\bar\eta^1-\bar\pi^1\bar\eta^0\sps\\
    R^{14} & = & i/\sqrt{2}(T+Z)  & = &  \omega^0\bar\eta^0-\xi^0\bar\pi^0\sps\\
    R^{23} & = & i/\sqrt{2}(Z-T)  & = &  \xi^1\bar\pi^1-\omega^1\bar\eta^1\sps\\
    R^{24} & = & 1/\sqrt{2}(Y+iX) & = & \omega^1\bar\eta^0-\xi^1\bar\pi^0\sps\\
    R^{13} & = & 1/\sqrt{2}(Y-iX) & = & \xi^0\bar\pi^1-\omega^0\bar\eta^1\spsd
    \end{array}
\end{equation}
    Эта формула примечательна тем, что в ней показано
    выражение бивектора $R^{ab}$ через его спинорные
    компоненты. Положим
\begin{equation}
\label{re88}
    \begin{array}{c}
    X:=YZ^{-1} \sps
    {\tilde Y}=AY+BZ \sps
    {\tilde Z}=CY+DZ \sps\\ \\
    Y=\left(
    \begin{array}{cc}
    \omega^0 & \xi^0 \\
    \omega^1 & \xi^1
    \end{array}
    \right)\sps
    Z=\left(
    \begin{array}{cc}
    \bar\pi_0 & \bar \eta_0 \\
    \bar\pi_1 & \bar \eta_1
    \end{array}
    \right)\sps
    \end{array}
\end{equation}
    тогда из (\ref{re89}) будет следовать
\begin{equation}
\label{re99}
    \begin{array}{c}
    R:=\parallel R^{ab}\parallel=
    \left(
    \begin{array}{ccc}
    (detY)J & YZ^{-1}(detZ)J \\
    -(YZ^{-1}(detZ)J)^T & (detZ)J
    \end{array}
    \right)=\\ \\
    =\left(
    \begin{array}{c}
    Y\\
    Z
    \end{array}
    \right)J\left(
    \begin{array}{cc}
    Y^T & Z^T
    \end{array}
    \right)\sps\\ \\
    J:=\left(
    \begin{array}{cc}
    0  & E \\
    -E & 0
    \end{array}
    \right)\sps
    \tilde R=SRS^T\spsd
    \end{array}
\end{equation}
\newpage
    \noindent Положим
\begin{equation}
\label{re100}
    \tilde S:=\tilde IS\tilde I^{-1}\sps
    \tilde I:=\frac{1}{\sqrt 2}
    \left(
    \begin{array}{ccc}
    E & E  \\
    -E & E
    \end{array}
    \right) \sps
\end{equation}
    тогда получим
\begin{equation}
\label{re101}
    \tilde S^*\tilde E\tilde S=\tilde E\spsd
\end{equation}
    Матрицы S образуют группу $\tilde {SU}(2,2)$, поэтому из
    (\ref{re101}) следует, что матрицы $\tilde S$ образуют группу
    SU(2,2) и (\ref{re100}) устанавливает
    изоморфизм этих групп. Назовем преобразования из группы LU(2,2)
    твисторными преобразованиями.  Ввиду двулистности
    накрытия связной компоненты единицы группы $SO(2,4)$
    (которая обозначается через $SO^e(2,4)$) группой SU(2,2) и
    двулистности накрытия группы конформных преобразований
    $C^{\uparrow 4}_+(1,3)$ (\cite[т. 2, с. 359, (9.2.10)]{Penrose1r}) группой
    $SO^e(2,4)$, следует существование цепочки изоморфизмов
\begin{equation}
\label{re102}
    \begin{array}{c}
    SU(2,2)/\{\pm 1 ;\pm i\}
    \cong LU(2,2)\cong C^{\uparrow 4}_+(1,3) \cong
    SO^e(2,4)/\{\pm 1\}\spsd
    \end{array}
\end{equation}
    Это означает, что группа LU(2,2) исчерпывает все
    конформные преобразования из группы $C^{\uparrow 4}_+(1,3)$.
    При этом матрица S из (\ref{re87}) восстанавливается с
    точностью до множителя $\lambda$ такого, что $\lambda^4=1$
    (det(S)=1), откуда и появляется указанная неоднозначность.
    На основании того, что верны тождества
\begin{equation}
\label{re90}
    Y=AX+B\ \ \Rightarrow \ \ dX=AdY \sps
    Y=X^{-1}\ \ \Rightarrow \ \ dX=-X^{-1}dXX^{-1}\sps
\end{equation}
    где A и B - некоторые постоянные матрицы, и
    используя условия  (\ref{re86}) - (\ref{re99}), имеем
\begin{equation}
\label{re92}
    \tilde Z^*\ d\tilde X\ \tilde Z=
    Z^*\ dX\ Z\spsd
\end{equation}
    Это уравнение инвариантно относительно преобразований
    из группы LU(2,2).
    (Доказательство этого факта рассмотрено в приложении
    (\ref{re87p}) - (\ref{re100p})).
    Другой инвариант можно получить,
    рассматривая тождества
\begin{equation}
\label{re93}
    Y=AX+B\ \ \Rightarrow \ \
    \frac{\partial}{\partial X}=A^T\frac{\partial}{\partial Y} \sps
    Y=X^{-1}\ \ \Rightarrow \ \ \frac{\partial}{\partial X}
    =-Y^T\frac{\partial}{\partial Y}Y^T\sps
\end{equation}
    где A и B - тоже некоторые постоянные матрицы.
    Он будет иметь вид
\begin{equation}
\label{re95}
    \tilde Z^{-1}\frac{\partial}{\partial
    \tilde X^T}\tilde Z^{*\ -1}=
     Z^{-1}\frac{\partial}{\partial X^T} Z^{*\ -1}
\end{equation}
    (Доказательство можно найти в приложении
    (\ref{re101p}) - (\ref{re114p})).
    Это означает, что существует действительный касательный
    вектор $\tilde L$ к гиперболоиду, полученному сечением конуса
    $K_6$ плоскостью V+W=1, инвариантный относительно
    преобразований базиса из группы LU(2,2) (т.е.
    координатно-независимый на касательном
    пространстве к данному гиперболоиду) и
    однозначно определенный матрицей
\begin{equation}
\label{re96}
    \begin{array}{c}
    \hat L:=\frac{1}{\sqrt{2}}
    (Z^{-1}\frac{\partial}{\partial X^T} Z^{*\ -1}-
    \bar Z^{-1}\frac{\partial}{\partial X} Z^{T\ -1})=\\ \\
    =\left(
    \begin{array}{cc}
    0  & 1 \\
    -1 & 0
    \end{array}
    \right)
    (\frac{\partial}{\partial\omega}(-\bar\eta^0\pi^0+\eta^0\bar\pi^0)+
    \frac{\partial}{\partial\eta}(-\bar\eta^1\pi^1+\eta^1\bar\pi^1)+\\ \\
    +\frac{\partial}{\partial\xi}(-\bar\eta^1\pi^0+\eta^0\bar\pi^1)+
    \frac{\partial}{\partial\bar\xi}(\bar\eta^0\pi^1-\eta^1\bar\pi^0))
    \cdot \frac{1}{(det(Z))^2 \sqrt{2}}:=
    \left(
    \begin{array}{cc}
    0  & 1 \\
    -1 & 0
    \end{array}
    \right)\tilde L\spsd
    \end{array}
\end{equation}
    Норма этого вектора в метрике (\ref{re86}) будет такой
\begin{equation}
\label{re96a}
    \parallel\tilde L\parallel =-\frac{1}{2(det(Y))^2}=-\frac{1}{(V+W)^2}\spsd
\end{equation}
    Назовем изотропный вектор k вектором, имеющим единичную
    протяженность первого типа (сравн. \cite[т. 1, с. 57, (1.4.16)]{Penrose1r}), в том
    случае, когда k будет задавать точку на изотропном конусе,
    принадлежащую сечению плоскостью V+W=1.
    Тогда $\parallel\tilde L\parallel=-1$ и любой изотропной вектор K,
    коллинеарный k, определится как
\begin{equation}
\label{re96aa}
    K=(-\parallel\tilde L\parallel)^\frac{1}{2}k\spsd
\end{equation}
    Однако, при V=-W получаются вектора с бесконечной протяженностью
    первого типа.
    Чтобы научиться их различать можно задавать сечение
    $K_6$ не плоскостью V+W=1, а T+Z=1 и ввести подобным
    образом некоторый вектор $\tilde{\tilde L}$ с нормой
\begin{equation}
\label{re96ab}
    \parallel\tilde{\tilde L}\parallel =-\frac{1}{(T+Z)^2}\spsd
\end{equation}
    Назовем изотропный вектор k вектором, имеющим единичную
    протяженность второго типа в том
    случае, когда k будет задавать точку на изотропном конусе,
    принадлежащую сечению плоскостью T+Z=1 и протяженность
    первого типа не будет конечной.
    Определим протяженность вектора К как конечную протяженность
    первого типа, а если такой не существует, то
    как протяженность второго типа.
    Отметим, что вектор $\tilde L$  не является координатно-независимым
    в пространстве $\mathbb R^6_{(2,4)}$, хотя и является инвариантом
    касательного пространства к гиперболоиду, полученному сечением
    конуса $K_6$ плоскостью V+W=1. Следующей нашей задачей
    и будет нахождение инварианта пространства $\mathbb R^6_{(2,4)}$.

\subsubsection{Геометрическое изображение твистора в 6-мерном пространстве}$ $

    \indent Теперь появилась возможность наглядно изобразить
    твистор в пространстве $\mathbb R^6_{(2,4)}$.
    Рассмотрим пару векторов из $\mathbb R^6_{(2,4)}$ равной протяженности
\begin{equation}
\label{re59}
K^\alpha=\eta^\alpha{}_{ab}\ iT^{\left[\right. a}X^{b\left.\right]}\sps
N^\alpha=\eta^\alpha{}_{ab}\ T^{\left[\right. a}Z^{b\left.\right]}\spsd
\end{equation}
    Из леммы \ref{lemma1r} второй главы следует, что
\begin{equation}
\label{re61}
    K^\alpha K_\alpha=0\sps
    N^\alpha K_\alpha=0\sps
    N^\alpha N_\alpha=0\spsd
\end{equation}
    Выберем вектор $Y^a$ таким образом, чтобы были выполнены
    условия
\begin{equation}
\label{re62}
    Y^aY_a=0\sps
    Y^aX_a=0\sps
    Y^aZ_a=0\sps
\end{equation}
\begin{equation}
\label{re66}
    \varepsilon^{abcd}X_aY_bZ_cT_d=X^cZ_cY^dT_d=1\sps
    \varepsilon^{abcd}=24X^{\left[\right.a}Y^bZ^cT^{d\left.\right]}\spsd
\end{equation}
    Таким образом получится базис из векторов $X^a,Y^a,Z^a,T^a$
    следующего вида (напомним $Y_a=s_{aa'}Y^{a'}$)
\begin{equation}
\label{re63}
    \begin{array}{c}
    Y^aY_a=0\sps
    Y^aX_a=0\sps
    Y^aZ_a=0\sps
    X^aX_a=0\sps
    X^aT_a=0\sps\\
    Z^aZ_a=0\sps
    Z^aT_a=0\sps
    T^aT_a=0\spsd
    \end{array}
\end{equation}
    Откуда
\begin{equation}
\label{re67}
    \varepsilon_{abcd}T^{\left[\right.c}Y^{d\left.\right]}=
    -2X_{\left[\right.a}Z_{b\left.\right]}\spsd
\end{equation}
    Поэтому векторы
\begin{equation}
\label{re68}
    L^\alpha=\eta^\alpha{}_{ab}(-T^{\left[\right.a}Y^{b\left.\right]}+
                                X^{\left[\right.a}Z^{b\left.\right]})\sps
    M^\alpha=\eta^\alpha{}_{ab}(-i)(T^{\left[\right.a}Y^{b\left.\right]}+
                                X^{\left[\right.a}Z^{b\left.\right]})
\end{equation}
    удовлетворяют следующим соотношениям
    $$
    L^\alpha=\bar L^\alpha\sps
    M^\alpha=\bar M^\alpha\sps
    $$
    $$
    L^\alpha K_\alpha=0\sps
    L^\alpha M_\alpha=0\sps
    M^\alpha K_\alpha=0\sps
    L^\alpha N_\alpha=0\sps
    M^\alpha N_\alpha=0\sps
    $$
\begin{equation}
\label{re69}
    L^\alpha L_\alpha=-2\sps
    M^\alpha M_\alpha=-2\spsd
\end{equation}
    Вот теперь мы можем построить тривектор
\begin{equation}
\label{re70}
    P^{\alpha\beta\gamma}=6K^{\left[\right.\alpha}
                           N^\beta L^{\gamma\left.\right]}\spsd
\end{equation}
    Зная $K^\alpha$, мы знаем $T^a$ и $X^a$  с точностью до
\begin{equation}
\label{re71}
    X^a\longmapsto\lambda_1 X^a+\mu_1 T^a\sps
    T^a\longmapsto\nu_1 X^a+\xi_1 T^a\sps
    det\left(
    \begin{array}{cc}
    \lambda_1 & \mu_1 \\
    \nu_1  & \xi_1
    \end{array}
    \right)=1\spsd
\end{equation}
    А если нам известен $N^\alpha$, то произвол в выборе
    $T^a$ и $Z^a$ таков
\begin{equation}
\label{re72}
    Z^a\longmapsto\lambda_2 Z^a+\mu_2 T^a\sps
    T^a\longmapsto\nu_2 Z^a+\xi_2  T^a\sps
    det\left(
    \begin{array}{cc}
    \lambda_2 & \mu_2 \\
    \nu_2  & \xi_2
    \end{array}
    \right)=1\spsd
\end{equation}
    Поэтому $\nu_1=\nu_2=0$ и $\xi_2=\xi_1$.
    Для $Y^a$ получим
\begin{equation}
\label{re73}
    Y^a  \longmapsto  \alpha X^a+\beta Y^a+\gamma Z^a+\delta T^a\spsd
\end{equation}
    Если теперь потребовать сохранение (\ref{re63}) (сохранение базиса вида (\ref{re63}) на самом деле
    означает, что два таких базиса связаны преобразованием
    из группы LU(2,2)), то получим
    $$
    \begin{array}{cclccl}
    X^a & \longmapsto & \tau^{-1} X^a+\mu T^a\ , \ &
    T^a & \longmapsto & \tau T^a\sps\\
    Z^a & \longmapsto & \tau^{-1} Z^a+\chi T^a\ , \ &
    Y^a & \longmapsto & -\bar\chi X^a+\tau Y^a-\bar\mu Z^a+\delta T^a\sps
    \end{array}
    $$
\begin{equation}
\label{re74}
    \bar\chi\mu+\bar\mu\chi+\tau\bar\delta+\bar\tau\delta=0\sps
    \tau\bar\tau=1\spsd
\end{equation}
    Откуда
\begin{equation}
\label{re75}
    \begin{array}{c}
    \begin{array}{cclcccl}
    X^{\left[\right.a}T^{b\left.\right]}& \longmapsto &
    X^{\left[\right.a}T^{b\left.\right]}& \Leftrightarrow &
    K^\alpha & \longmapsto &  K^\alpha \sps\\
    Z^{\left[\right.a}T^{b\left.\right]}& \longmapsto &
    Z^{\left[\right.a}T^{b\left.\right]}& \Leftrightarrow &
    N^\alpha & \longmapsto &  N^\alpha \sps
    \end{array}\\ \\
    \begin{array}{ccl}
    T^{\left[\right.a}Y^{b\left.\right]} & \longmapsto &
    -\tau\bar\chi T^{\left[\right.a}X^{b\left.\right]}+
    \tau^2 T^{\left[\right.a}Y^{b\left.\right]}-
    \bar\mu\tau T^{\left[\right.a}Z^{b\left.\right]}\sps\\
    X^{\left[\right.a}Z^{b\left.\right]} & \longmapsto &
    \tau^{-2} X^{\left[\right.a}Z^{b\left.\right]}+
    \tau^{-1}\chi X^{\left[\right.a}T^{b\left.\right]}+
    \tau^{-1}\mu T^{\left[\right.a}Z^{b\left.\right]}\spsd
    \end{array}
    \end{array}
\end{equation}
    Положим
\begin{equation}
\label{re76}
    \tau=:e^{i\Theta}\sps
\end{equation}
    тогда
    $$
    L^\alpha \psps \longmapsto \psps
    L^\alpha \cos(2\Theta) +M^\alpha\sin(2\Theta)-
    $$
    $$
    -i(\bar\chi\tau-\bar\tau\chi)K^\alpha+
    (\mu\bar\tau+\tau\bar\mu)N^\alpha)\sps
    $$
    $$
    M^\alpha \psps \longmapsto \psps
    M^\alpha \cos(2\Theta) -L^\alpha\sin(2\Theta)+
    $$
\begin{equation}
\label{re77}
    +(\bar\chi\tau+\bar\tau\chi)K^\alpha-
    i(\mu\bar\tau-\tau\bar\mu)N^\alpha)\spsd
\end{equation}
    Таким образом 3-полуплоскость натянутая на вектора
    $K^\alpha,N^\alpha,L^\alpha$ будет координатно-независима
    в пространстве $\mathbb R^6_{(2,4)}$.
    Итак, нашу конструкцию можно представить в следующем виде.
    Протяженность векторов $K^\alpha$ и $N^\alpha$ должна быть
    одинакова.
    $K^\alpha$ и $N^\alpha$ определяют 2-плоскость, множество
    векторов которой с протяженностью, равной протяженности вектора
    $K^\alpha$ и началом, совпадающим с началом вектора $K^\alpha$,
    назовем флагштоком.
    $K^\alpha$,$N^\alpha$,$L^\alpha$ определят 3-полуплоскость,
    которую назовем полотнищем флага.
    Таким образом, зная $K^\alpha$ и $N^\alpha$, мы знаем
    твистор $T^a$ с точностью до фазы $\Theta$. В свою очередь
    $2\Theta$ - это угол поворота полотнища флага - 3-полуплоскости
    $P^{\alpha\beta\gamma}$ - в 2-плоскости $(L^\alpha,M^\alpha)$
    вокруг флагштока - 2-плоскости $(N^\alpha,K^\alpha)$. Поэтому, поворот
    флага на $2\pi$ приведет к твистору $-T^a$, и только
    поворот на $4\pi$ вернет нашу конструкцию к исходному
    состоянию.
    Кроме того, коллинеарные твисторы различаются протяженностью
    вектора $K^\alpha$ так, что при преобразовании
    $T^a\longmapsto rT^a,\ Y^a\longmapsto r^{-1}Y^a\ (r\in R\backslash \{0\})$
    флагшток умножается на r, а полотнище остается неизменным.
    Следует, наконец, отметить тот факт, что указанная
    геометрическая структура однозначно
    восстанавливается по твистору $T^a$.

\newpage
\section{Теорема о двух квадриках}$ $

    \indent В этой главе исследуются  решения битвисторного
    уравнения
    $$
    \left\{
    \begin{array}{lcl}
    X^a & = & \dot X^a -ir^{ab}\dot Y_b\sps \\
    Y_b & = & \dot Y_b\sps
    \end{array}
    \right.
    $$
    приводящие к нуль-парам Розенфельда
    $$
    X^A:=(X^a,Y_b)\spsd
    $$
    За $\dot X^a,\dot Y_b$ принимаются  некоторые частные
    решения битвисторного уравнения (\ref{re53aa}). Нас
    будет интересовать геометрическое место точек,
    определенное уравнением
    $$
    X^a=0\spsd
    $$
    Показывается, что решения такого уравнения приводят
    к 2 квадрикам, для которых справедлив модифицированный
    принцип тройственности. Доказано, что модифицированный
    принцип тройственности является обобщением принципа тройственности
    Картана и соответствия Кляйна, что позволяет реализовать
    его в явном виде с помощью некоторых операторов
    $\eta_A{}^{KL}$, которые являются обобщением операторов
    Нордена $\eta_\alpha{}^{ab}$ и удовлетворяют уравнению
    Клиффорда, что приводит к числам Кэли. Доказательство
    этой теоремы является примером приложения 6-мерного
    спинорного формализма, развитого выше, и тесно
    связано с 4-мерным спинорным формализмом \cite{Penrose1r}.

\subsection{Решения битвисторного уравнения}$ $

    \indent Рассмотрим расслоение $A^\mathbb C$ со слоями, изоморфными $\mathbb C^4$,
    и базой $\mathbb CV^6$ - аналитическим комплексным пространством
    с квадратичной метрикой. Уравнение
\begin{equation}
\label{re143}
    \nabla^{a\left(b\right.}X^{c\left.\right)}=0\psps
    (a,b,...=\overline{1,4})
\end{equation}
    названо нами битвисторным уравнением ($X^c$ - аналитичны).
    По доказанному выше битвисторное уравнение будет
    конформно-инвариантным.
    Кроме того, условие интегрируемости
    уравнения (\ref{re143}) имеет вид
\begin{equation}
\label{re154}
    \frac{1}{2}\varepsilon_{akmn}\nabla^{m\left(\right.n}
    \nabla^{d|k|}X^{c\left.\right)}=\frac{5}{6}C_a{}^c{}_l{}^d
    X^l=0
\end{equation}
    (доказательство в приложении (\ref{re2500}) - (\ref{re2503})).
    Ограничимся  случаем  конформно-плоского
    пространства (смотри (\ref{re24}))
\begin{equation}
\label{re155}
    C_a{}^c{}_l{}^d=0\spsd
\end{equation}
    Это означает, что пространство $\mathbb CV^6$ конформно пространству $\mathbb C\mathbb R^6$,
    которое, не ограничивая общности, будем рассматривать далее.
    Если $X^c$ - решение (\ref{re143}), то величина $\nabla^{ab}\nabla^{cd}X^r$
    антисимметрична по $rd$, она же антисимметрична
    по $br$ ввиду того, что пространство плоское
    и производные коммутируют.
    Кроме того, имеется антисимметрия по парам $ab,cd,rc,ra$.
    Это означает, что $\nabla^{ab}\nabla^{cd}X^r$ антисимметрична по $abcdr$ и,
    следовательно, равна нулю.
    Фиксируем в $\mathbb C\mathbb R^6$ точку O - начало координат. Все остальные точки опишем
    векторами $r^a$ с началом в точке O, тогда для $r^\alpha\ne 0$ имеем
\begin{equation}
\label{re156}
    \nabla_\alpha r^\beta=\delta_\alpha{}^\beta\spsd
\end{equation}
    Поэтому $\nabla^{ab}X^c$ есть постоянная величина,
    антисимметричная по $abc$, что следует из (\ref{re154}).
    Положим
\begin{equation}
\label{re157}
    \nabla^{cd}X^a=
    -i\varepsilon^{cdab}Y_b\spsd
\end{equation}
    Проинтегрируем это уравнение, что даст решение
\begin{equation}
\label{re158}
    \left\{
    \begin{array}{lcl}
    X^a & = & \dot X^a -ir^{ab}\dot Y_b\sps \\
    Y_b & = & \dot Y_b\spsd
    \end{array}
    \right.
\end{equation}
    Здесь $r^{ab}$ - бивектор из формулы (\ref{re1}), причем множитель i
    выбран для удобства (это выяснится при рассмотрении вещественного
    случая). За $\dot X^a$ принимается постоянное векторное поле,
    значение которого совпадает со значением поля $X^a$ в начале
    координат O. При этом, в случае пространства $\mathbb R^6_{(2,4)}$,
    например
\begin{equation}
\label{re159}
    Y_k=s_{km'}\bar Y^{m'} \sps \bar Y^{m'}=\overline{Y^m}\spsd
\end{equation}
    Кроме того, радиус-вектор $r^\gamma$($=\frac{1}{2}\eta^\gamma{}_{ab}R^{ab}$)
    удовлетворяет  следующим соотношениям
\begin{equation}
\label{re160}
    \frac{1}{2}r^{ab}r_{ab}=pf(r)\sps
    r^{ab}r_{bc}=-\frac{1}{2}pf(r)\delta_c{}^a\spsd
\end{equation}

\subsection{Нуль-пары Розенфельда} $ $

    \indent Обозначим через $A^{\mathbb C*}$ пространство, двойственное $A^\mathbb C$ и
    образуем 8-мерное комплексное пространство $\mathbb T^2$ как
    прямую сумму $A^\mathbb C\oplus A^{\mathbb C*}$. То есть, если $X^a$ - координаты
    вектора в $A^\mathbb C$, а $Y_b$ - координаты ковектора в $A^{*\mathbb C}$, то
\begin{equation}
\label{re161}
    X^A:=(X^a,Y_b)\psps (A,B,...=\overline{1,8})
\end{equation}
    будут координатами вектора из $\mathbb T^2$.
    Преобразование (\ref{re158}) является линейным преобразованием,
    однако, не сохраняющим структуру упомянутой прямой суммы.
    Будем рассматривать координаты бивектора $r^{ab}$ как
    координаты точки аффинного пространства $\mathbb C\mathbb A^6$. Нас
    будет интересовать множество точек, заданных
    уравнением
\begin{equation}
\label{re162}
    X^a=0\psps \Leftrightarrow \psps \dot X^a=ir^{ab}\dot Y_b\spsd
\end{equation}
    Это есть система из 4 линейных уравнений с 6 неизвестными.
    Для выяснения ее ранга рассмотрим однородное уравнение
\begin{equation}
\label{re163}
    r^{ab}\dot Y_b=0\sps
\end{equation}
    которое имеет ненулевые решения тогда и только тогда,
    когда бивектор простой
\begin{equation}
\label{re164}
    \varepsilon_{abfc}r^{ab}r^{cd}=-pf(r)\delta_f{}^d=0
\end{equation}
     и, следовательно, представим в виде
\begin{equation}
\label{re165}
     r^{ab}_{\mbox{\scriptsize{однородное}}}=\dot P^a\dot Q^b-\dot P^b\dot Q^a\sps
\end{equation}
     причем $\dot P^a$ и $\dot Q^a$ определены с точностью до их
     линейных комбинаций. Из этого следует, что
\begin{equation}
\label{re166}
     \dot P^a\dot Y_a=0\sps
     \dot Q^a\dot Y_a=0\spsd
\end{equation}
     Обозначим через $\dot X^a$,$\dot S^a$,$\dot Z^a$ все
     решения уравнения
\begin{equation}
\label{re167}
      \dot X^a\dot Y_a=0\sps
\end{equation}
     которые образуют базис пространства, определенного (\ref{re167}).
     Тогда уравнение (\ref{re165}) примет вид
\begin{equation}
\label{re168}
    r^{ab}_{\mbox{\scriptsize{однородное}}}=
    \lambda_1\dot S^{\left[\right.a}\dot X^{b\left.\right]}+
    \lambda_2\dot X^{\left[\right.a}\dot Z^{b\left.\right]}+
    \lambda_3\dot S^{\left[\right.a}\dot Z^{b\left.\right]}
\end{equation}
    и, следовательно, определит в пространстве бивекторов
    3-мерное подпространство.
    Отсюда получается общее решение уравнения (\ref{re162}) в виде
\begin{equation}
\label{re169}
    r^{ab}=
    r^{ab}_{\mbox{\scriptsize{частное}}}+
    \lambda_1\dot S^{\left[\right.a}\dot X^{b\left.\right]}+
    \lambda_2\dot X^{\left[\right.a}\dot Z^{b\left.\right]}+
    \lambda_3\dot S^{\left[\right.a}\dot Z^{b\left.\right]}\sps
\end{equation}
    где $r^{ab}_{\mbox{\scriptsize{частное}}}$-произвольный бивектор, являющийся
    частным решением (\ref{re162}).\\

\subsection{\texorpdfstring{Построение квадрик $\mathbb CQ_6$ и $\mathbb C\tilde Q_6$}{Построение двух квадрик}}$ $

    \indent Определенное ранее пространство $\mathbb T^2$ является комплексным
    евклидовым пространством, в котором скалярный квадрат вектора определится
    квадратичной формой
\begin{equation}
\label{re170}
    \varepsilon_{AB}X^AX^B=2X^aY_a
\end{equation}
    в смысле определения (\ref{re161}) так, что
    матрица тензора $\varepsilon_{AB}$ имеет вид
\begin{equation}
\label{re172}
    \parallel \varepsilon_{AB}\parallel=\left(
    \begin{array}{cc}
    0 & \delta_a{}^c \\
    \delta^b{}_d  & 0
    \end{array}
    \right)\spsd
\end{equation}
    Форма (\ref{re170}) будет инвариантна
    по отношению к преобразованию  (\ref{re158})
\begin{equation}
\label{re171}
    X^aY_a=(\dot X^a-ir^{ab}\dot Y_b)\dot Y_a=\dot X^a\dot Y_b\spsd
\end{equation}
    При фиксированном $r^{ab}$ уравнение (\ref{re162})
    определит в $\mathbb T^2$ 4-мерное пространство, которое будет
    являться 4-мерной плоской образующей конуса
\begin{equation}
\label{re173}
    \varepsilon_{AB}X^AX^B=0\spsd
\end{equation}
    Таким образом мы можем рассматривать квадрику $\mathbb CQ_6$,
    задаваемую уравнением (\ref{re173}), в проективном
    пространстве $\mathbb C\mathbb P_7$. Ее 4 базисные точки будут
    удовлетворять условию
\begin{equation}
\label{re174}
    \varepsilon_{AB}X^A_iX^B_j=0\psps (i,j,...=\overline{1,4})\spsd
\end{equation}
    Положим
\begin{equation}
\label{re175}
    X^A_1:=(\dot X^a,\dot Y_b)\sps
    X^A_2:=(\dot Z^a,\dot T_b)\sps
    X^A_3:=(\dot L^a,\dot N_b)\sps
    X^A_4:=(\dot K^a,\dot M_b)\spsd
\end{equation}
    На основании (\ref{re169}) каждой точке квадрики $\mathbb CQ_6$
    мы можем поставить в соответствие 3-мерную изотропную плоскость
    пространства $\mathbb C\mathbb A_6$. Точку пространства $\mathbb C\mathbb A_6$ (t,v,w,x,y,z)
    можно представить прямой ($\lambda T,\lambda V,\lambda U,\lambda S,
    \lambda W,\lambda X,\lambda Y,\lambda Z$) пространства $\mathbb C\mathbb R^8$,
    обладающего метрикой
\begin{equation}
\label{re176}
    dL^2=dT^2+dV^2+dU^2+dS^2+dW^2+dX^2+dY^2+dZ^2\spsd
\end{equation}
    Эти прямые будут образующими изотропного
    конуса $\mathbb CK_8$
\begin{equation}
\label{re177}
    T^2+V^2+U^2+S^2+W^2+X^2+Y^2+Z^2=0\spsd
\end{equation}
    Далее, пересечение 7-плоскости
\begin{equation}
\label{re1771}
    U-iS=1
\end{equation}
    с указанным конусом $\mathbb CK_8$ обладает индуцированной метрикой
\begin{equation}
\label{re1772}
    d\tilde L^2=dT^2+dV^2+dW^2+dX^2+dY^2+dZ^2\spsd
\end{equation}
    Это пространство имеет вид параболоида на $\mathbb CK_8$ и тождественно
    пространству $\mathbb C\mathbb R^6$
\begin{equation}
\label{re1773}
    U=1+iS=\frac{1}{2}(1-T^2-V^2-W^2-X^2-Y^2-Z^2)\spsd
\end{equation}
    Всякая образующая этого конуса (множество точек, принадлежащих $\mathbb CK_8$
    с постоянным отношением $T:V:U:S:W:X:Y:Z$), не лежащая на гиперплоскости
    $U=iS$, пересекает параболоид в единственной точке. Образующим
    конуса, лежащим на гиперплоскости $U=iS$, соответствуют точки,
    принадлежащие бесконечности пространства $\mathbb C\mathbb R^6$. Таким образом,
    прямым $\mathbb C\mathbb R^8$, проходящим через начало $\mathbb C\mathbb R^8$, соответствуют
    точки проективного пространства $\mathbb C\mathbb P_7$.
    Стереографическая проекция указанного
    сечения на плоскость $U=0$ с полюсом $N(0,0,\frac{1}{2},\frac{i}{2},0,0,0,0)$
    отображает точку $P(T,V,U,S,W,X,Y,Z)$ гиперболоида на точку $p(t,v,1,0,w,x,y,z)$
    плоскости $U=1,S=0$
\begin{equation}
\label{re17731}
    \begin{array}{c}
    \lambda T=t\sps
    \lambda V=v\sps
    \lambda W=w\sps
    \lambda X=x\sps
    \lambda Y=y\sps
    \lambda Z=z\sps\\
    \lambda =\frac{1}{U+iS}\sps
    \lambda U=\frac{1}{2}(1-t^2-v^2-w^2-x^2-y^2-z^2)=-\lambda iS+1\sps\\
    pf(r)=-\frac{U-iS}{U+iS}\spsd
    \end{array}
\end{equation}
    Образующим же конуса $\mathbb CK_8$ соответствует квадрика $\mathbb C\tilde Q_6$ в проективном
    пространстве $\mathbb C\mathbb P_7$
\begin{equation}
\label{re179}
    G_{AB}R^AR^B=0\spsd
\end{equation}

\subsection{\texorpdfstring{Соответствие $\mathbb CQ_6\longmapsto \mathbb C\tilde Q_6$}{Соответствие: CQ в C'Q}}$ $

    \indent На основании (\ref{re169})
\begin{equation}
\label{re178}
    r^{ab}=r^{ab}_{\mbox{\scriptsize{частное}}}+r^{ab}_{\mbox{\scriptsize{однородное}}}=
    r^{ab}_{\mbox{\scriptsize{частное}}}+
    \lambda_1\dot S^{\left[\right.a}\dot X^{b\left.\right]}+
    \lambda_2\dot X^{\left[\right.a}\dot Z^{b\left.\right]}+
    \lambda_3\dot S^{\left[\right.a}\dot Z^{b\left.\right]}\sps
\end{equation}
    поэтому (\ref{re169}) определит 4-мерную плоскую образующую конуса $\mathbb CK_8$.
    Уравнения (\ref{re174}),(\ref{re175}) определят систему
\begin{equation}
\label{re182}
    \left\{
    \begin{array}{lcl}
    ir^{ab}\dot Y_b & = & \dot X^a\sps\\
    ir^{ab}\dot T_b & = & \dot Z^a\sps\\
    ir^{ab}\dot N_b & = & \dot L^a\sps\\
    ir^{ab}\dot M_b & = & \dot K^a
    \end{array}
    \right.
\end{equation}
    с условиями
\begin{equation}
\label{re183}
    \begin{array}{c}
    \dot X^a\dot Y_a=0\sps
    \dot Z^a\dot T_a=0\sps
    \dot L^a\dot N_a=0\sps
    \dot K^a\dot M_a=0\sps\\
    \dot X^a\dot T_a=-\dot Z^a\dot Y_a\sps
    \dot X^a\dot N_a=-\dot L^a\dot Y_a\sps
    \dot X^a\dot M_a=-\dot K^a\dot Y_a\sps\\
    \dot Z^a\dot N_a=-\dot L^a\dot T_a\sps
    \dot Z^a\dot M_a=-\dot K^a\dot T_a\sps
    \dot K^a\dot N_a=-\dot L^a\dot M_a\spsd
    \end{array}
\end{equation}
    Таким образом из 16 уравнений с 6 неизвестными
    $r^{ab}$ существенными будут только 6 уравнений (10 условий связи (\ref{re183})),
    что определит точку $\mathbb C\mathbb A_6$, а значит и точку
    квадрики $\mathbb C\tilde Q_6$.\\

    \indent Если из системы (\ref{re182}) нам известно одно
    уравнение
\begin{equation}
\label{re184}
    ir^{ab}\dot Y_b=\dot X^a
\end{equation}
    с условием
\begin{equation}
\label{re185}
    \dot X^a\dot Y_a=0\sps
\end{equation}
    то из 4 уравнений существенными будут лишь 3 (одно условие связи).
    Это означает, что точке квадрики  $\mathbb CQ_6$
    будет соответствовать плоская 3-мерная
    образующая $\mathbb C\mathbb P_3$, принадлежащая квадрике $\mathbb C\tilde Q_6$.
    Это следует из (\ref{re169}).\\

    \indent Если из системы (\ref{re182}) нам известны два
    уравнения
\begin{equation}
\label{re186}
    \left\{
    \begin{array}{lcl}
    ir^{ab}\dot Y_b & = & \dot X^a\sps\\
    ir^{ab}\dot T_b & = & \dot Z^a
    \end{array}
    \right.
\end{equation}
    с условиями
\begin{equation}
\label{re187}
    \begin{array}{c}
    \dot X^a\dot Y_a=0\sps
    \dot Z^a\dot T_a=0\sps
    \dot X^a\dot T_a=-\dot Z^a\dot Y_a\sps
    \end{array}
\end{equation}
    то из 8 уравнений существенными будут лишь 5 (неизвестных же 6 и
    3 условия связи).
    Это означает, что прямолинейной образующей $\mathbb C\mathbb P_1$ квадрики  $\mathbb CQ_6$
    будет соответствовать прямолинейная образующая $\mathbb C\mathbb P_1$,
    принадлежащая квадрике $\mathbb C\tilde Q_6$. При этом
    многообразие образующих $\mathbb C\mathbb P_1(\mathbb CQ_6)$, принадлежащих
    одной и той же образующей $\mathbb C\mathbb P_3(\mathbb CQ_6)$, определит
    пучок образующих $\mathbb C\mathbb P_1(\mathbb C\tilde Q_6)$, принадлежащий
    квадрике  $\mathbb C\tilde Q_6$ (этот пучок является на самом деле
    конусом).
    Центр пучка определится системой (\ref{re182}).\\

    \indent Если из системы (\ref{re182}) нам известны три
    уравнения
\begin{equation}
\label{re188}
    \left\{
    \begin{array}{lcl}
    ir^{ab}\dot Y_b & = & \dot X^a\sps\\
    ir^{ab}\dot T_b & = & \dot Z^a\sps\\
    ir^{ab}\dot N_b & = & \dot L^a
    \end{array}
    \right.
\end{equation}
    с условиями
\begin{equation}
\label{re189}
    \begin{array}{c}
    \dot X^a\dot Y_a=0\sps
    \dot Z^a\dot T_a=0\sps
    \dot L^a\dot N_a=0\sps\\
    \dot X^a\dot T_a=-\dot Z^a\dot Y_a\sps
    \dot X^a\dot N_a=-\dot L^a\dot Y_a\sps
    \dot Z^a\dot N_a=-\dot L^a\dot T_a\sps
    \end{array}
\end{equation}
    то из 12 уравнений существенными будут лишь 6 (и неизвестных 6
    с 6 условиями связи).
    Это означает, что 2-мерной образующей $\mathbb C\mathbb P_2$ квадрики  $\mathbb CQ_6$
    будет соответствовать точка квадрики $\mathbb C\tilde Q_6$. При этом
    многообразие образующих $\mathbb C\mathbb P_2(\mathbb CQ_6)$, принадлежащих
    одной и той же образующей $\mathbb C\mathbb P_3(\mathbb CQ_6)$, определит
    единственную точку квадрики $\mathbb C\tilde Q_6$. Эта точка
    определится системой (\ref{re182}).

\subsection{\texorpdfstring{Связующие операторы $\eta^A{}_{KL}$}{Связующие операторы для n=8}}$ $
    \indent Исходя из вышесказанного, рассмотрим прямолинейную
    образующую квадрики $\mathbb CQ_6$, определенную бивектором
\begin{equation}
\label{re190}
    \begin{array}{c}
    \hat R^{AB}=X_1^{\left[\right.A}X_2^{B\left.\right]}=
    \left(
    \begin{array}{cc}
    \dot X^a\dot Z^b-\dot X^b\dot Z^a & \dot X^a\dot T_d-\dot Y_d\dot Z^a \\
    \dot Y_c\dot Z^b-\dot T_c\dot X^b & \dot Y_c\dot T_d-\dot Y_d\dot T_c
    \end{array}
    \right)=\\ \\
    =\left(
    \begin{array}{cc}
    -2r^{\left[\right.a|k|}r^{b\left.\right]r}\dot Y_k\dot T_r &
    2ir^{ar}\dot Y_{\left[\right.r}\dot T_{b\left.\right]} \\
    i\varepsilon^{kbmn}r_{ck}\dot Y_{\left[\right.m}\dot T_{n\left.\right]}+
    \delta_c{}^b\dot X^k\dot T_k &
    2Y_{\left[\right.c}\dot T_{d\left.\right]} \\
    \end{array}
    \right)=\\ \\
    =\dot T_l\dot X^l\left(
    \begin{array}{cc}
    -\frac{1}{2}i\delta^a{}_kr^\gamma r_\gamma & r^{ar} \\
    r_{ck} & -i\delta_c{}^r
    \end{array}
     \right)\cdot\frac{1}{\dot T_l\dot X^l}\left(
    \begin{array}{cc}
    i\varepsilon^{kbmn}\dot Y_m\dot T_n  &
    0 \\
    -\delta_r{}^b r^{mn}\dot Y_m\dot T_n  &
    2iY_{\left[\right.r}T_{d\left.\right]}
    \end{array}
    \right):=\\ \\=R^A{}_KP^{KB}\spsd
    \end{array}
\end{equation}
    Положим
\begin{equation}
\label{re191}
    R^{AB}:=\varepsilon^{BC}R^A{}_C=\dot T_l\dot X^l\left(
    \begin{array}{cc}
    r^{an} & -\frac{1}{2}i\delta^a{}_kr^\gamma r_\gamma \\
    -i\delta_c{}^n & r_{ck}
    \end{array}
    \right)\sps
\end{equation}
\begin{equation}
\label{re191a}
    R_{AB}:=\varepsilon_{AC}R^C{}_B=\dot T_l\dot X^l\left(
    \begin{array}{cc}
    r_{ck} & -i\delta_c{}^n \\
    -\frac{1}{2}i\delta^a{}_kr^\gamma r_\gamma & r^{an}
    \end{array}
    \right)\spsd
\end{equation}
    При этом будет верно уравнение
\begin{equation}
\label{re192}
    R_A{}^C\hat R^{AB}=0\spsd
\end{equation}
    Очевидно, что любой тензор (\ref{re190}), представляющий
    образующую $\mathbb C\mathbb P_1(\mathbb CQ_6)$, будет содержать один и тот же
    тензор $R^A{}_K$ в своем разложении, при этом второй
    тензор разложения $P^{KB}$ будет отвечать
    за положение $\mathbb C\mathbb P_1$ в $\mathbb C\mathbb P_3$. Поэтому есть резон
    поставить в соответствие точке квадрики $\mathbb C\tilde Q_6$
    матрицу (\ref{re191}), которой она определится однозначно.
    При переходе к пространству $\mathbb C\mathbb R^8$,  исходя из
\begin{equation}
\label{re193a}
    \begin{array}{ccccccccc}
    r^{12} & = & \frac{1}{\sqrt 2}(v+iw)\sps &
    r^{13} & = & \frac{1}{\sqrt 2}(x+iy)\sps &
    r^{14} & = & \frac{i}{\sqrt 2}(t+iz)\sps \\
    r^{23} & = & \frac{i}{\sqrt 2}(iz-t)\sps &
    r^{24} & = & \frac{1}{\sqrt 2}(-x+iy)\sps&
    r^{34} & = & \frac{1}{\sqrt 2}(v-iw) \sps
    \end{array}
\end{equation}
    определим однородные координаты $\mathbb C\mathbb R^8$ следующим образом
\begin{equation}
\label{re193}
    \lambda=\left\{\begin{array}{ccccccccccccccl}
    R^{12} &:&  R^{13}  &:&  R^{14}  &:&  R^{23} &:& R^{24} &:& R^{34} &:&
    R^{15} &:& R^{51}\sps \\
    r^{12} &:& r^{13} &:& r^{14} &:& r^{23} &:& r^{24} &:& r^{34} &:&
    -\frac{1}{2}ir^\gamma r_\gamma &:& -i\sps
    \end{array}
    \right.
\end{equation}
\begin{equation}
\label{re194}
    \begin{array}{cccccc}
    R^{12} & = & \frac{1}{\sqrt 2}(V+iW)\sps &
    R^{13} & = & \frac{1}{\sqrt 2}(iY+X)\sps \\
    R^{14} & = & \frac{1}{\sqrt 2}(iT-Z)\sps  &
    R^{23} & = & \frac{1}{\sqrt 2}(-Z-iT)\sps \\
    R^{24} & = & \frac{1}{\sqrt 2}(-X+iY)\sps &
    R^{34} & = & \frac{1}{\sqrt 2}(V-iW)\sps \\
    R^{51} & = & S-iU\sps &
    R^{15} & = & \frac{1}{2}(iU+S)\sps
    \end{array}
\end{equation}$ $\\
    $$
    \begin{array}{cccccccc}
    R^{12} & = & -R^{21} & = & R^{78} & = & -R^{87}\sps \\
    R^{13} & = & -R^{31} & = & R^{86} & = & -R^{68}\sps \\
    R^{14} & = & -R^{41} & = & R^{67} & = & -R^{76}\sps \\
    R^{23} & = & -R^{32} & = & R^{58} & = & -R^{85}\sps \\
    R^{24} & = & -R^{42} & = & R^{75} & = & -R^{57}\sps \\
    R^{34} & = & -R^{43} & = & R^{56} & = & -R^{65}\sps \\
    R^{15} & = &  R^{26} & = & R^{37} & = &  R^{48}\sps \\
    R^{51} & = &  R^{62} & = & R^{73} & = &  R^{84}\sps
    \end{array}
    $$ $ $\\
\begin{equation}
\label{re195}
    \begin{array}{c}
    R^{AB}R_{AB}=8(R^{12}R^{34}-R^{13}R^{24}+R^{14}R^{23}+
    R^{15}R^{51})=\\ \\
    =4(T^2+V^2+U^2+S^2+W^2+X^2+Y^2+Z^2):=4nf(R)\\ \\ \\
    R^{AB}R_{CB}=\frac{1}{2}nf(R)\delta_C{}^A
    \ \ \Rightarrow \ \ R^{AB}R_{AB}=4nf(R)\spsd
    \end{array}
\end{equation}
    Для того, чтобы $(R_i)^A$ определяли образующую квадрики
    $\mathbb C\tilde Q_6$ необходимо и достаточно выполнение условий
\begin{equation}
\label{re197}
    G_{AB}R^A_iR^B_j=0\spsd
\end{equation}
    Определим некоторые связующие операторы $\eta_A{}^{BC}$
    так, чтобы выполнялись условия
\begin{equation}
\label{re198}
     R_A=\frac{1}{4}\eta_A{}^{BC}R_{BC}\sps
     R^A=\frac{1}{4}\eta^A{}_{BC}R^{BC}
\end{equation}
     и было верно тождество
\begin{equation}
\label{re199}
     G_{AB}\delta_K{}^L=\eta_{AK}{}^R\eta_B{}^L{}_R+
     \eta_{BK}{}^R\eta_A{}^L{}_R\spsd
\end{equation}
     Поэтому можно определить некоторые операторы $\gamma_A$
\begin{equation}
\label{re2o}
     \gamma_A:=\sqrt{2}\left(
     \begin{array}{cc}
     0 & \sigma_A \\
     \eta_A & 0
     \end{array}
     \right)\sps
     \eta_A:=\eta_A{}^{KR}\sps
     \sigma_A:=(\eta_A)^T{}_{RL}\spsd
\end{equation}
     Тогда $\gamma_A$ будут удовлетворять уравнению Клиффорда
\begin{equation}
\label{re3o}
     \gamma_A\gamma_B+\gamma_B\gamma_A=2G_{AB} I\spsd
\end{equation}
     При этом спуск и подъем одиночных индексов
     производится с помощью метрического тензора $\varepsilon_{AB}$,
     определенного выше. Определим
\begin{equation}
\label{re4o}
     \varepsilon_{PQRT}:=\eta^A{}_{PQ}\eta^B{}_{RT}G_{AB}\sps
     \varepsilon_{PQRT}=\varepsilon_{RTPQ}\sps
\end{equation}
     что даст еще один метрический тензор $\varepsilon_{PQRT}$,
     с помощью которого можно поднимать и опускать парные индексы.
     Действительно, если (\ref{re199})  свернуть $\delta_L{}^K$,
     то получим
\begin{equation}
\label{re5o}
     G_{AB}=\frac{1}{4}\eta_A{}^{PQ}\eta_{BPQ}\spsd
\end{equation}
     Свернем  (\ref{re4o})  с $\eta_C{}^{PQ}$, то
\begin{equation}
\label{re6o}
     \eta_{CRT}=\frac{1}{4}\eta_C{}^{PQ}\varepsilon_{PQRT}\spsd
\end{equation}
     Теперь тождество (\ref{re199}) можно переписать в виде
     (свернув с $\eta^A{}_{ST}\eta^B{}_{PQ}$)
\begin{equation}
\label{re7o}
     \varepsilon_{STPQ}\delta_K{}^L=
     \varepsilon_{STK}{}^R\varepsilon_{PQ}{}^L{}_R+
     \varepsilon_{PQK}{}^R\varepsilon_{ST}{}^L{}_R\spsd
\end{equation}
     Свертка же (\ref{re7o}) c $\delta_L{}^K$ даст
\begin{equation}
\label{re8o}
     \varepsilon_{STPQ}=
     \frac{1}{4}\varepsilon_{ST}{}^{KR}\varepsilon_{PQKR}\spsd
\end{equation}
     При  этом результат применения двух метрических тензоров
     должен быть одинаков
\begin{equation}
\label{re9o}
     \varepsilon_{PQ}=\frac{1}{4}\varepsilon_{PQRT}\varepsilon^{RT}\spsd
\end{equation}
     Таким образом, при наличии 3 метрических тензоров $G_{AB},
     \varepsilon_{PQRT},\varepsilon_{RT}$, можно от уравнения
     Клиффорда (\ref{re3o}) прийти к уравнению (\ref{re199}), и
     $\eta_A{}^{BC}$ являются образующими
     соответствующей алгебры Клиффорда. Опустим теперь индекс L
     в  (\ref{re7o}) и свернем его с $\varepsilon^{ST}$, то получим
\begin{equation}
\label{re10o}
     \varepsilon_{PQ(KL)}=\frac{1}{2}\varepsilon_{PQ}\varepsilon_{KL}\sps
     \varepsilon_{[PQ](KL)}=0\sps
\end{equation}
\newpage
    \noindent что приведет к тождеству
\begin{equation}
\label{re11o}
    \eta_A{}^{(MN)}=\frac{1}{8}\eta_A{}^{KL}\varepsilon_{KL}\varepsilon^{MN}\spsd
\end{equation}
    Если же тензоры $G_{AB}$  и $\varepsilon_{KL}$  имеют вид
\begin{equation}
\label{re12o}
    \begin{array}{c}
    \parallel G_{AB} \parallel=\left(
    \begin{array}{cccccccc}
    1 & 0 & 0 & 0 & 0 & 0 & 0 & 0\\
    0 & 1 & 0 & 0 & 0 & 0 & 0 & 0\\
    0 & 0 & 1 & 0 & 0 & 0 & 0 & 0\\
    0 & 0 & 0 & 1 & 0 & 0 & 0 & 0\\
    0 & 0 & 0 & 0 & 1 & 0 & 0 & 0\\
    0 & 0 & 0 & 0 & 0 & 1 & 0 & 0\\
    0 & 0 & 0 & 0 & 0 & 0 & 1 & 0\\
    0 & 0 & 0 & 0 & 0 & 0 & 0 & 1\\
    \end{array}
    \right)\sps\\ \\
    \parallel\varepsilon_{KL}\parallel=\left(
    \begin{array}{cccccccc}
    0 & 0 & 0 & 0 & 1 & 0 & 0 & 0\\
    0 & 0 & 0 & 0 & 0 & 1 & 0 & 0\\
    0 & 0 & 0 & 0 & 0 & 0 & 1 & 0\\
    0 & 0 & 0 & 0 & 0 & 0 & 0 & 1\\
    1 & 0 & 0 & 0 & 0 & 0 & 0 & 0\\
    0 & 1 & 0 & 0 & 0 & 0 & 0 & 0\\
    0 & 0 & 1 & 0 & 0 & 0 & 0 & 0\\
    0 & 0 & 0 & 1 & 0 & 0 & 0 & 0\\
    \end{array}
    \right)\sps
    \end{array}
\end{equation}
    то существенные координаты операторов $\eta^A{}_{KL}$ в некотором базисе
    будут такими
\begin{equation}
\label{re13o}
    \begin{array}{llll}
    \eta^2{}_{12}=\frac{1}{\sqrt{2}}\sps  &
    \eta^2{}_{34}=\frac{1}{\sqrt{2}}\sps  &
    \eta^5{}_{12}=-\frac{i}{\sqrt{2}}\sps &
    \eta^5{}_{34}=\frac{i}{\sqrt{2}}\sps  \\
    \eta^2{}_{78}=\frac{1}{\sqrt{2}}\sps  &
    \eta^2{}_{56}=\frac{1}{\sqrt{2}}\sps  &
    \eta^5{}_{78}=-\frac{i}{\sqrt{2}}\sps &
    \eta^5{}_{56}=\frac{i}{\sqrt{2}}\sps  \\
    \eta^1{}_{14}=-\frac{i}{\sqrt{2}}\sps &
    \eta^1{}_{23}=\frac{i}{\sqrt{2}}\sps  &
    \eta^8{}_{14}=-\frac{1}{\sqrt{2}}\sps &
    \eta^8{}_{23}=-\frac{1}{\sqrt{2}}\sps \\
    \eta^1{}_{67}=-\frac{i}{\sqrt{2}}\sps &
    \eta^1{}_{58}=\frac{i}{\sqrt{2}}\sps  &
    \eta^8{}_{67}=-\frac{1}{\sqrt{2}}\sps &
    \eta^8{}_{58}=-\frac{1}{\sqrt{2}}\sps\\
    \eta^7{}_{13}=-\frac{i}{\sqrt{2}}\sps &
    \eta^7{}_{24}=-\frac{i}{\sqrt{2}}\sps &
    \eta^6{}_{13}=\frac{1}{\sqrt{2}}\sps  &
    \eta^6{}_{24}=-\frac{1}{\sqrt{2}}\sps \\
    \eta^7{}_{68}=\frac{i}{\sqrt{2}}\sps  &
    \eta^7{}_{57}=\frac{i}{\sqrt{2}}\sps  &
    \eta^6{}_{68}=-\frac{1}{\sqrt{2}}\sps &
    \eta^6{}_{57}= \frac{1}{\sqrt{2}}\sps \\
    \eta^4{}_{15}=\frac{1}{\sqrt{2}}\sps  &
    \eta^4{}_{51}=\frac{1}{\sqrt{2}}\sps  &
    \eta^3{}_{15}=-\frac{i}{\sqrt{2}}\sps &
    \eta^3{}_{51}=\frac{i}{\sqrt{2}}\sps  \\
    \eta^4{}_{26}=\frac{1}{\sqrt{2}}\sps  &
    \eta^4{}_{62}=\frac{1}{\sqrt{2}}\sps  &
    \eta^3{}_{26}=-\frac{i}{\sqrt{2}}\sps &
    \eta^3{}_{62}=\frac{i}{\sqrt{2}}\sps  \\
    \eta^4{}_{37}=\frac{1}{\sqrt{2}}\sps  &
    \eta^4{}_{73}=\frac{1}{\sqrt{2}}\sps  &
    \eta^3{}_{37}=-\frac{i}{\sqrt{2}}\sps &
    \eta^3{}_{73}=\frac{i}{\sqrt{2}}\sps  \\
    \eta^4{}_{48}=\frac{1}{\sqrt{2}}\sps  &
    \eta^4{}_{84}=\frac{1}{\sqrt{2}}\sps  &
    \eta^3{}_{48}=-\frac{i}{\sqrt{2}}\sps &
    \eta^3{}_{84}=\frac{i}{\sqrt{2}}
    \end{array}
\end{equation}
    так, что в сокращенном виде можно записать
\begin{equation}
\label{re14o}
    \eta_A{}^{MN}=\left(
    \begin{array}{cc}
    \eta_\alpha{}^{ab} & \lambda\delta^a{}_d \\
    \mu\delta_c{}^b & \eta_{\alpha cd}
    \end{array}
    \right)\sps
    \eta_{\alpha cd}=\frac{1}{2}\varepsilon_{abcd}\eta_\alpha{}^{ab}\sps
\end{equation}
    где $\varepsilon_{abcd}$ - квадривектор из  (\ref{re2}).
    По сути, здесь использован тот же базис, что и в формуле  (\ref{re194}).\\
    \indent Рассмотрим далее бивектор $\hat R^{AB}$  вида (\ref{re190})
    такой, что его составляющие вектора $X_1{}^A,X_2{}^A$,
    определенные формулой (\ref{re175}),
    будут удовлетворять системе (\ref{re182}). Этим
    определится цепочка тождеств
\begin{equation}
\label{re15o}
    \begin{array}{c}
    ir^{ab}\dot Y_b=\dot X^a\sps
    ir_{ab}\dot Z^b=pf(r)\dot T_a\sps\\[2ex]
    \frac{1}{2}ir_{cd}\varepsilon^{abcd}\dot Y_b\varepsilon_{aklm}=
    \dot X^a\varepsilon_{aklm}\sps\\[2ex]
    3ir_{\left[\right. kl}\dot Y_{m\left.\right]}=
    \dot X^a\varepsilon_{aklm}\sps\\[2ex]
    ir_{kl}\dot Y_m+2ir_{m\left[\right. k}\dot Y_{l\left.\right]}=
    \dot X^a\varepsilon_{aklm}\spsd
    \end{array}
\end{equation}
    Свернем последнее тождество с $\dot Z^m$ и будем иметь
\begin{equation}
\label{re16o}
    ir_{kl}\dot Y_m\dot Z^m=\dot X^a\dot Z^a\varepsilon_{klmn}+
    pf(r)\cdot 2\dot T_{\left[\right. k}\dot Y_{l\left.\right]}\spsd
\end{equation}
    Таким образом, бивектор  $\hat R^{AB}$ определит прямолинейную
    образующую $\mathbb C\mathbb P_1(\mathbb CQ_6)$, принадлежащую некоторой плоской
    образующей $\mathbb C\mathbb P_3(\mathbb CQ_6)$, которая определит некоторую точку
    с координатами  $r^{ab}$, принадлежащую квадрике $\mathbb C\tilde Q_6$. Мы можем
    определить некоторый тензор $\hat{\hat R}^{AB}$
\begin{equation}
\label{re17o}
    \hat{\hat R}^{AB}:=\hat R^{AK} \hat P_K{}^B\sps
    \hat P_K{}^B:=\left(
    \begin{array}{cc}
    2\delta_m{}^k & 0\\
    0 & -2pf(r)i \delta^n{}_r
    \end{array}
    \right)\spsd
\end{equation}
    Тензор  $\hat{\hat R}^{AB}$ по прежнему будет представлять
    прямолинейную образующую $\mathbb CP_1(\mathbb CQ_6)$, принадлежащую
    некоторой плоской образующей $\mathbb CP_3(\mathbb CQ_6)$. Применим операторы
    $\eta^A{}_{KL}$ к тензору $\hat{\hat R}^{AB}$ и получим
    тождество
\begin{equation}
\label{re18o}
    \eta^A{}_{KL}\hat{\hat R}^{KL}=
    \eta^A{}_{KL}\dot Y_m\dot Z^m
    \left(
    \begin{array}{cc}
    r^{an} & -\frac{1}{2}i\delta^a{}_kr^\gamma r_\gamma \\
    -i\delta_c{}^n & r_{ck}
    \end{array}
    \right)=
    \eta^A{}_{KL}R^{KL}\sps
\end{equation}
\newpage
    \noindent где $\dot Y_m\dot Z^m$ нормировано так, что выполнено
    (\ref{re194}), а тензор $R^{KL}$ удовлетворяет выражению  (\ref{re191}).
    Таким образом, операторы $\eta^A{}_{KL}$ осуществляют факторизацию
    прямолинейных образующих $\mathbb C\mathbb P_1(\mathbb CQ_6)$ по принадлежности к
    одной плоской образующей $\mathbb C\mathbb P_3(\mathbb CQ_6)$, и это определит
    точку квадрики $\mathbb C\tilde Q_6$. В однородных же координатах
    тензор $R^{KL}$ определит координаты точки R пространства
    $\mathbb C\mathbb R^8$.\\

    \indent Свернем с $\delta_L{}^B$  тождество  (\ref{re199})
\begin{equation}
\label{re19o}
    G_{AK}=S_A{}^M\varepsilon_{KM}\sps
    S_A{}^M:=\eta_A{}^{MR}\eta_L{}^L{}_R+\eta_A{}^L{}_R\eta_L{}^{MR}\spsd
\end{equation}
    Поскольку $G_{AB},\varepsilon_{KL}$ имеют вид (\ref{re12o}), то
    $\parallel S_A{}^M\parallel$ будет такой
\begin{equation}
\label{re20o}
    \parallel S_A{}^M \parallel=\left(
    \begin{array}{cccccccc}
    0 & 0 & 0 & 0 & 1 & 0 & 0 & 0\\
    0 & 0 & 0 & 0 & 0 & 1 & 0 & 0\\
    0 & 0 & 0 & 0 & 0 & 0 & 1 & 0\\
    0 & 0 & 0 & 0 & 0 & 0 & 0 & 1\\
    1 & 0 & 0 & 0 & 0 & 0 & 0 & 0\\
    0 & 1 & 0 & 0 & 0 & 0 & 0 & 0\\
    0 & 0 & 1 & 0 & 0 & 0 & 0 & 0\\
    0 & 0 & 0 & 1 & 0 & 0 & 0 & 0\\
    \end{array}
    \right)\spsd
\end{equation}
    Поэтому $S_A{}^M$ будет тензором инволюции, а квадрики -
    B-цилиндрами.\\

    \indent Выясним, какому семейству принадлежат рассматриваемые
    выше образующие $\mathbb C\mathbb P_3(\mathbb CQ_6)$. Для этого рассмотрим условия
\begin{equation}
\label{re21o}
    \begin{array}{c}
    \varepsilon_{AB}X_i{}^AX_j{}^B=0\sps\\[2ex]
    X^{ABCD}:=\varepsilon^{ijkl}X_i{}^AX_j{}^BX_k{}^CX_l{}^D\sps
    \end{array}
\end{equation}
    где $\varepsilon^{ijkl}$ - квадривектор, кососимметричный по всем
    индексам. Кроме того, рассмотрим 8-вектор $e_{ABCDKLMN}$, тоже
    кососимметричный по всем индексам. Тогда, если в условии
\begin{equation}
\label{re22o}
    \frac{1}{24}e_{ABCDKLMN}X^{ABCD}=\rho\varepsilon_{KR}\varepsilon_{LT}\varepsilon_{MU}
    \varepsilon_{NV}X^{RTUV}\sps \rho^2=1
\end{equation}
    $\rho=1$, то будем говорить, что плоские образующие $\mathbb C\mathbb P_3(\mathbb CQ_6)$
    принадлежат I семейству, а если $\rho=-1$, то - II семейству.
    В нашем случае
\begin{equation}
\label{re23o}
    X_i{}^A=(\dot X_i{}^a,\dot Y_{ib})\sps
\end{equation}
    и (\ref{re22o}) даст такое, например, выражение
\begin{equation}
\label{re24o}
    \varepsilon^{ijkl}X_i{}^1X_j{}^2X_k{}^3X_l{}^4=\rho
    \varepsilon^{ijkl}X_i{}^1X_j{}^2X_k{}^3X_l{}^4\spsd
\end{equation}
    При этом тензор $\varepsilon_{KL}$  имеет вид (\ref{re12o}).
    Откуда $\rho=1$. Это означает, что наши образующие
    необходимо принадлежат  I семейству.\\
    \indent Кроме того, существует тензор $\tilde S_K{}^L$
\begin{equation}
\label{re25o}
    \parallel \tilde S_K{}^L \parallel=\frac{1}{\sqrt{2}}\left(
    \begin{array}{cccccccc}
    i & 0 & 0 & 0 &-i & 0 & 0 & 0\\
    0 & i & 0 & 0 & 0 &-i & 0 & 0\\
    0 & 0 & i & 0 & 0 & 0 &-i & 0\\
    0 & 0 & 0 & i & 0 & 0 & 0 &-i\\
    1 & 0 & 0 & 0 & 1 & 0 & 0 & 0\\
    0 & 1 & 0 & 0 & 0 & 1 & 0 & 0\\
    0 & 0 & 1 & 0 & 0 & 0 & 1 & 0\\
    0 & 0 & 0 & 1 & 0 & 0 & 0 & 1\\
    \end{array}
    \right)\sps
     det \parallel\tilde S_K{}^L\parallel=1\spsd
\end{equation}
    такой, что выполнено
\begin{equation}
\label{re26o}
    \varepsilon_{AB}\tilde S_K{}^A\tilde S_L{}^B=G_{KL}\spsd
\end{equation}
    Поэтому можно положить
\begin{equation}
\label{re27o}
    X_i{}^A=\tilde S_K{}^AR_i{}^K\sps
\end{equation}
    и (\ref{re21o}) перепишется так
\begin{equation}
\label{re28o}
    \begin{array}{c}
    G_{AB}R_i{}^AR_i{}^K=0\sps\\[2ex]
    X^{ABCD}=\tilde S_K{}^A\tilde S_L{}^B\tilde S_N{}^C\tilde S_M{}^DR^{KLNM}\spsd
    \end{array}
\end{equation}
    Поскольку $\rho=1$, то будем иметь цепочку тождеств
\begin{equation}
\label{re29o}
    \begin{array}{c}
    \frac{1}{24}e_{ABCDKLMN}X^{ABCD}=\varepsilon_{KR}\varepsilon_{LT}\varepsilon_{MU}
    \varepsilon_{NV}X^{RTUV}\sps\\[2ex]
    \frac{1}{24}e_{ABCDKLMN}\tilde S_U{}^A \tilde S_S{}^B\tilde S_T{}^C\tilde S_V{}^D
    R^{USTV}=\\
    =\varepsilon_{KP}\varepsilon_{LQ}\varepsilon_{MX}
    \varepsilon_{NY}\tilde S_G{}^P \tilde S_H{}^Q\tilde S_Z{}^X\tilde S_W{}^Y
    R^{GHZW}\\[2ex]
    \frac{1}{24}e_{ABCDKLMN}\tilde S_U{}^A \tilde S_S{}^B\tilde S_T{}^C\tilde S_V{}^D
    \tilde S_P{}^K \tilde S_Q{}^L\tilde S_X{}^M\tilde S_Y{}^NR^{USTV}=\\
    =G_{PG}G_{QM}G_{XZ}G_{YW}R^{GHZW}\sps\\[2ex]
    \frac{1}{24}det \parallel S_P{}^A \parallel e_{PQXYUSTV}R^{USTV}=G_{PG}G_{QM}G_{XZ}G_{YW}R^{GHZW}\sps\\[2ex]
    \frac{1}{24}e_{PQXYUSTV}R^{USTV}=G_{PG}G_{QM}G_{XZ}G_{YW}R^{GHZW}\spsd
    \end{array}
\end{equation}
    Отсюда видно, что и образующие $\mathbb C\mathbb P_3(\mathbb C\tilde Q_6)$ необходимо
    принадлежат I семейству. Для того, чтобы получить такой же
    результат для образующих II семейства, в качестве метрического
    тензора, с помощью которого поднимаются и опускаются одиночные
    индексы, следовало бы выбрать тензор
\begin{equation}
\label{re30o}
    \tilde\varepsilon_{KL}:=\sqrt{i}\varepsilon_{KL}\spsd
\end{equation}

\subsection{\texorpdfstring{Соответствие $\mathbb C\tilde Q_6\longmapsto \mathbb CQ_6$}{Соответствие C'Q в CQ}}$ $
     \indent Применяя операторы $\eta_A{}^{KL}$ к  (\ref{re197}) получим
\begin{equation}
\label{re199a}
      \begin{array}{c}
      R_i^{AB}R_{j\ AB}=0\ \ \Leftrightarrow\ \
      ((R_i){}^{AB}-(R_j){}^{AB})((R_k){}_{AB}-(R_l){}_{AB})=0
      \Leftrightarrow\\[2ex]
      ((r_i){}^{ab}-(r_j){}^{ab})((r_k){}_{ab}-(r_l){}_{ab})=0\spsd
      \end{array}
\end{equation}
      Здесь i,j, как обычно, номера базисных точек.
      Этим определится система
\begin{equation}
\label{re200}
    \left\{
    \begin{array}{lcl}
    i(r_1){}^{ab}\dot Y_b & = & \dot X^a\sps\\
    i(r_2){}^{ab}\dot Y_b & = & \dot X^a\sps\\
    i(r_3){}^{ab}\dot Y_b & = & \dot X^a\sps\\
    i(r_4){}^{ab}\dot Y_b & = & \dot X^a\sps
    \end{array}
    \right.
    \psps\Leftrightarrow\psps
    \left\{
    \begin{array}{lcl}
    i((r_1){}^{ab}-(r_2){}^{ab})\dot Y_b & = & 0\sps\\
    i((r_1){}^{ab}-(r_3){}^{ab})\dot Y_b & = & 0\sps\\
    i((r_3){}^{ab}-(r_4){}^{ab})\dot Y_b & = & 0\sps\\
    i(r_1){}^{ab}\dot Y_b & = & \dot X^a\spsd
    \end{array}
    \right.
\end{equation}
     Далее мы будем рассматривать только правую систему.
     Она строится следующим образом. Всегда существует
     такой ковектор $Y_a$, который обнуляет 3 различных
     простых бивектора. Это утверждение сводится к
     существованию ковектора, ортогонального данным
     трем векторам, поскольку каждый из простых бивекторов
     раскладывается по формуле $r^{ab}=2P^{\left[\right.a}
     Q^{b\left.\right]}$. По четвертому же уравнению
     определится некоторый вектор $X^a$. Поэтому такая
     система всегда определена.
     С другой стороны, на основании того, что все $r_i{}^{ab}$ имеют вид
     (\ref{re169}) (при фиксированных $\lambda_1,\lambda_2,
     \lambda_3$), верно равенство
\begin{equation}
\label{re201}
     ((r_i){}^{ab}-(r_j){}^{ab})((r_k){}_{ab}-(r_l){}_{ab})=0\spsd
\end{equation}
     \indent Итак, пусть нам известно последнее уравнение системы
     (\ref{re200})
\begin{equation}
\label{re202}
     i{r_1}^{ab}\dot Y_b  =  \dot X^a\sps
\end{equation}
     тогда мы имеем 4 уравнения, которые все будут существенными.
     Поскольку у нас 8 неизвестных $(\dot X^a,\dot Y_b)$ при
     фиксированных $(r_1){}^{ab}$, то точка квадрики
     $\mathbb C\tilde Q_6$ определит 3-мерную плоскую образующую
     $\mathbb C\mathbb P_3(\mathbb CQ_6)$.\\

     \indent Если нам известны все уравнения системы
     (\ref{re200}) с условиями
\begin{equation}
\label{re204}
     \begin{array}{c}
     ((r_1){}^{ab}-(r_2){}^{ab})((r_1){}_{ab}-(r_2){}_{ab})=0\sps
     ((r_1){}^{ab}-(r_3){}^{ab})((r_1){}_{ab}-(r_3){}_{ab})=0\sps \\
     ((r_3){}^{ab}-(r_4){}^{ab})((r_3){}_{ab}-(r_4){}_{ab})=0\sps \\
      (r_1){}^{ab}(r_2){}_{ab}=0\sps
      (r_1){}^{ab}(r_3){}_{ab}=0\sps
      (r_1){}^{ab}(r_4){}_{ab}=0\sps \\
      (r_2){}^{ab}(r_3){}_{ab}=0\sps
      (r_2){}^{ab}(r_4){}_{ab}=0\sps
      (r_3){}^{ab}(r_4){}_{ab}=0\sps
     \end{array}
\end{equation}
     то из 16 уравнений существенными будут 7 (8 неизвестных и
     9 условий связи).
     Поэтому образующей $\mathbb C\mathbb P_3(\mathbb C\tilde Q_6)$ будет
     соответствовать точка квадрики $\mathbb CQ_6$.\\

     \indent Если нам известны 3 уравнения системы
     (\ref{re200})
\begin{equation}
\label{re205}
    \left\{
    \begin{array}{lcl}
    i((r_1){}^{ab}-(r_2){}^{ab})\dot Y_b & = & 0\sps\\
    i((r_1){}^{ab}-(r_3){}^{ab})\dot Y_b & = & 0\sps\\
    i(r_1){}^{ab}\dot Y_b & = & \dot X^a
    \end{array}
    \right.
\end{equation}
      с условиями
\begin{equation}
\label{re206}
     \begin{array}{c}
     ((r_1){}^{ab}-(r_2){}^{ab})((r_1){}_{ab}-(r_2){}_{ab})=0\sps
     ((r_1){}^{ab}-(r_3){}^{ab})((r_1){}_{ab}-(r_3){}_{ab})=0\sps\\
      (r_1){}^{ab}(r_2){}_{ab}=0\sps
      (r_1){}^{ab}(r_3){}_{ab}=0\sps
      (r_2){}^{ab}(r_3){}_{ab}=0\sps
     \end{array}
\end{equation}
     то из 12 уравнений существенными будут тоже 7 уравнений
     (8 неизвестных и 5 условий связи).
     Это означает, что образующей $\mathbb C\mathbb P_2(\mathbb C\tilde Q_6)$ будет
     соответствовать точка квадрики $\mathbb CQ_6$. При этом
     многообразие образующих $\mathbb C\mathbb P_2(\mathbb C\tilde Q_6)$, принадлежащих
     одной образующей $\mathbb C\mathbb P_3(\mathbb C\tilde Q_6)$, определит единственную
     точку квадрики $\mathbb CQ_6$.\\

     \indent Если нам известны только 2 уравнения системы
     (\ref{re200})
\begin{equation}
\label{re207}
    \left\{
    \begin{array}{lcl}
    i((r_1){}^{ab}-(r_2){}^{ab})\dot Y_b & = & 0\sps\\
    i(r_1){}^{ab}\dot Y_b & = & \dot X^a\\
    \end{array}
    \right.
\end{equation}
      с условиями
\begin{equation}
\label{re208}
     ((r_1){}^{ab}-(r_2^{ab})((r_1){}_{ab}-(r_2){}_{ab})=0\sps
      (r_1){}^{ab}(r_2){}_{ab}=0\sps
\end{equation}
\newpage
     \noindent то из 8 уравнений существенными будет только 6 уравнений
     (8 неизвестных и 2 условия связи).
     Поэтому образующей $\mathbb C\mathbb P_1(\mathbb C\tilde Q_6)$ будет
     соответствовать прямолинейная образующая  $\mathbb C\mathbb P_1(\mathbb CQ_6)$. При этом
     многообразие образующих $\mathbb C\mathbb P_1(\mathbb C\tilde Q_6)$, принадлежащих
     одной образующей $\mathbb C\mathbb P_3(\mathbb C\tilde Q_6)$, определит пучок прямых
     $\mathbb C\mathbb P_1(\mathbb CQ_6)$, принадлежащий квадрике $\mathbb CQ_6$.
     Центр пучка определится системой (\ref{re200}).

\subsection{Теорема о двух квадриках}$ $

     \indent Таким образом доказана теорема:
\begin{theoremr}
\label{theorem9r}(Принцип тройственности для двух B- цилиндров).\\
     В проективном пространстве $\mathbb C\mathbb P_7$ существуют
     две квадрики (два B - цилиндра), обладающие
     следующими общими свойствами:
     \begin{enumerate}
     \item Плоская образующая $\mathbb C\mathbb P_3$ одной квадрики
           взаимооднозначно определит точку R другой.
     \item Плоская образующая $\mathbb C\mathbb P_2$ одной квадрики
           однозначно определит точку R другой. Но точке
           R можно сопоставить многообразие плоских образующих
           $\mathbb C\mathbb P_2$, принадлежащих одной плоской образующей $\mathbb C\mathbb P_3$
           второй квадрики.
     \item Прямолинейная образующая $\mathbb C\mathbb P_1$ одной квадрики
           взаимооднозначно определит прямолинейную образующую
           $\mathbb C\mathbb P_1$ из другой. Причем все прямолинейные образующие,
           принадлежащие одной плоской образующей $\mathbb C\mathbb P_3$ первой квадрики,
           определят пучок с центром в точке R, принадлежащий
           второй квадрике.
     \end{enumerate}
\end{theoremr}
     Эта теорема на самом деле является обобщением соответствия
     Кляйна. Докажем это.
\vspace{1cm}
\begin{proof}
     Рассмотрим на квадрике $\mathbb CQ_6$ только те образующие,
     которые имеют вид
\begin{equation}
\label{re209}
     X^A=(0,Y_b)\spsd
\end{equation}
     Многообразие таких образующих диффеоморфно $\mathbb C\mathbb P_3$. При
     этом каждой такой образующей можно поставить в соответствие
     точку квадрики $\mathbb CQ_4\subset \mathbb C\tilde Q_6$. Согласно системе
     (\ref{re182}) первое ее уравнение будет иметь вид
\begin{equation}
\label{re210}
      r^{ab}\dot Y_b=0\spsd
\end{equation}
     До конца доказательства положим $\bf{A,B,A',B',...}=\overline{1,2}$.
     Кроме того, рассмотрим спинорное представление твисторов
     согласно \cite[т.2, c.63, (6.1.24) и c. 83, (6.2.18)]{Penrose1r}
\begin{equation}
\label{re211}
     \begin{array}{c}
     \dot Y_b=(\dot \pi_{\mbox{\scriptsize\bf B}},
     \dot{\bar\omega}^{{\mbox{\scriptsize\bf B}}'})\sps\\ \\
     r^{ab}=const\left(
     \begin{array}{cc}
     -\frac{1}{2}\varepsilon^{\mbox{\scriptsize\bf AB}}r_cr^c &
     ir^{\mbox{\scriptsize\bf A}}{}_{\mbox{\scriptsize\bf B}'} \\[1.2ex]
     -i\bar r_{{\mbox{\scriptsize\bf A}}'}{}^{\mbox{\scriptsize\bf B}} &
     \bar\varepsilon_{{\mbox{\scriptsize\bf A}}'{\mbox{\scriptsize\bf B}}'}
     \end{array}
     \right)\spsd
     \end{array}
\end{equation}
     Поэтому первое уравнение перепишется в виде
\begin{equation}
\label{re212}
     \left\{
     \begin{array}{ccc}
     -\frac{1}{2}\varepsilon^{\mbox{\scriptsize\bf AB}}r_cr^c
     \dot \pi_{\mbox{\scriptsize\bf B}}+
     ir^{\mbox{\scriptsize\bf A}}{}_{{\mbox{\scriptsize\bf B}}'}
     \dot{\bar\omega}^{{\mbox{\scriptsize\bf B}}'} & = & 0\sps \\[1.2ex]
     -i\bar r_{{\mbox{\scriptsize\bf A}}'}{}^{\mbox{\scriptsize\bf B}}
     \dot \pi_{\mbox{\scriptsize\bf B}}+
     \bar\varepsilon_{{\mbox{\scriptsize\bf A}}'{\mbox{\scriptsize\bf B}}'}
     \dot{\bar\omega}^{{\mbox{\scriptsize\bf B}}'} & = & 0\sps
     \end{array}
     \right.
\end{equation}
     из которых будет существенным только одно
\begin{equation}
\label{re213}
     ir^{{\mbox{\scriptsize\bf AA}}'}
     \dot{\bar\pi}_{{\mbox{\scriptsize\bf A}}'}=
     \dot \omega^{\mbox{\scriptsize\bf A}}\spsd
\end{equation}
     (Здесь мы воспользовались операцией сопряжения и метрическими
      спинорами $\varepsilon_{\mbox{\scriptsize\bf A'B'}},
      \varepsilon_{\mbox{\scriptsize\bf AB}}$, с помощью
      которых подымаются и опускаются спинорные индексы и
      которые при сопряжении переходят друг в друга.)
      Поэтому система (\ref{re182}) определит систему
\begin{equation}
\label{re215}
     \begin{array}{c}
     \left\{
     \begin{array}{ccc}
     ir^{\mbox{\scriptsize\bf AA}'}
     \dot{\bar\pi}_{\mbox{\scriptsize\bf A}'} & = &
     \dot \omega^{\mbox{\scriptsize\bf A}}\sps\\[1.2ex]
     ir^{\mbox{\scriptsize\bf AA}'}
     \dot{\bar\eta}_{\mbox{\scriptsize\bf A}'} & = &
     \dot \xi^{\mbox{\scriptsize\bf A}}\sps
     \end{array}
     \right.
     \dot Y_b=(\dot \pi_{\mbox{\scriptsize\bf B}},
     \dot{\bar\omega}^{\mbox{\scriptsize\bf B}'})\sps
     \dot T_b=(\dot \eta_{\mbox{\scriptsize\bf B}},
     \dot{\bar\xi}^{\mbox{\scriptsize\bf B}'})\spsd
     \end{array}
\end{equation}
     Эта система совпадает с системой (6.2.14), которая, в
     свою очередь, приводит к соответствию Кляйна согласно
     \cite{Penrose1r}.
\end{proof}

     Следует в заключение отметить, что из этой
     теоремы следует принцип тройственности Картана:
     существует 3 диффеоморфных многообразия -
     многообразие точек квадрики и 2 многообразия
     плоских образующих  I  и II семейств.
     Это действительно так, поскольку две построенные
     квадрики можно отождествить, например, с помощью тензора
     $\tilde S_K{}^L$. При этом многообразие точек квадрики
     будет диффеоморфно многообразию плоских образующих
     I семейства.\\
     \indent Кроме того, поскольку принцип тройственности Картана
     выполнен, то операторы $\eta^A{}_{KL}$ определят
     алгебру октав, поскольку они удовлетворяют
     уравнению Клиффорда. Это утверждение основано на результатах,
     приведенных в монографии \cite[т.2, c.543-544]{Penrose1r}, где
     рассматриваются структурные константы этой алгебры.

\subsection{Заключение}
     На защиту выносятся следующие основные положения:
\begin{enumerate}
     \item  Вложение $\mathbb R^6_{(p,q)}\subset \mathbb C\mathbb R^6$ осуществляется
            с помощью операторов $H_i{}^\alpha$ таким образом,
            что спинорное представление инволюции имеет вид
            $$
            S_\alpha{}^{\beta '}=\frac{1}{4}\eta_\alpha{}^{ab}
            \bar \eta^{\beta '}{}_{c'd'}\cdot 2s_a{}^{c'}s_b{}^{d '}\sps
            s_a{}^{c '}\bar s_{c'}{}^d=\pm\delta_a{}^d
            $$
            при q - нечетном и
            $$
            S_\alpha{}^{\beta '}=\frac{1}{4}\eta_\alpha{}^{ab}
            \bar \eta^{\beta '}{}_{c'd'}\cdot s^{kc'}s^{nd'}\varepsilon_{knab}\sps
            s_{ac'}=\pm\bar s_{c'a}
            $$
            при q - четном.
     \item  Найдены в явном виде операторы $A_{\alpha\beta a}{}^b$,
            с помощью которых определяется соответствие между
            бивекторами пространства $\mathbb C\mathbb R^6$ и бесследовыми
            операторами пространства $\mathbb C^4$. Это позволяет
            изучать алгебраическую структуру тензора кривизны
            пространства $\mathbb C\mathbb R^6$
            $R_{\alpha\beta\gamma\delta}$ по его спинорному
            образу - тензору $R_a{}^b{}_c{}^d$.
     \item  Доказано, что простой изотропный бивектор пространства
            $\Lambda^2\mathbb C^6$ определит вырожденную нуль-пару Розенфельда -
            вектор и ковектор пространства $\mathbb C^4$, свертка которых есть нуль -
            c точностью до комплексного множителя.
     \item  Утверждается, что бивектор пространства $\Lambda^2\mathbb R^6_{(p,q)}$
            при четном q может быть приведен в некотором базисе к
            каноническому виду.
     \item  Доказан обобщенный принцип тройственности для
            пары B-цилиндров.
     \item  Определены операторы $\eta_A{}^{KL}$, удовлетворяющие
            уравнению Клиффорда и отвечающие за соответствие
            между прямолинейными образующими указанных
            B-цилиндров. Кроме того, эти операторы определяют
            структурные константы алгебры октав.
\end{enumerate}

\newpage
\section{ПРИЛОЖЕНИЕ}
\subsection{Доказательство формул второй главы}
\subsubsection{\texorpdfstring{Доказательство формул, содержащих
               оператор $A_{\alpha\beta a}{}^b$}{Доказательство формул, содержащих
               оператор A}}$ $

       \indent Определим $(\alpha,\beta,...=1,2,3,4,5,6;
       a_1,b_1,a,b,e,f,k,l,m,n,...=1,2,3,4)$.
       Поскольку
\begin{equation}
\label{re1p}
       A_{\alpha\beta d}{}^c=
       \eta_{\left[\right.\alpha}{}^{ca}
       \eta_{\beta\left.\right]}{}_{da}\sps
\end{equation}
       то
\begin{equation}
\label{re2p}

\end{equation}
        где $C_a{}^c{}_l{}^d$ - аналог тензора Вейля.

\newpage
\References{
\bibitem{Besse1r}
{\sc Бессе А.}
Многообразия Эйнштейна. т. 1. Москва, Мир, 1990. На с. 269 дан перечень основных изоморфизмов алгебр Ли для $n\le 6$.
\bibitem{Dubrovin1r}
{\sc Дубровин В.А., Новиков С.П., Фоменко А.Т.}
 Современная геометрия. Москва, Наука, 1986.
\bibitem{Cartan1r}
{\sc Картан Э.}
Теория спиноров. Платон, Москва, 1997.
\bibitem{Koboyasi1r}
{\sc Кобояси Ш., Номидзу К.}
Основы дифференциальной геометрии. Т. 2, Москва, Наука, 1981. На с. 110-139 исследуются комплексные многообразия. Однако, в качестве вещественных вложений используются эрмитовы многообразия в отличие от данной диссертации.
\bibitem{Koboyasi2r}
{\sc Кобояси Ш.}
Группы преобразований в дифференциальной геометрии. Т. 2, Москва, Наука, 1981. На с. 6-55 рассказано об автоморфизмах G-структур, на с. 56-104 рассказано об изометриях римановых пространств, на с. 105-159 рассказано об автоморфизмах комплексных многообразий. 
\bibitem{Kotelnikov1r}
{\sc Котельников А.П.}
Винтовое счисление и некоторые приложения его к геометрии механники. Казань, 1895 г.
\bibitem{Lichnerowicz1r}
{\sc Лихнерович A.}
Теория связностей в целом и группы голономий. ИЛ, 1960. На с. 169-200 исследуются почти комплексные многообразия и связности на них. Однако, в качестве вещественных многообразий используются эрмитовы многообразия в отличие от данной диссертации.
\bibitem{Manin1r}
{\sc Манин Ю.И.}
Калибровочные поля и комплексная геометрия. Москва, Наука, 1996. На с. 15-72 исследуется пространство Минковского как многообразие вещественных точек большой клетки грассманиана комплексных плоскостей в пространстве твисторов.
\bibitem{Landau1r}
{\sc Ландау Л.Д., Лифшиц Е.М.}
Теория поля. Москва, Наука, 1988.
\bibitem{Neifeld1r}
{\sc Нейфельд Э.Г.}
Об инволюциях в комплексных пространствах. ТГС, Казань, 1989. Выпуск 19, с. 71-82.
\bibitem{Neifeld2r}
{\sc Нейфельд Э.Г.}
Геометрия поверхности в проективном пространстве над алгеброй. Геометрия обобщенных пространств, Уфа, 1982, с. 32-51.
\bibitem{Neifeld3r}
{\sc Нейфельд Э.Г.}
О внутренних геометриях поляризованных комплексных грассманианов. Известия ВУЗов, Казань, 1995, \No 5(396), с. 51-54.
\bibitem{Neifeld4r}
{\sc Нейфельд Э.Г.}
Нормализация комплексных грассманианов и квадрик. ТГС, Казань, 1990. Выпуск 20, с. 58-69.
\bibitem{Neifeld5r}
{\sc Нейфельд Э.Г.}
О внутренних геометриях нормализованного пенроузиана. ТГС, Казань, 1990. Выпуск 20, с. 70-73.
\bibitem{Neifeld6r}
{\sc Нейфельд Э.Г.}
Аффинные связности на нормализованном многообразии плоскостей проективного пространства. Известия ВУЗов, Казань 1976, \No 11(174), с. 48-55.
\bibitem{Neifeld7r}
{\sc Нейфельд Э.Г.}
О внутренних геометриях нуль-плоскостей максимальной размерности поляритетов второго порядка. ТГС, Казань, 1982. Выпуск 14, с. 50-55.
\bibitem{Norden1r}
{\sc Норден А.П.}
О комплексном представлении тензоров пространства Лоренца. Известия ВУЗов, Казань 1959, т. 8 \No 1, с. 156-164.
\bibitem{Norden2r}
{\sc Норден А.П.}
Обобщение основной теоремы теории нормализации. Известия ВУЗов, Казань, 1966, \No 2(51), c. 78-82.
\bibitem{Norden3r}
{\sc Норден А.П.}
О структуре связности на многообразии прямых неевклидового пространства. Известия ВУЗов, Казань, 1972, \No 12(127), с.~84-94.
\bibitem{Norden4r}
{\sc Норден А.П.}
Аффинная связность на поверхностях проективного пространства. Математический Сборник, Москва, 1947, \No 20(62), с. 263-286
\bibitem{Norden5r}
{\sc Норден А.П.}
Теория нормализации и векторные расслоения. ТГС, Казань, 1976. Выпуск 9, с. 68-77.
\bibitem{Norden6r}
{\sc Норден А.П.}
Пространства аффинной связности. Москва, Наука, 1976.
\bibitem{Penrose1r}
{\sc Пенроуз Р. Риндлер В.}
Спиноры и пространство-время. Т. 1, Москва, Мир, 1987, Т. 2, Москва, Мир, 1988.
\bibitem{Penrose2r}
{\sc Пенроуз Р.}
Твисторная программа // Твисторы и калибровочные поля. Москва, Мир, 1983. Дано определение пенроузиана на с. 13-24.
\bibitem{Penrose3r}
{\sc Пенроуз Р.}
Структура пространства-времени. Москва, Мир, 1972.
\bibitem{Penrose4r}
{\sc Пенроуз Р.}
Спинорная классификация тензора энергии. В книге: Гравитация. Проблемы и перспективы, Киев, Наука Думка, 1972, с. 203.
\bibitem{Petrov1r}
{\sc Петров А.З.}
Классификация пространств, определяемых гравитационными полями. Уч.зап. КГУ, 114:55(1954).
\bibitem{Petrov2r}
{\sc Петров А.З.}
Пространства Эйнштейна. Москва, Физматгиз, 1961.
\bibitem{Shevale1r}
{\sc Шевале К.}
Теория групп Ли. Т.1-3, Москва, ИЛ, 1948-58.
\bibitem{Postnikov1r}
{\sc Постников М.М.}
Группы и алгебры Ли. Москва, Наука, 1982. В лекциях 13-16 изложены основные идеи построения гиперкомплексных чисел на основании периодичности Ботта.
\bibitem{Rosenfeld2r}
{\sc Розенфельд Б.А.}
Неевклидовы геометрии. Москва, ГИТО, 1955. На с. 534 рассмотрен принцип тройственности Картана.
\bibitem{Rosenfeld1r}
{\sc Розенфельд Б.А.}
Многомерные пространства. Москва, Наука, 1966. Дано определение m-пар на с. 384. В нашем случае m=0.
\bibitem{Rosenfeld3r}
{\sc Хуа Ло-ген Розенфельд Б.А.}
Геометрия прямоугольных матриц и ее применение к вещественной проективной и неевклидовой геометрии. Известия ВУЗов, Казань, 1957, \No 1, с. 233-247.
\bibitem{Sintsov1r}
{\sc Д.М.Синцов}
Теория коннексов в пространстве в связи с теорией дифференциальных уравнений в частных производных первого порядка. Казань, 1894.
\bibitem{Fadeev1r}
{\sc Фадеев Д.К.}
Лекции по алгебре. Москва, Наука, 1984.
\bibitem{Hirtseburh1r}
{\sc Хирцебурх Ф.}
Топологические методы алгебраической геометрии. Мир, Москва, 1973.
\bibitem{Hodzh1r}
{\sc Ходж В.Д., Пидо Д.}
Методы алгебраической геометрии. Москва, ИЛ, 1954.
\bibitem{Einshtein1r}
{\sc Эйнштейн А.}
Сборник научных трудов. Т.1, Наука, 1966.
\bibitem{Adams_1r}
{\sc Adams J.F.}
Spin(8),triality,$F_4$ and all that in : Superspace and Supergravity, Cambridge University Press, Cambridge, 1981.
\bibitem{Brauer_1r}
{\sc Brauer R. Weye H.}
Spinors in n dimensions Amer J. Math 57, 425(1935).
\bibitem{Chevalley_1r}
{\sc Chevalley C.}
The Algebraic Theory of Spinors. Columbia University Press, New York, 1954.
\bibitem{Frank_1r}
{\sc Frank W.Warner}
Foundation off Differentiable Manifolds and Lie Groups. New York, Berlin, Heidelberg, Tokyo, Springer-verlag, 1983.
\bibitem{Hughston_1r}
{\sc Hughston L.P.}
Applications of SO(8) spinors in : Gravitation and Geometry. Robinson Festschrift volume, 1986.
\bibitem{LeBrun_1r}
{\sc Le Brun C.R.}
ambitwistors and Einstein's equations. Class. and Quantum Grav, 1985.
\bibitem{Penrose_1r}
{\sc Penrose R.}
Twistor algebra. J.Math.Phys, 354:8(1967)345-366.
\bibitem{Penrose_2r}
{\sc Penrose R.}
Twistor theory : its aims and achievements in : Quantum Gravity an Oxford Symposium. Oxford University Press, Oxford, 1975.
\bibitem{Penrose_3r}
{\sc Penrose R.}
On the  origins of twistor theory in: Gravitation and Geometry. Robinson Festschrift volume, 1986.
\bibitem{Penrose_4r}
{\sc Penrose R.}
Relativistic symmetry groups in : Group Theory in Non-linear Problems. 1974.
\bibitem{Hochschild_1r}
{\sc Hochschild G.}
The Structure of Lie Groups. Holden-Day, San Francisco, 1965.
\bibitem{Dirac_1r}
{\sc Dirac P.A.}
Wave equations in conformal space. Ann. of Math. 37, 429(1936).
\bibitem{Dirac_2r}
{\sc Dirac P.A.}
Relativistic wave equations. Proc. Roy. Soc. (London), 447(1936).
\bibitem{Klein_1r}
{\sc Klein F.}
Zur theorie der Liniercomplexe des ersten und zweiten Grades, Math. Ann. 2, 198(1870).
\bibitem{Klein_2r}
{\sc Klein F.}
Vorlesungen \"uber h\"ohere Geometrie. Berlin, Springer-Verlag, 1926, S. 80, 262.
}

\label{origine}
\newpage
\normalsize
\pagestyle{plain}
\tableofcontents
\listoftables
\listoffigures
\end{document}